\DeclareMathAlphabet\mathbfcal{OMS}{cmsy}{b}{n}
\newcommand*{\rom}[1]{\expandafter\@slowromancap\romannumeral #1@}
\newcommand{\argmin}{\mathop{\rm arg\min}}
\newcommand{\Vect}{\operatorname{Vec} }
\newcommand{\tF}{{\rm F}}
\newcommand{\SVD}{{\rm SVD}}
\newcommand{\Mat}{{\rm Mat}}
\newcommand{\tr}{{\rm tr}}
\newcommand{\abs}[1]{\left\lvert#1\right\rvert}
\newcommand{\norm}[1]{\left\lVert#1\right\rVert}
\newcommand{\PP}{\mathbb{P}}
\newcommand{\R}{\mathbb{R}}
\newcommand{\E}{\mathbb{E}}
\newcommand{\cA}{\mathcal{A}}
\newcommand{\cE}{\mathcal{E}}
\newcommand{\cH}{\mathcal{H}}
\newcommand{\cP}{\mathcal{P}}
\newcommand{\cD}{\mathcal{D}}
\newcommand{\cM}{\mathcal{M}}
\newcommand{\cI}{\mathcal{I}}
\newcommand{\cN}{\mathcal{N}}
\newcommand{\cS}{\mathcal{S}}
\newcommand{\cX}{\mathcal{X}}
\newcommand{\bbI}{\mathbb{I}}
\def\wt{\widehat}
\def\sfw{\mathsf{w}}
\def\calA{{\mathcal A}}
\def\calD{{\mathcal D}}
\def\calH{{\mathcal H}}
\def\calM{{\mathcal M}}
\def\calP{{\mathcal P}}
\def\calW{{\mathcal W}}
\def\PP{{\mathbb P}}
\def\QQ{{\mathbb Q}}
\def\RR{{\mathbb R}}
\def\TT{{\mathbb T}}
\newcommand{\gap}{{\mathsf{gap} }}
\newcommand{\init}{{\mathsf{init}}}
\newcommand{\sfh}{{\mathsf{h}}}
\newcommand{\sfR}{{\mathsf{R}}}
\newcommand{\dmax}{\bar{d}}
\newcommand{\higher}{ {\mathsf{higher}} }
\newtheorem{Theorem}{Theorem}
\newtheorem{Lemma}{Lemma}
\theoremstyle{plain}
\newtheorem{Proposition}{Proposition}
\title{Multiple Testing of Linear Forms for Noisy Matrix Completion}
\author{Wanteng Ma$^1$, Lilun Du$^2$, Dong Xia$^3$ and Ming Yuan$^4$\\
~ \\
{\small $^{1}$ Department of Statistics and Data Science, University of Pennsylvania, USA}\\
{\small $^2$Department of Decision Analytics and Operations,  City University of Hong Kong, Hong Kong}\\
{\small $^{3}$Department of Mathematics, Hong Kong University of Science and Technology, Hong Kong}\\
{\small $^4$Department of Statistics,  Columbia University, USA}\\
}
\date{\today\\
} 
\begin{document}

\maketitle

\footnotetext[3]{Dong Xia’s research was partially supported by Hong Kong RGC Grant GRF 16301622 and GRF 16303224. Corresponding email: \href{madxia@ust.hk}{madxia@ust.hk}} 
\footnotetext[4]{Ming Yuan's research was supported in part by NSF Grants DMS-2015285 and DMS-2052955.}

\begin{abstract}
Many important tasks of large-scale recommender systems can be naturally cast as testing multiple linear forms in noisy matrix completion model. These problems, however, present unique challenges because of the subtle bias-and-variance tradeoff and the intricate dependence among the estimated entries induced by the low-rank structure. In this paper, we develop a general approach to overcome these difficulties by introducing new statistics for individual tests with sharp asymptotics both marginally and jointly, and utilizing them to control the false discovery rate (FDR) via a data splitting and symmetric aggregation scheme. We show that valid FDR control can be achieved asymptotically, with guaranteed power, under nearly optimal sample-size requirements for noisy matrix completion using the proposed methodology. Extensive numerical simulations and real data examples are also presented to further illustrate its practical merits.
\end{abstract}

\begin{sloppypar}

\section{Introduction}\label{sec:intro}
\subsection{Matrix completion and inference}
Popularized by the Netflix prize \citep{bennett2007netflix}, matrix completion techniques have emerged as an essential tool for large-scale collaborative-filtering-based recommender systems \citep{schafer2007collaborative}. Consider, more specifically, representing the ratings of $d_1$ users on $d_2$ products/items by a $d_1\times d_2$ matrix. For all practical purposes, both $d_1$ and $d_2$ can be very large yet only a rather small number of the entries can be observed. The idea is that if the interaction between users and products can be approximately captured by a handful of latent user-specific and product-specific characteristics, then it is possible to infer the whole user-item rating matrix from these sparsely observed entries, and hence recommend products to users who may be genuinely interested in them. 

Most existing works study recommender systems from an estimation perspective and investigate how well the user-item matrix can be estimated or reconstructed collectively. These are clearly relevant metrics for evaluating recommender systems. For example, the Netflix prize uses root mean squared error as the gold standard for the competition. Yet they do not account for the fact that only a subset of the products can be recommended to a user and as such estimation accuracy may not be directly translated into the quality of these recommendations. Instead, various classical notions for binary classification such as precision and recall are often adopted in practice to evaluate the quality of top recommendations. See, e.g., \cite{herlocker2004evaluating}. This subtlety has significant statistical implications. First of all, making quality recommendations requires a more careful uncertainty quantification. Consider recommending between a blockbuster movie and an independent film to a user. Even if both estimated ratings are similar and favorable, the uncertainty associated with the estimated rating for the former is likely to be much smaller as it has been viewed by a much greater number of people. It could therefore be more prudent to recommend it over the latter. On the other hand, as each recommendation incurs uncertainty, when making a list of recommendations, it is more helpful to assess their quality collectively rather than individually. For example, the percentage of relevant recommendations among all recommended products could be a more meaningful measure than the chance of a specific recommendation being relevant. Both aspects draw immediate comparison with multiple testing problems, for example, in high-throughput gene expression studies where, among thousands of genes, a small subset that is likely to behave differently between control and treatment groups is sought. See, e.g., \cite{storey2003statistical, efron2007correlation, efron2012large}. Our work is inspired by this analogy and examines the problem of item recommendations from a multiple testing perspective.

For the sake of generality, we shall adopt the framework of trace regression where each observation is a random pair $(X,Y)$ with $X\in\R^{d_1\times d_2}$ and $Y\in\R$. The random matrix $X$ is sampled uniformly from the orthonormal basis $\mathfrak{E}=\{e_{i}e_{j}^{\top}: 1\le i\le d_1, 1\le j\le d_2\}$ where $\{e_{i}\}$ are the canonical basis vectors of an Euclidean space of conformable dimensions. The response variable $Y$ is related to $X$ via
\begin{equation}\label{eq:NMC}
	Y=\langle  M, X\rangle +\xi
\end{equation}
where $\langle M, X\rangle=\tr(M^{\top}X)$, and the independent measurement error $\xi$ is assumed to be a zero-mean sub-Gaussian random variable. Our goal is to infer the true user-product preference matrix $M$ from i.i.d. copies of $(X,Y)$ when $M$ is of (approximately) low rank and the observations are incomplete. Specifically, the task of deciding if product $j$ should be recommended to user $i$ can be cast as testing the null hypothesis, denoted by $H_{0,ij}$, about the $(i,j)$ entry of the true user-product matrix $M$, e.g., product $j$ is irrelevant to user $i$, against the alternative, denoted by $H_{a,ij}$, that user $i$ is interested in product $j$. Likewise, item recommendations in general amount to testing collectively all null hypotheses $H_{0,ij}$, $1\le i\le d_1$ and $1\le j\le d_2$. More broadly, one may consider testing about multiple linear forms, $\langle M,T \rangle$ for a family of $T\in \cH\subset \R^{d_1\times d_2}$. For example, one may consider $T$ of the form $e_ie_{j_1}^\top-e_ie_{j_2}^\top$ to determine between two products ($j_1$ and $j_2$) which one to recommend to a user ($i$). This multiple testing framework allows us to address, among others, two most pertinent questions for recommender systems: which items should we recommend so that we can ensure a certain percentage of recommendations are relevant, or click-through rate; given a list of recommendations, what percentage of recommendations are relevant. Both questions can be naturally rephrased in terms of the false discovery rate (FDR), commonly used in the context of multiple testing.

\subsection{Overview of our approach}
Suppose that we obtain a small number of noisy observations from the entries of the matrix $M$, represented by $\calD=\{(X_i, Y_i)\}_{i=1}^{n}$ satisfying (\ref{eq:NMC}). Typically, the sample size $n$ is much smaller than the total number of entries $d_1d_2$. Let $\calH$ denote a family of linear forms corresponding to the unobserved entries of $M$ or their linear combinations of interest. We aim to test the collection of null hypotheses $H_{0T}:\langle M, T\rangle=\theta_T, \forall T\in\calH$ with the objective of controlling the false discovery proportion (FDP) at or below $\alpha$. Our approach employs sample splitting and symmetric data aggregation (SDA). Specifically, we split the data into two sub-samples, $\calD_1$ and $\calD_2$, each of equal sizes. For each subsample, we apply a debiasing procedure to an initial low-rank estimator to obtain two independent copies of nearly unbiased estimators, denoted by $\widehat M^{(1)}$ and $\widehat M^{(2)}$, respectively. For each $T\in\calH$, we construct the symmetrized statistic:
$$
W_T^{\rm rank}=W_T^{(1)} W_T^{(2)}\quad {\rm with}\quad W_T^{(1)}=\frac{\langle \widehat M^{(1)}, T\rangle-\theta_T}{\widehat \sigma_{\xi}^{(1)}\widehat s_T^{(1)}\cdot\sqrt{2d_1d_2/n}}\quad {\rm and}\quad W_T^{(2)}=\frac{\langle \widehat M^{(2)}, T\rangle-\theta_T}{\widehat \sigma_{\xi}^{(2)}\widehat s_T^{(2)}\cdot\sqrt{2d_1d_2/n}}.
$$
Here, $\widehat{\sigma}^{(i)}, \widehat s_T^{(i)}$ are estimates of the standard deviation, which will be formally defined later.
The individual statistics $W_T^{(1)}$ and $W_T^{(2)}$ are constructed using the sub-samples $\calD_1$ and $\calD_2$, respectively, which are mutually independent. Under nearly optimal sample size and signal-to-noise ratio conditions, both statistics are asymptotically normal. Moreover, these statistics feature a novel variance design that achieves faster convergence than existing methods \citep{chen2019inference, xia2021statistical}.

Under null hypothesis, the test statistical $W_T^{\rm rank}$ is approximately symmetric about zero. Therefore, we choose the smallest threshold $L>0$ such that $|\{T\in\calH: W_T^{\rm rank}< -L\}|\leq \alpha \cdot \big(|\{T\in\calH: W_T^{\rm rank} >L\}|\vee 1\big)$, and we reject those null hypothesis $H_{0T}$ for which $W_T^{\rm rank}>L$. The rationale is to estimate the number of falsely rejected null hypothesis, $|\{T\in\calH_0: W_T^{\rm rank}>L\}|$, by $|\{T\in\calH: W_T^{\rm rank}< -L\}|$. Here, $\calH_0\subset \calH$ denotes the set of linear forms for which the null hypotheses are true. We prove that this approach effectively controls the FDP asymptotically as long as the number of strongly correlated pairs of linear forms is small relative to the number of strong signals.

Furthermore, we propose a whitening method for the test statistics when many linear forms are strongly correlated, e.g., when testing an entire row, multiple linear forms on the same entries, or a complete small block. Specifically, our method decorrelates the test statistics by leveraging an explicit characterization of their dependencies. Consequently, this method achieves more effective FDR control in the presence of numerous strongly correlated pairs.

\subsection{Related work}

The pioneering works of \cite{candes2009power,candes2010matrix,candes2012exact} laid the theoretical foundations of matrix completion via convex optimization. Since then, substantial progress has been made toward improving the accuracy and scalability of these techniques and toward better understanding the statistical and computational underpinnings of the problem. See, e.g., \cite{cai2010singular,keshavan2010matrix,recht2010guaranteed,gross2011recovering,koltchinskii2011nuclear,liu2011universal,negahban2011estimation,rohde2011estimation,tsybakov2011nuclear,negahban2012restricted,sun2012calibrated,klopp2014noisy,cai2015rop,cai2016matrix,gao2016optimal}, among numerous others. Matrix completion techniques have been widely applied to collaborative-filtering-based recommender systems. See, e.g., \cite{koren2009matrix,davidson2010youtube,mcauley2013hidden,das2017survey} and, more recently, in \cite{chen2022review,han2024keyword}. Moreover, the observations are discrete/quantized in many applications, which has motivated generalized or 1-bit matrix completion formulations; see, e.g., \citep{davenport20141}.

Since its introduction in the seminal paper by \cite{benjamini1995controlling}, FDR has proven to be an extremely useful notion in a wide variety of areas including bioinformatics \citep{jung2005sample,roeder2009genome,brzyski2017controlling}, neuroimaging \citep{perone2004false,chumbley2010topological}, and finance \citep{barras2010false,bajgrowicz2012technical}, to name a few. Numerous methodologies have also been developed to control FDR in multiple testing. Notable examples includes \cite{benjamini2001control,sarkar2002some,wu2008false,clarke2009robustness,barber2015controlling,candes2018panning,barber2019knockoff}, among many others. There are, however, considerable new challenges when considering multiple testing in the context of item recommendations or matrix completion, both in defining test statistics for individual hypothesis and in how to utilize them effectively to improve the overall performance.

In most if not all of the existing literature of multiple testing, the individual test statistics are either given or naturally defined. For matrix completion, however, finding the right test statistics is arguably one of the most difficult steps for statistical inferences. Common estimators for entries of the underlying matrix do not admit an explicit expression, which creates technical obstacles to characterize their bias and variance. This challenge is already in full display when testing a single hypothesis which occurs, for example, when deciding on whether to recommend a specific product to a particular user. See, e.g., \cite{chen2019inference, xia2021statistical,farias2022uncertainty,chen2023statistical,gui2023conformalized, shao2023distribution}. The problem is exacerbated when dealing with multiple hypotheses where more refined bounds for the convergence of test statistics are needed both for controlling the FDR and to ensure power without unnecessary sample size and signal-to-noise ratio restriction. We shall introduce a new test statistic especially suitable for such purposes. It builds upon recent developments \citep[e.g.,][]{chen2019inference,xia2021statistical} for inferring a single entry and is based upon a more precise characterization of variance than earlier works. In particular, it can be shown that, with the improved variance estimate, the new statistic converges to normal distribution at a faster rate, both marginally and jointly, and is thus more suitable for use in multiple testing.

Most procedures for FDR control were developed, at least initially, assuming that the individual test statistics are independent of each other. How to handle complicated dependency structure, as is the case for matrix completion, remains a critical issue and an actively researched subject in multiple testing. See, e.g., \cite{efron2007correlation, leek2008general, fan2017estimation, li2017rate, du2021false, fithian2022conditional}. A common strategy to deal with dependence is data splitting. See, e.g., \cite{roeder2009genome, song2015split, barber2019knockoff, zou2020new, du2021false, dai2022false, dai2023scale}, for a number of recent examples and applications of data splitting schemes. In particular, \cite{du2021false} showed that the FDR can be properly controlled as long as the individual test statistics have nearly symmetric null distribution and the dependence among them is sufficiently weak. To make use of this insight, we derive the asymptotic correlation of our proposed individual test statistics. Interestingly, for many item recommendation tasks, these statistics are only weakly correlated and hence, the FDR can be controlled accordingly. In other settings where the test statistics can be strongly correlated, our explicit characterization of their dependence structure also suggests ways to ``whitening'' and ``screening'' so that FDR can still be controlled under minimal sample size requirement.

\subsection{Outline of the paper}
The rest of the paper is organized as follows. In the next section, we shall introduce our test statistics for a single linear form and study its asymptotic properties. Section \ref{sec:FDR-SDA} discusses how these individual test statistics can be aggregated to test multiple linear forms. Section \ref{sec:strong-corr} introduces a whitening and screening scheme to address situations where the test statistics could be strongly correlated. Numerical experiments, both simulated and real-world data examples, are presented in Section \ref{sec:experiments}. We conclude with a few remarks in Section \ref{sec:remark}. Due to space limitation, all proofs, as well as further examples and discussions, are relegated to the Supplement.

Throughout the paper, let $\|\cdot\|$ denote the spectral norm of a matrix and the $\ell_2$-norm of a vector, and  denote $\|M\|_{2,\max}:=\max_{i\in [d_1]}\|e_i^{\top}M\|$. Define $\|M\|_{\max}=\max_{i,j}|M_{ij}|$ and $\|M\|_{\infty}:=\max_{i\in[d_1]} \|e_i^{\top} M\|_{\ell_1}$ for a matrix $M$. Note that $\|\cdot\|_{\max}$ and $\|\cdot\|_{\infty}$ are equivalent for a vector. Also, we use $\|\cdot\|_{\rm F}$ to denote the Frobenius norm. Moreover, we use $\odot$ to denote matrix Hadamard (entry-wise) product. We also let $\widetilde{O}(\cdot)$ denote big-O notation when suppressing additional polylogarithmic factors.

\section{An Approach for Individual Test}\label{sec:clt}

We begin with testing a single hypothesis:
\begin{equation}\label{eq:general-rtt}
	H_{0T}: \langle M, T\rangle =\theta_{T}\qquad {\rm vs}\qquad  H_{aT}: \langle M, T\rangle \neq \theta_{T}
\end{equation}
for some fixed $T\in\R^{d_1\times d_2}$ and pre-specified value $\theta_T\in \R$, based on $n$ independent observations $\cD:=\{(X_i,Y_i)\}_{i=1}^n$ following the trace regression model \eqref{eq:NMC}. Recall that $\xi$ in \eqref{eq:NMC} is sub-Gaussian noise with mean $0$ and variance $\sigma_\xi$ such that $\E \exp{(\lambda \xi)}\le \exp{(c^2 \sigma_\xi^2\lambda^2 /2 )}$ for some constant $c>0$. Following the convention, we shall assume that the singular vectors of $M$ are incoherent:
\begin{equation}\label{eq:incoherence}
	\max \left\{\sqrt{\frac{d_1}{r}}\|U\|_{2, \max }, \sqrt{\frac{d_2}{r}}\|V\|_{2, \max }\right\} \leq \sqrt{\mu},
\end{equation}
where $r$ is the rank of $M$, and $M=U\Lambda V^\top$ its singular value decomposition. This ensures that the entries of $M$ are delocalized so that it can be recovered even if some entries are not observed. In what follows, we shall denote by $\lambda_{\max}$ and $\lambda_{\min}$ the largest and smallest nonzero singular values of $M$, and $\kappa_0$ the ratio $\kappa_0:=\lambda_{\max}/\lambda_{\min}$, i.e., its condition number.

{

For brevity, we focus on two-sided tests here, though our discussion extends straightforwardly to one-sided tests. Without loss of generality, we assume that $d_1 \ge d_2$ throughout. The goal of this section is to develop a test statistic for \eqref{eq:general-rtt} that can be effectively applied to testing a large number of hypotheses. The problem of testing a single linear form \eqref{eq:general-rtt} has been previously studied by \cite{xia2021statistical}; see also \cite{chen2019inference, farias2022uncertainty, chen2023statistical}, among others. However, the tests proposed in these works are neither sufficiently sharp nor directly applicable to multiple testing. For example, \citet{xia2021statistical}'s debiasing approach relies on independent initialization through data splitting, which may lead to potential power loss. Although \citet{chen2019inference,farias2022uncertainty} avoids data splitting by employing a leave-one-out analysis, their methods are restricted to the entrywise case where $T = e_i e_j^\top$, and they use Bernoulli sampling with a constrained sample size $n\le d_1 d_2$. Moreover, in both \citet{xia2021statistical}, and \citet{chen2019inference,farias2022uncertainty}, the rate of asymptotic normal convergence is no faster than $\sqrt{\log(d_1)/d_2}$. This convergence rate is too slow for our purposes, as it unnecessarily restricts the number of hypotheses that can be tested, regardless of how large the sample size $n$ is.

We now propose a novel approach to address these issues mentioned above. Our approach for estimating $\langle M, T\rangle$ consists of three steps: gradient-descent initialization, bias-correction, and an improved low-rank projection. For the initialization, we apply the gradient descent \citep{chen2019inference} to obtain an entry-wise consistent rank-$r$ estimator $\widehat{M}^{\mathsf{init}}$ such that for any $\tau\ge 1$,
\begin{equation}\label{eq:init-est}
	\norm{\widehat{M}^{\mathsf{init}} - M}_{\max} \le C_{\init}  \sigma_\xi \sqrt{\frac{\tau  d_1 \log d_1}{n }} ,
\end{equation}
with probability at least $1-d_1^{-\tau}$, for some parameter $C_{\init}>0$ that is only $\operatorname{Poly}(\tau,\kappa_0,\mu,r,\log d_1)$. Denote the  left and right $r$ singular vectors of $\widehat{M}^{\mathsf{init}} $ as $\widehat{U}^{\mathsf{init}}$, $\widehat{V}^{\mathsf{init}} $. A key observation is that $\widehat{U}^{\mathsf{init}}$, $\widehat{V}^{\mathsf{init}} $ are also incoherent.  For brevity, in what follows, we shall assume $\tau$ is a large enough constant to ensure that $n\le O(d_1^{2\tau})$. To correct the bias of initial estimates, we then define
\begin{equation}\label{eq:debias}
    \widehat{M}^{\mathsf {unbs }}=\widehat{M}^{\mathsf{init}}+\frac{d_1 d_2}{n} \sum_{i=1}^n\left(Y_i-\left\langle\widehat{M}^{\mathsf{init}}, X_i\right\rangle\right) X_i. 
\end{equation}
Unfortunately, as this debiasing may lead to a significant increase in variance, we shall trade off between bias and variance by low-rank projection. However, different from the popular simple SVD used in the literature \citep{chen2019inference,xia2021statistical}, we use an incoherence-assisted low-rank projection. More exactly, we compute

\begin{equation}\label{eq:left-SVD}
    \begin{aligned}
        \widehat{U} &= \SVD_{r}\big(\widehat{M}^{\mathsf {unbs }}\widehat{V}^{\mathsf{init}} \big)\quad {\rm and} \quad \widehat{V} &= \SVD_{r}\big((\widehat{M}^{\mathsf {unbs }})^\top \widehat{U}^{\mathsf{init}} \big),
    \end{aligned}
\end{equation}
where $ \SVD_{r}(\cdot)$ returns the top-$r$ left singular vectors. This yields a low-rank estimate:
\begin{equation}\label{eq:lr-retraction}
    \widehat{M}= \widehat{U}  \widehat{U} ^\top\widehat{M}^{\mathsf {unbs }} \widehat{V} \widehat{V}^\top.
\end{equation}

Finally we shall estimate $\langle M, T\rangle$ by $\langle \widehat{M}, T\rangle$. Under certain regularity conditions, we can show that
\begin{equation}\label{eq:clt-tangent}
\frac{\langle \widehat{M}, T\rangle-\langle M, T\rangle}{\sigma_\xi\norm{\cP_M(T)}_{\tF}\sqrt{d_1d_2/n}}\to_d N(0,1),
\end{equation}
where $\to_d$  represents convergence in distribution when $n\to \infty$, and
\begin{equation}\label{eq:tangent-proj}
    \cP_M(A)=UU^\top AVV^\top+UU^\top AV_\perp V_\perp^\top+U_\perp U_\perp^\top AVV^\top
\end{equation}
represents the projection onto the tangent space of low-rank matrix manifold $\calM_r$ at point $M$, and $U_\perp$ and $V_\perp$ are orthonormal matrices whose columns span the orthogonal complements of the left and right singular spaces of $M$, respectively.

We define the alignment ratio $\beta_T$ (which may goes to $0$ asymptotically) as
\begin{equation}\label{eq:alignment}
	\beta_T := \frac{\norm{\cP_M(T)}_\tF}{\|T\|_{\rm F}} \sqrt{\frac{d_2}{r} },
\end{equation} 
where $\norm{\cP_M(T)}_\tF$ is proportional to the (asymptotic) standard deviation of the test statistic with respect to a linear form $T$.  Here we assume $\norm{\cP_M(T)}_\tF>0$ throughout the discussion. The alignment ratio essentially quantifies how close $T$ is to the tangent space $\cP_M$ so that $\cP_{M}(T)$ accounts for a substantial, albeit small, fraction of $T$. When $\norm{\cP_M(T)}_\tF=0$, the linear form $\langle M, T\rangle=0$ and estimates with faster rate of convergence can be obtained (e.g., using a trivial estimate). This condition avoids such pathological situations. Similar assumptions are also made in earlier works. See, e.g., \cite{chen2019inference,xia2021statistical}.   

We can use this result to test \eqref{eq:general-rtt} for a fixed $T$. The theoretical guarantee of our approach is presented in Theorem \ref{thm:asymp-normal}:

\begin{Theorem}\label{thm:asymp-normal}
Given $\widehat{M}$ from \eqref{eq:lr-retraction} with dimension ratio $\alpha_d=d_1/d_2$, suppose that the sample size and SNR condition satisfies:
\begin{equation}\label{eq:condition-SNR-sample}
    n\ge  C_1 C_{\init}^2\kappa_0^2 \frac{\norm{T}_{\ell_1}^2}{\beta_T^2\norm{T}_{\tF}^2} \mu^{5}r^5 d_1\log^2 d_1, \ \frac{\lambda_{\min}}{\sigma_{\xi}} \ge  C_2 C_{\init}^2 \frac{\kappa_0\norm{T}_{\ell_1}}{\beta_T\norm{T}_\tF } \sqrt{\frac{ \alpha_{d} \mu^{6} r^5 d_1^2 d_2  \log^2 d_1 }{ n } }
\end{equation}
for some constants $C_1,C_2>0$. Then there exists a constant $C_3>0$ such that for any $T
\in \R^{d_1\times d_2}$ with alignment ratio $\beta_T>0$ defined in \eqref{eq:alignment} and
\begin{equation*}
    \begin{aligned}
        &\sup _{t \in \mathbb{R}}\left|\mathbb{P}\left(\frac{\langle \widehat{M}, T\rangle-\langle M, T\rangle}{\sigma_{\xi}\|\cP_M(T) \|_\tF \cdot \sqrt{d_1 d_2 / n}} \leq t\right)-\Phi(t)\right| \\
&\le C_3\left(  C_{\init}   \frac{\kappa_0\norm{T}_{\ell_1}}{\beta_T \norm{T}_{\tF}} \sqrt{\frac{ \mu^{5} r^5 d_1 \log^2 d_1  }{ n}} +  \frac{ C_{\init}^2\sigma_{\xi}  \kappa_0 \norm{T}_{\ell_1} }{\beta_T \lambda_{\min}  \norm{T}_\tF }\sqrt{\frac{\alpha_d \mu^{6} r^{5 } d_1^2 d_2  \log^2 d_1 }{ n } } \right). 
    \end{aligned}
\end{equation*}
\end{Theorem}

We highlight the advantage of our methods compared to existing literature. First, we apply the gradient descent directly to the observations for initialization, which avoids data splitting and potential power loss \citep{xia2021statistical}.  Second, we use a new low-rank projection method akin to the subspace iteration \citep{bathe1973solution,bathe2013subspace}, and tensor power iteration \citep{richard2014statistical}. The benefit of this new projection method can be explained from a theoretical point of view: recall that $ \widehat{M}^{\mathsf {unbs }}-M=\frac{d_1 d_2}{n} \sum_{i=1 }^n \xi_i X_i- \frac{d_1 d_2}{n} \sum_{i=1 }^n\left\langle\widehat{M}^{\mathsf{init}} - M, X_i\right\rangle X_i+\widehat{M}^{\mathsf{init}} -M$. When multiplied by  incoherent $\widehat{U}^{\mathsf{init}}$ or $\widehat{V}^{\mathsf{init}} $ on one side, the size of each summand (i.e., $\widehat{U}^{\mathsf{init}\top}X_i$, or $ X_i \widehat{V}^{\mathsf{init}}$)  can be effectively reduced, leading to sharper concentrations. The leave-one-out argument enables further bounding the negligible bias terms in $\langle \widehat{M}-M, T\rangle$, making our theory more general than those in \cite{chen2019inference}.


Moreover, the variance that we characterize is sharper than previous methods \citep{chen2019inference,xia2021statistical,farias2022uncertainty} in the sense that (i) as is shown in \cite{ma2024statistical}, it reaches the Cramér-Rao lower bound for the regression-based noisy matrix completion, indicating its efficiency; (ii) it returns a confidence interval with minimax optimal length, as given in our following Theorem \ref{thm:matrix-ci-minimax}; and (iii) it allows the convergence rate to be totally controlled by the sample size $n$ and SNR ($\lambda_{\min}/\sigma_{\xi}$), whenever $d_1$, $d_2$ is large or not. This convergence rate is especially helpful for multiple testing since the number of hypotheses we are interested may grow with respect to $d_1$, $d_2$. The proof of Theorem~\ref{thm:matrix-ci-minimax} can be found in Section~\ref{Minimax} of the Supplement. 

\begin{Theorem}[Minimax optimal length of confidence interval]\label{thm:matrix-ci-minimax} 
Fix low-rank $M$ and $T$ as discussed above.
Define the parameter space as 
$$
\begin{aligned}
    \boldsymbol{\Theta}=\Big\{ M'\in\R^{d_1\times d_2}: &\operatorname{rank}(M')\le r, (d_1\norm{U'}_{2,\max}^2)\vee  (d_2\norm{V'}_{2,\max}^2)\le {\mu r},
    \\
    &\lambda_{\min}(M')\ge \lambda_{\min}, \kappa(M')\le \kappa_0,\norm{\cP_{M' }(T)}_{\tF}\ge \norm{\cP_{M }(T)}_{\tF} \Big\}.
\end{aligned}
$$
Here $\cP_{M'}(\cdot)$ means the projection onto the tangent space at $M'$. Consider the set of any valid $1-\alpha$ confidence interval  with $\alpha<\frac{1}{4}$ as:
$$
\mathcal{I}_\alpha(\boldsymbol{\Theta}, T):=\left\{\operatorname{CI}_{T}^\alpha\left( \{(X_i,Y_i)\}_{i=1}^n \right)=[l, u]: \inf _{M' \in \boldsymbol{\Theta}} \mathbb{P}_{M',T}(l \leq\langle M', T\rangle \leq u ) \geq 1-\alpha\right\},
$$
where $l, u$ are any functions of observations $\{(X_i,Y_i)\}_{i=1}^n$, and $\mathbb{P}_{M',T}$  is taken with respect to the randomness of observations for  fixed underlying $M'\in\boldsymbol{\Theta}$ and $T$. Then, when the SNR satisfies
\begin{equation*}
         \frac{\lambda_{\min}}{\sigma_{\xi}} \ge C_{\gap} \kappa_0\frac{\norm{T}_{\ell_1}}{ \beta_T\norm{T }_{\tF} }\sqrt{\frac{ \mu^{6} r^{3 }d_1^2d_2 }{ n } }
\end{equation*}
for a numeric constant $C_{\gap}$, the length of the confidence interval has the minimax lower bound:
\begin{equation*}
    \inf_{\operatorname{CI}_{T}^\alpha\left(M', \{(X_i,Y_i)\}_{i=1}^n \right) \in \mathcal{I}_\alpha(\boldsymbol{\Theta}, T) } \sup _{M'\in \boldsymbol{\Theta} } \mathbb{E} L\left(\operatorname{CI}_{T}^\alpha\left(M', \{(X_i,Y_i)\}_{i=1}^n \right)\right) \ge c\sigma_{\xi}  \norm{\cP_{M }(T)}_{\tF}\sqrt{\frac{d_1 d_2}{n}}.
\end{equation*}
\end{Theorem}

To pursue a fully data-driven approach for hypothesis testing, we provide the following estimates for the variance term $\sigma_\xi^2\norm{\cP_M(T)}_{\tF}^2$:
\begin{equation*}
    \widehat{\sigma}^2_{\xi}=\frac{1}{n}\sum_{i=1}^n\left(Y_1-\left\langle \widehat{M}^{\mathsf{init}},X_i\right\rangle\right)^2, \widehat{s}_T^2= \norm{\cP_{\widehat{M}^{\mathsf{init}}} (T)}_{\tF}^2,
\end{equation*}
where $\cP_{\widehat{M}^{\mathsf{init}}}(\cdot)$ follows \eqref{eq:tangent-proj} by replacing $U$, $V$ with $\widehat{U}^{\mathsf{init}}$, $\widehat{V}^{\mathsf{init}}$.
Now, we define our test statistic formally as  
\begin{equation}\label{eq:WT-def}
    W_T\left(\{(X_i,Y_i)\}_{i=1}^n\right) = \frac{\langle \widehat{M}, T\rangle-\theta_T}{\widehat{\sigma}_{\xi} \widehat{s}_T \cdot \sqrt{d_1 d_2 / n}}. 
\end{equation}
The following result shows that the asymptotic normality continues to hold using these variance estimates.

\begin{Theorem}\label{thm:asymp-normal-varest} 
	Under the condition that \eqref{eq:condition-SNR-sample} holds, if $H_{0T}$ is true, then
$$
\begin{aligned}
    &\sup _{t\in \mathbb{R}}\left|\mathbb{P}\left(W_{T} \leq t\right)-\Phi\left(t\right)\right|
    \\
    &
\le C_3\left(  C_{\init}   \frac{\kappa_0\norm{T}_{\ell_1}}{\beta_T \norm{T}_{\tF}} \sqrt{\frac{ \mu^{5} r^5 d_1 \log^2 d_1  }{ n}} +  \frac{ C_{\init}^2\sigma_{\xi}  \kappa_0 \norm{T}_{\ell_1} }{\beta_T \lambda_{\min}  \norm{T}_\tF }\sqrt{\frac{\alpha_d \mu^{6} r^{5 } d_1^2 d_2  \log^2 d_1 }{ n } } \right).
\end{aligned} 
$$
\end{Theorem}
}


\section{Multiple Tests}\label{sec:FDR-SDA}

We now turn our attention to testing a family of hypotheses $\{H_{0T}: T\in \cH\}$ for a subset $\cH\subset \R^{d_1\times d_2}$. In particular, we can take $\cH=\{e_i e_j^\top: 1\le i\le d_1, 1\le j\le d_2\}$ for testing preferences of all user-item pairs. Denote the number of tests $\abs{\cH}=q$. Without loss of generality, assume that the linear forms are linearly independent so that the $q$ is no larger than $d_1 d_2$.  Denote the null set by $\cH_0$, i.e., $\cH_0=\left\{T\in \cH: \langle M,T\rangle=\theta_T \right\}$ and the non–null set $\cH_1=\cH\setminus \cH_0$, with cardinality $q_0$ and $q_1$ respectively. To establish the test statistics for all $\{H_{0T}: T\in \cH\}$, we denote the smallest alignment parameter as $\beta_0=\min_{T\in\cH_0}\beta_T$, and write
\begin{equation}\label{eq:fixed-asymp}
h_n:=  C_{\init} \max_{T\in\cH_0} \left\{ \frac{\kappa_0\norm{T}_{\ell_1}}{\beta_0 \norm{T}_{\tF}} \sqrt{\frac{ \mu^{5} r^5 d_1 \log^2 d_1  }{ n}} +  \frac{ C_{\init}^2\sigma_{\xi}  \kappa_0 \norm{T}_{\ell_1} }{\beta_0 \lambda_{\min}  \norm{T}_\tF }\sqrt{\frac{\alpha_d \mu^{6} r^{5 } d_1^2 d_2  \log^2 d_1 }{ n } } \right\},
\end{equation}
where, for brevity, we omit the dependence of $h_n$ on $d_1$. In light of Theorem \ref{thm:asymp-normal-varest}, with appropriate initial estimates, we have
$$
\left|\mathbb{P}\left(W_{T} \leq t\right)-\Phi\left(t\right)\right|\lesssim h_n
$$
for all $T\in \cH_0$. In what follows, we assume $h_n\to 0$ to ensure proper inference.

\subsection{Symmetric Data Aggregation}

With the asymptotic normality of $W_T$, it looks possible to directly apply \cite{benjamini1995controlling} style of methods to control the FDR in an asymptotic sense. However, doing so may put an unreasonable limit on the number ($q$) of tests under consideration. This is due to the fact that the test statistic $W_T$ has much heavier tail than that in classic multivariate normal mean problems. As a result, while $W_T$ converges to $\cN(0,1)$ in distribution for any linear form $T$ (as long as signal strength is large enough), it does not necessarily converge in fourth-order or higher-order moments. Indeed, it can be shown that the $2k$-th order moment ($k\ge 2$) of $W_T$ for a properly chosen linear form $T$ is lower bounded by
\begin{equation}\label{eq:W-kth-moment}
   \sqrt[2k]{ \E \abs{W_{T}}^{2k}} \gtrsim \left(\frac{d_1 d_2}{n}\right)^{1/4}.
\end{equation}
If $d_1\asymp d_2 \asymp d $, and $n \asymp d^{1+\epsilon }$ for some $\epsilon\in(0,1)$, then we have $\E \abs{W_{T}}^{2k} \gtrsim d^{(1-\epsilon)k/2}$. See Section  \ref{sec:proof-w-k-mom} in the supplement for proof of \eqref{eq:W-kth-moment}.

Thankfully, much more powerful approaches can be developed by exploiting other salient features of $W_T$ entailed by its asymptotic normality. In particular, we shall adopt a general strategy introduced by \cite{du2021false} by leveraging symmetricity and data aggregation. Assume that, without loss of generality, $n$ is even with $n=2n_0$. We split $\cD=\{(X_i,Y_i)\}_{i=1}^n$ into two sub-samples:
$$
\cD_1=\big\{(X_i, Y_i)\big\}_{i=1}^{n_0}\quad {\rm and}\quad \cD_2=\big\{(X_i, Y_i)\big\}_{i=n_0+1}^{n}.
$$
We describe our approach as following: we first construct two groups of independent asymptotically symmetric statistics $\{W_{T}^{(1)}: T\in\cH\}$ from $\cD_1$ and $\{W_{T}^{(2)}: T\in \cH\}$ from $\cD_2$ by data splitting.  After that, we aggregate them by multiplication: $W_T^{\mathsf{rank} }= W_T^{(1)} \cdot W_T^{(2)} $. Finally, we rank each $W_T^{\mathsf{rank} }$ and choose a data-driven threshold by taking advantage of symmetricity:
\begin{equation}\label{eq:dd-threshold}
    L:=\inf \left\{t>0: \frac{\#\left\{T: W_T^{\mathsf{rank} }<-t\right\}}{\#\left\{T: W_T^{\mathsf{rank} }>t\right\} \vee 1} \leq \alpha\right\}\cup\{+\infty\},
\end{equation}
given any FDR level $\alpha\in (0,1)$, and reject $H_{0T}$ if $W_T^{\mathsf{rank}}>L$. Details are given in Algorithm \ref{alg:matrix-fdr}. Hereafter, we denote $M_T:=\langle M, T\rangle$ for simplicity. 

\begin{algorithm}[ht!]
	\caption{Matrix FDP Control}
	\label{alg:matrix-fdr}
	\begin{algorithmic}[1]
		\REQUIRE Hypotheses $\left\{H_{0T}: M_T=\theta_T, T\in \cH\right\}$, data splits $\cD_1$, $\cD_2$, rank $r$, FDR level $\alpha$.
		\STATE{ Apply gradient descent \citep{chen2019inference} on $\cD_1$, $\cD_2$ to construct initial estimates $\widehat{M}_{\mathsf{init}}^{(1)}$, $\widehat{M}_{\mathsf{init}}^{(2)}$, respectively.}
		\STATE{ Apply \eqref{eq:lr-retraction}, \eqref{eq:WT-def} to construct two independent test statistics: $
    W_{T}^{(1)}:= W_T(\cD_1), \quad  W_{T}^{(2)}= W_T(\cD_2).
        $
}
        
\STATE Compute the final ranking statistics by $W_T^{\mathsf{rank}}=W_{T}^{(1)}W_{T}^{(2)}$, and then choose a data-driven threshold $L$ by \eqref{eq:dd-threshold}.
		\STATE{Reject $H_{0T}$ if $W_T^{\mathsf{rank}}>L$.}
	\end{algorithmic}
\end{algorithm}

 By the definition of $L$,  we have
$$
\text{FDP} = \frac{\sum_{T\in \cH  } \bbI(W_T^{\mathsf{rank}} <-L )}{\left( \sum_{T\in \cH  } \bbI(W_T^{\mathsf{rank}} >L ) \right) \vee 1 } \frac{\sum_{T\in \cH_0  } \bbI(W_T^{\mathsf{rank}} >L ) }{\sum_{T\in \cH  } \bbI(W_T^{\mathsf{rank}} <-L )  }  \le \alpha \frac{\sum_{T\in \cH_0  } \bbI(W_T^{\mathsf{rank}} >L ) }{ \sum_{T\in \cH_0  } \bbI(W_T^{\mathsf{rank}} <-L )}.
$$
The crux of our argument is that the ratio on the rightmost hand side is approximately $1$ by virtue of the symmetry of $W_T^{\mathsf{rank}}$ under $H_{0T}$. To do so, we need to investigate the dependence among multiple test statistics, which will be elaborated later. 

We remark that our data aggregation scheme resembles the ``mirror-statistic’’ approach in the literature \citep{xing2021controlling,dai2022false}. The sign of the newly assembled statistic is set as $\operatorname{sign}(W_T^{\mathsf{new}})=\operatorname{sign}(W_{T}^{(1)})\cdot \operatorname{sign}(W_{T}^{(2)})$, and the absolute values are then aggregated in suitable ways. In addition to the multiplicative aggregation $\abs{W_T^{\mathsf{new}}}=\big|W_{T}^{(1)}W_{T}^{(2)}\big|$, prior work \citep{xing2021controlling,dai2022false,dai2023scale} has also considered the minimum aggregation $\min\Big\{\big|W_{T}^{(1)}\big|,\big|W_{T}^{(2)}\big|\Big\}$ and the additive aggregation $\big|W_{T}^{(1)}\big|+\big|W_{T}^{(2)}\big|$. Moreover, \citep{dai2023scale,du2021false} suggests that, for testing multivariate normal means, multiplicative aggregation can yield higher power than the other two options. See Section~\ref{sec:compare} in the supplement for a more detailed discussion.

\subsection{Dependence among Test Statistics}\label{sec:dependence}
One of the main challenges for multiple testing is how to account for the dependence structure among test statistics. To this end, we shall first derive the asymptotic distribution for the joint distribution of two estimated linear forms. In particular, for two matrices $T_1, T_2\in \R^{d_1\times d_2}$, it can be shown that
\begin{equation}\label{eq:corr}
\texttt{corr}(\langle \widehat{M}, T_1\rangle, \langle \widehat{M}, T_2\rangle)\approx \frac{\left\langle \cP_M(T_1) ,\cP_M(T_2)  \right\rangle }{ \norm{\cP_M(T_1)}_{\tF}\norm{\cP_M(T_2)}_{\tF} }=:	\rho_{T_1,T_2},
\end{equation}
where $\cP_M(\cdot)$ is defined in \eqref{eq:tangent-proj}. More specifically, we have

\begin{Theorem}\label{thm:asymp-two-var} 
	Suppose \eqref{eq:condition-SNR-sample} holds for  $T_1,T_2\in \cH$, and $|\rho_{T_1,T_2}|<1$. Define  $\Phi_\rho(\cdot,\cdot)$ as the cumulative distribution function of bivariate normal distribution $N(0,((1, \rho)^\top, (\rho, 1)^\top))$.  If both $H_{0T_1}$ and $H_{0T_2}$ hold, then
	\begin{equation*}
		\begin{aligned}
			&\sup _{t_1,t_2 \in \mathbb{R}}\left|\mathbb{P}\left(W_{T_1} \leq t_1, W_{T_2} \leq t_2\right)-\Phi_{
				\rho_{T_1,T_2}}\left(t_1,t_2\right)\right| \le C_3(1-\rho_{T_1,T_2})^{-\frac{3}{2}} h_n.
		\end{aligned}
	\end{equation*}
\end{Theorem}

This result explicitly characterizes the dependence between two test statistics which is critical for the FDR control in multiple testing. In particular, we shall separate pairs of linear forms in the null hypotheses into strongly correlated:
\begin{equation}\label{eq:weak-corr-1}
	\begin{aligned}
		\cH_{0,\text{strong} }^2 := \left\{ (T_1,T_2)\in\cH_0\times \cH_0 : \rho_{T_1,T_2}\ge c q_0^{-\nu} \right\},
	\end{aligned}
\end{equation}
where $\nu>0$ can be any pre-specified non-vanishing number (e.g., $\nu\ge 0.1$) and $c>0$ is some universal constant, and weakly correlated $\cH_{0,\text{weak} }^2:=(\calH_0\times \calH_0)\setminus \cH_{0,\text{strong} }^2$. The proportion of  all linear form pairs that are strongly correlated is therefore
\begin{equation}\label{eq:beta-s}
    \beta_{\mathsf{s} } := \frac{ \abs{ \cH_{0,\text{strong} }^2 }   }{\abs{ \cH_{0}^2 }}.
\end{equation}
Here, $\beta_{\mathsf{s} }$ measures the degree of dependence of the test statistics. When $\beta_{\mathsf{s} }\to0$, most of the $\{W_T\}_{T\in\cH_0}$ are  weakly correlated, leading to weak dependency among all test statistics.

We demonstrate that the measure of dependency in \eqref{eq:beta-s} is particularly useful for matrix linear form inference as the incoherent structure naturally ensures small $\beta_{\mathsf{s} } $ for many practical instances. Under the incoherent assumption, we have
$$
\rho_{T_1,T_2}\le  \frac{ \mu^4 r \norm{T_1}_{\ell_1 }\norm{T_2}_{\ell_1 } }{\beta^2_0 \norm{T_1}_{\tF}\norm{T_2}_{\tF} }\frac{1}{d_2}+\frac{\abs{\left\langle T_1 T_2^\top , UU^\top \right\rangle} +\abs{\left\langle T_1^\top T_2 , VV^\top \right\rangle} }{\norm{\cP_M(T_1)}_\tF \norm{\cP_M(T_2)}_\tF}.
$$
Thus, two linear forms $(T_1, T_2)$ are weakly correlated if $T_1^\top T_2=\boldsymbol{0}$, $T_1 T_2^\top=\boldsymbol{0}$ and
\begin{equation}\label{eq:weak-corr-2}
		\frac{ \mu^2 \norm{T_1}_{\ell_1 }\norm{T_2}_{\ell_1 } }{\beta^2_0 \norm{T_1}_{\tF}\norm{T_2}_{\tF} } \le C.
\end{equation}
Condition \eqref{eq:weak-corr-2} holds when $T_1$, $T_2$ are sparse, i.e., the number, $s_0$, of nonzero entries in $T_1$ and $T_2$ is of the order $O(\beta_0^2)$. Note that these conditions concern the linear forms only and do not depend on $M$. We can use this to show that in the following practical examples related to item recommendations, the linear forms are weakly correlated (given $\beta_0\gtrsim 1$), regardless of the underlying matrix $M$:
\paragraph{Inference of a submatrix.} Consider the inference problem with indexing matrices $\calH=\{e_i e_j^\top : l_1\le i\le l_2, l_3\le j\le l_4  \}$, where $l_2-l_1 \asymp d_1 $, $l_4-l_3 \asymp d_2 $. This can represent recommendation tasks in problems including Netflix prize \citep{bennett2007netflix}, social network discovery \citep{pech2017link}, among others. Here we have the number of tests of order $O(d_1 d_2)$. Since $\|T\|_{\ell_1} /\|T\|_{\mathrm{F}}=1$ for any $T\in \cH$, condition \eqref{eq:weak-corr-2} is easily satisfied. Therefore, at most $O(d_1)$ pairs are strongly correlated (share the same row/column) for each linear form  so that $\beta_{\mathsf{s} }\lesssim 1/d_2$ for any $\nu<0.5$.  
	
\paragraph{Inference of entrywise comparisons.} We can also consider comparison between two entries $M_{i,j}$ and $M_{i+1,j}$: $\cH=\{e_{i} e_{j}^\top- e_{i+1} e_{j}^\top: l_1\le i\le l_2, l_3\le j\le l_4  \}$. If $l_2-l_1 \asymp d_1 $, $l_4-l_3 \asymp d_2 $, then the total number of tests is of the order $O(d_1 d_2)$. Similar to before, $\|T\|_{\ell_1} /\|T\|_{\mathrm{F}}=\sqrt{2}$ for any $T\in \cH$ so that there are at most $O(d_1^2 d_2)$ pairs that can be strongly correlated (share the same row/column). This again yields $\beta_{\mathsf{s} }\lesssim 1/{d_2}$ for any $\nu<0.5$.

\paragraph{Inference of several user/feature groups.} For many applications, groupwise recommendation \citep{bi2018multilayer} is of interest. This can be formulated as testing $H_{0T}: \sum_{i\in G_k} M_{ij}\le \theta_{kj}$ vs $H_{1T}: \sum_{i\in G_k} M_{ij}>\theta_{kj} $, where $(G_1,\ldots,G_K)$ is a partition of the $[d_1]$. In other words $\cH=\{\sum_{i\in G_k}e_ie_j^\top: 1\le k\le K, 1\le j\le d_2\}$.
Note that $\|T\|_{\ell_1} /\|T\|_{\mathrm{F}}=\sqrt{\abs{G_k} }$ for all $T\in \cH$. If $K=\Omega(d_2)$, then for any $\nu<0.5$,
$$\beta_{\mathsf{s} }\lesssim \frac{d_2K(K+d_2)}{d_2^2K^2}\lesssim\frac{1}{d_2}.$$

\subsection{Theoretical Guarantees}
A crucial aspect to understand the efficacy of a multiple testing procedure is the signal strength of the non-null set, i.e., $|\langle M, T\rangle-\theta_T|$ for $T\in\cH_1$. Recall that for any matrix completion estimator $\widetilde{M}$, the best entrywise error rate we can attain is $\|\widetilde{M}-M\|_{\max}\lesssim_p \sigma_{\xi}\sqrt{ d_1 \log (d_1)/n}$ \citep{koltchinskii2011nuclear}. In the case of gradient descent estimator $\widehat{M}^{\mathsf{init}}$, with high probability, one can expect
\begin{equation*}
\abs{\left\langle \widehat{M}^{\mathsf{init}}-M ,T  \right\rangle} \le \norm{\widehat{M}^{\mathsf{init}}-M}_{\max} \norm{T}_{\ell_1}\le  C_{\init}  \sigma_{\xi}\sqrt{\frac{ d_1 \log d_1}{n}} \norm{T}_{\ell_1}.
\end{equation*}
Thus, we say that a signal can be consistently identified if
\begin{equation}\label{eq:strong-T}
\frac{ \abs{\langle M, T\rangle-\theta_T } }{  \norm{T}_{\ell_1} \sqrt{ \log  d_1 } } \ge C_{\mathsf{gap}  } \cdot C_{\init} \sigma_{\xi}\sqrt{\frac{ d_1 \log d_1}{n}}
\end{equation}
for a sufficiently large constant $C_{\mathsf{gap}}>0$. Denote by $\cS$ the set of all linear forms $T\in \cH$ such that \eqref{eq:strong-T} holds, with its cardinality as $\eta_n:=\abs{\cS}$. Note that $\beta_s$ and $\eta_n=|\cS|$ are the most essential quantities in characterizing the effectiveness of FDR control and power guarantee for multiple testing. We are now in position to state our main result.

\begin{Theorem}\label{thm:weak-cor-fdr}
Suppose that $h_n\to 0$ and
$$
\left(\sqrt{\beta_{\mathsf{s}}} \vee  h_n \right)  \frac{q_0}{\eta_n}   \to 0. 
$$
Then, we have
$$
\mathrm{FDP}:=\frac{\sum_{T\in \cH_0  } \bbI(W_T^{\mathsf{rank}} >L ) }{\left( \sum_{T\in \cH  } \bbI(W_T^{\mathsf{rank}} >L ) \right) \vee 1 }\le \alpha(1+o_p(1))
$$
and
$$
\mathrm{POWER}:= \frac{\sum_{T\in \cH_1  } \bbI(W_T^{\mathsf{rank}} >L ) }{q_1} \ge \frac{\eta_n}{q_1}(1-o_p(1)).
$$
\end{Theorem}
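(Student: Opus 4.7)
The plan is to follow the symmetric data aggregation framework of \cite{du2021false}, adapted to matrix completion via the marginal and bivariate Gaussian approximations from Theorems \ref{thm:asymp-normal-varest} and \ref{thm:asymp-two-var}. Write
$$V_{\pm}(t) := \sum_{T\in \cH_0}\bbI(\pm W_T^{\mathsf{Rank}} > t),\qquad R(t):=\sum_{T\in \cH}\bbI(W_T^{\mathsf{Rank}} > t).$$
The definition of $L$ in \eqref{eq:dd-threshold} gives $V_-(L)\le \sum_{T\in\cH}\bbI(W_T^{\mathsf{Rank}}<-L)\le \alpha R(L)$, so via $V_+(L)\le V_-(L)+|V_+(L)-V_-(L)|$,
$$\mathrm{FDP}=\frac{V_+(L)}{R(L)\vee 1}\le \alpha+\frac{|V_+(L)-V_-(L)|}{R(L)\vee 1}.$$
The task reduces to (i) a uniform-in-$t$ bound on $|V_+(t)-V_-(t)|$ and (ii) the power bound $R(L)\gtrsim \eta_n$. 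The symmetry required for (i) is supplied at the expectation level by the observation that, conditional on $\cD_0$, $W_T^{(1)}\perp W_T^{(2)}$, and each marginal is $N(0,1)$ up to Kolmogorov error $h_n$ under $H_{0T}$ (Theorem \ref{thm:asymp-normal-varest}); since the product of two independent nearly symmetric variables is nearly symmetric,
$$\bigl|\PP(W_T^{\mathsf{Rank}}>t\mid \cD_0)-\PP(W_T^{\mathsf{Rank}}<-t\mid \cD_0)\bigr|\lesssim h_n$$
uniformly in $t\ge 0$ and $T\in \cH_0$, hence $|\EE[V_+(t)-V_-(t)\mid\cD_0]|\lesssim h_n q_0$.

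The technically demanding step is a variance bound on $V_+(t)-V_-(t)$. For a pair $(T_1,T_2)\in \cH_0^2$, Theorem \ref{thm:asymp-two-var} approximates $(W_{T_1}^{(j)},W_{T_2}^{(j)})$ by a bivariate normal with correlation $\rho_{T_1,T_2}$ and Kolmogorov error $h_n$; combined with conditional independence across $j\in\{1,2\}$, this yields a covariance bound of order $O(|\rho_{T_1,T_2}|+h_n)$ on the indicators $\bbI(\pm W_{T_i}^{\mathsf{Rank}}>t)$. Partitioning $\cH_0^2$ along \eqref{eq:weak-corr-1} into the weakly correlated set (where $|\rho|\lesssim q_0^{-\nu}$) and the strongly correlated set (of cardinality $\beta_{\mathsf{s}} q_0^2$, each pair contributing $O(1)$) gives
$$\mathrm{Var}\bigl(V_+(t)-V_-(t)\bigm|\cD_0\bigr)\lesssim q_0+q_0^{2}\bigl(\beta_{\mathsf{s}}+h_n+q_0^{-\nu}\bigr).$$
Since $t\mapsto V_\pm(t)$ is monotone with at most $q_0$ jumps, discretizing over its jump points and applying Chebyshev together with a union bound promotes the pointwise estimate to
$$\sup_{t\ge 0}\bigl|V_+(t)-V_-(t)\bigr|=o_p\bigl((\sqrt{\beta_{\mathsf{s}}}\vee h_n)\,q_0\bigr),$$
which, under the assumed $(\sqrt{\beta_{\mathsf{s}}}\vee h_n)q_0/\eta_n\to 0$, is $o_p(\eta_n)$.

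For (ii), we exploit the gap condition \eqref{eq:strong-T}: combined with the alignment condition \eqref{eq:alignment}, it forces the deterministic centering of each $W_T^{(j)}$ for $T\in \cS$ to dominate $C\sqrt{\log(q\vee d_1)}$. Sub-Gaussian concentration of $\langle \widehat M^{(j)},T\rangle-\langle M,T\rangle$, obtained as a byproduct of the proof of Theorem \ref{thm:asymp-normal-varest}, together with a union bound over $|\cS|=\eta_n\le q$, then yields $W_T^{(1)}W_T^{(2)}>\log(q\vee d_1)$ simultaneously for all $T\in \cS$ with probability $1-o(1)$. In particular the data-driven threshold $L$ cannot exceed $\log(q\vee d_1)$ on that event (otherwise discoveries would vanish while the right tail $R(L)$ is already at least $\eta_n$, contradicting optimality of $L$), and hence $\sum_{T\in \cH_1}\bbI(W_T^{\mathsf{Rank}}>L)\ge \eta_n(1-o_p(1))$. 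This establishes the power claim and the lower bound $R(L)\ge \eta_n(1-o_p(1))$; plugging into the FDP decomposition and using the uniform variance estimate gives $\mathrm{FDP}\le \alpha+o_p(1)$.

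The main obstacle is the variance step: one must simultaneously balance (a) the $q_0^{-\nu}$ correlation decay governing weakly correlated pairs, (b) the bivariate CLT remainder $h_n$ from Theorem \ref{thm:asymp-two-var}, and (c) the unavoidable $\beta_{\mathsf{s}} q_0^2$ budget from strongly correlated pairs, and then upgrade the resulting pointwise variance bound into a statement uniform over the random threshold $L$. This balancing is exactly where the rate assumption $(\sqrt{\beta_{\mathsf{s}}}\vee h_n)q_0/\eta_n\to 0$ is consumed, linking the number of detectable signals $\eta_n$ to the weak-dependence proportion $\beta_{\mathsf{s}}$ and the CLT rate $h_n$ inherited from the individual test statistics.
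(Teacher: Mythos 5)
Your overall architecture (expectation-level symmetry of $W_T^{\mathsf{Rank}}$ under $H_{0T}$ from the conditional independence of $W_T^{(1)},W_T^{(2)}$, a variance bound on the null tail counts split along $\cH^2_{0,\mathrm{strong}}$ vs.\ $\cH^2_{0,\mathrm{weak}}$, a uniformity upgrade, and a separate argument that strong signals force $L\le L_n$) matches the paper's. But the central quantitative step is wrong as stated: a per-pair covariance bound of order $O(|\rho_{T_1,T_2}|+h_n)$ on the indicators is far too crude. The relevant normalization is the squared expected null tail count at the working threshold, which at $t=L_n$ is of order $\eta_n^2$ (the paper sets $q_0G(L_n)\asymp\alpha\eta_n$). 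With your bound, the weakly correlated pairs contribute $q_0^2\,q_0^{-\nu}/\eta_n^2$ and the CLT remainder contributes $h_nq_0^2/\eta_n^2$ to the normalized variance; neither is controlled by the hypothesis $(\sqrt{\beta_{\mathsf{s}}}\vee h_n)\,q_0/\eta_n\to0$ (e.g.\ in the submatrix example $\beta_{\mathsf{s}}\asymp1/d_2$, so the hypothesis only forces $\eta_n\gg q_0/\sqrt{d_2}$, which does not give $\eta_n\gg q_0^{1-\nu/2}$ for small $\nu$). The paper's Lemma \ref{lemma:weak-cov} proves the essential refinement: via a Taylor expansion of the bivariate normal c.d.f.\ (absorbing the $h_n$ error into terms proportional to the \emph{marginal} tails) and the Gaussian-copula identity, one gets
$$\bigl|\operatorname{cov}\bigl(\bbI(W_{T_1}^{\mathsf{Rank}}>t),\bbI(W_{T_2}^{\mathsf{Rank}}>t)\bigr)\bigr|\lesssim h_n\bigl(p_1(t)+p_2(t)\bigr)+|\rho|\,\bigl(p_1(t)p_2(t)\bigr)^{(1-\nu)(1-c\rho)},$$
where $p_i(t)$ are the tail probabilities. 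Only this tail-proportional decay yields, after normalization, the terms $h_nq_0/(\alpha\eta_n)$ and $(\alpha\eta_nq_0)^{-\nu/2}$ that the stated assumptions actually control. Your proposal is missing this idea, and the claimed conclusion $\sup_t|V_+(t)-V_-(t)|=o_p((\sqrt{\beta_{\mathsf{s}}}\vee h_n)q_0)$ does not follow from your variance bound even pointwise (Chebyshev at scale $q_0\sqrt{\beta_{\mathsf{s}}}$ gives failure probability $O(1)$, not $o(1)$; the paper's deviation is measured relative to $q_0G(t)$, not $q_0$).

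The uniformity step is also not right: a union bound over the up-to-$q_0$ jump points of $t\mapsto V_\pm(t)$ multiplies the Chebyshev failure probability by $q_0$ and destroys the estimate. The paper instead uses monotonicity to sandwich the counts over a geometric grid $t_k=G^{-1}\bigl(\tfrac12(2G(L_n))^{k/K}\bigr)$ with only $K\asymp\log(q_0/(\alpha\eta_n))\cdot(\text{rate})^{-1/2}$ points, trading a multiplicative $r_K\to1$ discretization loss for a manageable union bound. Finally, your threshold argument for $L\le L_n$ omits the required upper bound on the left-tail count $\#\{T:W_T^{\mathsf{Rank}}<-L_n\}\le\tfrac34\alpha\eta_n$ (which again invokes the refined concentration with $\epsilon_n=\alpha/8$); "discoveries would vanish" is not by itself a contradiction with the definition \eqref{eq:dd-threshold}.
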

Theorem \ref{thm:weak-cor-fdr} provides an asymptotic bound on the FDP, which is stronger than an FDR bound. Indeed, the first claim implies that
\begin{equation}\label{eq:fdr-exp}
\text{FDR} = \E(\text{FDP}) \le \alpha  (1 + o(1)),
\end{equation}
thereby establishing an asymptotic control of the FDR. On the other hand, if nearly all signals are strong in that $\eta_n/q_1\to 1$, then the second claim indicates that $\text{POWER}\to_p 1$. 

For clarity, we stated the asymptotic bounds for FDP and POWER in Theorem \ref{thm:weak-cor-fdr}. Our proof actually establishes stronger results in a nonasymptotic form. Theorem \ref{thm:weak-cor-fdr} is a direct consequence of these nonasymptotic results, which will be presented in the supplement (Section \ref{sec:add-res}). It is also worth noting that both the sample size and signal-to-noise ratio (implied by the condition on $h_n$) requirements of Theorem \ref{thm:weak-cor-fdr} are comparable to those for estimation \citep{keshavan2010matrix_b,ma2018implicit,xia2021statistical}. This immediately suggests that we can effectively control FDR under conditions of weak correlation, provided the underlying matrix can be consistently recovered.

We remark that (i) the condition $\left(\sqrt{\beta_{\mathsf{s}}} \vee h_n\right)\frac{q_0}{\eta_n} \to 0$ is easily satisfied in the three practical examples discussed above, provided that the proportion of identified signals is not too small (i.e., $\frac{\eta_n}{q_0} \gg \frac{1}{\sqrt{d_2}} \vee h_n$). For example, in submatrix inference for the Netflix Prize problem, the condition $\sqrt{\beta_{\mathsf{s}}} \vee h_n \to 0$ is rather mild due to the large scale of the dataset, and the condition $\frac{\eta_n}{q_0} \ge c$ easily holds for testing at a fixed movie rating threshold. Moreover, (ii) the threshold \eqref{eq:dd-threshold} ensures FDR control only when $W_T^{\mathsf{rank}}$ is highly likely to be positive for $T \in \cH_1$. Indeed, if $W_T^{\mathsf{rank}}$ are negative for many $T \in \cH_1$, \eqref{eq:dd-threshold} will yield an overly large $L$, which leads to poor tail symmetry of $\{W_T\}_{T\in\cH_0}$, leaving the FDR unbounded. For instance, simple asymptotic normal-based methods may yield negative test statistics for non-null hypotheses when $\langle {M}, T\rangle-\theta_T<0$, for $T\in\cH_1$. In contrast, we favor symmetric data aggregation, which tends to keep $W_T^{\mathsf{rank}}$ positive because $W_T^{(1)}$, $W_T^{(2)}$ usually share the same sign when $T \in \cH_1$. This preference for symmetric data aggregation, along with its power boost, is also highlighted and justified in \cite{dai2022false,du2021false}.

We now give a brief example illustrating the error rate described in (\ref{eq:fdr-exp}). Consider the submatrix-inference setting discussed in Section~\ref{sec:dependence}. Under a strong matrix-SNR condition $\lambda_{\min}/\sigma_{\xi}\gtrsim d_1$, and assuming a sufficient proportion of identified signals so that $\eta_n/q_0 \asymp 1$, we have $\text{FDR}\le \alpha + h_n + \widetilde{O}\big( (h_n q_0/\eta_n)^{1/6}  \big)\le \alpha + \widetilde{O} \big(  (d_1/n)^{1/12}  \big) $. A more detailed nonasymptotic bound can be found in Section~\ref{sec:non-saymp-results} of the supplement.


\subsection{Testing under Heterogeneity}\label{sec:hetero}
Although the asymptotic normality result in Section \ref{sec:clt}, together with our multiple testing procedure (Algorithm \ref{alg:matrix-fdr}), is developed under homogeneous noise and uniform sampling, we note that the framework extends naturally to settings with heterogeneous noise and, potentially, non-uniform missingness. In particular, when the noise variables $\xi_i$ have heterogeneous variances, the asymptotic normal approximation remains valid after a straightforward modification of the variance term.

\begin{Proposition}\label{prop:hetero} Assume $\{(X_i,\xi_i)\}_{i=1}^n$ are i.i.d. but $\xi_i$ depends on $X_i$ such that 
\begin{equation*}
    \E[\xi_i\mid X_i= x_0 ] = 0, \quad \E[\xi_i^2\mid X_i= x_0 ] = \left(\langle S, x_0 \rangle\right)^2,
\end{equation*}
for any entry element in  orthonormal basis: $x_0 \in \mathfrak{E}$. Here $S\in\R^{d_1\times d_2}$ represents the  heterogeneous standard deviation of $\xi_i$ with respect to the location where it is sampled. Suppose further that $\xi_i\mid X_i$ is still $\sigma_\xi$ sub-Gaussian, and $\underline{\sigma}\le S_{ij}\le \overline{\sigma}$, with $\overline{\sigma}/\underline{\sigma} \le \kappa_{\sigma}$ for some $\kappa_\sigma >0$. Then, we have  
\begin{equation}\label{eq:hetero-noise-clt}
    \frac{\langle \widehat{M}, T\rangle-\langle M, T\rangle}{\|\cP_M(T) \odot S \|_\tF \cdot \sqrt{d_1 d_2 / n}} \to_d N(0,1),
\end{equation}
provided that $n$ and $\frac{\lambda_{\min}}{\sigma_{\xi}} $ are large. Moreover, when 
    \begin{equation}\label{eq:condition-SNR-sample-hetero}
    n\ge  C_1 C_{\init}^2\kappa_0^2 \kappa_{\sigma}^6\frac{\norm{T}_{\ell_1}^2}{\beta_T^2\norm{T}_{\tF}^2} \mu^{5}r^5 d_1\log^2 d_1, \ \frac{\lambda_{\min}}{\sigma_{\xi}} \ge  C_2 C_{\init}^2 \frac{\kappa_0\kappa_{\sigma}^2\norm{T}_{\ell_1}}{\beta_T\norm{T}_\tF } \sqrt{\frac{ \alpha_{d} \mu^{6} r^5 d_1^2 d_2  \log^2 d_1 }{ n } }.
\end{equation}
we can take the test statistic under  heterogeneity as
\begin{equation}\label{eq:test-stat-hetero}
    W_{T}^{\mathsf{h}} =  \frac{\langle \widehat{M}, T\rangle-\theta_T}{ \widehat{s}_T^{ \mathsf{h} } \cdot \sqrt{d_1 d_2 / n}}, 
\end{equation} 
where
\begin{equation}
    \left(\widehat{s}_{T}^{\mathsf{h}}\right)^2 = \frac{d_1 d_2}{n}\sum_{i=1}^n\left[\left(Y_i-\left\langle \widehat M^{\init},X_i \right\rangle\right)\left\langle\cP_{\widehat M^{\init}}(T), X_i\right\rangle \right]^2.
\end{equation}
We then have the following result when $H_{0T}$ is true:
\begin{equation}\label{eq:hetero-CLT}
\begin{aligned}
    &\sup _{t\in \mathbb{R}}\left|\mathbb{P}\left(W_{T}^{\sfh} \leq t\right)-\Phi\left(t\right)\right|
    \\
    &
\le C_3\left(  C_{\init}   \frac{\kappa_0 \kappa_{\sigma}^3\norm{T}_{\ell_1}}{\beta_T \norm{T}_{\tF}} \sqrt{\frac{ \mu^{5} r^5 d_1 \log^2 d_1  }{ n}} +  \frac{ C_{\init}^2\sigma_{\xi}  \kappa_0 \kappa_{\sigma}^2 \norm{T}_{\ell_1} }{\beta_T \lambda_{\min}  \norm{T}_\tF }\sqrt{\frac{\alpha_d \mu^{6} r^{5 } d_1^2 d_2  \log^2 d_1 }{ n } } \right).
\end{aligned} 
\end{equation}

\end{Proposition}
Proposition \ref{prop:hetero} covers a broad range of heterogeneous settings, including cases where the observations follow Bernoulli noise models, such as $Y_i \sim \operatorname{Ber}(\langle M, X_i \rangle)$, as well as other discrete data scenarios, for example, the Netflix movie rating \citep{bennett2007netflix}. Moreover, since our our multiple testing procedure only relies on the asymptotic behavior of $W_T$, by replacing $W_T$ with $W_T^{\sfh}$ in \eqref{eq:test-stat-hetero} and $h_n$ with the rate in \eqref{eq:hetero-CLT}, our Algorithm \ref{alg:matrix-fdr} can still control FDP asymptotically.

The extension to non-uniform missingness is also possible. Our bias–variance decomposition shows that when the sampling distribution of $X_i$ is non-uniform, specifically, given any  $x_0 \in \mathfrak{E}$, $\PP(X_i=x_0)=\langle P,x_0 \rangle$ for some matrix $P\in \R_+^{d_1\times d_2}$ satisfying $\mathbf{1}^\top P \mathbf{1} = 1$, the leading variance component yields the following asymptotic normality result:
\begin{equation}\label{eq:non-uniform}
    \frac{\langle \widehat{M}, T\rangle-\langle M, T\rangle}{\|\cP_M(T) \odot S \odot  \sqrt{P} \|_\tF \cdot d_1 d_2 /\sqrt{n} } \to_d N(0,1), \text{  as }n, \lambda_{\min}/\sigma_{\xi} \to \infty. 
\end{equation}
Here, $\sqrt{P} $ denotes the entrywise square root of $P$. For uniform sampling case, we have $P=\frac{1}{d_1 d_2}\mathbf{1} \mathbf{1}^\top   $, which returns \eqref{eq:non-uniform} to \eqref{eq:hetero-noise-clt}. Establishing a rigorous proof of \eqref{eq:non-uniform}, however, is substantially more involved, as it requires revisiting matrix completion theory under non-uniform missingness. Moreover, consistently estimating the variance term in \eqref{eq:non-uniform} is challenging without additional structural assumptions. We leave these important directions to future work.

\section{Whitening and Screening}\label{sec:strong-corr}
Theorem \ref{thm:weak-cor-fdr} shows that the symmetric data aggregation method can control FDR effectively if the number of strongly correlated linear form pairs is sufficiently small relative to the number of strong signals, i.e., $\sqrt{\beta_{\mathsf{s}}}q_0/\eta_n\to 0$. While this is plausible in many applications, as we have argued, there are also situations in which this may not be the case. We now discuss how this condition can be further relaxed thanks to the explicit characterization of the correlation among test statistics. In particular, as advocated by \cite{du2021false}, we proceed to apply symmetric data aggregation after appropriate whitening  and screening. Interestingly, by exploiting the explicit characterization of the dependence among $W_T$s, we can develop a more general and intuitive theoretical framework to study the power and FDR control for matrix completion.

To illustrate the motivation for whitening, consider the following toy example of simultaneous mean testing. We observe $Y\sim N(\theta,\Sigma)$ in $\R^q$ and aim to test $H_{0,j}:\theta_j=0$ vs $H_{1,j}:\theta_j \neq 0$, $1\le j\le q$. If the covariance matrix $\Sigma$ is known, we can apply a ``whitening’’ transformation: $\Sigma^{-\frac{1}{2}}Y =\Sigma^{-\frac{1}{2}}\theta + \xi'$, where $\xi'\sim N(0,I_q)$ has independent components. Whitening thus transforms the problem to high-dimensional linear regression with independent noise. Based on this de-correlated model, it is natural to apply variable selection methods (e.g., Lasso) for preliminary signal screening, and subsequently use symmetric data aggregation to achieve more effective FDR control, as suggested in \cite{du2021false}.

More specifically, denote the collection of test statistics obtained from Algorithm \ref{alg:matrix-fdr} as $ Z^{(i)}=\left[W^{(i)}_{T_1},  W^{(i)}_{T_2}, \dots, W^{(i)}_{T_q} \right]^\top\in\RR^q$, for $i=1, 2$. By Theorem \ref{thm:asymp-two-var}, $Z^{(i)}{\approx}_d N(\sfw, R)$ where $\textsf{w}\in\RR^{q}$ with the $i$-th entry $\sfw_i=\big(\langle M, T_i\rangle-\theta_{T_i}\big)/\big(\sigma_{\xi}\|\calP_M(T_i)\|_{\rm F}\sqrt{d_1d_2/n}\big)$ and $R=(\rho_{T_j,T_k})_{1\le j,k\le q}$. If $R$ is known, then $R^{-1/2}Z^{(i)}\approx_d N(R^{-1/2} \sfw, I_q)$ has asymptotically independent coordinates and thus allows for better FDR control. However, such a whitening step can also mask the nonzero coordinates of $\sfw$, which, as suggested by \cite{du2021false}, can be estimated by Lasso. Of course, $\rho_{T_j,T_k}$ is unknown, but it can nonetheless be estimated by
$$
\widehat{\rho}_{T_j,T_k} = \frac{\left\langle \cP_{ \widehat{M}^{\mathsf{init}} } (T_j) ,\cP_{ \widehat{M}^{\mathsf{init}} }(T_k)  \right\rangle }{ \norm{\cP_{ \widehat{M}^{\mathsf{init}} }(T_j)}_{\tF}\norm{\cP_{ \widehat{M}^{\mathsf{init}} }(T_k)}_{\tF} },
$$
where $\cP_{ \widehat{M}^{\mathsf{init}} }(\cdot)$ is defined following \eqref{eq:tangent-proj}, and $\widehat{M}^{\mathsf{init}} $ is the initial GD estimate. In summary, we shall consider the following algorithm detailed in Algorithm~\ref{alg:matrix-sda}. Here, to ensure valid inversion of the population correlation matrix $R$, we confine $q$ as $q\le (d_1+d_2)r-r^2$, and assume $T\in\cH$ are linearly independent. We note that when $q$ is large, Algorithm \ref{alg:matrix-fdr} is preferable, since in many scenarios (e.g., those discussed in Section \ref{sec:dependence}), a large hypothesis set $\cH$ typically implies that a substantial proportion of the test statistics are only weakly correlated.

\begin{algorithm}[htbp]
\caption{Matrix FDP Control with Whitening and Screening}
\label{alg:matrix-sda}
\begin{algorithmic}[1]
\REQUIRE Hypotheses $\left\{H_{0T_i}: M_{T_i}=\theta_{T_i}, i\in [q]\right\}$, data splits  $\cD_1$, $\cD_2$, rank $r$, FDR level $\alpha$, regularization parameter $\lambda\ge 0$.
\STATE{ Apply Algorithm \ref{alg:matrix-fdr} to get  $Z^{(1)}\in\R^{q}$, $Z^{(2)}\in\R^{q}$ from $\{\mathcal{D}_1\}$ and $\{\mathcal{D}_2\}$ respectively}
\STATE {From $\mathcal{D}_1$, obtain a covariance estimate $\widehat{R}=(\wt\rho_{T_i, T_j})_{i,j=1}^q$ using $\widehat{M}_{\mathsf{init}}^{(1)}$ estimated from Algorithm \ref{alg:matrix-fdr}, that is
\begin{equation*}
    \widehat{\rho}_{T_i,T_j} = \frac{\left\langle \cP_{\widehat{M}_{\mathsf{init}}^{(1)}} (T_i) ,\cP_{\widehat{M}_{\mathsf{init}}^{(1)}}(T_j)  \right\rangle }{ \norm{\cP_{\widehat{M}_{\mathsf{init}}^{(1)}}(T_i)}_{\tF}\norm{\cP_{\widehat{M}_{\mathsf{init}}^{(1)}}(T_j)}_{\tF} } .
\end{equation*}
And solve Lasso estimator
\begin{equation*}
    \wt \sfw^{(1)}: = \argmin_{\sfw\in \R^q } \left\{ \frac{1}{2}\big\|\wt R^{-1/2}(Z^{(1)} - \sfw)\big\|^2 + \lambda\norm{\sfw}_{\ell_1} \right\}.
\end{equation*}

}
\STATE {Denote $\calA:={\rm supp}(\wt \sfw^{(1)})$ the support of $\wt \sfw^{(1)}$. Run linear regression on $\calA$ with new design matrix $\wt R^{-1/2}_{\calA}$ and response $\wt R^{-1/2}Z^{(2)}$ to get asymptotically symmetric statistics $\wt\sfw^{(2)}$, where
\begin{equation*}
\wt\sfw_{\calA}^{(2)}:=\big(\wt R_{\calA}^{-1/2\top} \wt R_{\calA}^{-1/2}\big)^{-1}\wt R_{\calA}^{-1/2\top}\wt R^{-1/2}Z^{(2)}\quad {\rm and}\quad \wt \sfw^{(2)}_{\calA^{\rm c}}=0
\end{equation*}
with variance estimate $\widehat{\sigma}_{\sfw i}^2:= e_i^{\top}\big(\wt R_{\calA}^{-1/2\top} \wt R_{\calA}^{-1/2}\big)^{-1} e_i $ for $i\in\calA$.
}
\STATE { Compute the final ranking statistics of each $T_i$ by $\sfw_{T_i}^{\mathsf{rank}}=\wt\sfw_{i}^{(1)}\wt \sfw_{i}^{(2)}/\widehat{\sigma}_{\sfw i}$, and then choose a data-driven threshold $L$ by 
\begin{equation*}
    L:=\inf \left\{t>0: \frac{\sum_{i=1}^q\bbI\left( \sfw_{T_i}^{\mathsf{rank}}<-t\right)}{\sum_{i=1}^q \bbI \left( \sfw_{T_i}^{\mathsf{rank}}>t\right) \vee 1} \leq \alpha\right\}.
\end{equation*}}
\STATE{Reject $H_{0T_i}$ if $\sfw_{T_i}^{\mathsf{rank}}>L$}
\end{algorithmic}
\end{algorithm}

Here $\wt R_{\calA}^{-1/2}$ is the submatrix of $\wt R^{-1/2}$ with only columns indexed by $\calA$. Similarly, $\wt\sfw_{\calA}$ is the subvector of $\wt\sfw$ with only coordinates indexed by $\calA$. Note that Algorithm \ref{alg:matrix-fdr} can be treated as a special case of Algorithm \ref{alg:matrix-sda} by choosing the regularization parameter $\lambda=0$. However, as we argue below, with an appropriate choice of $\lambda>0$, the whitening and screening may lead to a more effective multiple testing procedure. In addition, a more concrete example of testing entries of submatrix of $M$ is provided in the supplement (Section \ref{sec:add-res}) to demonstrate the impact of whitening and screening.

It is clear that the efficacy of Algorithm \ref{alg:matrix-sda} hinges upon the reduction of dependence among test statistics with Lasso screening. We can show that, under mild regularity conditions, the asymptotic covariance matrix of $\wt \sfw^{(2)}_\calA$ is given by
$$
Q^{\ast}:=\big(R_{\calA}^{-1/2\top} R_{\calA}^{-1/2}\big)^{-1}.
$$
Similar to before, write
$$
\cH_{0\mathcal{A} ,\text{strong} }^2 = \left\{ (T_i,T_j)\in\calA_0 \times \calA_0 :  \abs{ Q^{\ast}_{jk}}/\sqrt{Q^{\ast}_{kk}Q^{\ast}_{jj}} \ge c|\calA|^{-\nu}  \right\},
$$
where $\calA_0=\calA\cap \cH_0$. Denote by 
$$
\beta_{\mathsf{s} }': = \frac{ \abs{\cH_{0\mathcal{A},\text{strong} }^2}   }{\abs{\calA_0}^2 }.
$$
In other words, $\beta_{\mathsf{s} }'$ represents the proportion of strongly correlated pairs after whitening and screening. Likewise, we shall write $\eta_n'=\abs{\cS'}$ where $\cS'$ is the set of strong signals to be defined. To define strong signal, write
\begin{equation}\label{eq:def-TH}
    T_{\cH}= \left[ \begin{array}{c}
		\Vect(T_1)^\top  \\
		\Vect(T_2)^\top \\
		\vdots\\
		\Vect(T_q)^\top
	\end{array} \right]\in \R^{q\times d_1 d_2} 
\end{equation}
Then the limiting covariance matrix of $W_T$s is given by 
$$
\Sigma:= \big(\big<\calP_M(T_j),  \calP_M(T_k)\big>\big)_{1\leq j,k\leq q}=T_{\cH}(I_{d_1 d_2} - U_\perp U_\perp^\top \otimes V_\perp V_\perp^\top  ) T_{\cH}^\top.
$$
We assume that $\Sigma$ is invertible with $\lambda_{\min}(\Sigma)\ge c$ for some small constant $c$, and denote the condition number $\kappa_1=\lambda_{\max}(\Sigma)/\lambda_{\min}(\Sigma)$. We can then define new strong signals as:
\begin{equation}\label{eq:strong-text}
\cS' = \left\{ T\in \cH: \frac{ \abs{\langle M, T\rangle-\theta_T } }{   \norm{T}_{\ell_1} \sqrt{ q_1 \log d_1 } } \ge C_{\mathsf{gap} } \cdot C_{\init} \kappa_1^{3/2} \sqrt{\frac{d_1 \log d_1}{n}} \right\},
\end{equation}
with its cardinality as $\eta_n'=\abs{\cS'}$. We have the following theoretical guarantee for Algorithm \ref{alg:matrix-sda}.
\begin{Theorem}\label{thm:matrix-fdr-strong}
 Let $T_{\calH}$ be a $q\times d_1d_2$ matrix with $i$-th row being ${\rm vec}(T_i)$ and define ${\rm supp}(T_{\calH}):=\cup_{i=1}^q {\rm supp}(T_i)$. Suppose that $q_{0}'$ a uniform upper bound for $\abs{\calA_0}$ and  
$$
\left(\sqrt{\beta_{\mathsf{s}}'} \vee  \left( h_n +\big\|\sfw_{\calA^c}\big\|_{\infty}\right)\right)  \frac{q_0'}{\eta_n'}   \overset{p}{\rightarrow} 0, 
$$
and 
\begin{equation}\label{eq:SNR-strong-dep}
    \lambda_{\min}\gg 
    C_{\init}\left(\norm{ R^{-1}  }_\infty + \frac{  \norm{T_{\calH} }}{ \norm{T_{\calH} }_{2,\max} } \left( |\operatorname{supp}(T_{\calH} )|\wedge \sqrt{d_2} \right) \right) \max_{T\in\calH}\left\{ \frac{ \|T\|_{\ell_1} }{ \|T\|_{\mathrm{F}} } \right\} \sigma_\xi \sqrt{\frac{ q d_1^3 \log d_1 }{n}}.
\end{equation}
Then there exists universal constant $C_4>0$ such that if regularization parameter $\lambda=C_4\sqrt{\log d_1 }$ in Algorithm~\ref{alg:matrix-sda}, then
$$
\mathrm{FDP}=\frac{\sum_{T\in \cH_0  } \bbI(\sfw_{T}^{\mathsf{rank}} >L ) }{\left( \sum_{T\in \cH  } \bbI(\sfw_T^{\mathsf{rank}} >L ) \right) \vee 1 } \le \alpha(1+o_p(1))
$$
and
$$
\mathrm{POWER}= \frac{\sum_{T\in \cH_1  } \bbI(\sfw_T^{\mathsf{rank}} >L ) }{q_1} \ge \frac{\eta_n' }{q_1}(1-o_p(1)).
$$   
\end{Theorem}

Note that the covariance matrix $\Sigma=\big(\langle \calP_{M}(T_i), \calP_M(T_j)\rangle\big)_{i,j\in[q]}$ is not known and our whitening procedure uses an estimate in its place. The additional lower bound of $\lambda_{\min}$ in Theorem \ref{thm:matrix-fdr-strong} is in place to ensure that the estimated covariance matrix indeed can be used to ``whiten'' the test statistics. It is also worth pointing out that we do not require the sure-screening condition of Lasso. Such conditions are common in the literature. See , e.g., \cite{roeder2009genome,barber2019knockoff,du2021false,dai2023scale}. For our purpose, weak signals can be entertained as long as $\|\sfw_{\calA^c}\|_{\infty}$ is sufficiently small. We also note that the sample size $n$ in our matrix completion problem has a fundamentally different meaning compared to the classical regression problem. Due to incomplete observations, each sample point provides only limited information for inferring a matrix. Even for a single linear form, constructing a test statistic requires at least $n\gg d_1\log^2 d_1$ samples from Theorem \ref{thm:asymp-normal-varest}. Therefore, given $q\le d_1 d_2$, one cannot expect $n$ to be logarithmically dependent on $q$, as in the classical regression problem.


\section{Numerical Experiments}\label{sec:experiments}

\subsection{Simulation Studies}
To complement our theoretical development, we also conducted several sets of numerical experiments to further demonstrate the practical merits of the proposed methodology. We begin with a series of simulation studies aimed at illustrating the impact of several key aspects of our approach. All the simulations in this section display the averaged performance of multiple independent runs.  The reproduction code can be accessed through Github repo \url{https://github.com/wantengma/MC-FDR-Public.git}.

\subsubsection{Variance of linear forms}\label{sec:variance-lf}
In Section \ref{sec:clt},  we have presented the asymptotic normal test statistics for linear forms with a more accurate characterization of its variance. To justify the accuracy of our variance $\norm{\cP_M(T)}_\tF$, we show the simulation of empirical distribution functions of our test statistics $W_T$ in Theorem \ref{thm:asymp-normal} against former test statistic in \eqref{eq:clt-tangent} whose variance is characterized by $(\|U^\top T\|_{\rm F}^2+\|TV\|_{\rm F}^2)^{1/2}$ in \cite{xia2021statistical}. We plot the difference between empirical distribution functions $\bar{F}_n(z)$ and standard normal distribution function $\Phi(z)$ by sampling 10,000 independent realizations of test statistics. The result is shown in Figure \ref{fig:variance-comparison}. It is clear that our methods share a more precise asymptotic normal rate given smaller errors of $\bar{F}_n(z)-\Phi(z)$, especially for small sample size $n$.
\begin{figure}[H]
\centering
\begin{subfigure}{0.32\textwidth}
    \includegraphics[width=\textwidth]{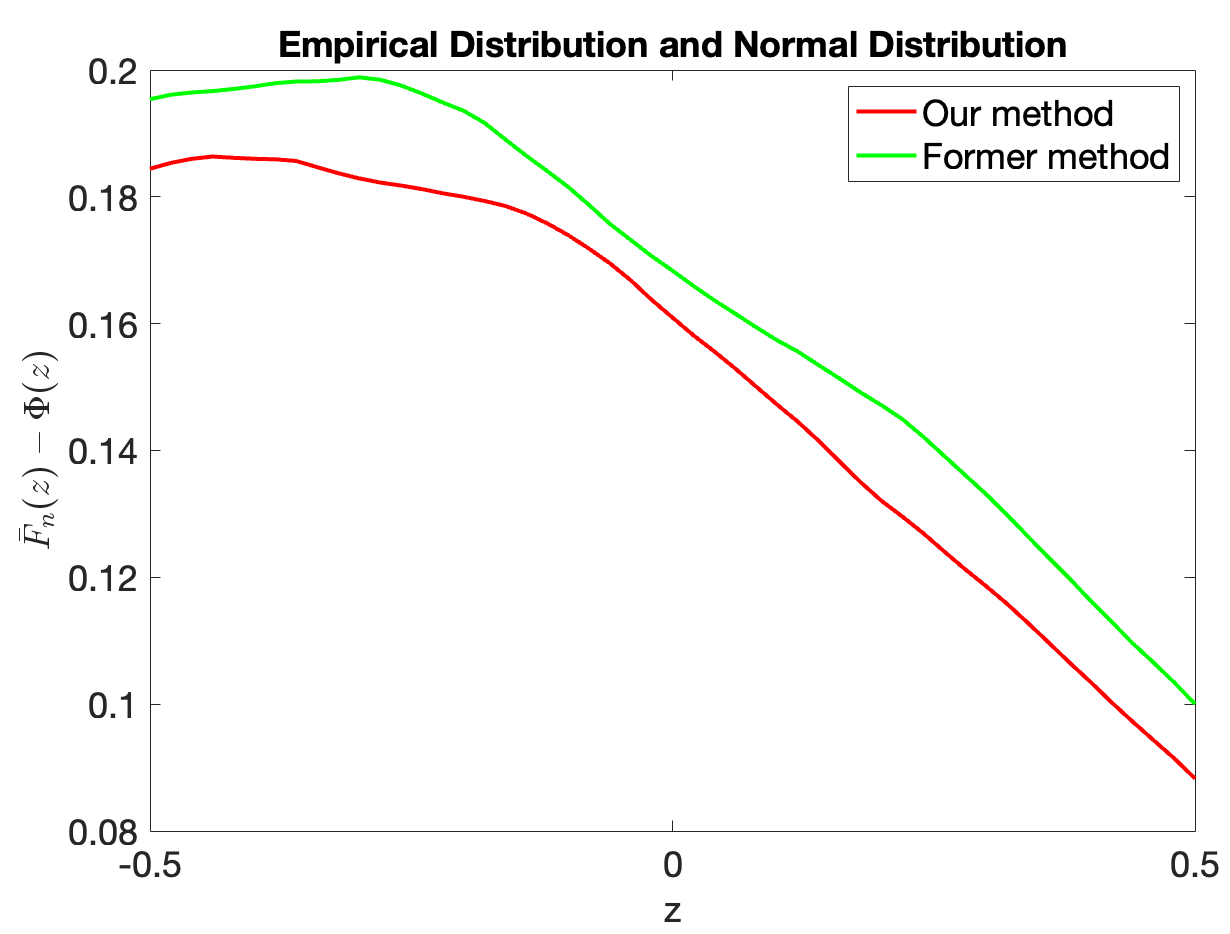}
    \caption{$n=2400$ }
    \label{fig:variance-comparison-1}
\end{subfigure}
 \begin{subfigure}{0.32\textwidth}
    \includegraphics[width=\textwidth]{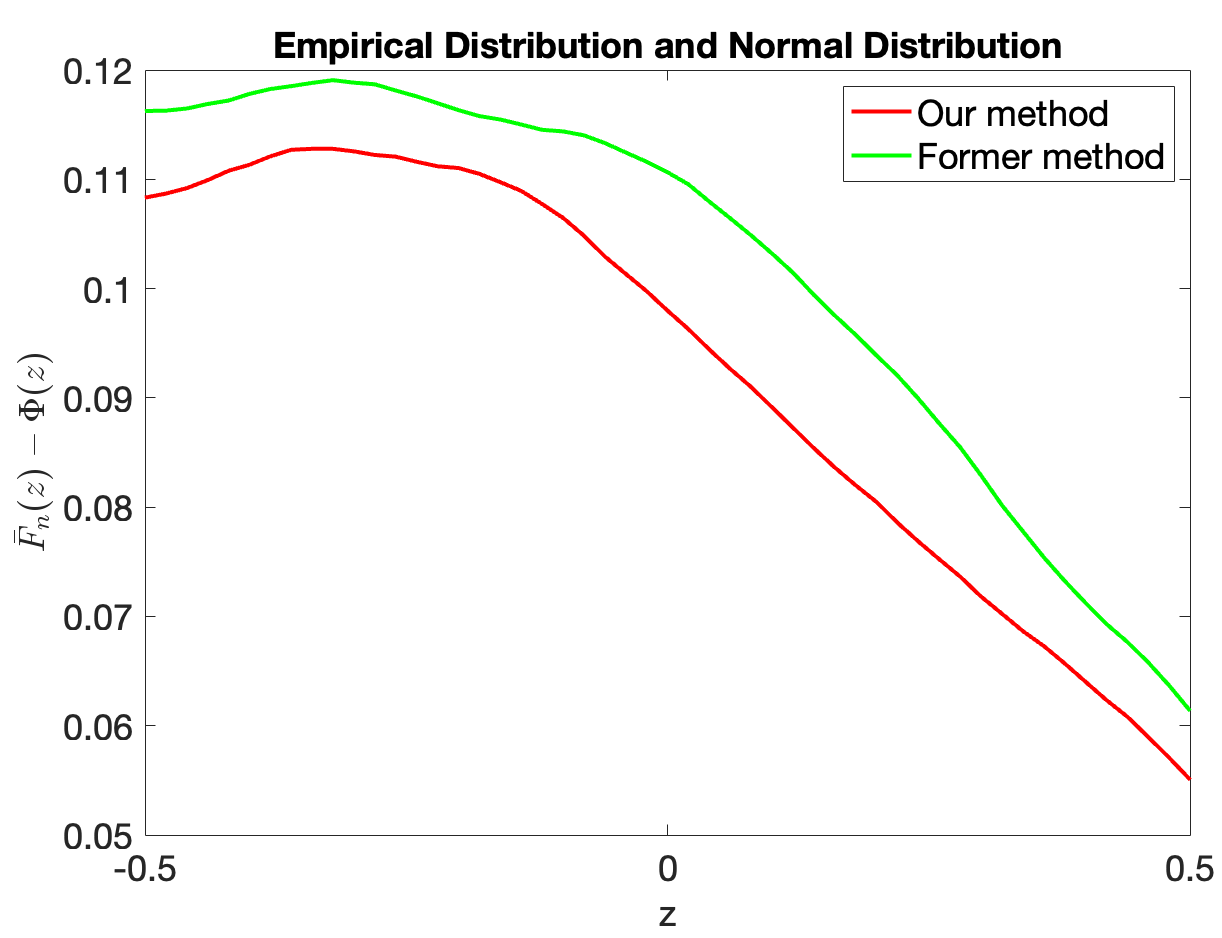}
    \caption{$n=3000$ }
    \label{fig:variance-comparison-2}
\end{subfigure}
     \begin{subfigure}{0.32\textwidth}
    \includegraphics[width=\textwidth]{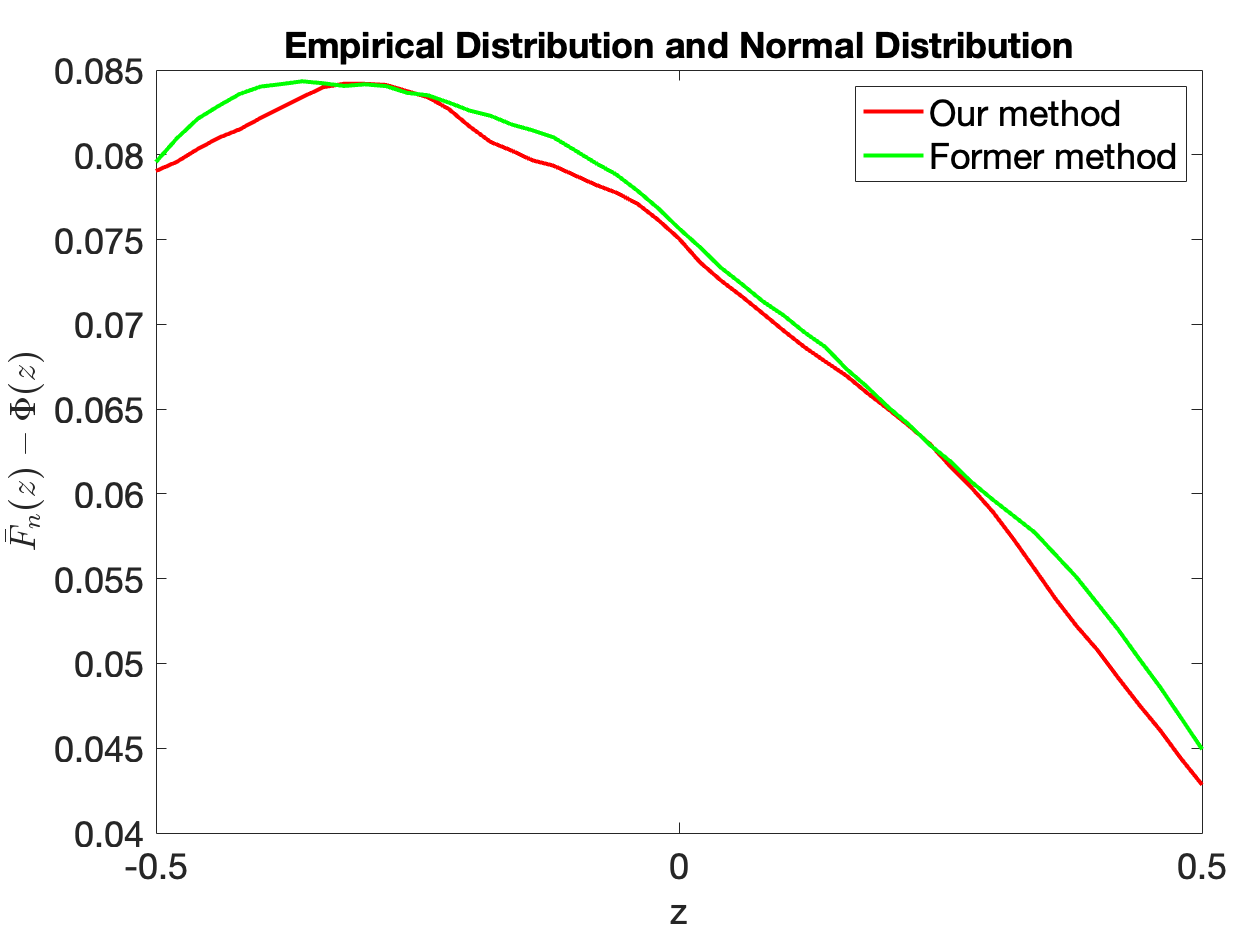}
    \caption{$n=3600$ }
    \label{fig:variance-comparison-3}
\end{subfigure}

 \caption{The difference between empirical distribution functions and $\Phi(z)$.  Here, we compare our $W_T$ with the former method \citep{xia2021statistical}. We set the matrix with $d_1=d_2=\lambda_{\min}=400$, and $r=3$, and vary the number of random samples $n$ in noisy matrix completion. }
 \label{fig:variance-comparison}
\end{figure}

\subsubsection{Data aggregation under weak dependency}\label{sec:weak-simu}
We first evaluate our Algorithm \ref{alg:matrix-fdr} by simulations to corroborate two important properties of the proposed method: (1) the validity of FDR control for multiple testing of linear forms; (2) the power boost by data splitting and data aggregation; we randomly sample a low-rank matrix of dimension $d_1=d_2=1000$, rank $r=3$, with signal strength $\lambda_{\min}=1000$. The number of observations used for Algorithm \ref{alg:matrix-fdr} is $n=50rd_1$, and the noises $\xi\sim N(0,1^2)$. We use the gradient descent \citep{wei2016guarantees,chen2020noisy,cai2022generalized} as initialization. We first verify the FDR control in weak dependency by performing blockwise matrix tests: we test each entry in $M(1:200,1:200)$ by $H_{0,ij}: M_{ij}-m_{ij}=0$ versus $H_{1,ij}: M_{ij}-m_{ij}\neq 0$. We randomly assign non-null hypotheses to these $200\times 200=40,000$ entries with probability $p=0.2$, which leads to the following settings of $m_{ij}$:
\begin{equation}\label{eq:construct_H0}
    M_{ij}-m_{ij}=\begin{cases}
        \mu_{ij}, & \text{ with probability } p=0.2 ;\\
         0, & \text{ otherwise }
    \end{cases}  
\end{equation}
Here $\mu_{ij}$ are randomly-generated signals with a fixed absolute mean: $\E \abs{\mu_{ij}}=\mu$. We run Algorithm \ref{alg:matrix-fdr} and compare different methods of data aggregation (see Section~\ref{sec:compare} for more details): I. multiplication; II. minimum absolute value with sign multiplication; III. adding absolute values with sign multiplication; IV. BH with no data splitting. Here, BH with no data splitting means that we use data $\cD_1$ and $\cD_2$ together to construct asymptotic normal test statistics and then compute their $p$-values by the normal distribution. We defer the detailed BH selection approach to Section \ref{sec:supp-table-alg} in the supplement.

The result presented in Figure \ref{fig:fdr-simu} clearly shows the excellent performance of multiplication in data aggregation with respect to both FDR control and power. By Section \ref{sec:dependence}, the blockwise matrix entry tests here can be treated as the weakly correlated case. Although the BH method \cite{benjamini1995controlling} is guaranteed to be effective in the classical regression model, it fails to provide the highest power in the matrix completion problem. 




\begin{figure}
\centering
     \includegraphics[width=\textwidth]{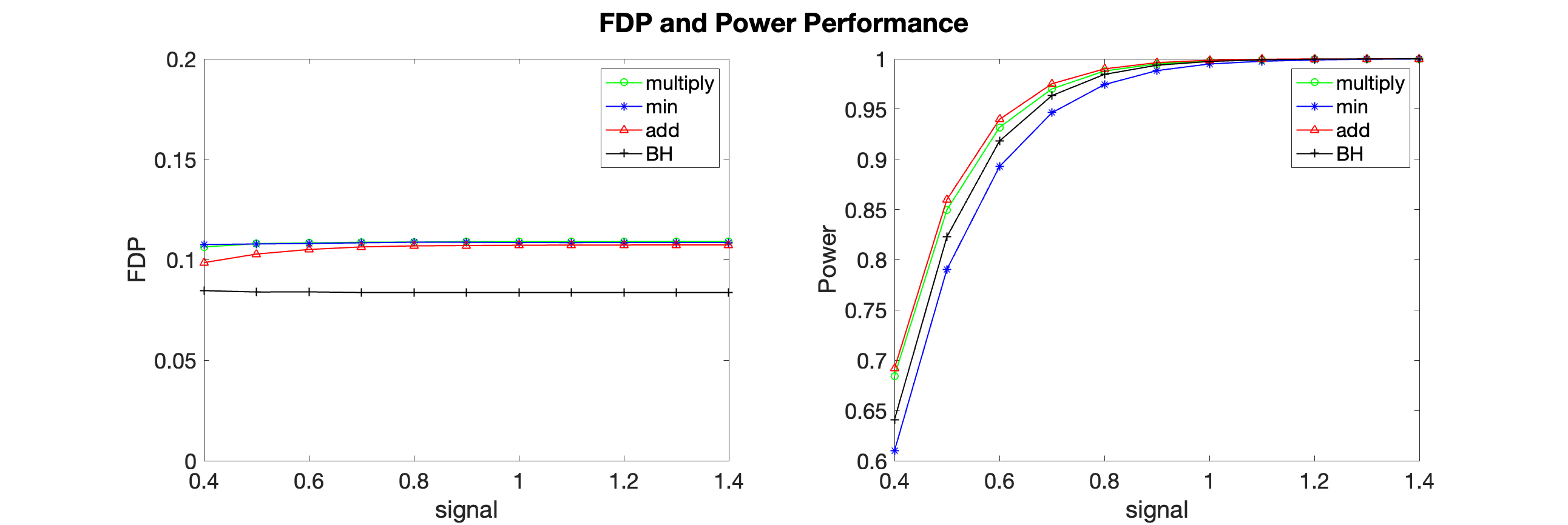}
 \caption{FDR control \& Power of different data aggregation schemes in blockwise matrix tests with $\alpha=0.1$.  Here the signal is defined by $\mu$ in eq. (\ref{eq:construct_H0}). }
 \label{fig:fdr-simu}
\end{figure}

\subsubsection{Whitening and screening}\label{sec:whitening}
We now evaluate Algorithm \ref{alg:matrix-fdr} and Algorithm \ref{alg:matrix-sda} and show the advantages of de-correlation. 
 To this end, we still adopt the data generation mechanism in the previous section,  but apply our methods to the entry comparisons between rows: we test $q=400$ differences between first row $M(1,1:400)$ and second row $M(2,1:400)$, with $H_{0,T_i}$: $M_{1,i}-M_{2,i}=0$. By choosing $T_i$ like these, the linear forms are in the same rows with shared position information, meaning that they are more likely to be correlated (but not highly correlated). Because of the complicated correlation structure of features, here we measure the overall correlation of our case by the proportion of related pairs: 
\begin{equation}\label{eq:rho-emp}
    \varrho^*(z)=\frac{\sum_{i,j\in[q] } \bbI\left( \abs{\rho_{T_i, T_j} }> z\right)   }{q^2},
\end{equation}
where $\rho_{T_i,T_j}$ indicates the correlation of two linear form $M_{T_i}$ and $M_{T_j}$ and is given by \eqref{eq:corr}. 
Here $\varrho^*(z)$ can be treated as a proxy of the strength of correlation $\beta_{\mathsf{s}}$. In this entry comparison problem, we have $\varrho^*(0.2)=0.2745$, which means that an indispensable proportion of feature pairs are correlated. For the SDA method, we use a known correlation matrix. 
 The performance of Algorithms 1 and 2 with different data aggregation methods are summarized in Figure 3. In addition, we compare our methods with $e$-BH \citep{wang2022false} and factor-adjusted multiple testing by principal factor approximation (PFA) \citep{fan2017estimation}. Here, $e$-value for each $T_i$ is taken as $e_{T_i}=\exp\left(\mu W_{T_i}^{\mathsf{all}} - \frac{\mu^2}{2}\right)$, which is claimed to be powerful \citep{ramdas2025hypothesis}.
\begin{figure}
\centering
     \includegraphics[width=\textwidth]{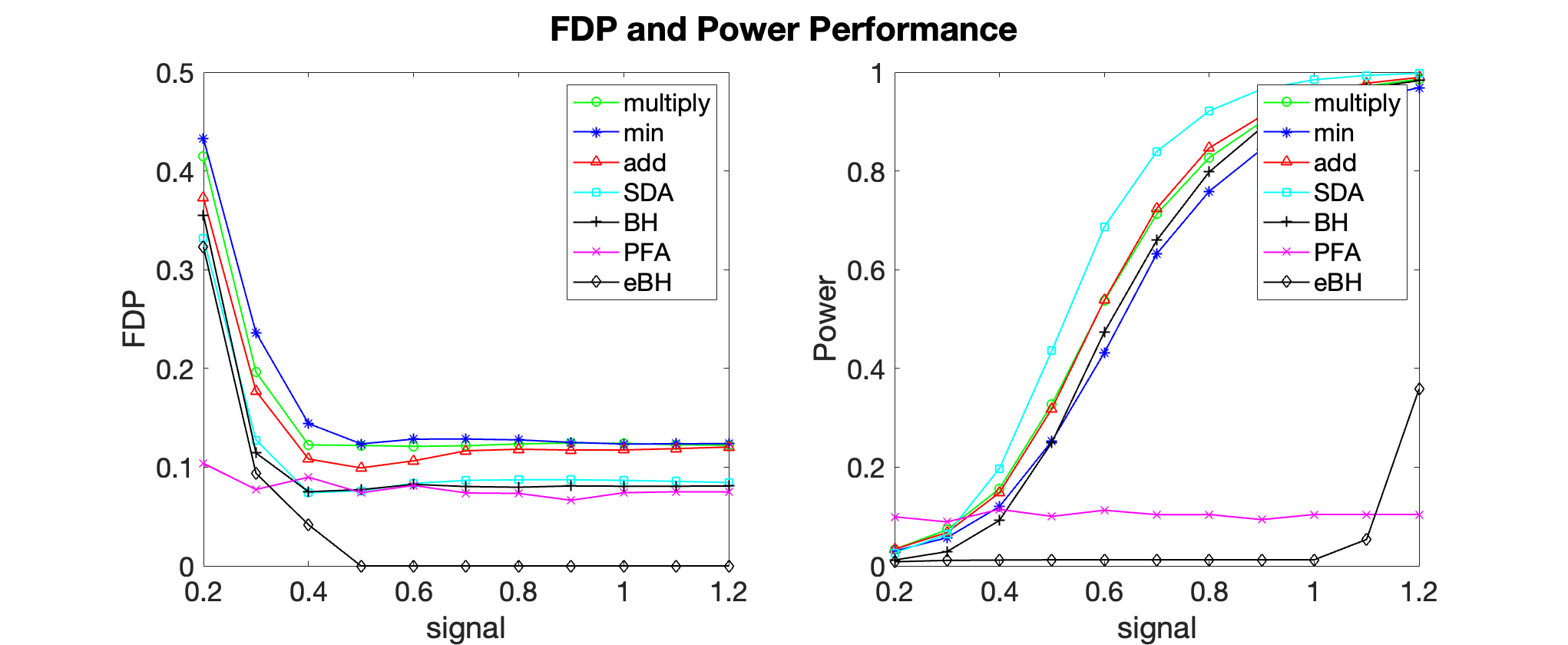}
 \caption{FDR control \& Power of different data aggregation schemes in row tests with $\alpha=0.1$. Here the signal is defined by $\mu$ in eq. (\ref{eq:construct_H0}). }
 \label{fig:fdr-sda-simu}
\end{figure}

In Figure \ref{fig:fdr-sda-simu}, the SDA method can effectively control the FDR level at $\alpha=0.1$, with a notable power enhancement compared with the standard BH method. Moreover, without de-correlation and screening, simple data aggregation methods fail to control the FDR due to dependency. PFA and $e$-BH can control the FDR but at the cost of very conservative rejections.
We can thereby draw the conclusion that our algorithm based on SDA outperforms others in the highly correlated case with the help of screening and de-correlation. 

\subsubsection{Heavy-tailed noise}
While our theories are established for sub-Gaussian noise, we observe that the proposed methods are very robust to heavy-tailed noise. This section showcases the performance of our algorithms in the existence of heavy-tailed noises, e.g., $t$-distribution and exponential distribution, and compares the performances of different methods. We consider moderate and strong correlations, respectively. Here $M$ is randomly generated with dimensions $d_1=d_2=400$, rank $r=3$, $\lambda_{\min}=400$, and the noise is fixed with a standard deviation $\sigma_\xi=0.4$. The sample size is set by $n=3000$. We focus on the following tasks: (i) entry comparisons between rows; (ii) entry comparisons within a block. More specifically, in the entry comparison task between rows, we compare $H_{0,T}$: $M_{i,j}-M_{i+1,1}=0$ for every $1\le i\le 4$ and $j\ge 2$. That is, we compare each entry with the first entry of the next row; in the entry comparison task within a block, we compare $H_{0,T}$: $M_{i,j}-M_{1,1}=0$ for every  $1\le i\le 4$ and $j\ge 2$. For these two tasks, we all have $q=1596$, but the correlation structures and levels are different. That is, (i) entry comparisons between rows, $\varrho^*(0.2)=0.4541$; (ii) entry comparisons within a block, $\varrho^*(0.2)=0.9514$. Here, (i) and (ii) can be viewed as examples of moderate and strong correlations.

\begin{figure}
\centering
\begin{subfigure}{0.8\textwidth}
     \includegraphics[width=1\textwidth]{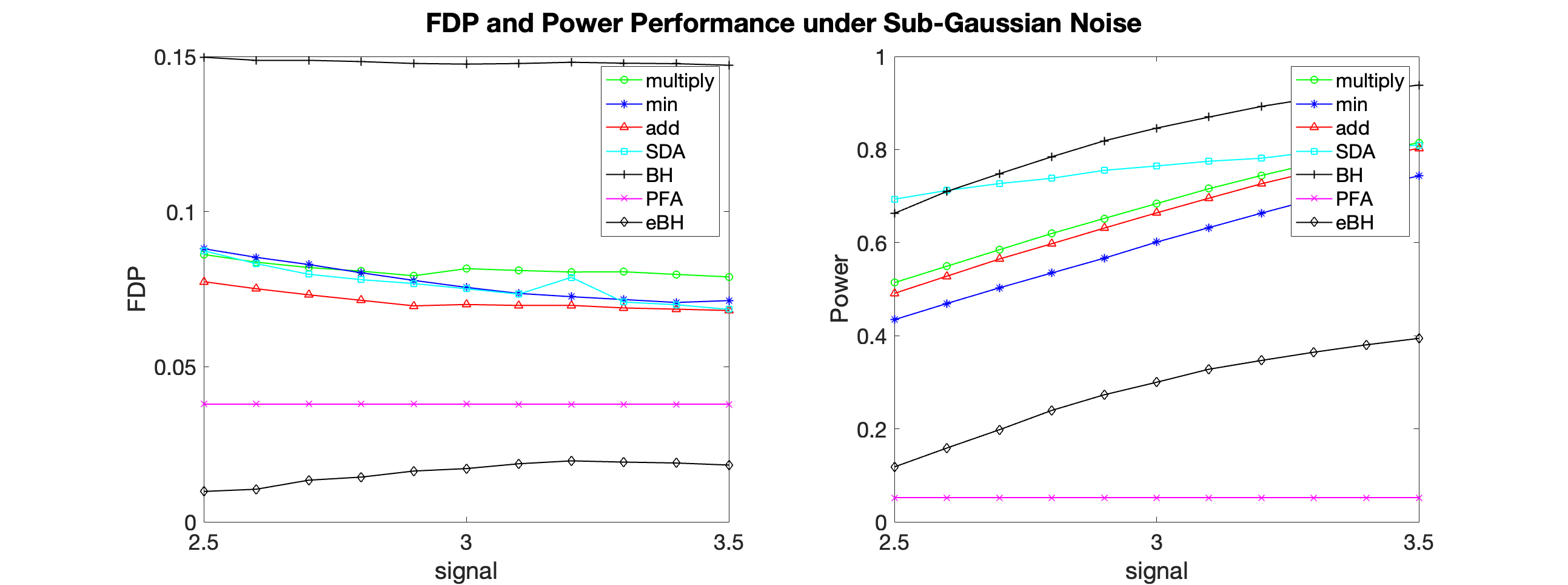}
     \label{fig:fdr-moderate-1}
 \end{subfigure}
 \begin{subfigure}{0.8\textwidth}
     \includegraphics[width=1\textwidth]{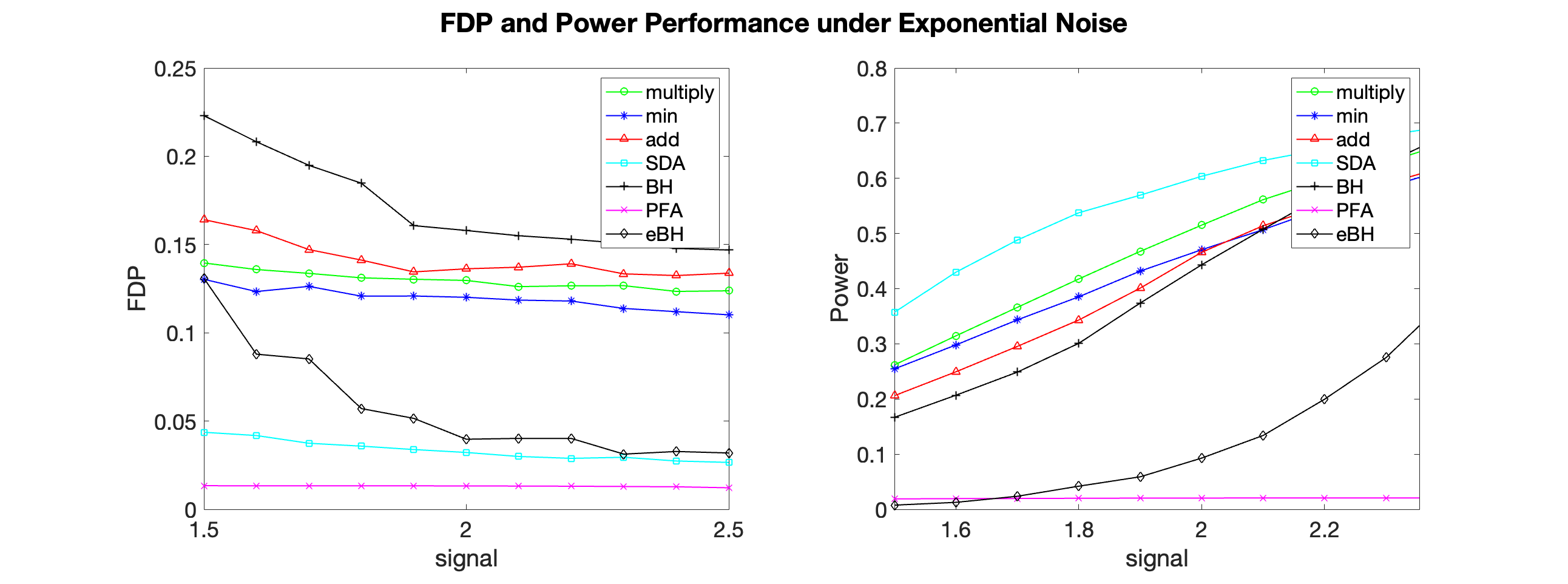}
     \label{fig:fdr-moderate-2}
 \end{subfigure}
  \begin{subfigure}{0.8\textwidth}
     \includegraphics[width=\textwidth]{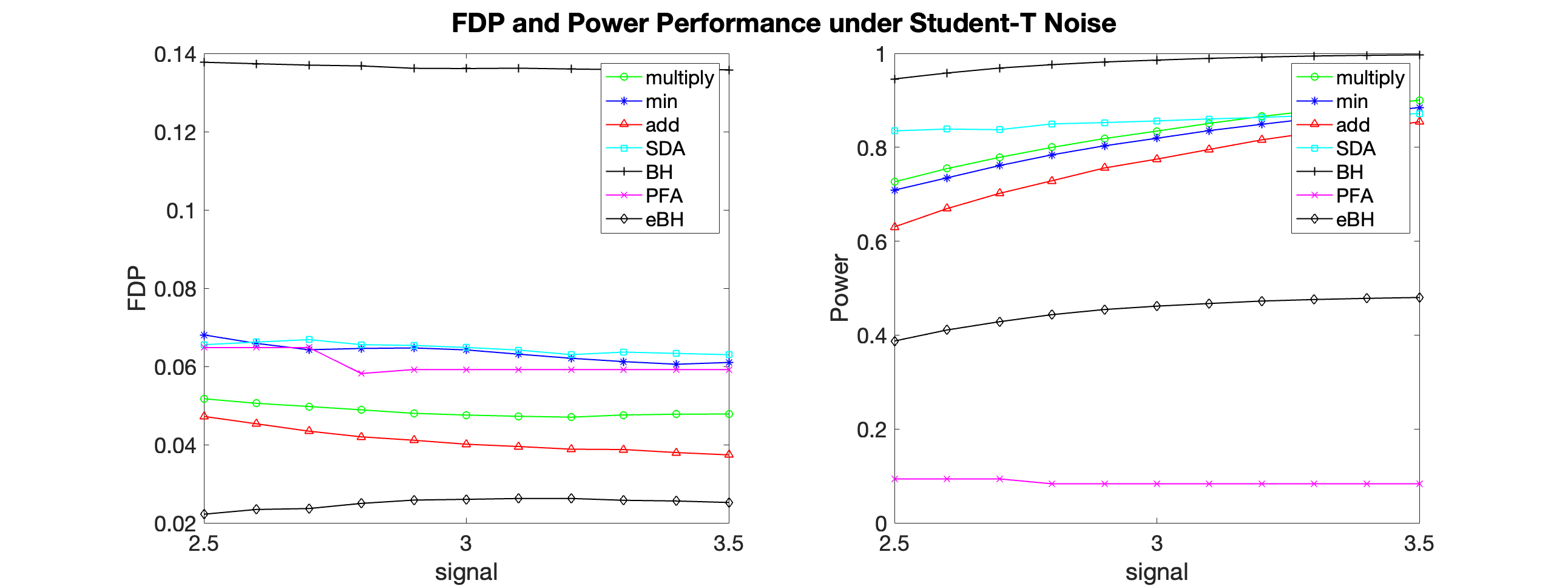}
     \label{fig:fdr-moderate-3}
 \end{subfigure}
 \caption{FDR control \& Power of different data aggregation schemes for entry comparisons between rows with $\alpha=0.1$ when the noises are heavy-tailed distributed}
 \label{fig:fdr-moderate}
\end{figure}
\begin{figure}
\centering
\begin{subfigure}{0.8\textwidth}
     \includegraphics[width=\textwidth]{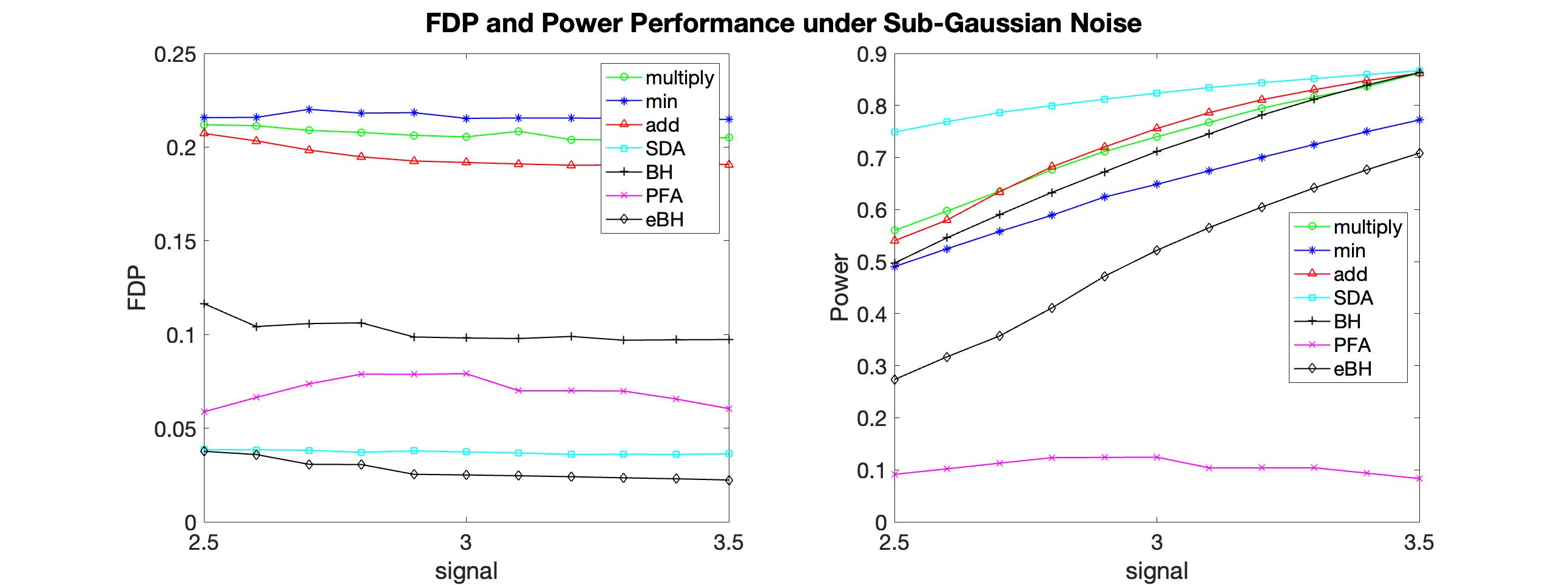}
     \label{fig:fdr-strong-1}
 \end{subfigure}
 \begin{subfigure}{0.8\textwidth}
     \includegraphics[width=\textwidth]{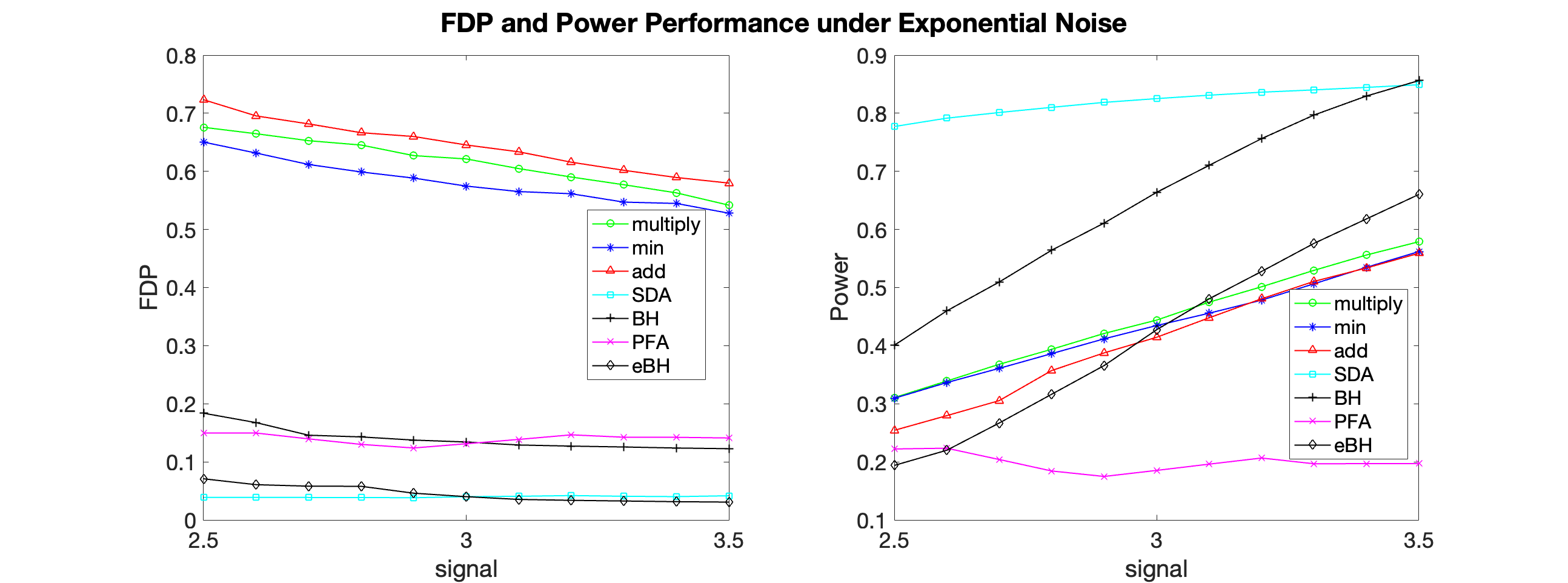}
     \label{fig:fdr-strong-2}
 \end{subfigure}
  \begin{subfigure}{0.8\textwidth}
     \includegraphics[width=\textwidth]{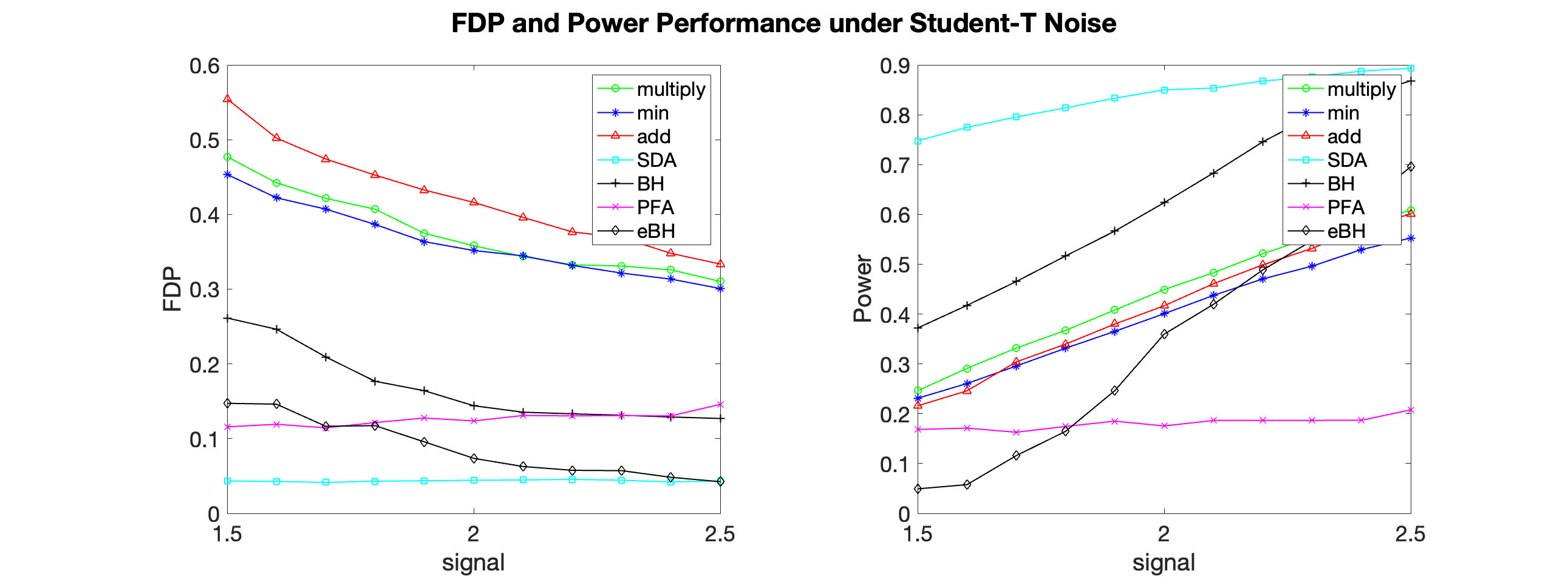}
     \label{fig:fdr-strong-3}
 \end{subfigure}
 \caption{FDR control \& Power of different data aggregation schemes for entry comparisons within a block with $\alpha=0.1$ when the noises are heavy-tailed distributed}
 \label{fig:fdr-strong}
\end{figure}
We report all the results in Figure \ref{fig:fdr-moderate} and Figure \ref{fig:fdr-strong}. In both moderate and strong correlation cases, BH method and other data aggregation scheme show unstable FDR control, while our proposed SDA method always performs well even under strong correlation. The SDA method is also robust with respect to heavy-tailed noises.

\subsection{Real Data Examples}
\subsubsection{MovieLens}
This section applies our methods to the MovieLens dataset for multiple testing and FDR control. MovieLens \citep{harper2015movielens}, as a commonly used dataset in matrix completion problems, records millions of people’s expressed preferences for movies (rated from 1-5). The dataset can be viewed as a huge, sparse matrix with heavily incomplete observations. MovieLens dataset is broadly used in matrix completion \citep{hastie2015matrix,monti2017geometric,xia2021statistical} and other machine learning tasks. The dataset is available on \url{https://grouplens.org/datasets/movielens/}. To improve data quality and modeling reliability, we removed users with fewer than 20 rated movies, resulting in 100{,}000 ratings on a 0–5 scale from 943 users over 1{,}682 movies; entries with value 0 correspond to unrated (missing) items.
We assume the latent low-rank structure of this user-rating matrix with $r=10$. Approximately, this leads to $\kappa_0=6.4225$ for GD initial estimate. We select $q=1000$ 
adjacent and observed entry pairs, aiming to compare 
\begin{equation*}
    H_{0,ij}: M(i,j)-M(i,j+1)=0 \text{ versus } H_{1,ij}: M(i,j)-M(i,j+1)> 0,
\end{equation*}
for a group of suitable entries $(i,j)$.
 Notice that since in the noisy matrix completion problem, we have the observation $Y(i, j)=M(i, j)+\xi(i, j)$, which means that the ground truth $M(i, j)$ is always unknown, we adopt the process in \cite{xia2021statistical} that treats $\mathbb{I}\left(Y\left(i, j+1\right)>Y\left(i, j\right)\right)$ as a proxy to differentiate $H_1$ from $H_0$. 
 
 We first randomly split data into two parts $\cD_1$, $\cD_2$, and then use gradient descent \citep{wei2016guarantees,chen2020noisy,cai2022generalized} on the two parts for initialization. Then, we run Algorithm \ref{alg:matrix-fdr}, \ref{alg:matrix-sda} with data splitting. Still, we consider 3 types of data aggregation on $\cD_1$, $\cD_2$.
We first verify the symmetric property of our test statistics on MovieLens Data. To that end, we first set our hypotheses $m_{ij}= Y(i,j)- Y(i,j+1)$ and construct asymptotic statistics on $\cD_1$, $\cD_2$ to mimic null test statistics. Here we still use $Y(i,j)- Y(i,j+1)$ as a proxy of $M(i,j)- M(i,j+1)$. The distribution of the corresponding $W_T^{(1)}$, $W_T^{(2)}$ can be found in Figure \ref{fig:symm-null-movielens}, showing clearly the symmetric properties of null hypotheses.

\begin{figure}
\centering
\includegraphics[width=0.7\textwidth]{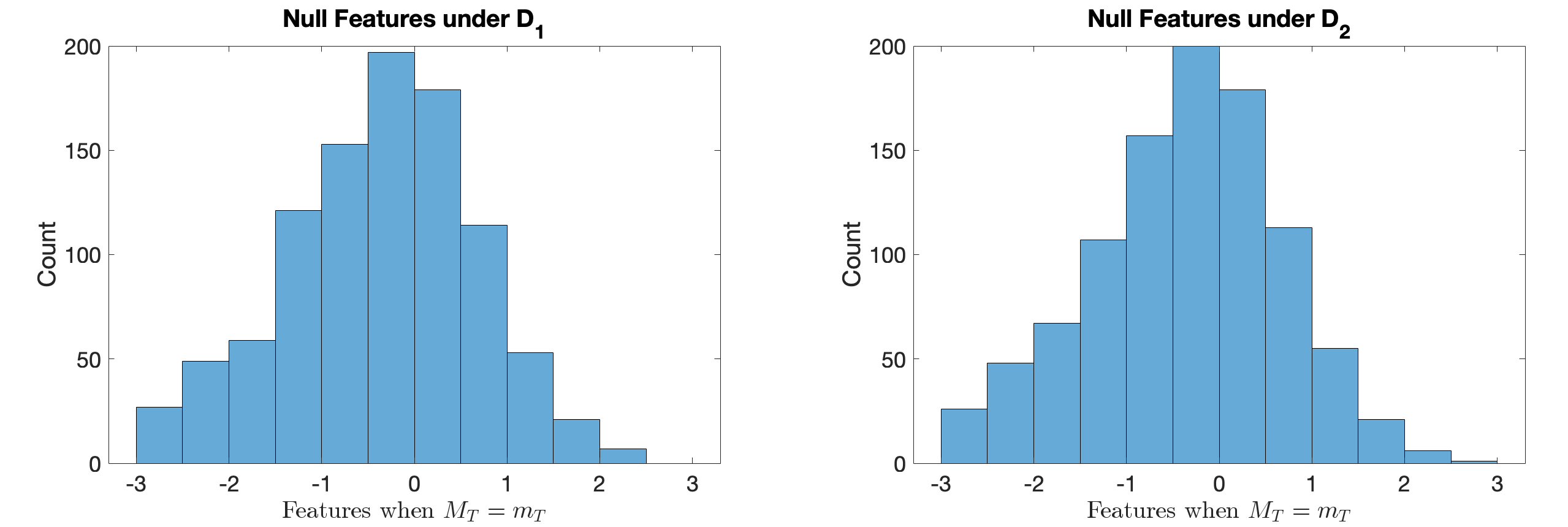}
 \caption{Symmetric distribution of all the test statistics under the null set given $\langle M, T\rangle=m_T$ for all $T$. The test statistics are observed to preserve a good symmetric property both on $\cD_1$ and $\cD_2$.}
 \label{fig:symm-null-movielens}
\end{figure}
We then apply our methods to the entrywise comparison task. Given a total of $q=1000$, the number of instances for $\mathbb{I}\left(Y\left(i, j\right)>Y\left(i, j+1\right)\right)$ is $q_1=262$. We perform the tests for this one-sided hypothesis testing by dropping out hypotheses with negative test statistics on both $\cD_1$ and $\cD_2$. The $p$-values for BH are also adjusted correspondingly. The outcomes are concisely presented in Table \ref{tab:tab1}. The result table clearly shows that the SDA method outperforms other data aggregation methods and the BH method in terms of false discovery rate control. The ineffectiveness of the first three simple data aggregation methods can be attributed to the high correlation of entry pairs, as adjacent entry pairs within a row are selected. When $\alpha$ is significantly small, SDA tends to be more conservative, which leads to good FDR control, while other methods remain to keep large FDRs. The result also shows business implications: instead of excessively recommending movies to users, the SDA can better select target users that are truly interested in the movies to increase the accuracy of the recommendation. By adopting our method for recommendation, the movie company can increase its profit while avoiding losing potential customers.

\begin{table}
	\centering
	\begin{tabular}{lc|cccc}
		\hline\hline
		Level $\alpha$ & Method & False discoveries & True discoveries & FDP  \\ \hline
		$\alpha = 0.01$ & Multiplication & 13  & 59 & 0.1806  \\
		& Minimum & 13 & 58 & 0.1831 \\
		& Addition & 13 & 60 & 0.1781  \\ 
            & SDA & \textbf{0} & {18} & \textbf{0}\\ 
            & BH & 1 & 26 & 0.0370  \\ 
  \hline
  		$\alpha = 0.05$ & Multiplication & 20  & 84 & 0.1923  \\
		& Minimum & 20 & 83 & 0.1942 \\
		& Addition & 20 & 84 & 0.1923  \\ 
            & SDA & \textbf{2} & {25} & \textbf{0.0741}\\ 
            & BH & 10 & 53 & 0.1587  \\ 
  \hline
  		$\alpha = 0.1$ & Multiplication & 24 & 95 & \textbf{0.2017}  \\
		& Minimum & 24 & 94 & 0.2034   \\
		& Addition & 25 & 95 & 0.2083   \\ 
            & SDA & \textbf{8} & {49} & \textbf{0.1404} \\ 
            & BH & 22 & 76 & 0.2245  \\ 
    \hline
    $\alpha = 0.2$ & Multiplication & 33 & 108 & 0.2340  \\
		& Minimum & 33 & 108 & 0.2340   \\
		& Addition & 33 & 108 & 0.2340   \\ 
            & SDA & \textbf{23} & 89 & \textbf{0.2054} \\ 
            & BH & 36 & 115 & 0.2384 \\ 
  \hline \hline
	\end{tabular}
	\caption{Numbers of the discovered entry pairs with FDP by different data aggregation methods under various levels on MovieLens data.}
	\label{tab:tab1}
\end{table}

\subsubsection{Rossmann sales dataset}
We use the Rossmann sales dataset that has recently been studied for uncertainty quantification in matrix completion \citep{farias2022uncertainty,gui2023conformalized}. The Rossmann sales dataset records over 3,000 drug stores run by Rossmann in 7 European countries. The training set contains daily sales of 1115 drug stores on workdays from Jan 1, 2013, to July 31, 2015. The data matrix is thus of dimension $1115\times780$, where two dimensions represent drug stores and workdays, respectively. The unit of sales data is 1K. The dataset is very dense with about $80\%$ valid (non-zero sells) observations of the full matrix; thus, we apply random masking to get sparse observations and use other data only to initialize the algorithm. In this example, we use $20\%$ of the total records as each one split and apply Algorithm \ref{alg:matrix-fdr} on the two splits of the data that are properly processed. Noticing that most observed entries are given, we use the observations as true $M_{ij}$ and perform multiple entrywise tests \eqref{eq:ross}. We select the first $q=20,000$ masked entries sorted by workdays with records in the whole dataset as our target $\cH$.
Since we consider the inference of a submatrix with a relatively large $q$, according to Section \ref{sec:dependence}, the problem can be approximately treated as weakly correlated, which means simple data aggregation is enough to control FDR. We randomly assign null and non-null features by \eqref{eq:construct_H0} but only consider positive signals. In this case, the ratio of non-null is $p=0.3$, and we assume the latent low-rank $r=30$. Specifically, we simultaneously test
\begin{equation}\label{eq:ross}
      H_{0,ij}: M_{ij}=m_{ij}\quad \text{ vs. } \quad H_{1,ij}: M_{ij}> m_{ij}, \quad \forall (i,j)\in \cH,
\end{equation}
for a collection of masked entries $(i,j)\in\cH$. Here $m_{ij}$ serves as a baseline level for entry $(i,j)$. To create a controlled benchmark where ground truth is available, we set $m_{ij}$ based on the observed (pre-masking) values: we take $m_{ij}=M_{ij}$ for $(i,j)\in\cH_0$ and $m_{ij}=M_{ij}-\mu_{\operatorname{signal}}$ for $(i,j)\in\cH_1$, with signal strength $\mu_{\operatorname{signal}}>0$.
We present the results in Figure \ref{fig:fdr-Rossmann} and the ROC curves in Figure \ref{fig:fdr-Rossmann-roc}. The Rossmann sales dataset is available at \url{https://www.kaggle.com/c/rossmann-store-sales}.

\begin{figure}
\centering
\begin{subfigure}{1\textwidth}
     \includegraphics[width=0.9\textwidth]{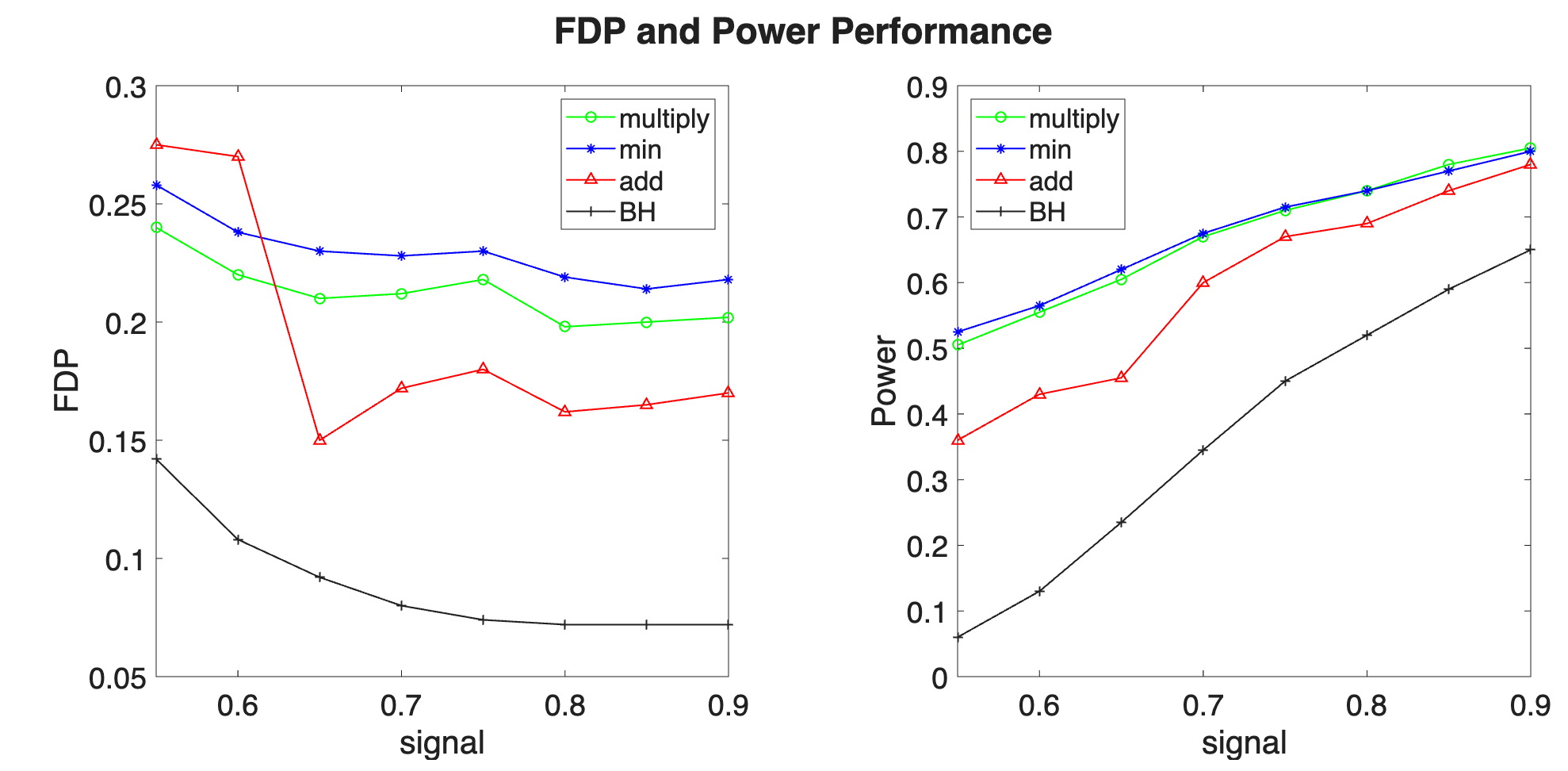}
     \caption{ Empirical FDP and power at  FDR control level $\alpha=0.2$}
     \label{fig:fdr-Rossmann-1}
 \end{subfigure}
 \begin{subfigure}{1\textwidth}
     \includegraphics[width=0.9\textwidth]{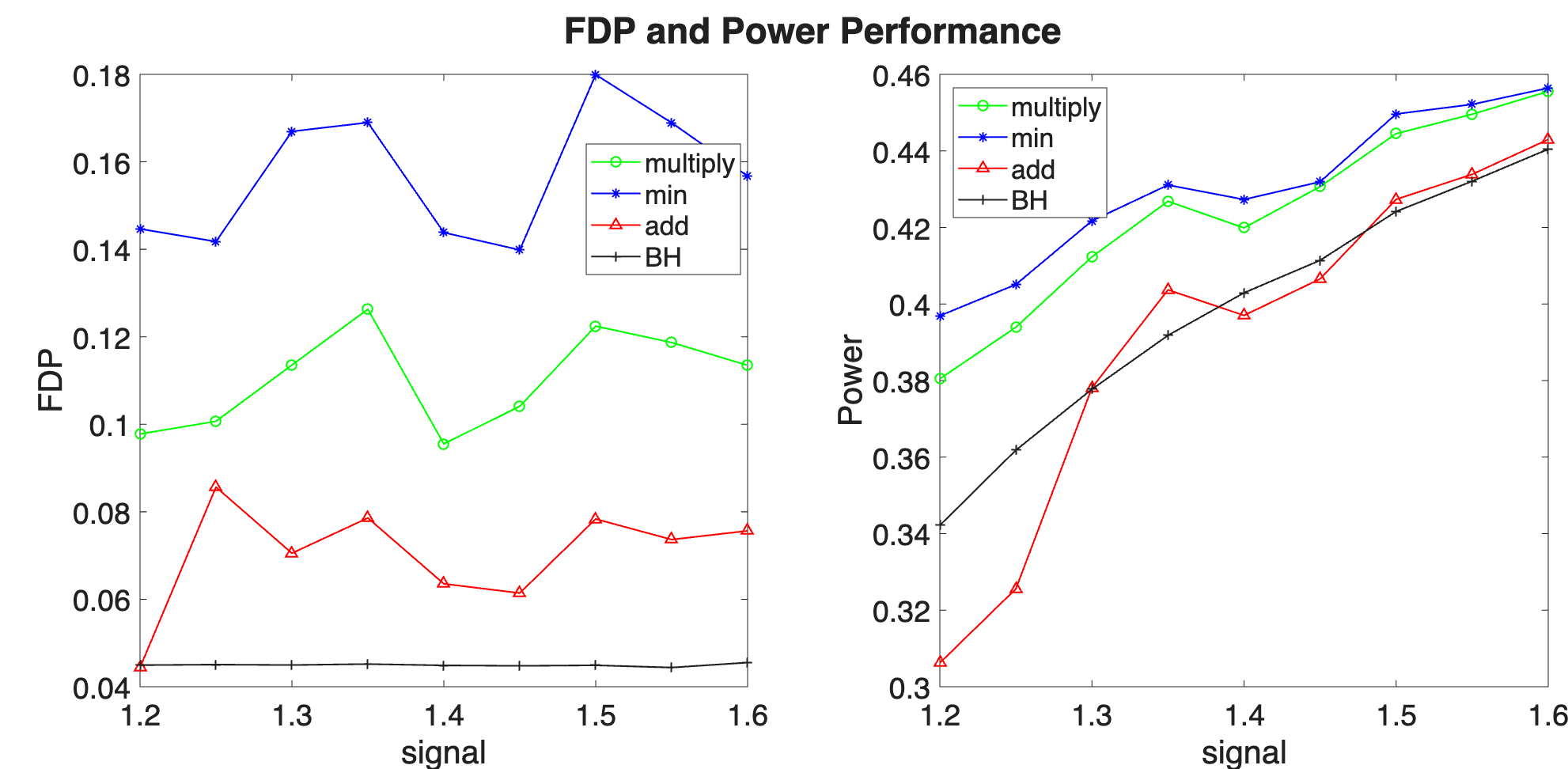}
     \caption{Empirical FDP and power at  FDR control level $\alpha=0.1$}
     \label{fig:fdr-fdr-Rossmann-2}
     
 \end{subfigure}

 \caption{FDR control \& Power of different data aggregation schemes for Rossmann sales testing. Here the signals indicate the sizes of $\mu_{\operatorname{signal}}=\abs{M_{ij}-m_{ij} }$ which are scaled  by $10^3$}
 \label{fig:fdr-Rossmann}

\end{figure}

  \begin{figure}
  \centering
\begin{subfigure}{0.45\textwidth}
         \includegraphics[width=\textwidth]{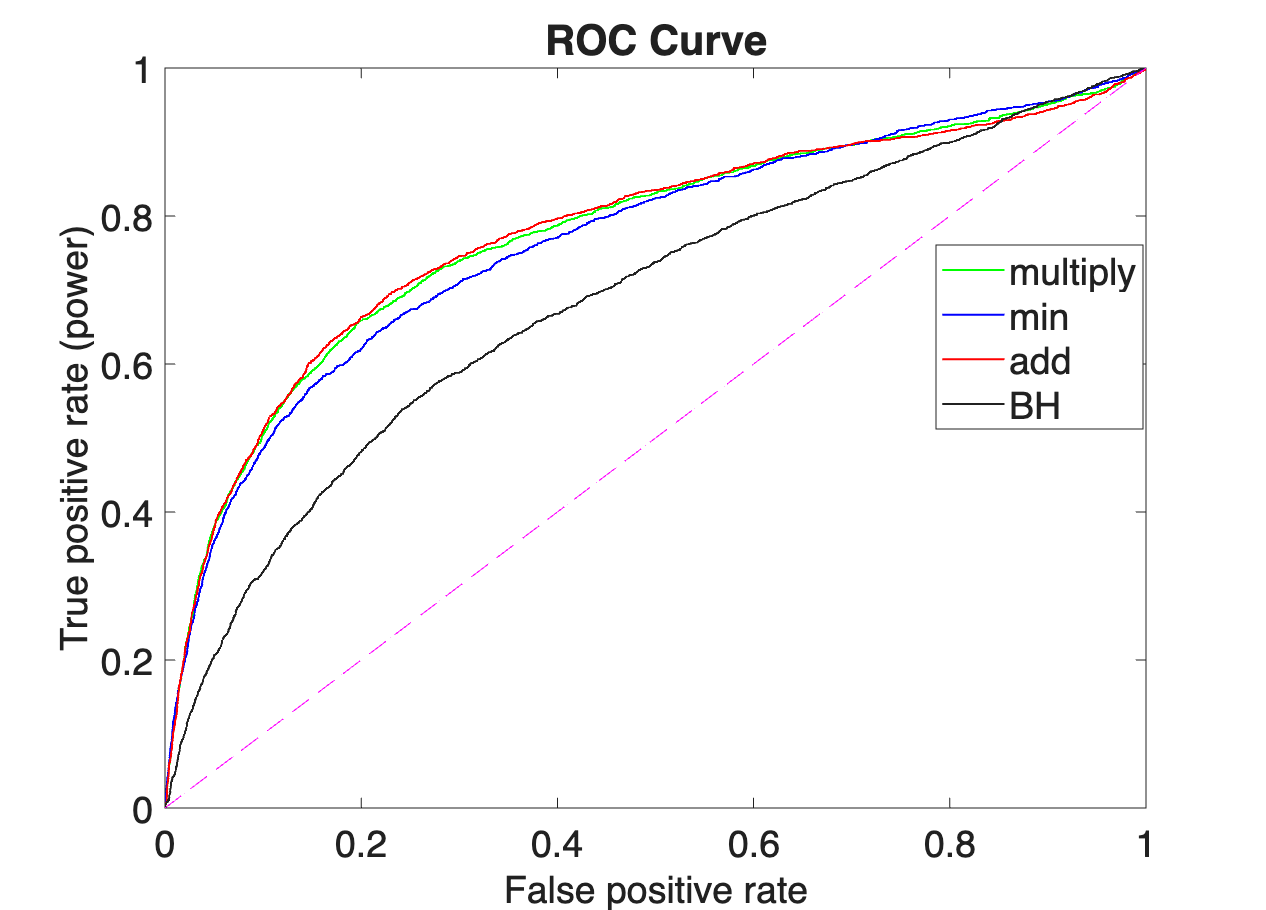}
     \caption{ROC curve with $\text{signal} = 1$}
     \label{fig:fdr-fdr-Rossmann-roc-1}
\end{subfigure}
\begin{subfigure}{0.45\textwidth}
         \includegraphics[width=\textwidth]{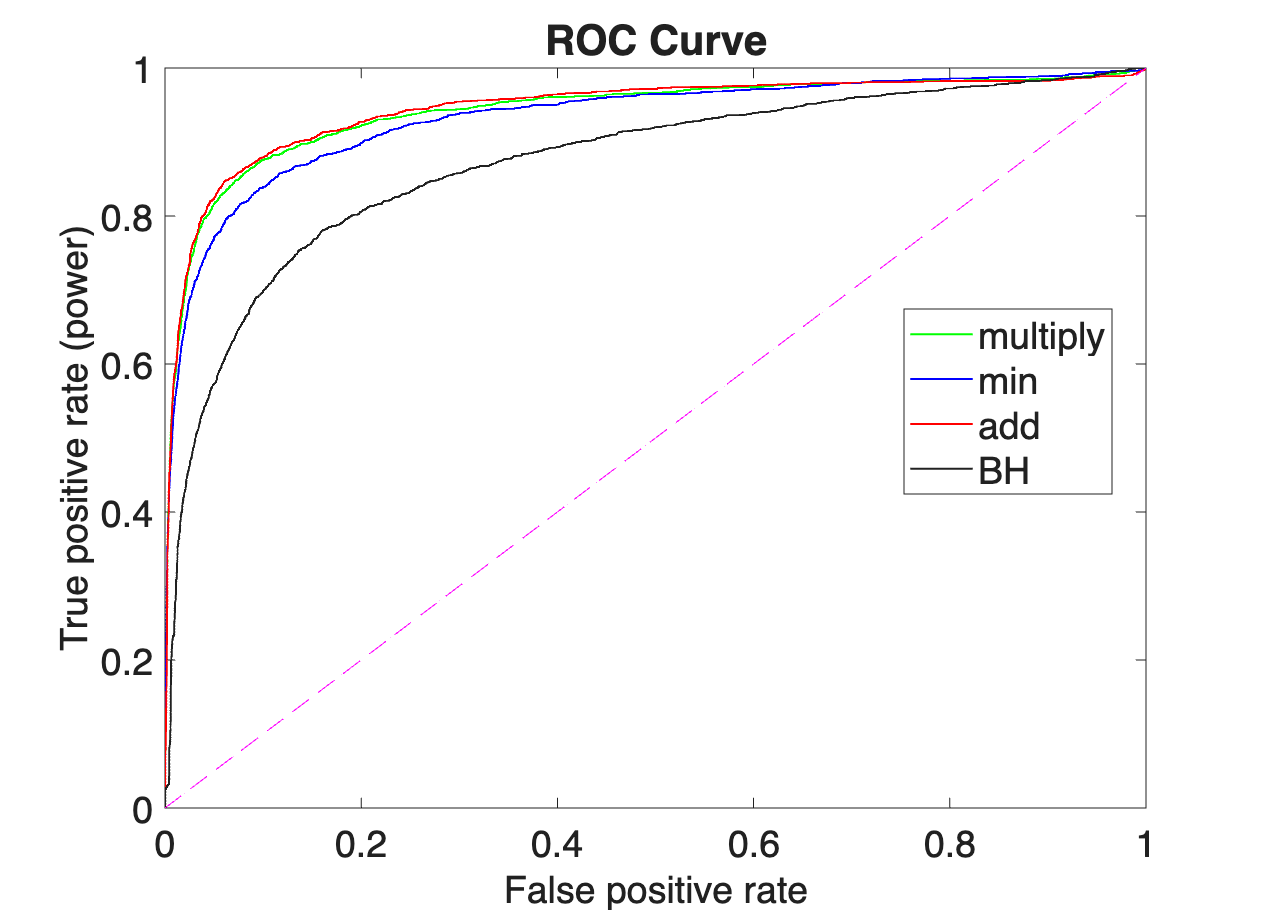}
     \caption{ROC curve with $\text{signal} = 2$}
     \label{fig:fdr-fdr-Rossmann-roc-2}
\end{subfigure}
\caption{ROC curve for different test statistics in Rossmann sales dataset}
     \label{fig:fdr-Rossmann-roc}
     \end{figure}
Here, three different data aggregation methods, together with BH method, are compared. For this one-sided problem, we also drop out features that have negative statistics on $\cD_1$ and $\cD_2$. From Figure \ref{fig:fdr-Rossmann}, it is clear that the data aggregation method with multiplication performs better regarding both FDR control and power. Data aggregation by taking minimum absolute values performs close to our aggregation method with multiplication in power, but it has larger FDPs. Data aggregation by adding absolute values behaves conservatively in the problem. The BH method appears to be more conservative compared to the data aggregation methods, particularly at the FDR control level of $\alpha = 0.2$. Moreover, from the ROC curves in Figure \ref{fig:fdr-Rossmann-roc}, we can observe the obvious advantage of our data aggregation methods against the BH method.

\section{Concluding Remarks}\label{sec:remark}
In this paper, motivated by large-scale recommender systems, we study the problem of multiple testing for linear forms in noisy matrix completion and develop a general framework to control the FDR. Our approach is based upon a new test statistic for testing linear forms that enjoy sharper asymptotics than existing ones in the literature and an effective data splitting and symmetric aggregation scheme that can be shown to be especially suitable in the context of matrix completion.

Our approach can potentially be extended to many other problems with structural high-dimensional features. For example, one possible direction is the FDR control for tensor completion. Indeed, multiple testing in multilinear arrays presents a number of additional technical challenges as it requires much-involved analysis of singular subspace perturbations. As such, inferences in general for low-rank multilinear arrays are largely unexplored.  Additionally, throughout the paper we assume uniform sampling for clarity of exposition and technical simplicity. Our methods and inference framework can be extended to non-uniform sampling when the sampling probabilities are known. For instance, we can construct individual test statistics through inverse propensity weighting (IPW; \cite{duan2025online}). See Section~\ref{sec:hetero}. However, these extensions would substantially complicate the presentation of both the methodology and the theory, especially since establishing the nonasymptotic FDP bounds is already technically involved even under uniform sampling.  Moreover, if the underlying matrix is only approximately rank $r$, we can treat the components beyond rank $r$ as a bias term, which is negligible as long as those components are small in Frobenius norm. We shall leave these intriguing problems for future investigation.

\end{sloppypar}

\bibliography{reference}
\bibliographystyle{apalike}

\newpage

\appendix
\begin{center}
{\bf\LARGE Supplement to ``Multiple Testing of Linear Forms for Noisy Matrix Completion"}
\end{center}
\smallskip
\section{Supplementary Tables and Algorithms}\label{sec:supp-table-alg}

To ease understanding, here we list several important notations frequently encountered while reading our main text and proofs in Table \ref{tab:tab-notations}.
\begin{table}[htbp]
	\centering
	\begin{tabular}{|l|l|}
		\hline\hline
  Notation \ & Meaning \\
  \hline
		$q$, $q_1$, $q_0$ \ & number of all, non-null, and null tests respectively  \\ 
  \hline
  $W_T$,  $W_T^{\sfh}$ \ & asymptotic normal test statistic \\
  \hline
    $W_T^{(i)}$ \ & test statistic of linear form $M_T:=\langle M, T\rangle$ constructed from the $i$th data split \\ 
    \hline
    $W_T^{\mathsf{rank}}$ \ & $W_T^{\mathsf{rank}}=W_{T}^{(1)}W_{T}^{(2)}$ is the  rank statistic before thresholding \\
    \hline
     $\mu$  & parameter for incoherence condition \\ 
     \hline
    $\beta_T$  &  alignment parameter for $T$ defined in \eqref{eq:alignment}\\
     \hline
     $\beta_0$ &  minimum alignment parameter among all $T\in\cH$  \\ 
     \hline
    $\alpha_d$ & dimension ratio of matrix $M$: $\alpha_d:= d_1/d_2$ \\
    \hline
    $\rho_T$& $\rho_T=\norm{T}_{\ell_1}/\norm{T}_{\tF}$ \\
    \hline
    $\kappa_0$ & condition number of matrix $M$ \\
    \hline
    $\gamma_n$  & accuracy of initial estimation $\norm{\widehat{M}^{\mathsf{init}} - M}_{\max} \le  \sigma_\xi \gamma_n$, which may take  $\gamma_n= C_{\init}\sqrt{\frac{ d_1 \log  d_1}{n}}$ \\
    \hline 
    $\cP_M(\cdot)$ & projection operators  $\cP_M^{}(T):= T- \cP_M^{\perp}(T)= T- U_{\perp} U^\top_{\perp} T V_{\perp}V^\top_{\perp}$ \\
    \hline
    $s_T$ & standard deviation of testing $M_T$ induced by random sampling: $s_T=\norm{\cP_M(T) }_\tF$ \\
     \hline
    $h_n$  & asymptotic normal rate defined in \eqref{eq:fixed-asymp} \\
    \hline
    $\beta_{\mathsf{s} }$ &  proportion of strongly correlated linear form pairs defined in \eqref{eq:strong-T} \\
    \hline
    $\eta_n$ &  number of strong signals \\
    \hline
    $\kappa_1$ & condition number of covariance matrix $\Sigma=\big(\big<\calP_M(T_j), \calP_M(T_k)\big>\big)_{1\leq j,k\leq q}$ \\
    \hline
    $\kappa_T$ & shrinkage of variances caused by low-rank projection $\kappa_T = \norm{T_{\cH} }/\norm{\Sigma}^{1/2}$ \\
    \hline
    $\kappa_\infty$ & maximum row-wise $\ell_1$-norm of inverse correlation matrix: $\kappa_\infty=\norm{R^{-1}}_\infty:=\max_{i}\|e_i^{\top}R\|_{\ell_1}$ \\
    \hline
    $q_n$, $q_{0n}$ & cardinality of support after screening $q_n=\abs{\cA}$, and  $q_{0n}=\abs{\cA\cap \cH_0 }$ \\
    \hline
    $\beta_{\mathsf{s} }'$,  $\eta_n'$&  proportion $\beta_{\mathsf{s} }$ and number of strong signals after screening\\

  \hline \hline
	\end{tabular}
	\caption{Important notations used in the main text
 }
	\label{tab:tab-notations}
\end{table}

We also summarize the assumptions required for our major theorems in the following Table \ref{tab:table-assumption}:
\begin{table}[]
    \centering
    \begin{tabular}{|c|c|}
           \hline
       Theorem  & Assumption(s)  \\
       \hline
       CLT in Theorem \ref{thm:asymp-normal}, \ref{thm:asymp-normal-varest}   & sub-gaussian $\xi$, incoherence, SNR and sample size  \eqref{eq:condition-SNR-sample} \\
       \hline
       CLT under heterogeneity in Proposition \ref{prop:hetero} & sub-gaussian $\xi$, incoherence, SNR and sample size   \eqref{eq:condition-SNR-sample-hetero}
 \\
\hline
     FDR under weak dependence in Theorem \ref{thm:weak-cor-fdr}
   &  $h_n\to 0$ and
$\left(\sqrt{\beta_{\mathsf{s}}} \vee  h_n \right)  \frac{q_0}{\eta_n}   \to 0$. \\
       \hline
       FDR under strong dependence in Theorem \ref{thm:matrix-fdr-strong} & 
       $
\left(\sqrt{\beta_{\mathsf{s}}'} \vee  \left( h_n +\big\|\sfw_{\calA^c}\big\|_{\infty}\right)\right)  \frac{q_0'}{\eta_n'}   \overset{p}{\rightarrow} 0$
, where $\abs{\calA_0}\le q_{0}'$, \eqref{eq:SNR-strong-dep}
\\
\hline
 
    \end{tabular}
    \caption{A table summarizing assumptions required for major results}
    \label{tab:table-assumption}
\end{table}

In our Algorithm \ref{alg:matrix-fdr}, gradient descent is applied to obtain an initial estimate for our debiasing purpose. Here, we detail the gradient descent procedure for matrix completion in the following Algorithm \ref{alg:GD}.

\begin{algorithm}[H]
\caption{Gradient Descent for Matrix Completion (\cite{chen2019inference,chen2020noisy})}
\label{alg:GD}
\begin{algorithmic}[1]
\REQUIRE $\{(X_i,Y_i)\}_{i=1}^n$, rank $r$, step size $\eta\asymp1/(\kappa^3d_1\lambda_{\max})$, regularization parameter $\lambda\asymp \sigma_{\xi}\sqrt{n/d_2}$, $t_0$
\STATE \textbf{Initialization:} Obtain initial parameters
$U^0 \in \mathbb{R}^{d_1\times r}$,
$V^0 \in \mathbb{R}^{d_2\times r}$ by spectral SVD initialization
\FOR{$t = 0,1,\dots,t_0-1$}
    \STATE Compute residuals $r_i^t := \langle X_i, U^t (V^t)^\top\rangle - Y_i$ for $i=1,\dots,n$
    \STATE Compute gradient w.r.t.\ $U$:
    \[
    \nabla_U \mathcal{L}(U^t,V^t)
    =
    \frac{1}{n}
    \sum_{i=1}^n
    r_i^t \, X_i V^t
    + \lambda U^t
    \]
    \STATE Update $U$:
    $
    U^{t+1}
    =
    U^t
    -
    \eta \, \nabla_U \mathcal{L}(U^t,V^t)
    $
    \STATE Compute gradient w.r.t.\ $V$:
    \[
    \nabla_V \mathcal{L}(U^t,V^t)
    =
    \frac{1}{n}
    \sum_{i=1}^n
    r_i^t \, X_i^\top U^t
    + \lambda V^t
    \]
    \STATE Update $V$:
    $
    V^{t+1}
    =
    V^t
    -
    \eta \, \nabla_V \mathcal{L}(U^t,V^t)
    $
\ENDFOR
\STATE \textbf{Output:} $\widehat{M} =  U^{t_0} ( V^{t_0})^{\top} $
\end{algorithmic}
\end{algorithm}

For completeness, we describe the BH selection for linear forms used in the simulation as follows:
\begin{enumerate}
    \item Use $\cD=\{\cD_1,\cD_2\}$ to construct an initial estimate $\widehat{M}_{\mathsf{init}}$ 
    \item Following the construction of $W_T$, but use both $\cD=\{\cD_1,\cD_2\}$ to de-bias $\widehat{M}_{\mathsf{init}}$
    \item Project the debiased matrix on the low-rank structure and get test statistics $W_T^{\mathsf{all} }$ for each linear form $T$
    \item Computing two-sided $p$-value $P_i=2(1-\Phi(\abs{W_{T_i}^{\mathsf{all}} }))$
    \item Feature selection by BH method: finding the largest $k$ such that $P_{(k)} \leq \frac{k}{q} \alpha$, and rejecting null hypothesis $H_{0,T_i }$ with  $P_i\le P_{(k)}$
\end{enumerate}
This BH selection relies on the asymptotic normality of high-dimensional features and serves as a counterpart to our methods. 

\section{Additional Results}\label{sec:add-res}



\subsection{Covariance Matrix, Effect of Screening, and Whitening}
Given $T_{\cH}\in \R^{q\times d_1 d_2} $, we have the unnormalized covariance matrix for $W_T$ as
\begin{equation*}
    \Sigma:= \big(\big<\calP_M(T_j),  \calP_M(T_k)\big>\big)_{1\leq j,k\leq q}=T_{\cH}(I_{d_1 d_2} - U_\perp U_\perp^\top \otimes V_\perp V_\perp^\top  ) T_{\cH}^\top.
\end{equation*}
Here, we have $\operatorname{rank}(I_{d_1 d_2} - U_\perp U_\perp^\top \otimes V_\perp V_\perp^\top  )=r(d_1+d_2)-r^2$, and $T_{\cH}$ is of rank $q$. Therefore, to make sure that $ \Sigma$ is of full rank, we must have $q\le r(d_1+d_2)-r^2$.

Based on this covariance matrix representation, we shall discuss an example of testing about a submatrix of $M$ to further illustrate the effect of screening and whitening. In particular, we shall show how whitening can weaken the dependence in $Q^{\ast}$, compared with the un-whitened $R_{\calA,\calA}$, where
$$
Q^{\ast}:=\big(R_{\calA}^{-1/2\top} R_{\calA}^{-1/2}\big)^{-1}=R_{\calA,\calA} - R_{\calA,\calA^c}R_{\calA^c,\calA^c}^{-1}R_{\calA,\calA^c}^\top.
$$
To this end, we define the total test matrix $T_{\calH}=[P_{d\times d},\boldsymbol{0}_{d\times(d^2-d) }]$, where we set $d_1=d_2=q=d$. Thus, the covariance matrix of our un-standardized test statistics is
\begin{equation*}
\begin{aligned}
        \Sigma = & T_{\calH}\left(I_{d^2} -  U_\perp U_\perp^\top \otimes V_\perp V_\perp^\top\right)T_{\calH}^\top =  P\left(I_d - \left(U_\perp U_\perp^\top\right)_{11}V_\perp V_\perp^\top \right) P^\top \\ 
        & = P \left(VV^\top +u_{11}V_\perp V_\perp^\top \right)P^{\top} = P \left[V, V_\perp\right] \left[\begin{matrix} I_r & 0 \\0 & u_{11} I_{d-r}
        \end{matrix}\right] \left[V^\top; V_\perp^\top \right]P^\top.
\end{aligned}
\end{equation*}
Here $u_{11}= \left(UU^\top\right)_{11}$. Without loss of generality, let $P=I_d$ be a diagonal matrix, i.e., testing multiple entries in the first row. The $q\times q$ covariance matrix  $\Sigma=\big[u_{11}I_q+(1-u_{11})VV^{\top}\big]$, showing that the test statistics under noisy matrix completion are always correlated, due to the low-rank projection. This underscores the difficulties of multiple testing in matrix completion problems. Nevertheless, it is clear from textbook results of Multivariate Statistics that the total variance ${\rm tr}\big(\Sigma_{\calA,\calA}-\Sigma_{\calA,\calA^{\rm c}}\Sigma_{\calA^{\rm c},\calA^{\rm c}}^{-1}\Sigma_{\calA^{\rm c}, \calA}\big)\leq {\rm tr}(\Sigma_{\calA,\calA})$, which is smaller than the total variance of the unscreened statistics ${\rm tr}(\Sigma)$.

A special case of  multiple testing is defined by making 
\begin{equation*}
    P=\left[\begin{matrix} I_{\calA} & B \\0 & I_{\calA^{c} }
        \end{matrix}\right]\left[V, V_\perp\right]^{\top},
\end{equation*}
where for simplicity, we assume $\calA$ is just the index set from the first $\abs{\calA}$ dimensions. This gives us the covariance matrix
\begin{equation*}
\begin{aligned}
        \Sigma = \left[\begin{matrix} \Lambda +u_{11} BB^\top & u_{11}B \\u_{11}B^\top & u_{11} I_{\calA^{c} }
        \end{matrix}\right].
\end{aligned}
\end{equation*}
Here $\Lambda$ is a diagonal matrix of the size $\abs{\calA}\times\abs{\calA}$  with the first $r$ diagonals equal to $1$, and others equal to $u_{11}$.
Obviously, this covariance matrix shows that the test statistics can be highly correlated since $R_{\calA,\calA} = D_{\calA}^{-\frac{1}{2}}\left( \Lambda +u_{11} BB^\top \right)D_{\calA}^{-\frac{1}{2}}  $ contains off-diagonal elements determined by $B$. Here $D_{\calA}$ represents the the first $\abs{\calA}$ diagonal elements of $\Sigma$. However, the screening shows that
\begin{equation*}
\begin{aligned}
        Q^{\ast}&=\big(R_{\calA}^{-1/2\top} R_{\calA}^{-1/2}\big)^{-1}=R_{\calA,\calA} - R_{\calA,\calA^c}R_{\calA^c,\calA^c}^{-1}R_{\calA,\calA^c}^\top \\
        &=  D_{\calA}^{-\frac{1}{2}}\left( \Lambda +u_{11} BB^\top \right)D_{\calA}^{-\frac{1}{2}} -   D_{\calA}^{-\frac{1}{2}}  u_{11}^{\frac{1}{2}} B B^\top  u_{11}^{\frac{1}{2}} D_{\calA}^{-\frac{1}{2}} \\
        & =  D_{\calA}^{-\frac{1}{2}} \Lambda D_{\calA}^{-\frac{1}{2}},
\end{aligned}
\end{equation*}
with no off-diagonal elements. This indicates that our screening and whitening procedure in the noisy matrix completion model can reduce the correlation of test statistics.

\subsection{Non-asymptotic Bounds for FDR and Power}\label{sec:non-saymp-results}
Here, we present a specific non-asymptotic version of our theoretical guarantees.

\subsubsection{Weak dependence}

\begin{Theorem}
	Under the conditions of Theorem \ref{thm:weak-cor-fdr},
    \begin{itemize}
\item[(a)] with probability at least   
$$
1-C_2\varepsilon^{-2}\log(\frac{q_0 }{\alpha \eta_n})\left( \left(\frac{\beta_{\mathsf{s}} q_0^2 }{ \alpha^2\eta_n^2}\right)^{\frac{1}{2}} + \left(\frac{h_n q_0}{\alpha \eta_n} +(\alpha\eta_n q_0)^{-\nu/2} \right)^{\frac{1}{2}}\right)- C_2 h_n,
$$
Algorithm \ref{alg:matrix-fdr} achieves false discovery proportion
\begin{equation}\label{eq:thm-weak-cor-fdr}
    \mathrm{FDP}:=\frac{\sum_{T\in \cH_0  } \bbI(W_T^{\mathsf{rank}} >L ) }{\left( \sum_{T\in \cH  } \bbI(W_T^{\mathsf{rank}} >L ) \right) \vee 1 } \le \alpha(1+\varepsilon),
\end{equation}
for any $\varepsilon\in(0,1)$. 
\item[(b)] with probability at least
$$
1-C_2 \log(\frac{q_0 }{\alpha \eta_n})\left( \left(\frac{\beta_{\mathsf{s}} q_0^2 }{ \alpha^2\eta_n^2}\right)^{\frac{1}{2}} + \left(\frac{h_n q_0}{\alpha \eta_n}+(\alpha\eta_n q_0)^{-\nu/2} \right)^{\frac{1}{2}}\right)- C_2 \varepsilon^{-1} h_n,
$$
Algorithm~\ref{alg:matrix-fdr} can select the strong signals with power 
\begin{equation}\label{eq:thm-weak-cor-power}
   \mathrm{POWER}:= \frac{\sum_{T\in \cH_1  } \bbI(W_T^{\mathsf{rank}} >L ) }{q_1} \ge (1-\varepsilon)\frac{\eta_n}{q_1}.
\end{equation}
\end{itemize}
\end{Theorem}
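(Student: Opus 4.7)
The plan is to control the false discovery proportion by exploiting the approximate symmetry of the null distribution of $W_T^{\mathsf{Rank}}=W_T^{(1)}W_T^{(2)}$ and the weak dependence among the test statistics. First, by definition of the data-driven threshold $L$ in \eqref{eq:dd-threshold}, we have
$$
\mathrm{FDP}\le \alpha \cdot \frac{\sum_{T\in\cH_0}\bbI(W_T^{\mathsf{Rank}}>L)}{\sum_{T\in\cH_0}\bbI(W_T^{\mathsf{Rank}}<-L)\vee 1},
$$
so the task reduces to showing that the rightmost ratio is at most $1+\varepsilon$ uniformly over a suitable range of thresholds that contains $L$. Conditionally on $\cD_0$, Theorem~\ref{thm:asymp-normal-varest} shows that for any $T\in\cH_0$ the pair $(W_T^{(1)},W_T^{(2)})$ is jointly approximately standard normal with independent coordinates, with error $h_n$. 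Consequently, the null distribution of $W_T^{\mathsf{Rank}}$ is symmetric around $0$ up to an $O(h_n)$ correction, which will be the source of the $h_n q_0/(\alpha\eta_n)$ term in the probability bound.

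The second step is a variance calculation for $N^{\pm}(t):=\sum_{T\in\cH_0}\bbI(\pm W_T^{\mathsf{Rank}}>t)$. By Theorem~\ref{thm:asymp-two-var}, the joint null distribution of $(W_{T_1}^{(i)},W_{T_2}^{(i)})$ is approximately bivariate normal with correlation $\rho_{T_1,T_2}$, so $(W_{T_1}^{\mathsf{Rank}},W_{T_2}^{\mathsf{Rank}})$ has joint law close to the product of two bivariate normals. Splitting the pair sum according to the partition $\cH_{0,\text{strong}}^2\cup\cH_{0,\text{weak}}^2$, the contribution from strongly correlated pairs is bounded by $\beta_{\mathsf{s}}q_0^2$ (trivial bound on indicator variances), while for weakly correlated pairs one uses the standard estimate
$$
|\mathrm{Cov}(\bbI(W_{T_1}^{\mathsf{Rank}}>t),\bbI(W_{T_2}^{\mathsf{Rank}}>t))|\lesssim |\rho_{T_1,T_2}|^2 \lesssim q_0^{-2\nu},
$$
yielding a contribution of order $q_0^{2-2\nu}$. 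Together with the $q_0$ diagonal terms, $\mathrm{Var}(N^{\pm}(t))\lesssim q_0+\beta_{\mathsf{s}}q_0^2+q_0^{2-2\nu}$. Applying Chebyshev (which produces the $\varepsilon^{-2}$ factor) and combining with the symmetry from step one gives pointwise control: for each fixed $t>0$,
$$
\Bigl|\tfrac{N^{+}(t)-N^{-}(t)}{N^{-}(t)\vee 1}\Bigr|\le \varepsilon
$$
with probability depending on $t$, $\beta_{\mathsf{s}}$, $h_n$, and the size of $N^{-}(t)$.

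The third step is to make this uniform in $t$ over the relevant range. Since at the true threshold $L$ we have $N^{+}(L)/\bigl(\sum_{T\in\cH}\bbI(W_T^{\mathsf{Rank}}>L)\vee 1\bigr)\ge \alpha\eta_n/q_0$ (because the strong signals in $\cS$ are detected with high probability via the signal strength condition \eqref{eq:strong-T} combined with Theorem~\ref{thm:asymp-normal-varest}), the denominator $N^{-}(L)\vee 1$ is effectively at least $\alpha\eta_n$ up to $o(1)$ factors. Thus it suffices to control the ratio uniformly over $t$ with $N^{-}(t)\gtrsim \alpha\eta_n$. Discretizing this range into $O(\log(q_0/(\alpha\eta_n)))$ values and taking a union bound produces the logarithmic factor in the probability bound, while the three contributions inside the square roots originate, respectively, from the strongly-correlated-pair variance ($\beta_{\mathsf{s}}q_0^2/(\alpha^2\eta_n^2)$), the normal-approximation error propagated through the ratio ($h_n q_0/(\alpha\eta_n)$), and the weakly-correlated-pair variance through the correlation exponent ($(\alpha\eta_n q_0)^{-\nu/2}$ after taking square roots). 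The power bound \eqref{eq:thm-weak-cor-power} follows by the same framework: for every $T\in\cS$, Theorem~\ref{thm:asymp-normal-varest} combined with \eqref{eq:strong-T} guarantees $W_T^{(1)}, W_T^{(2)}$ have means diverging at rate $\sqrt{\log(q\vee d_1)}$, so $W_T^{\mathsf{Rank}}>L$ with probability $1-o(1)$ uniformly since $L$ is bounded by a polylog factor.

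The main obstacle is the uniform control step: the threshold $L$ is data-dependent, and the natural Chebyshev bound only holds pointwise. One needs to choose the discretization of $t$ finely enough that the monotonicity of $N^{\pm}(t)$ interpolates between grid points without losing too much, yet coarsely enough that the union bound cost is only $\log(q_0/(\alpha\eta_n))$. A second subtlety is that the Gaussian comparison in Theorem~\ref{thm:asymp-two-var} gives bivariate rather than higher-dimensional joint control, so the variance-based Chebyshev argument (rather than a full multivariate CLT) is what allows the proof to go through — it is precisely the pairwise analysis that produces the $\beta_{\mathsf{s}}q_0^2$ contribution, and no higher-order dependence terms are needed.
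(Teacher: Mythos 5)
Your overall architecture — reduce FDP control to the ratio $\sum_{\cH_0}\bbI(W_T^{\mathsf{Rank}}>L)/\sum_{\cH_0}\bbI(W_T^{\mathsf{Rank}}<-L)$, establish approximate null symmetry from Theorem~\ref{thm:asymp-normal-varest}, control the variance of the null counting processes pairwise via Theorem~\ref{thm:asymp-two-var}, and make the bound uniform over a geometric grid of thresholds containing $L$ — matches the paper's proof (its Lemmas on population symmetry, weak covariance, and uniform convergence, followed by threshold and power control). However, there is a genuine gap in your treatment of the weakly correlated pairs. You invoke the ``standard estimate'' $|\mathrm{Cov}(\bbI(W_{T_1}^{\mathsf{Rank}}>t),\bbI(W_{T_2}^{\mathsf{Rank}}>t))|\lesssim\rho_{T_1,T_2}^2\lesssim q_0^{-2\nu}$ and conclude $\mathrm{Var}(N^{\pm}(t))\lesssim q_0+\beta_{\mathsf s}q_0^2+q_0^{2-2\nu}$. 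This is wrong on two counts. First, for quadrant indicators of (approximately) bivariate normal pairs the covariance is first order in $\rho$, of size roughly $\rho\,\phi(t_1)\phi(t_2)$ (via $\int_0^\rho\phi_2(t_1,t_2;z)\,dz$), not $O(\rho^2)$. Second, and more fatally, a bound that does not carry the tail probabilities cannot survive the normalization: in the Chebyshev step you divide by $(q_0G(t))^2$, which at the relevant thresholds is as small as $(\alpha\eta_n)^2$, so your weak-pair contribution becomes $q_0^{2-2\nu}/(\alpha\eta_n)^2$. This is not $o(1)$ under the theorem's hypotheses (which only assume $\sqrt{\beta_{\mathsf s}}\,q_0/\eta_n\to0$ and $h_nq_0/\eta_n\to0$), e.g.\ when $\eta_n\asymp q_0^{1/2}$ and $\nu$ is small. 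The paper's proof avoids this by proving the sharper estimate
$$
\bigl|\mathrm{Cov}\bigl(\bbI(W_{T_1}^{\mathsf{Rank}}>t),\bbI(W_{T_2}^{\mathsf{Rank}}>t)\bigr)\bigr|\lesssim h_n\,G(t)+\rho\,\bigl[\PP(W_{T_1}^{(1)}Z>t)\PP(W_{T_2}^{(1)}Z>t)\bigr]^{(1-\nu)(1-c\rho)},
$$
obtained from the Gaussian-copula integral representation combined with Mill's ratio; after dividing by $q_0^2G^2(t)$ the weak-pair sum is then $\lesssim h_nq_0/(\alpha\eta_n)+\rho\,(q_0/(\alpha\eta_n))^{O(\nu)}$, which together with $\rho\le cq_0^{-\nu}$ yields the $(\alpha\eta_nq_0)^{-\nu/2}$ term in the stated probability. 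Without an estimate of this form, scaling with the tail probabilities themselves, your variance argument does not close.

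A secondary point: passing from the bivariate normal approximation of $(W_{T_1}^{(i)},W_{T_2}^{(i)})$ to the covariance of indicators of the \emph{products} $W_T^{(1)}W_T^{(2)}$ is not immediate; the paper does this by conditioning on $\cD_0,\cD_1$ (so that $W_T^{(1)}$ is fixed), comparing $W_T^{\mathsf{Rank}}$ with $W_T^{(1)}Z$ for an independent standard normal $Z$, and Taylor-expanding the bivariate normal c.d.f.\ to absorb the $O(h_n)$ remainders. Your uniformity step (geometric discretization with $O(\log(q_0/(\alpha\eta_n)))$ grid points plus monotone interpolation) and your threshold/power arguments are consistent with the paper's.
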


Note that Part (a) also implies that
\begin{equation}\label{eq:fdr-exp-exact}
     \text{FDR}=\E (\text{FDP})\le \alpha+ C_2h_n+C_2\alpha^{\frac{2}{3}}\log(\frac{q_0 }{\alpha \eta_n})\left( \left(\frac{\beta_{\mathsf{s}} q_0^2 }{ \alpha^2\eta_n^2}\right)^{\frac{1}{6}} + \left(\frac{h_n q_0}{\alpha \eta_n}\right)^{\frac{1}{6}} +\left(\alpha\eta_n q_0\right)^{-\frac{\nu}{12}}  \right).
\end{equation}

\subsubsection{Whitening and screening}

\begin{Theorem}
Under the settings of Theorem \ref{thm:matrix-fdr-strong}, suppose that
\begin{equation*}
    \left(\norm{R^{-1}  }_\infty +\kappa_1 \frac{\norm{T_{\calH} } }{\norm{ \Sigma }^{1/2}}\left( \frac{\operatorname{supp}(T_{\calH} )}{\sqrt{d_2}}\wedge 1 \right)  \right)\frac{\beta_T \mu \sigma_\xi }{\beta_0 \lambda_{\min} }\sqrt{\frac{\kappa_1\alpha_d q d_1^2 d_2 \log d_1 }{n}}=o(1).
\end{equation*}
With the regularization level $\lambda=C\sqrt{\log d_1}$, the Algorithm \ref{alg:matrix-sda} attains an FDP 
\begin{equation*}
     \mathrm{FDP}=\frac{\sum_{T\in \cH_0  } \bbI(\sfw_{T}^{\mathsf{rank}} >L ) }{\left( \sum_{T\in \cH  } \bbI(\sfw_T^{\mathsf{rank}} >L ) \right) \vee 1 } \le \alpha(1+\varepsilon),
\end{equation*}
for any $\varepsilon\in(0,1)$ with probability at least 
\begin{equation*}
    1-C_1\varepsilon^{-2}\log(\frac{ q_{0}' }{\alpha \eta_n'})\left( \left(\frac{\beta_{\mathsf{s}}'  q_{0}^{'2} }{ \alpha^2\eta_n^{'2}}\right)^{\frac{1}{2}} + \left(\frac{C_\infty \left( h_n+\norm{ \sfw_{\calA^c} }_{\infty} \right)  q_{0}' }{\alpha \eta_n'} +(\alpha\eta_n'  q_{0}')^{-\nu/2} \right)^{\frac{1}{2}}\right)- C_\infty  \left( h_n+\norm{ \sfw_{\calA^c} }_{\infty} \right),
\end{equation*}
where $C_\infty$ is a constant involving $\wt R$ and $\calA$ only, defined later in Proposition~\ref{prop:OLS-normal}. Moreover, the power is guaranteed to be lower bounded by:
\begin{equation*}
   \mathrm{POWER}= \frac{\sum_{T\in \cH_1  } \bbI(\sfw_T^{\mathsf{rank}} >L ) }{q_1} \ge (1-\varepsilon)\frac{\eta_n' }{q_1},
\end{equation*}
with a probability at least
\begin{equation*}
  1-C_1 \log(\frac{ q_{0}' }{\alpha \eta_n'})\left( \left(\frac{\beta_{\mathsf{s}}'  q_{0}^{'2} }{ \alpha^2\eta_n^{'2}}\right)^{\frac{1}{2}} + \left(\frac{C_\infty  \left( h_n+\norm{ \sfw_{\calA^c} }_{\infty} \right)  q_{0}'}{\alpha \eta_n'}+(\alpha\eta_n'  q_{0}')^{-\nu/2} \right)^{\frac{1}{2}}\right)- C_1 C_\infty  \varepsilon^{-1} \left( h_n+\norm{ \sfw_{\calA^c} }_{\infty} \right).
\end{equation*}
\end{Theorem}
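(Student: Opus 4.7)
}

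The plan is to mirror the architecture of the weak-dependence Theorem \ref{thm:weak-cor-fdr}, but applied to the post-whitening, post-screening statistics $\wt\sfw^{(2)}_{\calA}$. The key conceptual point is that, after the Lasso screening step selects a support $\calA$ using $\cD_1$, the ordinary least-squares estimator built from $\wt R^{-1/2}Z^{(2)}$ on the restricted design $\wt R^{-1/2}_{\calA}$ enjoys a joint approximate normal distribution on $\calA$ with covariance $Q^\ast=(R_{\calA}^{-1/2\top}R_{\calA}^{-1/2})^{-1}$, for which the fraction of strongly correlated null pairs is now the smaller quantity $\beta_{\mathsf{s}}'$. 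Conditional on $\cD_0\cup\cD_1$ (and hence on $\wt R$, $\calA$, and $\wt\sfw^{(1)}$), the statistics $\wt\sfw^{(2)}_i$ for $i\in\calA\cap\cH_0$ are approximately centered Gaussian, so $\sfw_{T_i}^{\mathsf{Rank}}=\wt\sfw_i^{(1)}\wt\sfw_i^{(2)}/\widehat\sigma_{\sfw i}$ is approximately symmetric about zero under the null, which is the structural property that drives the symmetric-aggregation FDR argument.

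The steps I would carry out, in order, are as follows. First, I would establish a concentration bound for the plug-in covariance estimate: using the initial rate \eqref{eq:init-est} and the hypothesised lower bound on $\lambda_{\min}$, I would show $\|\wt R - R\|_{\max}$ and hence $\|\wt R^{-1/2}-R^{-1/2}\|$ are negligible compared to $1/\kappa_\infty$, so that $\wt R^{-1/2}$ acts as a valid whitener; the spectral lower bound on $\lambda_{\min}$ in the hypothesis is precisely what the bound $\|T_\calH\|$, $\|R^{-1}\|_\infty$ and $|\operatorname{supp}(T_\calH)|\wedge\sqrt{d_2}$ are calibrated for. Second, invoke standard Lasso screening theory (e.g. restricted eigenvalue on $\wt R^{-1/2}$, with the noise level $\|\wt R^{-1/2}(Z^{(1)}-\sfw)\|_\infty$ controlled by Theorem \ref{thm:asymp-two-var} through a maximal inequality) to show that, with $\lambda=C_2\sqrt{\log d_1+\log q}$, the support $\calA$ satisfies $|\calA|\le q_0'$ and contains all strong signals in $\cS'$; crucially, the hypothesis $\|\sfw_{\calA^c}\|_\infty\cdot q_0'/\eta_n'\to 0$ plays the role of allowing a weak-signal tail rather than a hard sure-screening condition. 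Third, derive the conditional joint distribution of $\{\wt\sfw^{(2)}_i\}_{i\in\calA\cap\cH_0}$, using Theorem \ref{thm:asymp-two-var} together with the explicit OLS expression $\wt\sfw^{(2)}_\calA=(\wt R^{-1/2\top}_\calA \wt R^{-1/2}_\calA)^{-1}\wt R^{-1/2\top}_\calA \wt R^{-1/2}Z^{(2)}$, yielding a Berry--Esseen-type bound on the distance to $N(0,Q^\ast)$ governed by $h_n+\|\sfw_{\calA^c}\|_\infty$.

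Fourth, given the symmetry of the conditional distribution of $\wt\sfw^{(2)}_i/\widehat\sigma_{\sfw i}$ under the null and the independence of $\wt\sfw^{(1)}$ from $\cD_2$, I would show $\{\sfw^{\mathsf{Rank}}_{T_i}\}_{i\in\calA\cap\cH_0}$ is approximately a sequence of sign-symmetric random variables. Then, by direct analogy with the proof of Theorem \ref{thm:weak-cor-fdr}, the ratio $\sum_{\calH_0}\II(\sfw^{\mathsf{Rank}}>L)/\sum_{\calH_0}\II(\sfw^{\mathsf{Rank}}<-L)$ is controlled by a second-moment computation in which the cross-terms are driven by $\beta_{\mathsf{s}}'$ rather than $\beta_{\mathsf{s}}$; a truncation argument over a logarithmic grid of thresholds, identical in spirit to Theorem \ref{thm:weak-cor-fdr}, gives the high-probability bound $\mathrm{FDP}\le\alpha(1+o_p(1))$. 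The power statement follows because the Lasso step retains the strong signals, on which $|\wt\sfw^{(1)}_i|$ and $|\wt\sfw^{(2)}_i|$ are simultaneously large with high probability, so the number of rejections above $L$ is at least $\eta_n'(1-o_p(1))$.

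The main obstacle I anticipate is step three: propagating the plug-in error of $\wt R$ through both the Lasso screening and the OLS whitening, while maintaining a joint Berry--Esseen bound of order $h_n+\|\sfw_{\calA^c}\|_\infty$ uniformly over the random support $\calA$. The hypothesised spectral lower bound on $\lambda_{\min}$ controls the deterministic part of this error, but the randomness of $\calA$ requires either a union bound over sparsity patterns or a leave-one-out decoupling so that $\calA$ can be treated as fixed when applying Theorem \ref{thm:asymp-two-var}; this is where the proof is most delicate and where the explicit constant $C_\infty$ appearing in the non-asymptotic version in the Supplement enters.
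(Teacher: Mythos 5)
Your proposal follows essentially the same route as the paper's proof: control the plug-in error of $\wt R$ (Proposition \ref{prop:weak-cor-aftscr} and Lemma \ref{lemma:cov-est-prec}), use Lasso screening to retain the strong signals while tolerating a weak-signal tail $\|\sfw_{\calA^c}\|_{\infty}$ (Proposition \ref{prop:lasso-scr}), establish conditional asymptotic normality of the post-screening OLS statistics with error $C_\infty(h_n+\|\sfw_{\calA^c}\|_{\infty})$ (Proposition \ref{prop:OLS-normal}), and then rerun the symmetric-aggregation argument of Theorem \ref{thm:weak-cor-fdr} with $\beta_{\mathsf{s}}'$, $q_0'$, $\eta_n'$ in place of their unscreened counterparts. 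The obstacle you flag at the end about the randomness of $\calA$ is resolved exactly by the conditioning you already invoke in your first paragraph — $\calA$ is $\sigma(\cD_0,\cD_1)$-measurable and $\wt\sfw^{(2)}$ is built from the independent split $\cD_2$, so no union bound over sparsity patterns is needed.
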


Since we further have 
\begin{equation*}
    \begin{aligned}
        \norm{ \Sigma }^{\frac{1}{2}} \ge \norm{e_i^\top T_{\calH}\left(I_{d_1 d_2} -  U_\perp U_\perp^\top \otimes V_\perp V_\perp^\top\right)  }_2 =\norm{\cP_M(T_i)}_\tF \ge \beta_0\sqrt{\frac{r}{d_1}} \norm{T_i}_\tF,
    \end{aligned}
\end{equation*}
i.e., $ \norm{ \Sigma }^{\frac{1}{2}}  \ge \beta_0\sqrt{\frac{r}{d_1}}\max_{i}\norm{T_i}_\tF= \beta_0\sqrt{\frac{r}{d_1}} \norm{T_{\calH}}_{2,\max} $, and 
\begin{equation*}
    \frac{  \norm{T_{\calH} }  }{ \norm{ \Sigma }^{\frac{1}{2}}  } \left( \frac{\operatorname{supp}(T_{\calH} )}{\sqrt{d_2}}\wedge 1 \right) \le \frac{\sqrt{\alpha_d }}{\sqrt{r}\beta_0}\cdot \frac{  \norm{T_{\calH} } }{ \norm{T_{\calH} }_{2,\max} } \left( \operatorname{supp}(T_{\calH} ) \wedge \sqrt{d_2} \right),
\end{equation*}
we can convert this signal requirement to a stronger but clearer one presented in Theorem \ref{thm:matrix-fdr-strong}. In the subsequent proofs, we shall prove the non-asymptotic versions of Theorem \ref{thm:weak-cor-fdr} and \ref{thm:matrix-fdr-strong}. 

\subsection{Finite-sample Guarantees for Whitening and Screening}
Notice that, in our method of FDR control with whitening and screening, the condition of the correlation structure is defined on the asymptotic correlation matrix $Q^{\ast}:=\big(R_{\calA}^{-1/2\top} R_{\calA}^{-1/2}\big)^{-1}$. However, conditional on  $\cD_1$, the covariance of our test statistics is determined by $\wt \sfw_i^{(2)}$ and is sample-related, which is $Q:=(\wt R_{\calA}^{-1/2\top} \wt R_{\calA}^{-1/2})^{-1}\wt R_{\calA}^{-1/2\top} \wt R^{-1/2} R  \wt R^{-1/2} \wt R^{-1/2}_{\calA} (\wt R_{\calA}^{-1/2\top} \wt R_{\calA}^{-1/2})^{-1}$. The following Proposition \ref{prop:weak-cor-aftscr} will show that, as long as the signal strength of $M$ is strong enough, the estimation of $R$ will be accurate enough such that the data-driven $Q$ is also weakly correlated.
\begin{Proposition}[Finite-sample guarantee of weak correlation after screening]\label{prop:weak-cor-aftscr}
If the matrix signal strength satisfies
\begin{equation*}
    \frac{\kappa_1^{1.5} \norm{T_{\calH} } \sigma_{\xi} }{ \lambda_{\min} \norm{ \Sigma }^{\frac{1}{2}}  } \left( \frac{\operatorname{supp}(T_{\calH} )}{\sqrt{d_2}}\wedge 1 \right)  \cdot \sqrt{\frac{ d_1^2 d_2 \log d_1}{n}} \lesssim \frac{1}{q^{\nu}},
\end{equation*}
then the weak correlation condition also holds for finite-sample covariance matrix $Q$, i.e., $\beta_{\textsf{s}}'$ is defined as the proportion of strongly correlated pairs using $Q$ instead of $Q^{\ast}$.  
\end{Proposition}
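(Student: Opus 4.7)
The plan is to control the deviation between the data-driven matrix $Q$ and its population analogue $Q^{\ast}$, and then transfer the weak-correlation property of $Q^{\ast}$ to $Q$ by a simple continuity argument on correlations. The key algebraic observation is that $Q^{\ast}=\bigl((R^{-1})_{\calA\calA}\bigr)^{-1} = R_{\calA\calA} - R_{\calA,\calA^{\mathrm c}} R_{\calA^{\mathrm c}\calA^{\mathrm c}}^{-1} R_{\calA^{\mathrm c},\calA}$, and $Q$ is the same expression with $R$ replaced by $\wt R$; when $\wt R=R$ one readily checks that $Q$ collapses to $Q^{\ast}$. Hence the whole argument reduces to three steps: (i) establish a concentration bound for $\wt R - R$; (ii) propagate it to $\|Q-Q^{\ast}\|$ via a Neumann-series expansion; (iii) observe that if $\|Q-Q^{\ast}\|=o(q^{-\nu})$ then normalised off-diagonal entries of $Q$ differ from those of $Q^{\ast}$ by $o(q^{-\nu})$, so the thresholding in the definition of $\beta_{\mathsf{s}}'$ yields essentially the same strong-pair count.

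For step (i) I would invoke a Davis--Kahan $\sin\Theta$ bound to compare the singular subspaces of $\wt M^{(1)}$ and $M$, obtaining $\|\cP_{\wt M^{(1)}}-\cP_M\|\lesssim \|\wt M^{(1)}-M\|/\lambda_{\min}$, where the spectral-norm error is controlled by the standard matrix-completion rate $\sigma_\xi\sqrt{d_1^2 d_2 \log d_1/n}$ for bias-corrected, rank-$r$ projected estimators used in Algorithm~\ref{alg:matrix-fdr}. A short quotient-rule calculation applied to the definition of $\wt \rho_{T_i,T_j}$, together with the alignment assumption \eqref{eq:alignment} that supplies a uniform lower bound $\beta_0 \sqrt{r/d_1}\|T\|_{\rm F}$ on the denominator, then yields an entrywise bound on $\wt R - R$ whose aggregated operator-norm version matches the right-hand side of the stated signal condition; the factor $\bigl(|\operatorname{supp}(T_{\calH})|/\sqrt{d_2}\wedge 1\bigr)$ arises because for sparse indexing matrices the inner products $\langle\cP_{\wt M^{(1)}}T_i,\cP_{\wt M^{(1)}}T_j\rangle$ only see a sparse part of the perturbation, while $\|T_{\calH}\|/\|\Sigma\|^{1/2}$ quantifies how the perturbation propagates through the $q$ aggregated tests.

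Step (ii) is where I expect the main obstacle. Because $Q$ is a nonlinear function of $\wt R$ involving inversion of submatrices of varying sizes, I would write $E:=\wt R - R$ and expand $\wt R^{-1}-R^{-1} = -R^{-1}E R^{-1} + O\bigl(\|R^{-1}\|^{3}\|E\|^{2}\bigr)$, valid as long as $\|R^{-1}\|\cdot\|E\|<1/2$. Inverting the $\calA$-block of $\wt R^{-1}$ and tracking the dominant linear term gives $\|Q-Q^{\ast}\|\lesssim \|R^{-1}\|^{2}\|E\|$; using $\|R^{-1}\|\lesssim \kappa_1$, this produces the factor $\kappa_1$, and the remaining half-power $\kappa_1^{1/2}$ in the hypothesis comes from the normalisation step that converts $Q$ entries into correlations via the diagonal of $\Sigma$. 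The delicate point is to perform this expansion uniformly in the random support $\calA$ without its cardinality entering the bound multiplicatively; the uniform upper bound $q_0'$ on $|\calA_0|$ provided by Theorem~\ref{thm:matrix-fdr-strong}, together with the lower bound on diagonal entries of $\Sigma$ furnished by the alignment assumption, is what ultimately keeps the propagated bound of order $q^{-\nu}$.

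With $\|Q-Q^{\ast}\|=o(q^{-\nu})$ in hand, step (iii) is immediate: for each pair $(T_i,T_j)\in \calA\times\calA$, a triangle inequality and the fact that both $Q_{ii},Q^{\ast}_{ii}$ are bounded above and below by constants give
$$
\Bigl|\frac{Q_{ij}}{\sqrt{Q_{ii}Q_{jj}}} - \frac{Q^{\ast}_{ij}}{\sqrt{Q^{\ast}_{ii}Q^{\ast}_{jj}}}\Bigr| = o(q^{-\nu}),
$$
so the threshold $c|\calA|^{-\nu}$ in the definition of $\cH_{0\calA,\text{strong}}^{2}$ is crossed by the same pairs up to a negligible perturbation. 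Consequently, $\beta_{\mathsf{s}}'$ computed from $Q$ and from $Q^{\ast}$ agree up to a $1+o(1)$ multiplicative factor, so the weak-correlation condition used in Theorem~\ref{thm:matrix-fdr-strong} continues to hold for the finite-sample covariance matrix $Q$, which is exactly the claim of the proposition.
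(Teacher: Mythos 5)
Your high-level strategy (bound the deviation of the data-driven covariance from $Q^{\ast}$, then transfer the weak-correlation property by continuity of correlation coefficients) is the same as the paper's, and your step (i) is in the spirit of the paper's Lemma on covariance-estimation precision. However, there are two genuine gaps.

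First, you mischaracterize the object $Q$. It is \emph{not} ``the same expression as $Q^{\ast}$ with $R$ replaced by $\wt R$'', i.e.\ not the Schur complement of $\wt R$. Conditional on $\cD_0,\cD_1$, the covariance of $\wt\sfw^{(2)}_{\calA}$ is the sandwich
$\big(\wt R_{\calA}^{-1/2\top}\wt R_{\calA}^{-1/2}\big)^{-1}\wt R_{\calA}^{-1/2\top}\wt R^{-1/2}\,R\,\wt R^{-1/2}\wt R_{\calA}^{-1/2}\big(\wt R_{\calA}^{-1/2\top}\wt R_{\calA}^{-1/2}\big)^{-1}$,
in which the \emph{true} $R$ sits in the middle because the whitening uses the estimate while the statistics retain their true covariance. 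Consequently the error splits into two pieces: the difference $\big(\mathbf{X}_{\calA}^{\top}\mathbf{X}_{\calA}\big)^{-1}-\big(\mathbf{X}_{\calA}^{*\top}\mathbf{X}_{\calA}^{*}\big)^{-1}$ that your Neumann expansion addresses, \emph{plus} a term $\big(\mathbf{X}_{\calA}^{\top}\mathbf{X}_{\calA}\big)^{-1}\mathbf{X}_{\calA}^{\top}\,\Delta\Sigma\,\mathbf{X}_{\calA}\big(\mathbf{X}_{\calA}^{\top}\mathbf{X}_{\calA}\big)^{-1}$ with $\Delta\Sigma=\mathbf{X}\Sigma\mathbf{X}^{\top}-I$, which your proposal never accounts for. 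The paper's proof treats both terms explicitly; omitting the second leaves the argument incomplete even though it happens to be of the same order.

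Second, your step (iii) relies on the diagonal entries $Q_{ii},Q^{\ast}_{ii}$ being ``bounded above and below by constants,'' which is not available: the best generic lower bound is of order $\lambda_{\min}(R)$, i.e.\ a condition-number-dependent quantity, and nothing in the hypotheses forces $\kappa_1=O(1)$. Passing through an operator-norm bound $\|Q-Q^{\ast}\|=o(q^{-\nu})$ and then dividing by $\min_i Q_{ii}$ would therefore cost an extra factor of the condition number and would not match the stated signal-strength condition. The paper avoids this by proving \emph{relative} entrywise bounds of the form
$\abs{\mathbf{Q}_{ij}-\mathbf{Q}^{\ast}_{ij}}\lesssim \epsilon_n\big(\sqrt{\mathbf{Q}_{ii}\mathbf{Q}^{\ast}_{jj}}+\sqrt{\mathbf{Q}_{ii}\mathbf{Q}_{jj}}\big)$
(using $\|(\mathbf{X}_{\calA}^{*\top}\mathbf{X}_{\calA}^{*})^{-1}e_j\|\le\sqrt{\lambda_{\max}(\Sigma)\mathbf{Q}^{\ast}_{jj}}$ and its analogue for $\mathbf{X}_{\calA}$), which yields the $q^{-\nu}$ perturbation of the correlation coefficients directly and uniformly in $(i,j)$ without any appeal to the cardinality of $\calA$. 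You would need to restructure steps (ii)--(iii) along these lines for the proof to close.
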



\begin{Proposition}[LASSO screening]\label{prop:lasso-scr}
By choosing regularization level $\lambda=C\sqrt{\log d_1 } $, LASSO can recover the signal with precision 
    \begin{equation*}
    \abs{\wt\sfw^{(1)}_i -\sfw_i} \le C \kappa_1^{1.5} \sqrt{q_1 \log d_1 }+ h_n \abs{\sfw_i },
\end{equation*}
uniformly for all $i\in[q]$ with probability at least $1-C d_1^{-2}\log d_1 $ for some universal constant $C>0$, as long as the sample requirement of SDA holds.
Moreover, under this condition, if $T_i\in \cS'$, then LASSO can surely select feature $i$. 
\end{Proposition}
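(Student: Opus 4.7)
The plan is to recast the LASSO in Step 2 of Algorithm~\ref{alg:matrix-sda} as a standard penalized least-squares problem and then invoke the classical oracle inequality. Setting $X:=\widehat R^{-1/2}$ and $y:=\widehat R^{-1/2}Z^{(1)}$, the program becomes $\widetilde\sfw^{(1)}=\arg\min_{\sfw}\{\tfrac12\|y-X\sfw\|_2^2+\lambda\|\sfw\|_{\ell_1}\}$, with KKT condition $\widehat R^{-1}(Z^{(1)}-\widetilde\sfw^{(1)})=\lambda s$ for some $s\in\partial\|\widetilde\sfw^{(1)}\|_{\ell_1}$. By Theorem~\ref{thm:asymp-two-var}, one has the decomposition $Z^{(1)}=\sfw+R^{1/2}g+\xi_n$ with $g$ approximately $N(0,I_q)$ and $\|\xi_n\|_\infty\lesssim h_n(1+\|\sfw\|_\infty)$ on an event of high probability, while Proposition~\ref{prop:weak-cor-aftscr} ensures $\widehat R$ is spectrally close to $R$.

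The two standard ingredients of the LASSO analysis are then verified as follows. First, a noise-domination step bounds the score at the truth, $\|X^\top(y-X\sfw)\|_\infty=\|\widehat R^{-1/2}(Z^{(1)}-\sfw)\|_\infty$: after conditioning on $\cD_0$ so that $\widehat R$ is deterministic, each coordinate is (nearly) Gaussian with variance controlled by $\|\widehat R^{-1/2}R\widehat R^{-1/2}\|_{\max}$, which is of order one thanks to $\widehat R\approx R$. A union bound over the $q$ coordinates yields a supremum of order $\sqrt{\log d_1+\log q}$, matching the prescribed $\lambda$. Second, a restricted eigenvalue check on $X^\top X=\widehat R^{-1}$ follows from the spectral closeness $\widehat R\approx R$: the RE constant $\phi^2$ is within a constant of $1/\lambda_{\max}(R)$, so $\phi^{-2}\lesssim\kappa_1$.

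Plugging these into the Bickel--Ritov--Tsybakov oracle inequality gives
$$
\bigl\|\widetilde\sfw^{(1)}-\sfw\bigr\|_2\;\lesssim\;\frac{\lambda\sqrt{q_1}}{\phi^2}\;\lesssim\;\kappa_1^{3/2}\sqrt{q_1(\log d_1+\log q)},
$$
where the extra half-power of $\kappa_1$ tracks the noise normalization through the change of variables. Coordinate extraction yields $|\widetilde\sfw^{(1)}_i-\sfw_i|\le\|\widetilde\sfw^{(1)}-\sfw\|_2$, accounting for the first term of the claimed bound. The additive $h_n|\sfw_i|$ term absorbs the multiplicative bias carried by $\xi_n$ together with the perturbation from using $\widehat R$ in place of $R$; both are handled by a first-order expansion in $\widehat R-R$ combined with the entrywise concentration of $\widehat M^{(1)}$ from Theorem~\ref{thm:asymp-normal-varest}. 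For the sure-screening claim, the definition of $\cS'$ in \eqref{eq:strong-text} together with the alignment condition \eqref{eq:alignment} gives $|\sfw_i|\gtrsim C_{\mathsf{gap}}\kappa_1^{3/2}\sqrt{q_1\log(q\vee d_1)}$ whenever $T_i\in\cS'$, strictly exceeding the LASSO error once $C_{\mathsf{gap}}$ is large. If $\widetilde\sfw^{(1)}_i$ were zero, the identity $|\sfw_i|=|\widetilde\sfw^{(1)}_i-\sfw_i|$ would contradict this, forcing $i\in\calA$.

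The main obstacle is the joint randomness in $Z^{(1)}$ and in $\widehat R$, both of which depend on $(\cD_0,\cD_1)$. I would resolve this by first conditioning on $\cD_0$ to freeze $\widehat M^{(1)}$ (and hence $\widehat R$), reducing the LASSO input to a sub-Gaussian perturbation of $\sfw$ with a deterministic-but-perturbed Gram matrix. The RE and noise-domination inequalities then follow from a single event on which $\widehat R$ inherits the spectrum of $R$ up to a multiplicative constant; careful union bounds and chaining over the $q$ coordinates, combined with the initialization tail bound \eqref{eq:init-est}, deliver the stated probability $1-Cd_1^{-2}\log d_1$.
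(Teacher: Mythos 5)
Your overall route is the same as the paper's: rewrite the LASSO as penalized least squares with the whitened design, lower-bound the (restricted) eigenvalue of the Gram matrix by $1/(2\kappa_1)$ using $\lambda_{\max}(R)\le\kappa_1$, bound the score at the truth by $\lambda\asymp\sqrt{\kappa_1(\log d_1+\log q)}$, run the basic inequality to get $\|\widetilde\sfw^{(1)}-\widehat\sfw\|\lesssim\lambda\kappa_1\sqrt{q_1}$, convert $\widehat\sfw$ to $\sfw$ via the relative error $h_n$ of the variance estimates, and conclude sure screening because signals in $\cS'$ exceed the estimation error. One small slip: the score is $X^\top(y-X\sfw)=\widehat R^{-1}(Z^{(1)}-\sfw)$, not $\widehat R^{-1/2}(Z^{(1)}-\sfw)$, and its per-coordinate variance is governed by $(R^{-1})_{ii}$, which is bounded by $\kappa_1$ rather than by an absolute constant; this is exactly where the extra $\sqrt{\kappa_1}$ in the final rate originates, so your parenthetical about ``noise normalization'' is doing real work and should be made explicit.

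The genuine gap is in the noise-domination step. You invoke only spectral closeness of $\widehat R$ to $R$ (and you cite Proposition~\ref{prop:weak-cor-aftscr}, which concerns the post-screening covariance $Q$ versus $Q^{\ast}$, not $\widehat R$ versus $R$). But the score involves $\widehat R^{-1}$ applied to the noise vector coordinatewise, and the perturbation term $(\widehat R^{-1}-R^{-1})(Z^{(1)}-\sfw)$ must be controlled in $\ell_\infty$; for that one needs a bound on the maximum row-$\ell_1$ norm $\|D(\widehat\Sigma^{-1}-\Sigma^{-1})D\|_{\infty}$, not on the operator norm. This is the content of Lemma~\ref{lemma:pres-DSD} in the paper, and it is precisely where the additional signal-strength hypothesis of Theorem~\ref{thm:matrix-fdr-strong} --- the lower bound on $\lambda_{\min}$ involving $\|R^{-1}\|_\infty$ and $\|T_{\calH}\|/\|T_{\calH}\|_{2,\max}$ --- is consumed. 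Relatedly, the exact-Gaussian part of the score must be separated from the non-Gaussian remainder $\widehat{\mathbf{W}}_2$ of the debiasing decomposition, whose contribution is bounded by $\|R^{-1}\|_\infty h_n$ rather than by a union bound over Gaussian tails. Without these two $\ell_\infty$-type controls the choice $\lambda\asymp\sqrt{\kappa_1(\log d_1+\log q)}$ cannot be justified, so you should add them (or an equivalent row-wise perturbation bound for the inverse of the estimated covariance) to complete the argument.
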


 In our method, LASSO is used for pre-selection.
 In fact, we always deliberately choose a weak regularization level so that most true signals and many false positives are included in $\cA$, at the cost of power loss. Here, we do not require the sure-screening condition of LASSO that is commonly used in \cite{roeder2009genome,barber2019knockoff,du2021false,dai2023scale}. We emphasize that our theory can hold with non-identified signals as long as $\norm{ \sfw_{\calA^c} }_{\infty}$ is small enough.

We exploit the symmetricity of $\wt\sfw^{(2)}$ obtained by linear regression after LASSO. This symmetricity, described in the following Proposition \ref{prop:OLS-normal}, serves as a counterpart of Theorem \ref{thm:asymp-normal} in the weakly correlated case.
\begin{Proposition}[Linear regression after screening]\label{prop:OLS-normal} Suppose $T_i\in \cA\cap\cH_0$. Denote an upper bound of variance shrinkage effect of screening on $\cA$ as 
$$C_\infty: = \sup_{i\in \cA }  \frac{ 1 \vee \norm{e_i^\top \left(\wt R_{\mathcal{A}}^{-1/2\top} \wt R_{\mathcal{A}}^{-1/2}\right)^{-1} \wt R_{\mathcal{A}}^{-1/2\top}\wt R^{-1/2}_{\mathcal{A}^c} }_{\ell_1} }{ \sqrt{Q_{ii}}}.$$
Here, we slightly abuse the notation by treating $\cA$ as an index set of numbers. Conditional on $\cD_1$, we have
\begin{equation*}
   \abs{ \PP\left( \frac{\wt\sfw^{(2)}_i}{\sqrt{Q_{ii}} }\le t \middle|  \cD_1  \right)-\Phi (t)}\le C \cdot C_\infty \left(h_n+\norm{ \sfw_{\calA^c} }_{\infty}\right).
\end{equation*}
\end{Proposition}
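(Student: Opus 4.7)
My plan is to work conditionally on $\cD_0$ and $\cD_1$, which freezes the initial estimate $\widehat M^{(1)}$, the screened index set $\calA$, and the plug-in correlation matrix $\widetilde R$. Under this conditioning, $\widetilde\sfw_{\calA}^{(2)} = A Z^{(2)}$ where $A := (\widetilde R_{\calA}^{-1/2\top}\widetilde R_{\calA}^{-1/2})^{-1}\widetilde R_{\calA}^{-1/2\top}\widetilde R^{-1/2}$ is a fixed $|\calA|\times q$ matrix, and all remaining randomness comes from $\cD_2$ through $Z^{(2)}$. I would split the statistic as $\widetilde\sfw_i^{(2)} = e_i^\top A(Z^{(2)}-\sfw) + e_i^\top A\sfw$ and handle the two pieces separately.

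For the centering term, a direct computation shows that $A$ acts as the identity on the $\calA$-coordinates, so $(A\sfw)_i = \sfw_i + e_i^\top(\widetilde R_{\calA}^{-1/2\top}\widetilde R_{\calA}^{-1/2})^{-1}\widetilde R_{\calA}^{-1/2\top}\widetilde R^{-1/2}_{\calA^c}\sfw_{\calA^c}$. For $T_i\in\calA\cap\cH_0$ the first summand vanishes and the second is bounded by $\|e_i^\top(\widetilde R_{\calA}^{-1/2\top}\widetilde R_{\calA}^{-1/2})^{-1}\widetilde R_{\calA}^{-1/2\top}\widetilde R^{-1/2}_{\calA^c}\|_{\ell_1}\cdot\|\sfw_{\calA^c}\|_{\infty}$; dividing by $\sqrt{Q_{ii}}$ and invoking the very definition of $C_\infty$ controls this deterministic (conditional) centering bias by at most $C_\infty\|\sfw_{\calA^c}\|_{\infty}$.

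For the stochastic fluctuation $e_i^\top A(Z^{(2)}-\sfw)$, I would rewrite it as a single linear form in $\widehat M^{(2)}$. Substituting $Z_j^{(2)} = (\langle\widehat M^{(2)},T_j\rangle-\theta_{T_j})/\widehat s_j^{(2)}$ with $\widehat s_j^{(2)} = \widehat\sigma_\xi^{(2)}\|\cP_{\widehat M^{(2)}}T_j\|_\tF\sqrt{d_1d_2/n}$ gives $e_i^\top A Z^{(2)} = \langle\widehat M^{(2)},T''_i\rangle - \mathrm{const}$ with $T''_i := \sum_{j=1}^q(A_{ij}/\widehat s_j^{(2)})\,T_j$. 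On the high-probability event where every $\widehat s_j^{(2)}$ is within constant factors of the population value $s_j := \sigma_\xi\|\cP_M T_j\|_\tF\sqrt{d_1d_2/n}$ (which follows from the initialization bound \eqref{eq:init-est} together with the alignment condition \eqref{eq:alignment}), I can apply Theorem~\ref{thm:asymp-normal-varest} directly to the single linear form $T''_i$. The natural self-normalization from that theorem, $\widehat\sigma_\xi^{(2)}\|\cP_{\widehat M^{(2)}}T''_i\|_\tF\sqrt{d_1d_2/n}$, coincides with $\sqrt{(ARA^\top)_{ii}}=\sqrt{Q_{ii}}$ up to a $1+o(1)$ factor, because Theorem~\ref{thm:asymp-two-var} already implies that the conditional covariance of $Z^{(2)}$ agrees with $R$ up to an $h_n$ error.

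The main technical obstacle will be controlling the Berry--Esseen rate for $T''_i$. Theorem~\ref{thm:asymp-normal-varest} supplies a bound involving the sparsity ratio $\|T''_i\|_{\ell_1}/\|\cP_M(T''_i)\|_\tF$, and after aggregation across the columns of $A$ this ratio can grow with $|\calA^c|$; the quantity $C_\infty$ is engineered precisely to absorb this growth, since the $\ell_1$-norm of the $\calA^c$-portion of the $i$-th row of $A$ divided by $\sqrt{Q_{ii}}$ is exactly the object bounded by $C_\infty$, while the ``$1\vee$'' in the definition of $C_\infty$ guarantees $C_\infty\ge 1$ so that the $j=i$ contribution is also dominated by $C_\infty h_n$. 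Combining the bias bound with the conditional Gaussian approximation by the triangle inequality for the Kolmogorov metric then yields
$$\sup_{t\in\RR}\Big|\PP\big(\widetilde\sfw_i^{(2)}/\sqrt{Q_{ii}}\le t\,\big|\,\cD_0,\cD_1\big)-\Phi(t)\Big|\le C\cdot C_\infty\big(h_n+\|\sfw_{\calA^c}\|_\infty\big),$$
which is the claimed bound.
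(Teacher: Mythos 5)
Your decomposition into a centering/bias term $e_i^\top A\sfw$ and a fluctuation term $e_i^\top A(Z^{(2)}-\sfw)$ is the same skeleton the paper uses, and your treatment of the bias is correct: $A_{\cdot,\calA}=I_{|\calA|}$, so for $T_i\in\calA\cap\cH_0$ the centering reduces to the $\calA^c$ leakage, which the definition of $C_\infty$ bounds by $C_\infty\|\sfw_{\calA^c}\|_\infty$ after dividing by $\sqrt{Q_{ii}}$. The gap is in the fluctuation term. Your plan is to fold $e_i^\top AZ^{(2)}$ into a single linear form $\langle\widehat M^{(2)},T''_i\rangle$ with $T''_i=\sum_j(A_{ij}/\widehat s_j^{(2)})T_j$ and invoke Theorem \ref{thm:asymp-normal-varest}. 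But $\widehat s_j^{(2)}=\widehat\sigma^{(2)}_\xi\|\cP_{\widehat M^{(2)}}(T_j)\|_{\rm F}\sqrt{d_1d_2/n}$ is computed from $\cD_2$, the same data that produces $\widehat M^{(2)}$, so $T''_i$ is a random matrix correlated with the estimator; Theorem \ref{thm:asymp-normal-varest} is stated for a fixed, deterministic $T$ and cannot be applied to it. Restricting to a high-probability event on which $\widehat s_j^{(2)}\asymp s_j$ controls the magnitude of the coefficients but does not remove the dependence, and replacing $\widehat s_j^{(2)}$ by $s_j$ creates a substitution error $\sum_jA_{ij}(1/\widehat s_j^{(2)}-1/s_j)(\langle\widehat M^{(2)},T_j\rangle-\theta_{T_j})$ that your argument never controls. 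A second, smaller issue is that even for a deterministic $T''_i$ the Berry--Esseen constant of Theorem \ref{thm:asymp-normal-varest} passes through $\|T''_i\|_{\ell_1}/\|\cP_M(T''_i)\|_{\rm F}$; your claim that $C_\infty$ absorbs this is plausible but is asserted rather than proved.

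The paper avoids both problems by peeling off the exact i.i.d.\ core of $Z^{(2)}$: it writes the un-normalized statistic as $D(\widehat\sfw+\widehat{\mathbf W}_1+\widehat{\mathbf W}_2)$, where $\widehat{\mathbf W}_{1j}=\langle\widehat Z_1,\cP_M(T_j)\rangle/(\sigma_\xi s_{T_j}\sqrt{d_1d_2/n})$ is a genuine sum of i.i.d.\ terms with \emph{deterministic} aggregation weights given $\cD_0,\cD_1$. Berry--Esseen is then applied directly to $e_i^\top(\mathbf X_\calA^\top\mathbf X_\calA)^{-1}\mathbf X D\widehat{\mathbf W}_1$, whose variance is exactly $Q_{ii}$ and whose third-moment ratio is controlled by incoherence alone, giving a rate $C\mu\sqrt{rd_1/n}$ with no dependence on any aggregated $\ell_1$-norm. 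The remainder $\widehat{\mathbf W}_2$ is bounded uniformly in $\ell_\infty$ by $Ch_n$, and only there does the row-$\ell_1$ aggregation (hence $C_\infty$) enter. To repair your proof you would need to perform this same core-plus-remainder split rather than appeal to Theorem \ref{thm:asymp-normal-varest} as a black box.
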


Here, $C_\infty $ can be viewed as a special kind of coherence condition that has been broadly used in LASSO selection \citep{donoho2001uncertainty,zhao2006model,wainwright2009sharp}. In this propostion, $\norm{ \sfw_{\calA^c} }_{\infty}$ measures the error caused by inconsistent screening.

\subsection{An Equivalent Version of Algorithm \ref{alg:matrix-sda}}
Note that Algorithm \ref{alg:matrix-sda} involves the computation of the correlation coefficient matrix. To ease the analysis, we rewrite Algorithm \ref{alg:matrix-sda} as the following version that avoids computing the inverse of diagonal elements. Notice that, this is just a change of notation for mathematical analysis, rather than a new algorithm.
\begin{algorithm}[H]
	\caption{Matrix FDP Control with Whitening and Screening}
	\label{alg:matrix-sda-practical}
	\begin{algorithmic}[1]
		\REQUIRE Hypotheses $\left\{H_{0T}: \langle M, T\rangle=\theta_T , T\in \cH\right\}$, data splits $\cD_1$, $\cD_2$, rank $r$, FDR level $\alpha$.
		\STATE{ Use $\cD_0$ to construct an initial estimate $\widehat{M}_{\mathsf{init}}$}
		\STATE{ Apply proposed asymptotic test statistics to the second part of data $\cD_1$ and the third part of data $\cD_2$ respectively to get un-normalized test statistics $\mathbf{W}^{(1)}$ and $\mathbf{W}^{(2)}$, where
			\begin{equation*}
				\mathbf{W}_i^{(k)}= \widehat{s}^{(k)}_{T_i} W_{T_i}^{(k)}= \frac{\widehat{M}^{(k)}_{T_i} - \theta_{T_i}}{ \widehat{\sigma}^{(k)}_{\xi} \sqrt{d_1 d_2/n} }, \ k=1,2, \text{ and }  \widehat{D}=\operatorname{diag}\left(\widehat{s}_{T_1}^{(1)},\dots,\widehat{s}_{T_p}^{(1)} \right).
			\end{equation*}
   Here $\wt s_{T_i}^{(k)}=\big\|\calP_{\wt M^{(k)}_{\init}}(T_i)\big\|_{\rm F}$ is an estimate of $s_{T_i}=\big\|\calP_M(T_i)\big\|_{\rm F}$. 
		}
		\STATE { Obtain a covariance matrix estimate $\widehat{\Sigma }$ by $\widehat{U}^{\init}$, $\widehat{V}^{\init}$ from $\cD_0$ and $\cD_1$:
			\begin{equation}
				\widehat{\Sigma }= T_{\calH}(I_{d_1 d_2} - \widehat U_\perp \widehat U_\perp^\top \otimes \widehat V_\perp \widehat V_\perp^\top  ) T_{\calH}^\top, 
			\end{equation}
			and write $\mathbf{X}= \widehat{\Sigma }^{-\frac{1}{2}}$.  Construct response $\mathbf{y}_1 = \mathbf{X}\mathbf{W}^{(1)}  $, and solve LASSO
			
			\begin{equation*}
				\wt\sfw^{(1)} = \argmin_{\sfw\in \R^q } \left\{ \frac{1}{2}\norm{\mathbf{y}_1 - \mathbf{X}\widehat{D} \sfw }^2 + \lambda\norm{\sfw}_{\ell_1} \right\}.
			\end{equation*}
			
		}
		\STATE {Denote $\cA$ as the support set of LASSO solution $\wt\sfw^{(1)}$. We then have the separation $\mathbf{X} = \left[ \mathbf{X}_{\cA}, \mathbf{X}_{\cA^c}  \right]$. We run linear regression on $\cA$ with new loading matrix $\mathbf{X}_{\cA} $ and response $\mathbf{y}_2 = \mathbf{X}\mathbf{W}^{(2)}$ from $\cD_2$ to get  asymptotic symmetric statistics $\wt\sfw^{(2)}$, where
			
			\begin{equation*}
				\wt\sfw^{(2)}_i=\left\{\begin{array}{cc}
					e_i^{\top}\left(\mathbf{X}_{\mathcal{A}}^{\top} \mathbf{X}_{\mathcal{A}}\right)^{-1} \mathbf{X}_{\mathcal{A}}^{\top} \mathbf{y}_2, & i \in \mathcal{A} \\
					0, & i \in \mathcal{A}^c
				\end{array}\right.
			\end{equation*}
			with variance estimate $\widehat{\sigma}_{w i}^2= e_i^{\top}\left(\mathbf{X}_{\mathcal{A}}^{\top} \mathbf{X}_{\mathcal{A}}\right)^{-1} e_i $.
		}
		\STATE { Compute the final ranking statistics of each $T_i$ by $\wt\sfw_{T_i}^{\mathsf{rank}}=\wt\sfw_{i}^{(1)}\wt\sfw_{i}^{(2)}/\widehat{\sigma}_{w i}$, and then choose a data-driven threshold $L$ by 
			\begin{equation*}
				L=\inf \left\{t>0: \frac{\sum_{T\in\cH}\bbI\left(\wt\sfw_T^{\mathsf{rank}}<-t\right)}{\sum_{T\in\cH} \bbI \left(\wt\sfw_T^{\mathsf{rank}}>t\right) \vee 1} \leq \alpha\right\}.
		\end{equation*}}
		\STATE{Reject $H_{0T_i}$ if $\wt\sfw_{T_i}^{\mathsf{rank}}>L$}
	\end{algorithmic}
\end{algorithm}
It can be easily checked that $\wt\sfw^{(1)}_i$ and $\wt\sfw^{(2)}_i/\widehat{\sigma}_{w i}$ share the same representation as in Algorithm \ref{alg:matrix-sda}, and Algorithm \ref{alg:matrix-sda}, \ref{alg:matrix-sda-practical} are essentially identical. For brevity of notations, our following proofs (presented in Sections~\ref{sec:proof-aftscr}-\ref{sec:proof-fdr-strong}) of theories in Section~\ref{sec:strong-corr} will be based on the quantities and notations in Algorithm \ref{alg:matrix-sda-practical} rather than that in Algorithm \ref{alg:matrix-sda}.

\subsection{Comparison of Data Aggregation Methods } \label{sec:compare}
The empirical success of data splitting in multiple testing leads to the problem of how to choose data aggregation methods for split data and what the theoretical explanations are behind them. In this section, we probe into the power behavior of different data aggregation methods to answer this question. Indeed, existing literature have scarcely compared the power of different FDR control procedures. Here we list some notable attempts: \cite{genovese2006false} found that the $p$-value weighting can improve the power compared with the original BH method; \cite{scott2015false} showed simulation evidence that FDR regression improves the power upon traditional FDR control methods; for knockoff procedure,  \cite{liu2019power,weinstein2020power} focused on explaining the power behavior of knockoff under special designs. 
However, all these attempts have been unsuccessful in transferring to the case of data aggregation methods and in comparing the power enhancement achieved through data splitting.
We compare our methods with other combination schemes in a simple mean-testing problem. Actually, several data aggregation methods have been proposed in \cite{dai2022false} by the so-called ``mirror statistic" design. Namely, for any dimension $i\in [q]$, we derive two independent test statistics $X^1_i$, $X^2_i$ from two groups of data. Then we combine each pair of $X^1_i$, $X^2_i$ by 
\begin{equation}\label{eq:mirror}
	W(X^1_i,X^2_i)=\operatorname{sign}(X^1_i X^2_i )f(\abs{X^1_i},\abs{X^2_i} ).
\end{equation}
Possible candidates of $f(u,v)$ are 
\begin{equation}\label{eq:mirror-choice}
	f_1(u, v)=u v, \quad f_2(u, v)= \min (u, v),\quad f_3(u, v)=u+v.
\end{equation}

Among these choices, $f_2$ and $f_3$ have been discussed in \cite{xing2021controlling,dai2023scale,dai2022false} and $f_3$ is said to be nearly optimal with respect to power under certain conditions \citep{dai2022false}. Our method can be viewed as a special kind of mirror statistic design by choosing $f_1(u, v)=u v$. This amounts to computing the Hadamard product
of two test statistic vectors $X^1$, $X^2\in \R^{p}$. 
Interestingly, in practice, it is sometimes observed that $f_1$ can outperform other methods; see \cite{dai2023scale,du2021false} for examples. Here, we explain this empirical finding from a Bayesian perspective by mixture model. Consider the multiple testing problem that we observe $q$-dimensional vector $X$ sampled from the model 
\begin{equation}\label{eq:toy-model}
	\begin{aligned}
		X=\boldsymbol{\delta} + \boldsymbol{\xi},
	\end{aligned}
\end{equation} 
where noise $\boldsymbol{\xi}$ is independent multivariate gaussian with $\Sigma=I_q$ (or weakly dependent). The signals $\boldsymbol{\delta}$ are sparse and independent from an unknown non-zero prior $\mathbf{\Theta}$ in the sense that in each dimension $i\in [q]$, $\delta_i=0 $ or $\delta_i\sim \mathbf{\Theta} $, and $\pi_1:= \#\{\mu_i\sim \mathbf{\Theta}  \}/q \to 0$.  Our tests are 
$$\cH_{0i}: \delta_i=0 \text{ versus } \cH_{1i}: \delta_i\neq 0, \text{ for every } i\in[q].  $$

To examine the impact of data aggregation, suppose two observations $X^1$, $X^2$ are given, and we aim to control the FDR by data aggregation in \eqref{eq:mirror} with a certain threshold $L_\alpha$. When $q \to \infty$, the performance of this data-splitting-based method can actually be explained by a mixture model. Consider a prior $H_0: \delta=0$, and $H_1: \delta\sim \mathbf{\Theta} $, with $\PP(H_0)=1-\pi_1, \ \PP(H_1)=\pi_1$ and a variable $Y$ with mixture distribution $Y|H_0\sim W(\xi_1,\xi_2)$, and $Y|H_1\sim W(\delta+\xi_1,\delta+\xi_2)$. Here $\xi_1, \xi_2$ are independent standard normal variables. When   all the dimensions of $X$ are independent or weakly correlated, we have 
\begin{equation*}
	\frac{\#\left\{i: W_i>t\right\}}{q} \to \PP(Y>t) 
\end{equation*}
uniformly for any $t$. 
The limiting behavior of data aggregation method $W$ given any threshold $L$ is summarized as follows:
\begin{equation}\label{eq:limiting_fdr}
	\begin{aligned}
		\text{FDR}_W (L) &= \frac{\PP(Y>L,H_0)}{ \PP(Y>L) } = \frac{(1-\pi_1)\PP(Y>L|H_0)}{(1-\pi_1)\PP(Y>L|H_0) + \pi_1 \PP(Y>L|H_1) },  \\
		\text{Power}_W(L) & = \PP(Y>L|H_1),
	\end{aligned}
\end{equation}
where the limiting power is the expectation with respect to $ \mathbf{\Theta}  $: $\PP(Y>L|H_1) = \E_{\mathbf{\Theta}} \PP(Y>L|\delta,H_1)$. Suppose we can specify a  threshold $L_\alpha$ that controls the limiting FDR at exact level $\alpha$, i.e.,
\begin{equation}\label{eq:limiting_thres}
	L_\alpha =\min \left\{L>0: \text{FDR}_W ( L) = \alpha\right\}, 
\end{equation}
where $L_\alpha$ is determined by both FDR level $\alpha$ and aggregation function $W$. Then, at the same FDR level $\alpha$, the power of different data aggregation methods is only decided by the mixture distribution $Y$ induced by aggregation function $W$. To compare the limiting power of different aggregation method $W_j(u,v) = \operatorname{sign}(u v )f_j(\abs{u},\abs{v} )$, $j=1,2,3$, we denote $L_{\alpha j} $ as the corresponding threshold by \eqref{eq:limiting_thres}. It suffices to compare $\operatorname{Power}_{W_j}(L_{\alpha j} )$. This is equivalent to comparing the quantities $\operatorname{Power}_{W_j}(L_{p j} )$ where $L_{p j}$ is the $p$-th quantile of  null distribution $Y_j|H_0\sim W_j(\xi_1,\xi_2)$. The rationale is as follows. For the same quantile $p$, if the $\operatorname{Power}_{W_j}(L_{p j} )$ is larger, then in order to achieve the same FDR level, one must have a smaller threshold, thus the corresponding $\PP(Y_i>L |H_0) $ tends to be larger. It is clear that given the threshold $L_\alpha$ that controls the limiting FDR at exact level $\alpha$, we have $\PP(Y>L_\alpha |H_0) $ proportional to  $\PP(Y>L_\alpha |H_1) $, which implies that larger $\PP(Y>L |H_0) $  leads to a larger power.

If  $\operatorname{FDR}_{W_j}(L_{\alpha j} ) = \alpha $, then we have

\begin{equation*}
	\begin{aligned}
		\PP(Y>L_\alpha|H_0) =  \frac{\pi_1}{1-\pi_1} \PP(Y>L_\alpha|H_1)\frac{\alpha}{1-\alpha}\le c \pi_1,
	\end{aligned}
\end{equation*}
which indicates that to reach any fixed FDR level $\alpha$, the quantity $\PP(Y>L_\alpha|H_0)$ will decrease at the rate $O(\pi_1)$. We thus choose $p = O(\pi_1)$ and $L_{p j}$ satisfying $ \PP(Y_j>L_{p j}|H_0) = p$ for $j=1,2,3$. Let $z= \sqrt{p} \delta$. We will use Talyor expansion and compare the derivatives of $ \PP(Y_j>L_{p j}|z,H_0) = p$ with respect to $z$.

\begin{Theorem}\label{thm:power-comparison}
	Consider the limiting behaviors \eqref{eq:limiting_fdr} of different data aggregation methods in \eqref{eq:mirror-choice} characterized by the mixture model stated above. When achieving the same FDR level $\alpha$ and $\pi_1\to 0$, we have the following asymptotic power relation:	
	\begin{equation*}
		\begin{aligned}
			\operatorname{Power}_{W_1}(L_{\alpha 1} )\ge \operatorname{Power}_{W_2}(L_{\alpha 2} ) \ge \operatorname{Power}_{W_3}(L_{\alpha 3} ),
		\end{aligned}
	\end{equation*}
	for any bounded prior $\mathbf{\Theta} $: $\PP( \abs{\delta}\le \delta_0|\mathbf{\Theta}  )\to 1$ where $\delta_0 = o(\sqrt{\frac{1}{\pi_1}})$.
\end{Theorem}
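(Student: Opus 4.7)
The comparison at the common FDR level reduces, by the monotonicity argument sketched in the text, to comparing the quantile-matched powers $\pi_j(\delta):=\PP(W_j(\delta+\xi_1,\delta+\xi_2)>L_{pj})$ at a common $p=O(\pi_1)$, for each $\delta$ in the support of $\mathbf{\Theta}$. I would perform an exponential tilt: with $S=\xi_1+\xi_2$ and $A_j=\{W_j(\xi_1,\xi_2)>L_{pj}\}$, the Gaussian shift-and-compensate identity gives
\[
\pi_j(\delta)=\E\!\left[\bbI_{A_j}\exp(\delta S-\delta^2)\right],
\]
with $\PP(A_j)=p$; in the stated regime $z=\sqrt p\,\delta=o(1)$ serves as the natural small parameter for Taylor expansion.

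Each $A_j$ is invariant under $(\xi_1,\xi_2)\mapsto(-\xi_1,-\xi_2)$ while $S$ is antisymmetric, so $\E[\bbI_{A_j}S]=0$ and the linear term in $\delta$ vanishes, leaving $\pi_j(\delta)-p\approx\frac{\delta^2}{2}(\E[\bbI_{A_j}S^2]-2p)+R_j(\delta)$. Since every $A_j$ forces $\xi_1\xi_2>0$, switching to iid half-normal coordinates $(r_1,r_2)=(|\xi_1|,|\xi_2|)$, on which $S^2=(r_1+r_2)^2$, reduces the second-order coefficient to a comparison of $\E[(r_1+r_2)^2\mid B_j]$ across
\[
B_1=\{r_1r_2>L_{p1}\},\ B_2=\{\min(r_1,r_2)>L_{p2}\},\ B_3=\{r_1+r_2>L_{p3}\},
\]
all of common half-normal measure $2p$.

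The next step is to compute the null-tail asymptotics $L_{p1}\sim\log(1/p)$, $L_{p2}\sim\sqrt{\log(1/p)}$ and $L_{p3}\sim 2\sqrt{\log(1/p)}$, obtained from the Bessel-$K_0$ tail of a Gaussian product, the square of the Gaussian tail of a half-normal, and the Gaussian tail of a half-normal convolution, respectively. Inserting these thresholds and running a Laplace/saddle-point expansion inside each region produces the polynomial prefactors multiplying the common exponential $e^{-L^{*2}}$ with $L^{*2}=\log(1/p)$; reading off these prefactors yields the ordering of the second-order coefficients and hence of the quantile-matched $\pi_j(\delta)$. The remainder $R_j(\delta)$ is controlled by tail moment bounds on $\E[\bbI_{A_j}|S|^k]$ in the regime $\delta\le\delta_0=o(\pi_1^{-1/2})$, and the quantile-matched ordering is transferred to the FDR-matched thresholds $L_{\alpha j}$ by the monotonicity of power in the threshold.

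\textbf{Main obstacle.} The delicate step is the prefactor-level comparison of $\E[(r_1+r_2)^2\mid B_j]$. A direct Hardy--Littlewood rearrangement argument identifies $B_3$ as the unique maximizer of this functional (it is the super-level set of $(r_1+r_2)^2$), so the desired ordering cannot be a consequence of a simple monotone comparison and must be extracted from the sub-leading corrections in the three Laplace expansions. These corrections depend sensitively on whether the boundary of $B_j$ is hyperbolic, rectangular, or linear, and establishing the claimed ordering uniformly in $\delta\in[0,\delta_0]$ -- possibly by going to the next non-vanishing $\delta^4$ correction, since the $\delta^3$ term also vanishes by the sign-flip symmetry -- is the technical heart of the proof.
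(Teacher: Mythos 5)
Your framework is essentially the paper's: reduce the FDR-matched comparison to a null-quantile-matched one, expand the power in the signal, and kill the odd-order terms by the sign-flip symmetry; your threshold asymptotics $L_{p1}\sim\log(1/p)$, $L_{p2}\sim\sqrt{\log(1/p)}$, $L_{p3}\sim2\sqrt{\log(1/p)}$ and the tilt identity $\pi_j(\delta)=\E[\bbI_{A_j}e^{\delta S-\delta^2}]$ are all correct. The problem is that the ``main obstacle'' you flag is not a technical difficulty to be absorbed into sub-leading Laplace corrections or the $\delta^4$ term: it is fatal to the plan as stated. By your own rearrangement observation $\E[\bbI_{A_3}S^2]\geq\E[\bbI_{A_2}S^2]$, and quantitatively $\E[S^2\mid A_3]\approx L_{p3}^2\approx4\log(1/p)-2\log\log(1/p)$ while $\E[S^2\mid A_2]\approx4L_{p2}^2\approx4\log(1/p)-4\log\log(1/p)$, so the second-order coefficients differ by $\asymp p\log\log(1/p)$ \emph{in favor of} $W_3$. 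The fourth-order remainder is $O(\delta^4 p\log^2(1/p))$, so for priors supported on $\delta^2\ll\log\log(1/p)/\log^2(1/p)$ --- which the theorem permits --- the quadratic term dominates and your expansion yields $\pi_3(\delta)>\pi_2(\delta)$, the reverse of $\operatorname{Power}_{W_2}\geq\operatorname{Power}_{W_3}$. No refinement of prefactors can change the sign of a gap that rearrangement already fixes at second order. (The $W_1$-versus-$W_3$ comparison is tighter still: $\E[S^2\mid A_1]\approx4L_{p1}+5$ and $L_{p3}^2+4$ agree up to an $O(1)$ constant, again with $A_3$ on top.) What your expansion does deliver cleanly is $\operatorname{Power}_{W_1}\geq\operatorname{Power}_{W_2}$, via the $2\log\log(1/p)$ gap; that part you could finish along the lines you describe.

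The reason the paper reaches the opposite sign for $j=3$ is that it never differentiates the true power function: it substitutes $F_3(z,t)=\left(1-\Phi(t/\sqrt2)\right)\left(\phi(t/\sqrt2+\sqrt{2/p}\,z)+\phi(t/\sqrt2-\sqrt{2/p}\,z)\right)$, asserted to be an upper bound for $\PP(Y_3>t\mid z,H_1)$, and observes that its second $z$-derivative is $\leq0$. But at $z=0$ this expression equals $2(1-\Phi(t/\sqrt2))\,\phi(t/\sqrt2)$, far below the null tail $\PP(Y_3>t\mid H_0)\approx2(1-\Phi(t/\sqrt2))$, so it is not an upper bound, and its curvature carries no information about the curvature of the true power, which your tilt identity computes exactly as $\tfrac{1}{p}(\E[\bbI_{A_3}S^2]-2p)>0$. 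So the tension you uncovered is real: your calculation conflicts with the $W_2$-versus-$W_3$ claim itself and with the paper's handling of $F_3$, not merely with your ability to complete the argument. You should present it as an objection to be resolved (e.g., by restricting the class of priors or reformulating the comparison), not as a gap to be closed by pushing the Laplace expansion further.
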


Here, we allow the bound $\delta_0$ to go to infinity as long as its order is of $o(\sqrt{\frac{1}{\pi_1}})$. This theorem offers a theoretical justification for the superiority of our data aggregation method over other common alternatives, a conclusion that aligns with our empirical findings in \cite{dai2023scale,du2021false}.

Intuitively, when the two-sided tails of mixture distribution are more unbalanced (left-skewed) and $\PP(Y>t)$ decreases slower, the threshold $L_\alpha$ tends to be smaller and thus the null and non-null distributions can be well-separated. In Figure~\ref{fig:c0}, we present a simulation of the density of mixture distributions and $\PP(Y>L_\alpha|H_1)$ given different data aggregation methods.

\begin{figure}
	\centering
	\begin{subfigure}{0.45\textwidth}
		\includegraphics[width=\textwidth]{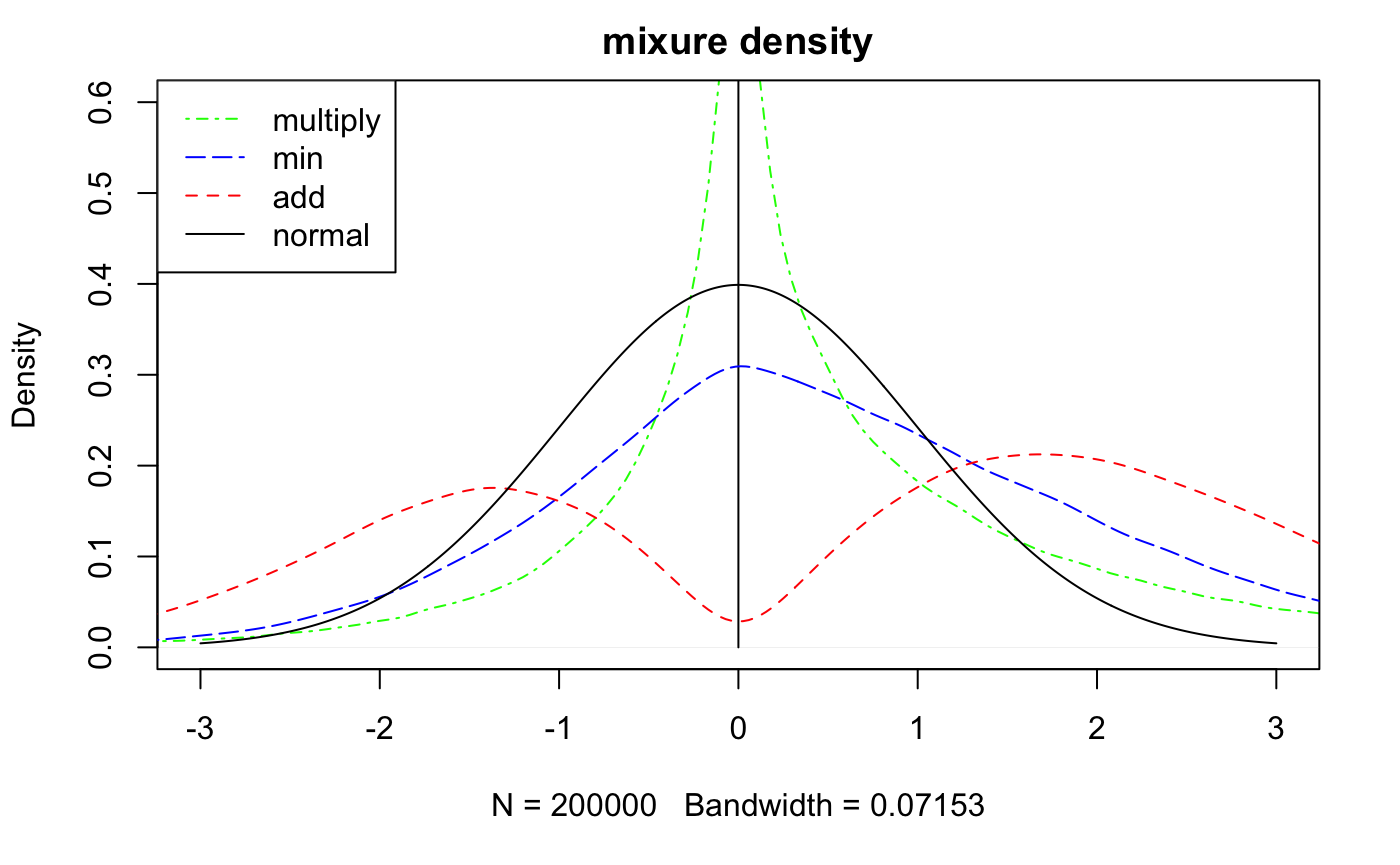}
		\caption{Density of mixture distribution}
		\label{fig:c1}
	\end{subfigure}
	\begin{subfigure}{0.45\textwidth}
		\includegraphics[width=\textwidth]{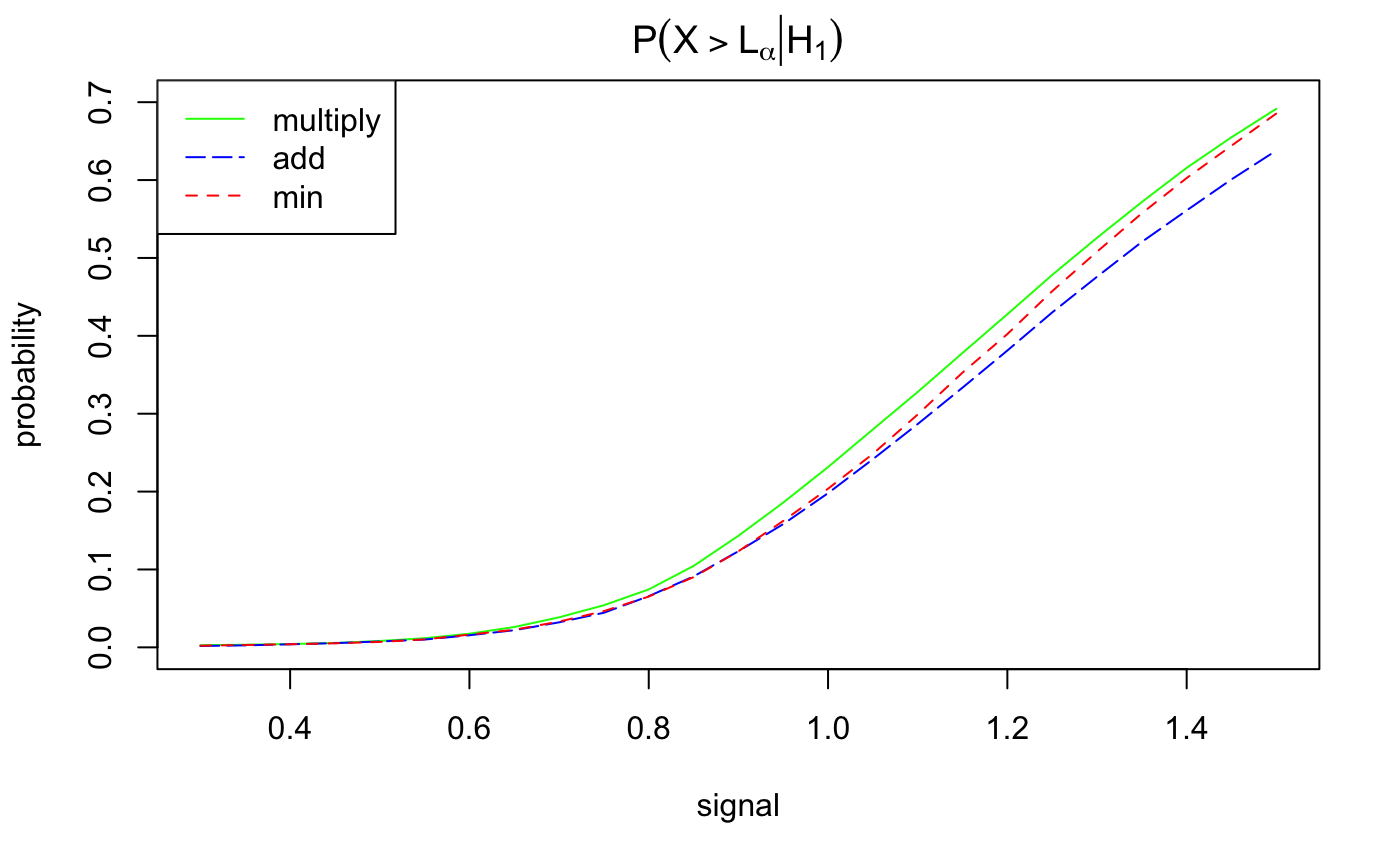}
		\caption{$\PP(Y>L_\alpha|H_1)$ }
		\label{fig:c2}
	\end{subfigure}
	\caption{Simulation of mixture distribution $Y$ under different constructions}
	\label{fig:c0}
\end{figure}
It is observed that $f_1$ generates a narrower mixture distribution with unbalanced tails starting to decrease slowly when $t$ is moderate, and the limiting power of $f_1$ is the highest among the three combinations.

\subsection{Additional Simulation Results}

\subsubsection{More Asymptotic Normality Comparison}
This section provides additional results comparing our methods with existing approaches \citep{xia2021statistical,chen2019inference} in terms of the asymptotic normality $\bar{F}_n(z)-\Phi(z)$, as discussed in Section \ref{sec:variance-lf}. We increase both $d$ and $r$ and consider $d=400$ and $d=800$, with $\lambda_{\min}=d$. The corresponding results are presented in Figure \ref{fig:variance-comparison-add}. As expected, our method exhibits superior asymptotic normality due to the use of the correct variance term. Nevertheless, under the incoherent subspace assumption, for sparse $T$ we have
\begin{equation*}
\abs{ (\|U^\top T\|_{\rm F}^2+\|TV\|_{\rm F}^2)^{1/2}-\norm{\cP_{M}(T)}_{\tF}} \lesssim \frac{1}{d}.
\end{equation*}
Thus, when $d$ is sufficiently large, the numerical difference between these variance terms becomes negligible, resulting in nearly identical empirical performance.
\begin{figure}[H]
\centering
\begin{subfigure}{0.32\textwidth}
    \includegraphics[width=\textwidth]{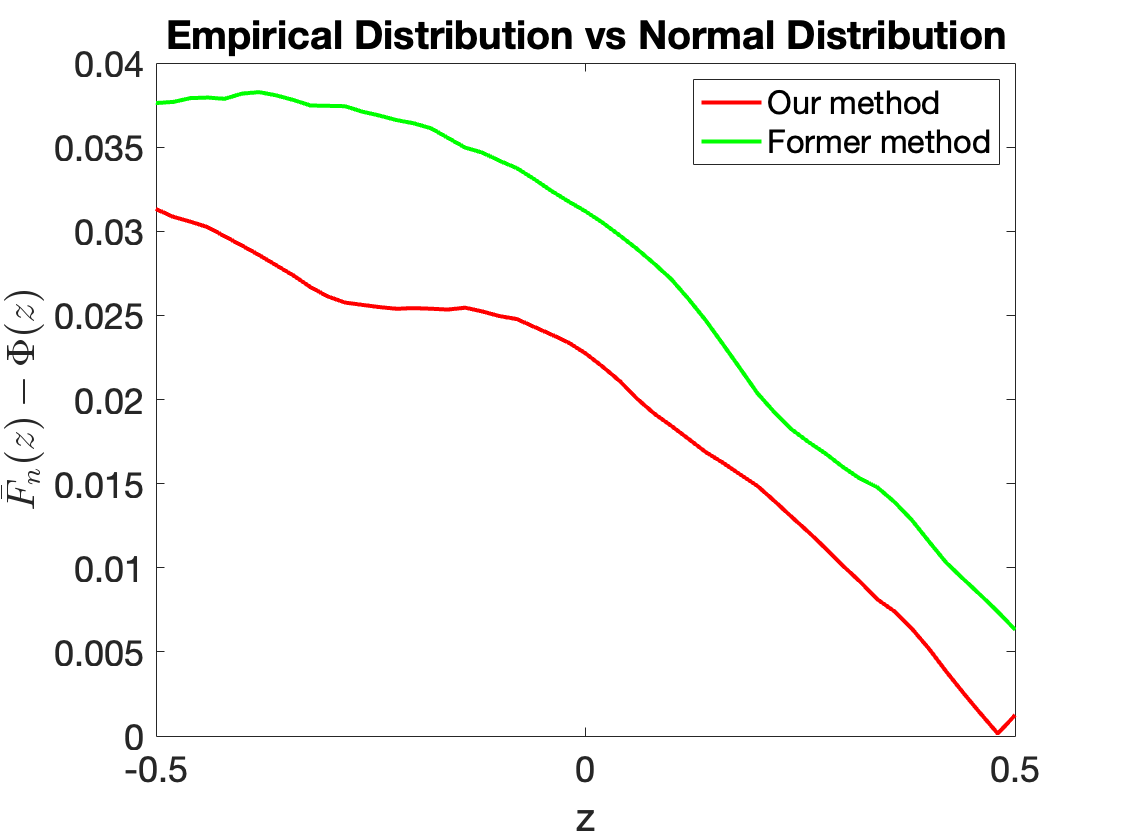}
    \caption{$d=400,r=5,n=3000$ }
    \label{fig:variance-comparison-add-1}
\end{subfigure}
 \begin{subfigure}{0.32\textwidth}
    \includegraphics[width=\textwidth]{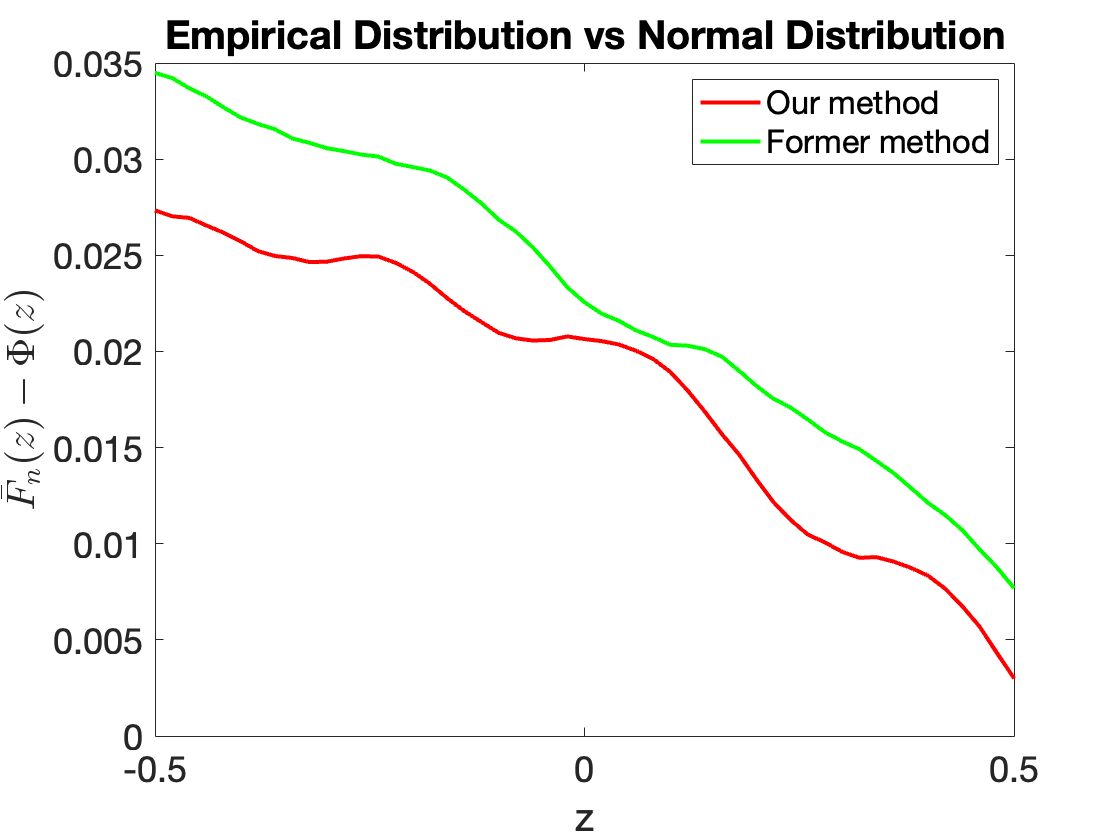}
    \caption{$d=400,r=5,n=4000$ }
    \label{fig:variance-comparison-add-2}
\end{subfigure}
     \begin{subfigure}{0.32\textwidth}
    \includegraphics[width=\textwidth]{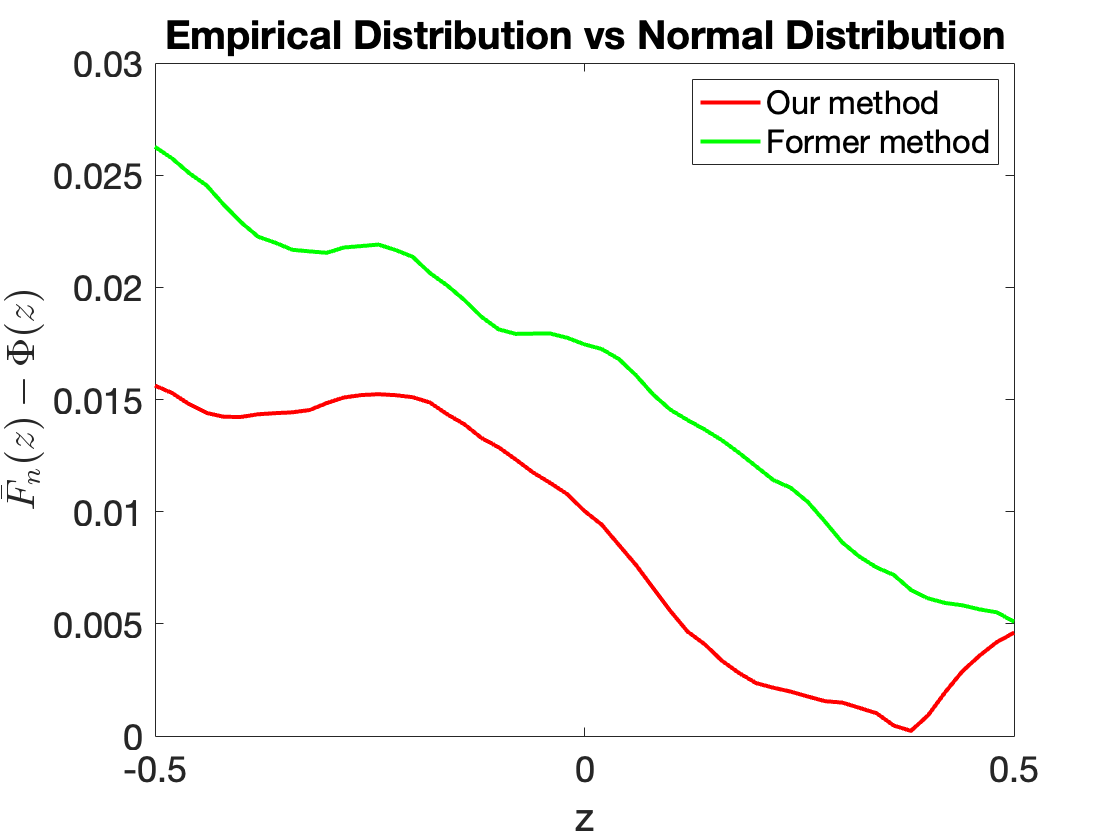}
    \caption{$d=400,r=5,n=6000$ }
    \label{fig:variance-comparison-add-3}
    \end{subfigure}
    \\
    \begin{subfigure}{0.32\textwidth}
    \includegraphics[width=\textwidth]{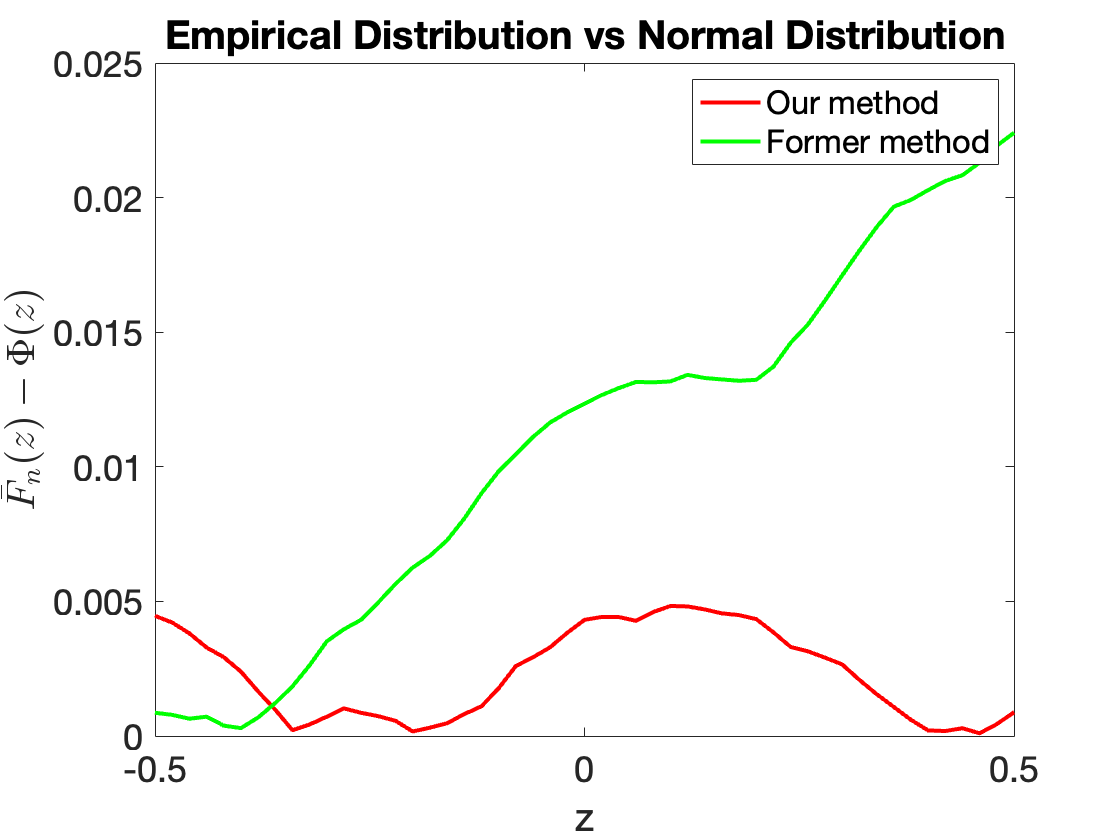}
    \caption{$d=800,r=5,n=6000$ }
    \label{fig:variance-comparison-add-4}
\end{subfigure}
 \begin{subfigure}{0.32\textwidth}
    \includegraphics[width=\textwidth]{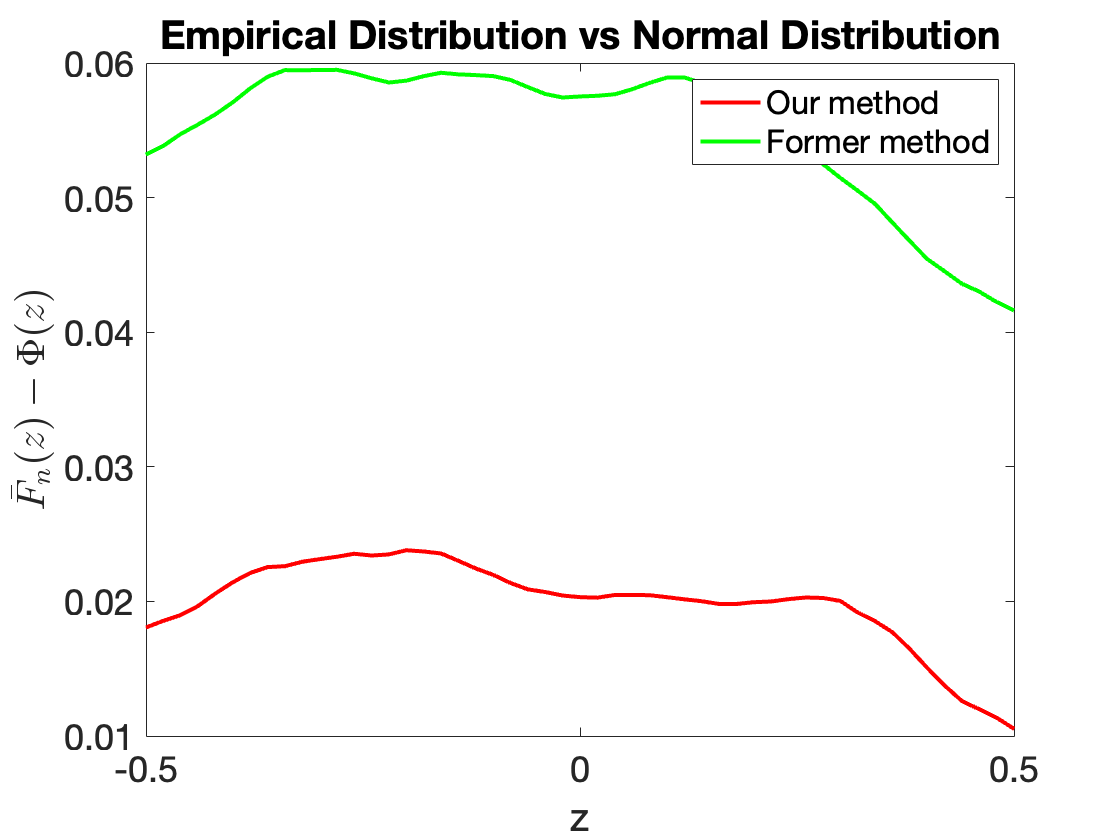}
    \caption{$d=800,r=5,n=8000$ }
    \label{fig:variance-comparison-add-5}
\end{subfigure}
     \begin{subfigure}{0.32\textwidth}
    \includegraphics[width=\textwidth]{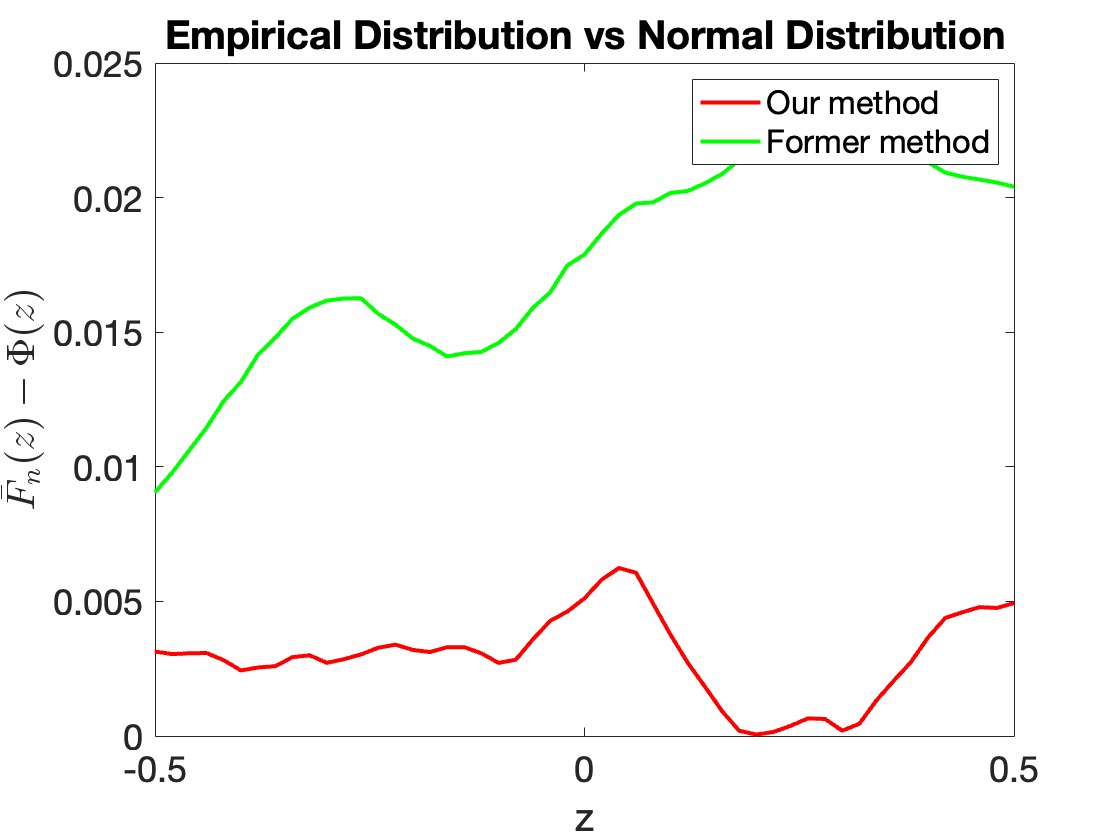}
    \caption{$d=800,r=5,n=12000$ }
    \label{fig:variance-comparison-add-6}
    \end{subfigure}

 \caption{The difference between empirical distribution functions and $\Phi(z)$ under $r=5$, and $d=400$, $d=800$.   }
 \label{fig:variance-comparison-add}
\end{figure}

\subsubsection{Robustness under Model Misspecification}
We evaluate the performance of our methods under model misspecification. To this end, we set the underlying true matrix $M$ to be only approximately low rank rather than having exact rank $r$. Specifically, we let
\begin{equation}
M = M^* + \Delta_{M},
\end{equation}
where $M^*\in\R^ {d_1\times d_2}$ is low rank with $r^*=\operatorname{rank}(M^*)$, and $\Delta_{M}$ is a full-rank random perturbation satisfying $\norm{\Delta_M}_2=d_1^{\gamma}$. Here $\gamma>0$ quantifies the perturbation level, indicating the degree to which the underlying model deviates from our low-rank assumption. In this setting, the true $M$ is full rank but remains well approximated by the low-rank structure $M^*$. We report the performance of our methods when only $r^*$ is provided, varying $\gamma$ to assess the robustness of our feature construction and FDP control procedures.

The data-generating mechanism follows Section \ref{sec:whitening}, with $d_1=d_2=1000$, $r^*=3$, $\lambda_{\min}=1000$, and $q=400$ hypotheses testing differences between the first row $M(1,1:400)$ and the second row $M(2,1:400)$ under $H_{0,T_i}: M_{1,i}-M_{2,i}=0$.

\begin{figure}[H]
\centering
    \includegraphics[width=\textwidth]{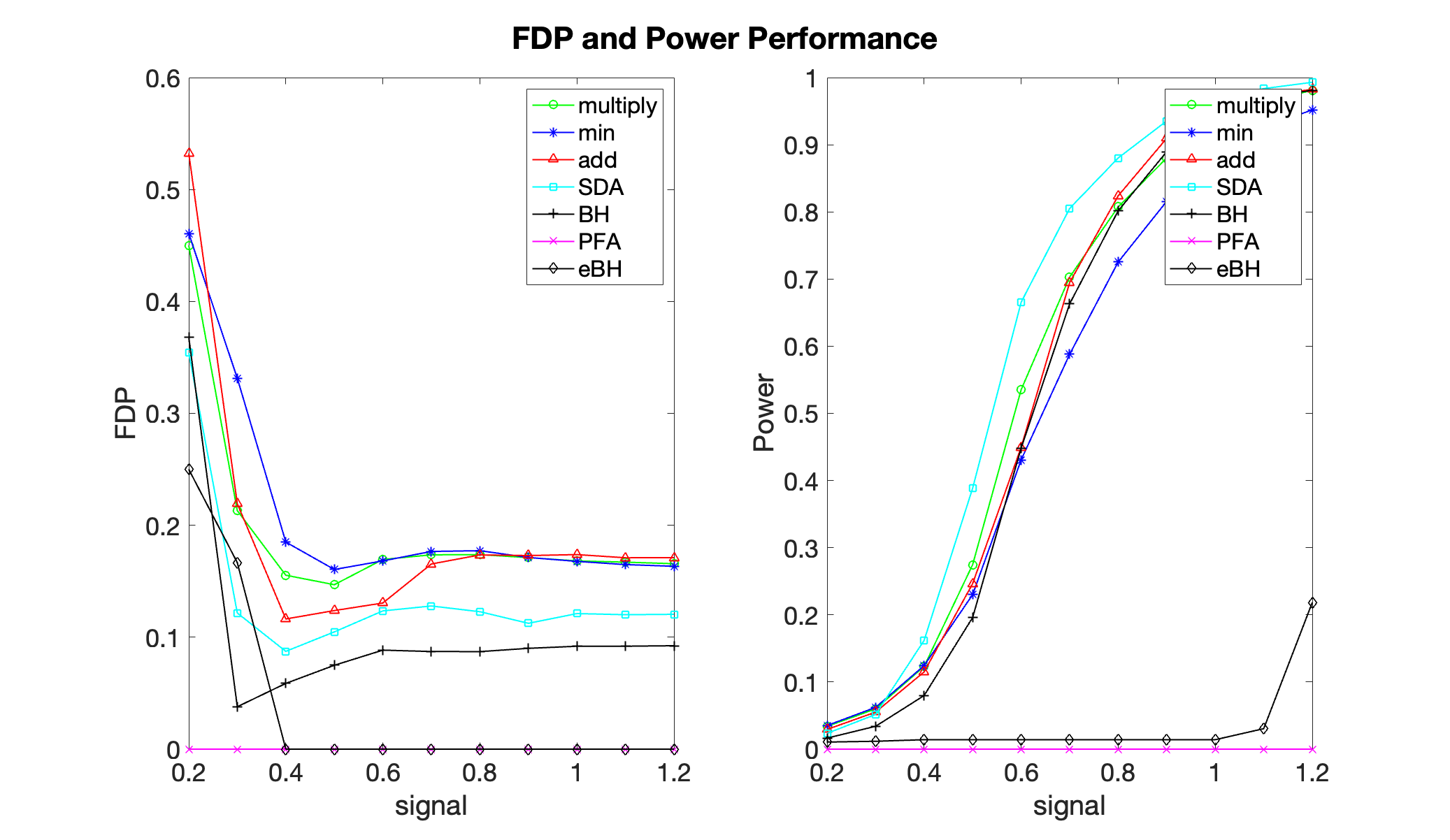}
    \caption{Performance under model misspecification with $\gamma = 0.05$ under $\alpha=0.1$}
    \label{fig:missp-gam=0.05}
\end{figure}

\begin{figure}[H]
\centering
    \includegraphics[width=\textwidth]{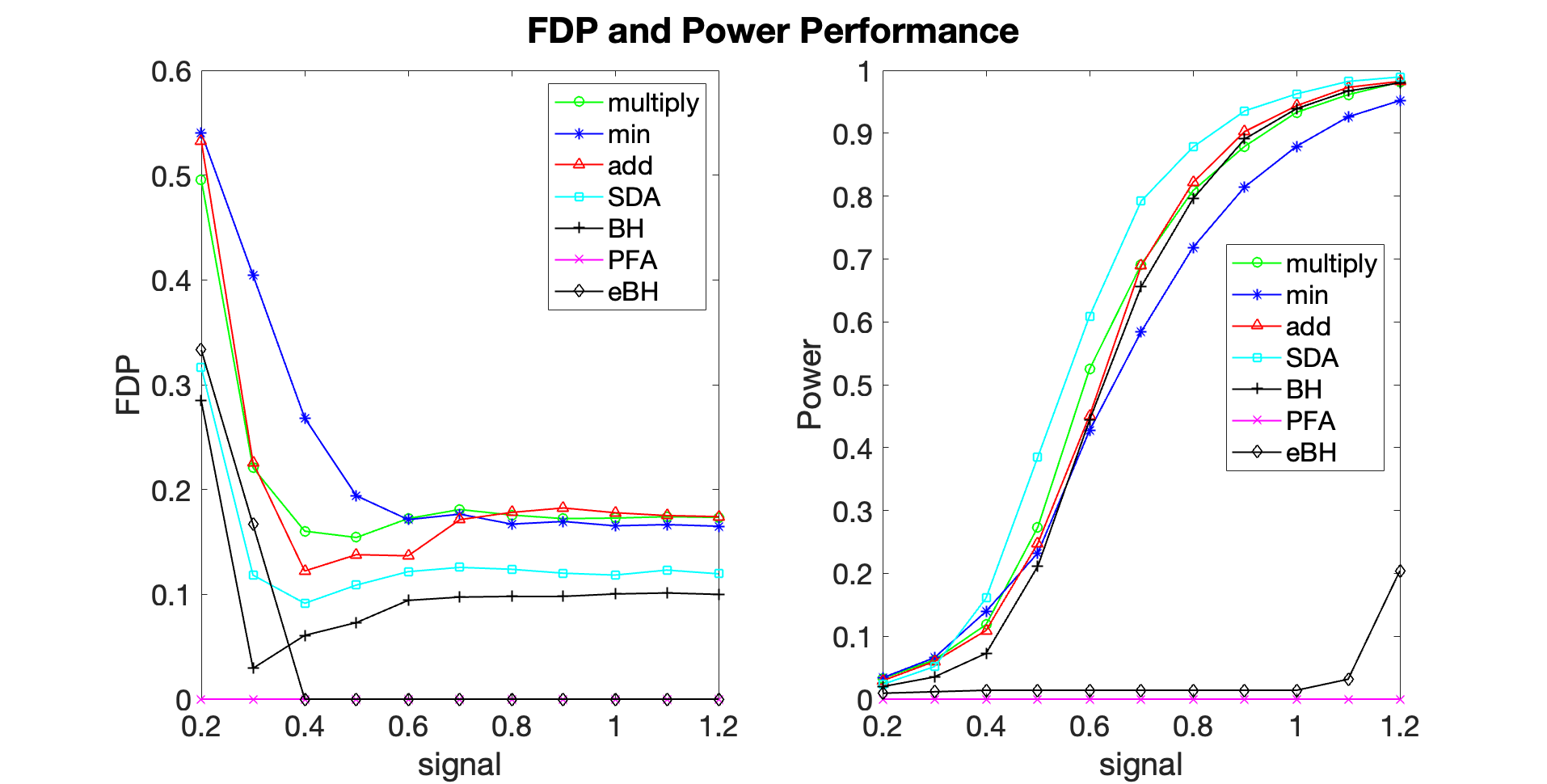}
    \caption{Performance under model misspecification with $\gamma = 0.1$ under $\alpha=0.1$}
    \label{fig:missp-gam=0.1}
\end{figure}

\begin{figure}[H]
\centering
    \includegraphics[width=\textwidth]{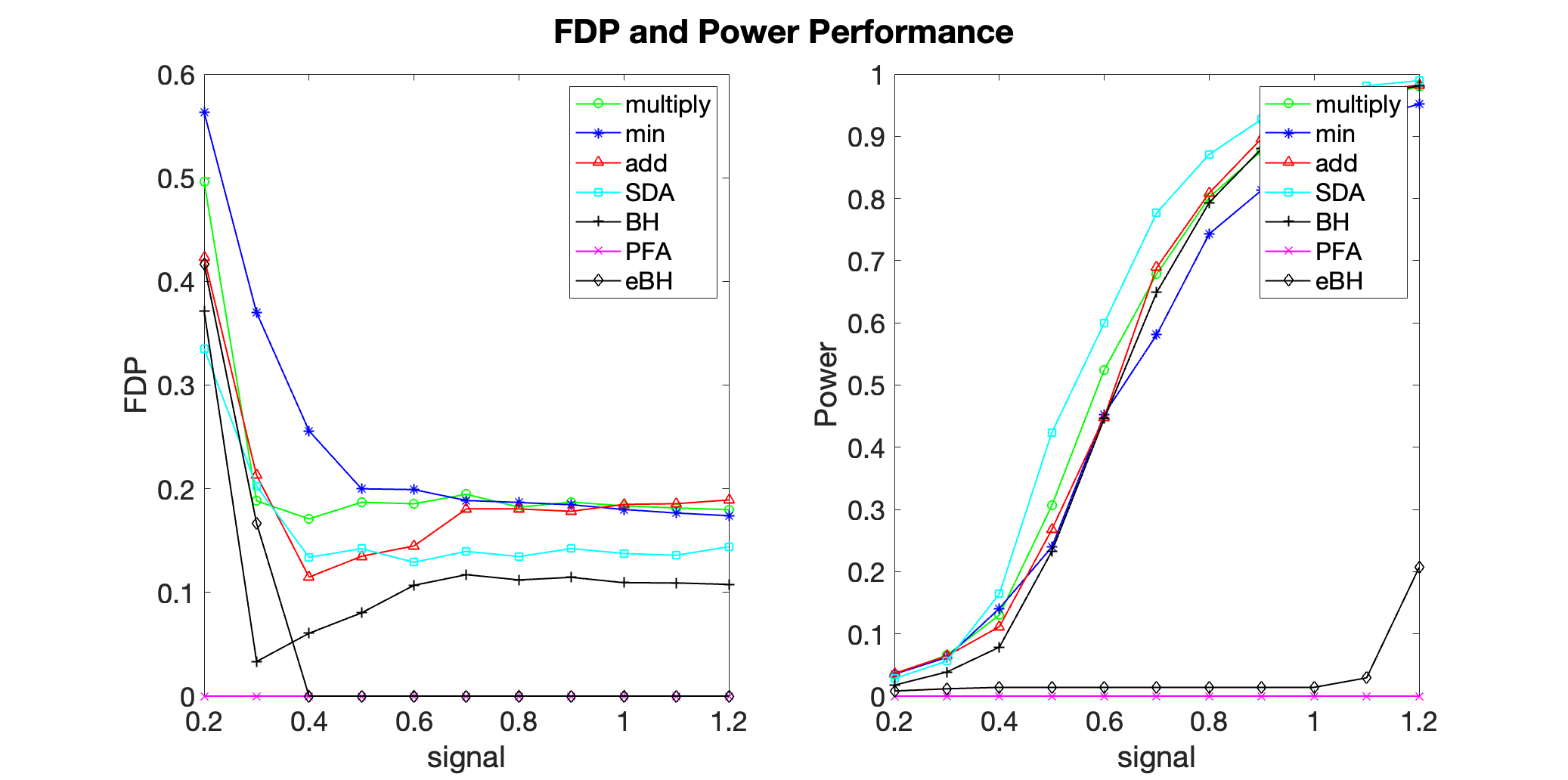}
    \caption{Performance under model misspecification with $\gamma = 0.15$ under $\alpha=0.1$ }
    \label{fig:missp-gam=0.15}
\end{figure}

\begin{figure}[H]
\centering
    \includegraphics[width=\textwidth]{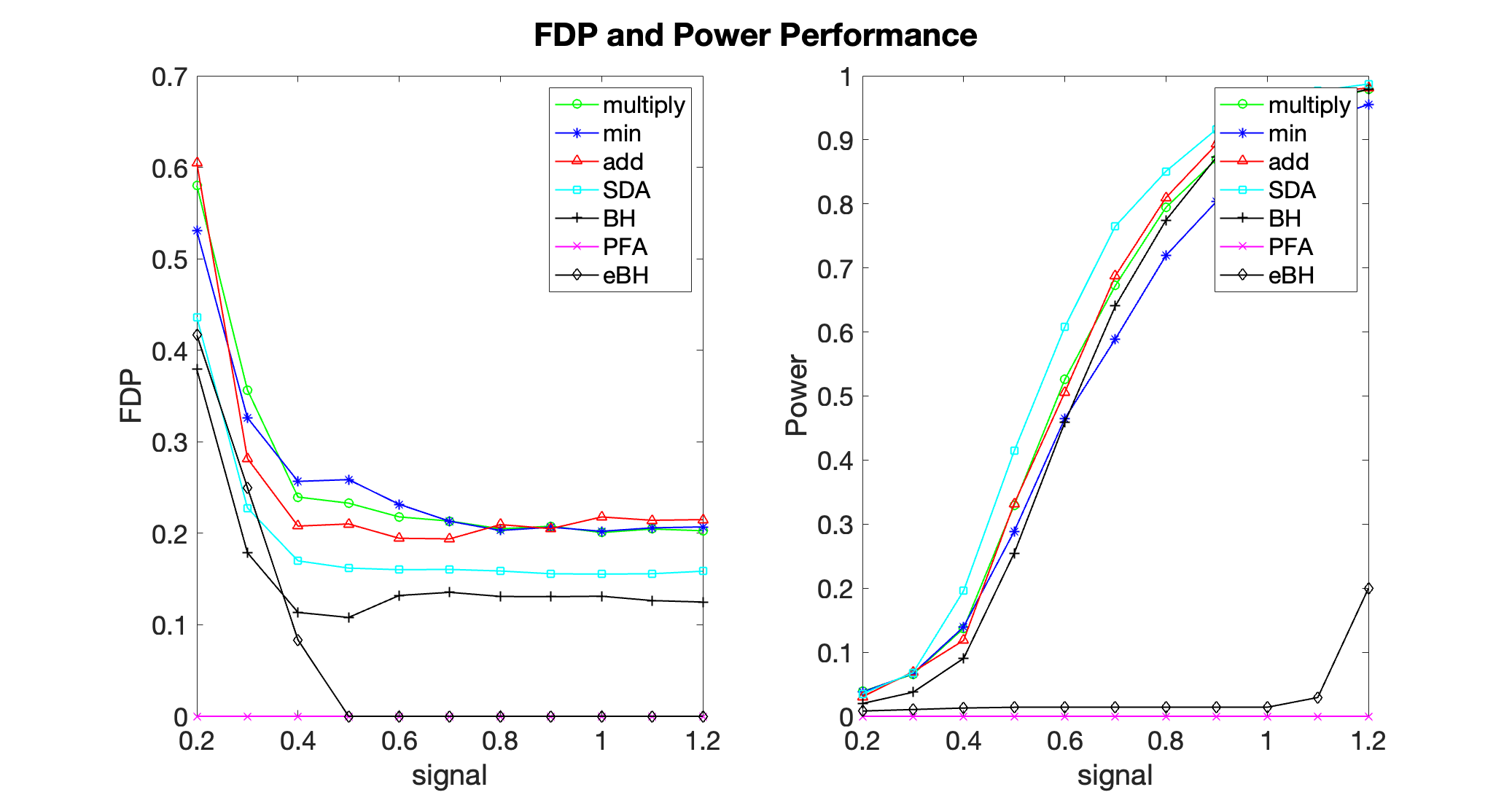}
    \caption{Performance under model misspecification with $\gamma = 0.2$ under $\alpha=0.1$ }
    \label{fig:missp-gam=0.2}
\end{figure}

It can be observed that our SDA method remains robust to model misspecification under small to moderate perturbation levels, specifically when $\gamma \le 0.15$; see Figure \ref{fig:missp-gam=0.05}, \ref{fig:missp-gam=0.1}, and \ref{fig:missp-gam=0.15}. However, when $\gamma$ becomes relatively large, e.g., $\gamma=0.2$, both our methods and the BH procedure fail to control the FDP at $0.1$ due to the substantial perturbation, as shown in Figure \ref{fig:missp-gam=0.2}. In this regime, the factor-adjusted procedure and eBH still control the FDP but are overly conservative, yielding almost no discoveries.

\subsubsection{Performance under Very Weak Dependence}
The previous simulations focus on settings where certain dependence structures are present among the test statistics, allowing our SDA procedure to leverage such dependencies to improve performance. In this section, we present results for scenarios in which the dependence is very weak. In this case, data aggregation does not necessarily outperform BH or other methods in terms of power—a reasonable phenomenon also observed in prior work \cite{du2021false}.

To generate testing cases with weak dependence, we consider $q=400$ diagonal elements $M(i,i)$ for $i=1,\ldots,400$. As discussed in Section \ref{sec:dependence}, these entries lie in different rows and columns, leading to extremely weak correlations among the corresponding test statistics. In our simulations, we generate data following the same setting as in Sections \ref{sec:weak-simu} and \ref{sec:whitening}. For this task, we obtain $\varrho^*(0.2) = 0.0025$, where $\varrho^*(z)$ is defined in \eqref{eq:rho-emp}. This confirms that the vast majority of test statistics exhibit negligible correlation, consistent with the theoretical insights presented in Section \ref{sec:dependence}.

\begin{figure}[H]
    \centering
    \includegraphics[width=0.9\linewidth]{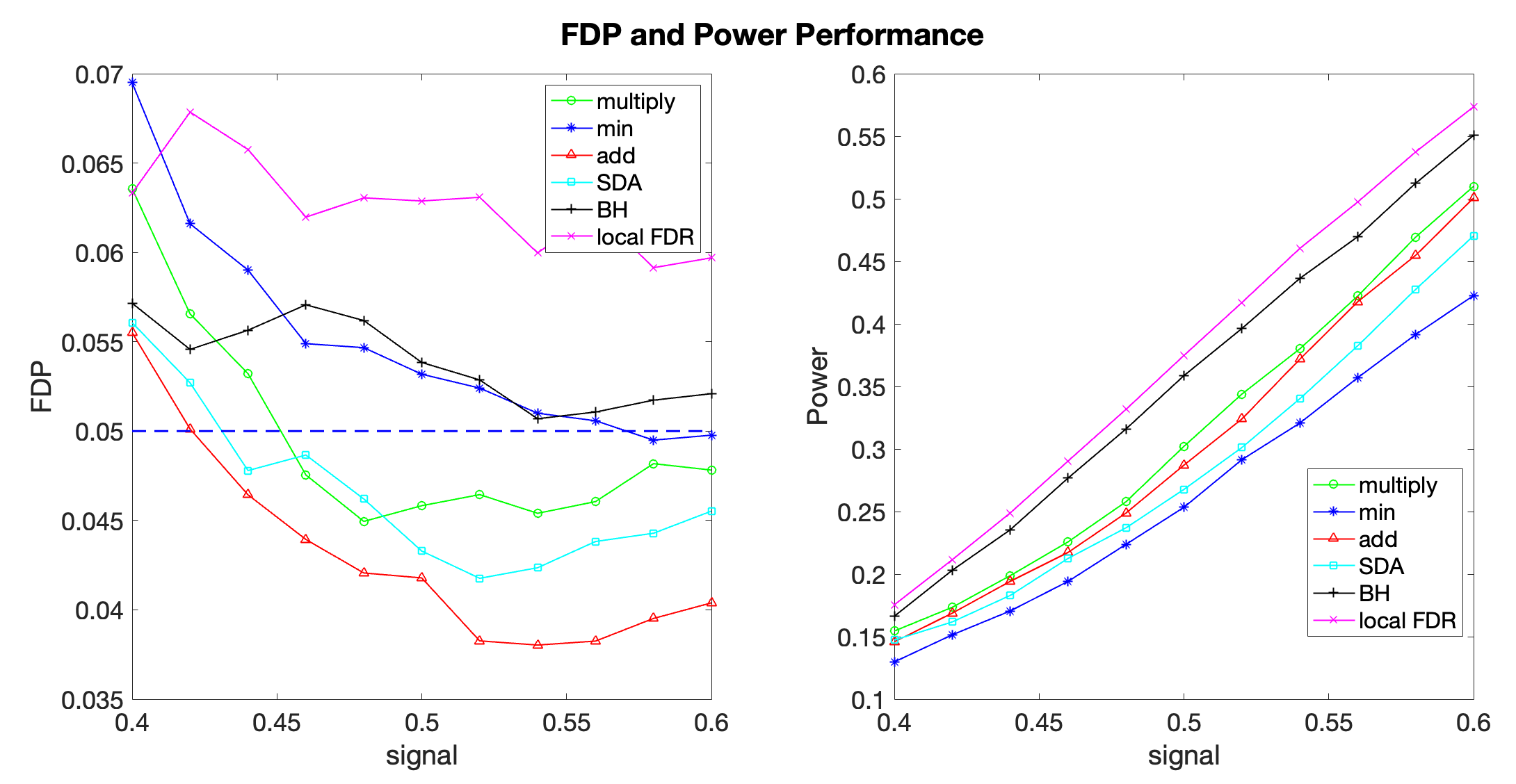}
    \caption{Comparison between our aggretation methods with  local FDR thresholding under very weak dependence, with FDR control level $\alpha=0.05$}
    \label{fig:weak-SDA}
\end{figure}

\subsubsection{Performance under Heterogeneous Noise}
To examine our test statistics under heterogeneous noise introduced in Proposition \ref{prop:hetero}, we conduct some additional numerical simulations. We generate observations where random noise can depend on the position of an observation. Specifically, we let
\begin{equation}\label{eq:hetero-int}
    Y_i \sim \lfloor \langle M,X_i\rangle \rfloor + \operatorname{Ber}\left(\text{decimal}( \langle M,X_i\rangle) \right),
\end{equation}
where $\text{decimal}(x):= x- \lfloor x \rfloor$ represents the decimal part of a real number. In the case \eqref{eq:hetero-int}, we still have $\E[Y_i\mid X_i] =\langle M,X_i\rangle $  , but the noise depends on $X_i$ and $M$ such that $Y_i$ can only take values between $\lfloor \langle M,X_i\rangle \rfloor $ and $\lfloor \langle M,X_i\rangle \rfloor  +1 $. For matrix generating and hypotheses selection, we still adopt cases in Section \ref{sec:whitening}. We report results in Figure \ref{fig:moderate-SDA-hetero}.

\begin{figure}[H]
    \centering
    \includegraphics[width=0.9\linewidth]{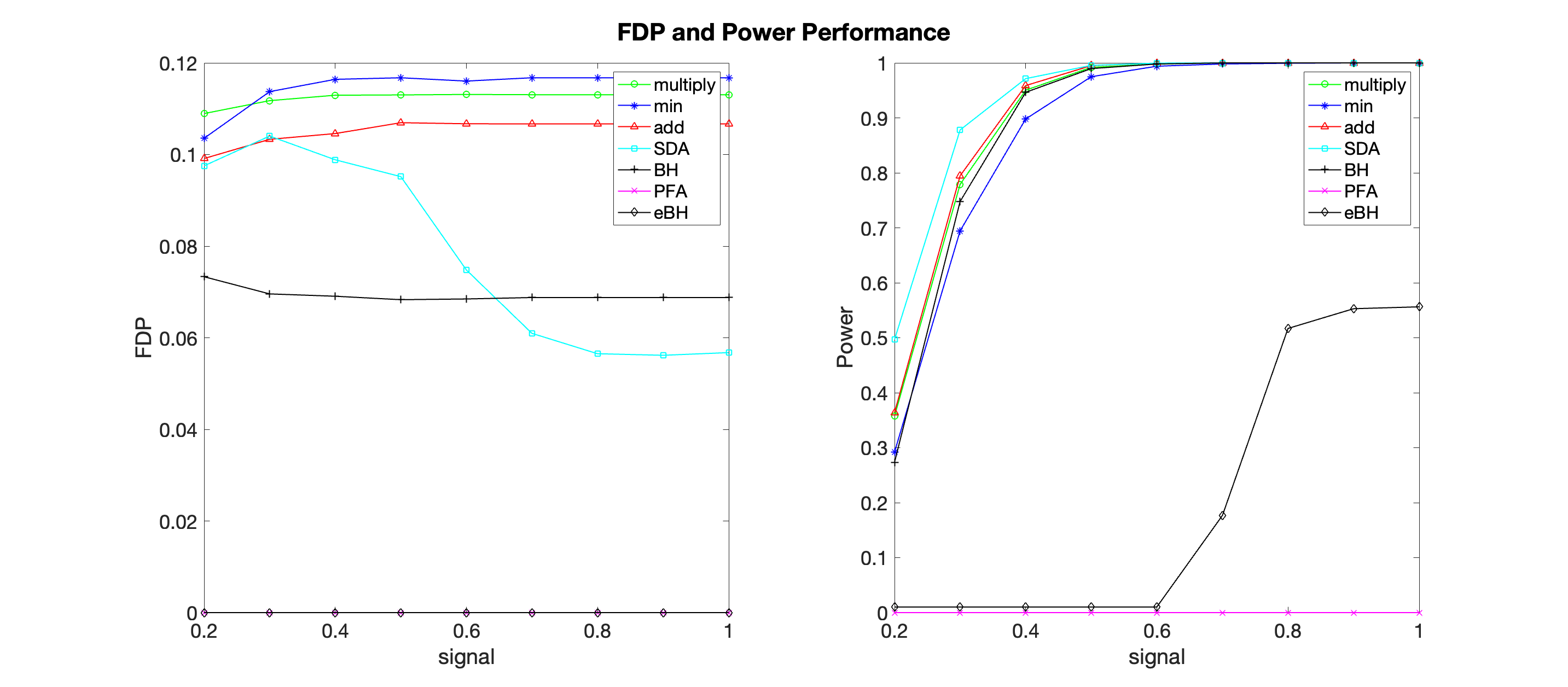}
    \caption{FDP and power performance of methods under moderate dependence with a level $\alpha=0.1$.}
    \label{fig:moderate-SDA-hetero}
\end{figure}

\section{Proofs of Main Results}

\subsection{Proof of Theorems \ref{thm:asymp-normal} and \ref{thm:asymp-normal-varest}}\label{sec:proof-clt}
We first verify the initialization condition \eqref{eq:init-est}. Let $\widehat{M}^{\init}$ be the output of gradient descent till convergence from \cite{chen2020noisy}. According to the leave-one-out analysis in \cite{chen2020noisy} (eq 93, and Lemma 10-15), we have, 
\begin{equation}\label{eq:lol-init-1}
  \begin{aligned}
        \norm{\widehat{M}^{\init}- M}_{\max} & \le \norm{V\Lambda^{\frac{1}{2}}}_{2,\max}\norm{\widehat{U}^{\init}(\widehat{\Lambda}^{\init})^{\frac{1}{2}}\widetilde{O}  - U\Lambda^{\frac{1}{2}} }_{2,\max} + \norm{U\Lambda^{\frac{1}{2}}}_{2,\max}\norm{\widehat{V}^{\init}(\widehat{\Lambda}^{\init})^{\frac{1}{2}}\widetilde{O}  - V\Lambda^{\frac{1}{2}} }_{2,\max} 
        \\
        &\quad + \norm{\widehat{U}^{\init}(\widehat{\Lambda}^{\init})^{\frac{1}{2}}\widetilde{O}  - U\Lambda^{\frac{1}{2}} }_{2,\max} \norm{\widehat{V}^{\init}(\widehat{\Lambda}^{\init})^{\frac{1}{2}}\widetilde{O}  - V\Lambda^{\frac{1}{2}} }_{2,\max} 
        \\
        & \le C \frac{\mu r}{\sqrt{d_1 d_2}}\lambda_{\max}\cdot \frac{\kappa_0\sigma_{\xi}}{\lambda_{\min }} \sqrt{\frac{d_1^2 d_2 \log d_1}{n }} 
        \le C \mu r\kappa_0^2 \sqrt{\frac{d_1\log d_1}{n }} 
  \end{aligned}
\end{equation}
where $\widetilde{O}$ is a rotation matrix with $\widetilde{O}\in\R^{r\times r}$ and $\widetilde{O}^\top \widetilde{O}=I_{r}$. Moreover, for the leave-one-out perturbation series $\widehat{M}^{\init}_{(l)}$ which is constructed by knocking off observations from the $l$-th row, we also have
\begin{equation}\label{eq:lol-init-2}
     \norm{\widehat{M}^{\init}_{(l)}- M}_{\max} \le  \norm{\widehat{M}^{\init}- \widehat{M}^{\init}_{(l)}}_{\max} + \norm{\widehat{M}^{\init}- M}_{\max} \le C \mu r\kappa_0^2 \sqrt{\frac{d_1\log d_1}{n }}.
\end{equation}
Specifically, \cite{chen2020noisy} (eq 93c) suggests the leave-one-out error on the singular subspace by Wedin's $\sin\Theta$ Theorem:
\begin{equation}\label{eq:lol-init-3}
    \norm{\widehat{U}^{\init}_{(l)} \widehat{U}^{\init\top}_{(l)}-\widehat{U}^{\init} \widehat{U}^{\init\top}}_{\tF}\le \frac{\kappa_0^{\frac{1}{2}}\sigma_{\xi}}{\lambda_{\min }} \sqrt{\frac{d_1^2 d_2 \log d_1}{n }} \sqrt{\frac{\mu r}{d_1}}.
\end{equation}
A corresponding bound for $\widehat{V}^{\init}_{(l)}$ can also be established by studying the GD iterations on the first $d_1$ rows and last $d_2$ rows of $F=\left[\begin{array}{c}
         U\Lambda^{\frac{1}{2}}  \\
          V \Lambda^{\frac{1}{2}}
    \end{array}\right]$ respectively. 
    
Combining \eqref{eq:lol-init-1}, \eqref{eq:lol-init-2}, \eqref{eq:lol-init-3}, we know that the assumptions of Theorem 4 in \cite{ma2024statistical} are satisfied, with the initialization error constant at most $C_{\init}\lesssim \mu r\kappa_0^2$.  We now apply the decomposition of SVD operators \citep{xia2021statistical,ma2024statistical} and show CLT with Berry-Esseen theorem: by (47) in \cite{ma2024statistical}, we have
\begin{equation}\label{eq:svd-decomp}
    \langle \widehat{M},T\rangle -  \langle M,T\rangle = \underbrace{\sum_{i=1}^n \frac{d_1 d_2}{n}\xi_i\langle \cP_M( X_i),T\rangle}_{\text{core variance part}} + \underbrace{E_T^{\higher}}_{\text{higher-order part}},
\end{equation}
where the first part is the sum of i.i.d. terms that will lead to asymptotic normality, and the second part is a higher-order term that will vanish. Invoking Berry-Esseen theorem \citep{raivc2019multivariate}, we yield the asymptotic normality for the core variance part:
\begin{equation}\label{eq:BE-core}
    \sup_{t\in\R}\abs{\PP\left( \frac{\sum_{i=1}^n \frac{d_1 d_2}{n}\xi_i\langle \cP_M( X_i),T\rangle}{ \sigma \norm{\cP_M(T)}_{\tF} \sqrt{d_1 d_2/n}}\le t  \right) - \Phi(t)}\le C\sqrt{\frac{\mu r d}{n}}.
\end{equation}
For the higher-order term $E_T^{\higher}$, together with the variance estimation error, they can all be controlled at the level $h_n\times \sigma \norm{\cP_M(T)}_{\tF} \sqrt{d_1 d_2/n}$,  as is shown in Theorem 4, \cite{ma2024statistical}.
Therefore, we have Theorems \ref{thm:asymp-normal} and \ref{thm:asymp-normal-varest}.

\subsection{Proof of Theorem \ref{thm:asymp-two-var}}
\begin{proof}
We write the debiasing estimator as 
\begin{equation*}
    \widehat{M}^{\mathsf {unbs }} = M+ \underbrace{\frac{d_1 d_2}{n} \sum_{i=1 }^n \xi_i X_i}_{\widehat{Z}_1: \text{ i.i.d. noises}} +\underbrace{\left(\frac{d_1 d_2}{n} \sum_{i=1 }^n\left\langle\widehat{M}^{\mathsf{init}} - M, X_i\right\rangle X_i-(\widehat{M}^{\mathsf{init}} -M)\right)}_{\widehat{Z}_2:\text{ initialization error} }.
\end{equation*}
By , we have the inequalities uniformly hold for all $T$, we can write the couple $(W_{T_1},W_{T_2})$ as 
$$
(W_{T_1},W_{T_2}) = \left(\frac{\left\langle  \widehat{Z}_1 , \cP_M(T_1) \right\rangle }{\sigma_{\xi} \norm{\cP_M(T_1) }_\tF  \sqrt{d_1 d_2/n}}+ \Delta_{T_1}, \frac{\left\langle  \widehat{Z}_1 , \cP_M(T_2) \right\rangle }{\sigma_{\xi} \norm{\cP_M(T_2) }_\tF  \sqrt{d_1 d_2/n}}+ \Delta_{T_2} \right),
$$
where with probability at least $1-C\log d_1 d_1^{-\tau}  $,
\begin{equation}\label{eq:2-d-remainder}
\begin{aligned}
     \max\left\{\abs{\Delta_{T_1}}, \abs{\Delta_{T_2}}\right\} \le & C_4 h_n
\end{aligned}
\end{equation}
by the same argument in the proof of Theorem 4 in \cite{ma2024statistical}. Notice that $\langle\widehat{Z}_1,\cP_M(T_i) \rangle$, $i=1,2$ are the sum of i.i.d. random variables, with covariance matrix:

\begin{equation*}
\begin{aligned}
         \E \langle\widehat{Z}_1,\cP_M(T_1) \rangle \langle\widehat{Z}_1,\cP_M(T_2) \rangle &= \frac{ d_1^2 d_2^2}{n^2} \sum_{i\in I_2} \xi_i^2 \left\langle  X_i, \cP_M(T_1) \right\rangle\left\langle X_i, \cP_M(T_2) \right\rangle \\
         & = \frac{ d_1 d_2}{n} \sigma_{\xi}^2 \sum_{i\in [d_1]}\sum_{j\in [d_2]} e_i^\top \cP_M(T_1)e_j e_i^\top \cP_M(T_2)e_j \\
         & = \frac{ d_1 d_2}{n} \sigma_{\xi}^2\left\langle  \cP_M(T_1), \cP_M(T_2) \right\rangle.
\end{aligned}
\end{equation*}
Then we have 
\begin{equation*}
    \operatorname{cov}\left(\frac{\left\langle  \widehat{Z}_1 , \cP_M(T_1) \right\rangle }{\sigma_{\xi} \norm{\cP_M(T_1) }_\tF  \sqrt{d_1 d_2/n}}, \frac{\left\langle  \widehat{Z}_1 , \cP_M(T_2) \right\rangle }{\sigma_{\xi} \norm{\cP_M(T_2) }_\tF  \sqrt{d_1 d_2/n}} \right) =:\rho_{T_1, T_2}.
\end{equation*}
We jointly control the c.d.f. of them by multivariate Berry-Essen theorem \citep{stein1972bound,raivc2019multivariate}

\begin{equation}\label{eq:2-d-be}
    \begin{aligned}
       & \sup_{t_1,t_2\in \R} \abs{ \PP \left( \frac{\left\langle  \widehat{Z}_1 , \cP_M(T_1) \right\rangle }{\sigma_{\xi} \norm{\cP_M(T_1) }_\tF  \sqrt{d_1 d_2/n}}\le t_1, \frac{\left\langle  \widehat{Z}_1 , \cP_M(T_2) \right\rangle }{\sigma_{\xi} \norm{\cP_M(T_2) }_\tF  \sqrt{d_1 d_2/n}} 
        t_2  \right) - \Phi_{\rho_{T_1,T_2}}(t_1,t_2)}
        \\
       &\le C \mu \left(\frac{1}{1-\rho}\right)^{\frac{3}{2}} \sqrt{\frac{r d_1}{n}}.
    \end{aligned}
\end{equation}
The gradient bound  $\norm{\nabla \Phi_{\rho} (t_1,t_2)}\le C$ indicates the Lipschitz property of $\Phi_{\rho} (t_1,t_2)$, which suggests the desired probability bound following \eqref{eq:2-d-remainder}, \eqref{eq:2-d-be}.
\end{proof}

\subsection{Proof of Theorem \ref{thm:weak-cor-fdr}}
We remark that our proof of Theorem \ref{thm:weak-cor-fdr} will give a non-asymptotic bound on the FDR control (Section \ref{sec:non-saymp-results}) with novel techniques, which is totally different from the previous asymptotic analysis in \cite{du2021false,dai2022false}. We proceed to prove the FDR control in the sequel by three steps: we first show that $\bbI(W_{T}^{(i)}>t) $ follows weak dependency and asymptotic symmetricity for $T\in\cH_0$ when $t$ is in a certain region $[0,L_n]$; then we show that with high probability, the data-driven threshold is in the region $[0,L_n]$; finally we control the power when strong signals dominate signals in the non-null set. Since $n\le O(d_1^{2\tau} )$ in general, we treat $h_n=\Omega({\sqrt{\log d_1}/{d_1^{\tau/2} } }\vee (rd_1/n)^{1/4} ) $ in the proof for simplicity. Here $h_n$ can be chosen smaller as long as $n$ is large. Moreover, to make the FDR control nontrivial, we assume that $q_0\ge c q$ for some $0<c<1$, and we consider the case when $q$ is large as $n\to \infty$.  We first start with the asymptotic property of ${W}^{(i)}_{T}$.
\subsubsection{Weak dependence and symmetricity}
From  Theorem \ref{thm:asymp-normal}, \ref{thm:asymp-normal-varest}, and definition of $h_n$, we have the following claim of the asymptotic normality of ${W}^{(1)}_{T}$:
\begin{Proposition}\label{prop:asy-w1}
There exits a constant $C_2$ such that $W^{(1)}_T$ follows the asymptotic normality rate:
\begin{equation}\label{eq:asy-w1}
    \begin{aligned}
    \sup_{t\in\R}\abs{\PP(W^{(1)}_T>t )-\Phi(-t)}\le C_2 h_n,
    \end{aligned}
\end{equation}
for any $T\in \cH_0$.
\end{Proposition}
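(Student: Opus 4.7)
The plan is to obtain Proposition~\ref{prop:asy-w1} as a direct corollary of Theorem~\ref{thm:asymp-normal-varest}. The statistic $W_T^{(1)}$ is constructed exactly as $W_T$ in Section~\ref{sec:clt}, with the initial estimate $\widehat M_{\mathsf{init}}$ produced from $\cD_0$ and the debiasing plus rank-$r$ projection performed on $\cD_1$; the event $E_0$ asserts precisely that this initial estimate satisfies the max-norm bound \eqref{eq:init-est} required to invoke Theorem~\ref{thm:asymp-normal-varest}. Because $E_0\in\sigma(\cD_0)$ and $\cD_1$ is independent of $\cD_0$, conditioning on $E_0$ freezes the initialization at a deterministic realization obeying \eqref{eq:init-est}, so the proof of Theorem~\ref{thm:asymp-normal-varest} applies verbatim with the initialization failure probability $d_1^{-\tau}$ suppressed.

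First I would note that for $T\in\cH_0$ we have $\langle M,T\rangle=\theta_T$, so the numerator of $W_T^{(1)}$ is exactly $\langle\widehat M^{(1)},T\rangle-\langle M,T\rangle$, and Theorem~\ref{thm:asymp-normal-varest} gives
\begin{equation*}
\sup_{t\in\R}\bigl|\PP(W_T^{(1)}\le t\mid E_0)-\Phi(t)\bigr|
\le C_3\!\left(\frac{\kappa_0\mu^2\|T\|_{\ell_1}}{\|\cP_M(T)\|_{\rm F}}\cdot\frac{\sigma_\xi}{\lambda_{\min}}\sqrt{\frac{r^2 d_1^2\log^2 d_1}{n}}+\mu\kappa_0\sqrt{\frac{r^2 d_1\log^3 d_1}{n}}\right).
\end{equation*}

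Second I would invoke the alignment assumption \eqref{eq:alignment}, which gives $\|\cP_M(T)\|_{\rm F}\ge\beta_0\|T\|_{\rm F}\sqrt{r/d_1}$, to upper bound the first term on the right by
\begin{equation*}
\frac{\kappa_0\mu^2}{\beta_0}\cdot\frac{\|T\|_{\ell_1}}{\|T\|_{\rm F}}\cdot\frac{\sigma_\xi}{\lambda_{\min}}\sqrt{\frac{r d_1^3\log^2 d_1}{n}}.
\end{equation*}
Taking the supremum of $\|T\|_{\ell_1}/\|T\|_{\rm F}$ over $T\in\cH$ yields exactly the first summand in the definition \eqref{eq:fixed-asymp} of $h_n$; the second term already matches the second summand. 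Hence the right-hand side is at most $C_2 h_n$ for an absolute $C_2$.

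Finally I would convert the one-sided statement using $\PP(W_T^{(1)}>t)=1-\PP(W_T^{(1)}\le t)$ together with $1-\Phi(t)=\Phi(-t)$; taking the supremum over $t\in\R$ preserves the bound and produces \eqref{eq:asy-w1}. There is essentially no genuine obstacle, since the heavy lifting — the leave-one-out spectral perturbation analysis, the Berry--Esseen step, and the variance estimation error \eqref{eq:sT-diff} — has already been carried out in the proofs of Theorems~\ref{thm:asymp-normal}--\ref{thm:asymp-normal-varest}; the only bookkeeping item to verify is that the sample size assumption $n\ge C_1\mu^2 r d_1\log d_1$ and the signal-strength assumption on $\lambda_{\min}$ required by Theorem~\ref{thm:asymp-normal-varest} are implied by the standing hypothesis of Theorem~\ref{thm:weak-cor-fdr}, which they are.
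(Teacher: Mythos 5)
Your proposal is correct and follows essentially the same route as the paper: the authors also obtain Proposition~\ref{prop:asy-w1} directly from the proof of Theorem~\ref{thm:asymp-normal} (and Theorem~\ref{thm:asymp-normal-varest}) together with the alignment condition \eqref{eq:alignment} and the definition \eqref{eq:fixed-asymp} of $h_n$, conditioning on $E_0$ throughout. Your bookkeeping — converting $\|T\|_{\ell_1}/\|\cP_M(T)\|_{\rm F}$ into $\|T\|_{\ell_1}/(\beta_0\|T\|_{\rm F})\sqrt{d_1/r}$ and matching the two summands of $h_n$ — is exactly the intended argument.
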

This proposition implies the asymptotic symmetricity of ${W}^{(1)}_{T}$, ${W}^{(2)}_{T}$, which is crucial for the following analysis. Since ${W}^{(1)}_{T}$, ${W}^{(2)}_{T}$ are asymptotically normal, the c.d.f. of their product $W_T^\mathsf{rank}$ will converge to an asymptotically conditional symmetric random variable. We will show that, conditional on the splits $\cD_1$, $W_T^\mathsf{rank}$ is asymptotically symmetric for $T\in\cH_0$. Define 
\begin{equation*}
    G(t)= \frac{\sum_{T\in\cH_0 } \PP( {W}^{(1)}_{T} Z > t |\cD_1 )   }{q_0},
\end{equation*}
where $Z$ is a standard Gaussian variable. Here since ${W}^{(1)}_{T}\in\sigma(\cD_1)$, ${W}^{(1)}_{T}$ is fixed conditional on $\cD_1$.



Denote $L_n= G^{-1}( \frac{\epsilon_n \eta_n}{q_0})= \inf\left\{t:G(t)\le \frac{\epsilon_n \eta_n}{q_0}\right\}$, where $\epsilon_n$ is a rate to be specified later. We can exploit the following asymptotic symmetric property of ${W}^{(1)}_{T}$ to investigate the population version of the following ratio:
\begin{equation*}
    \sfR_0=\frac{\sum_{T\in \cH_0  } \bbI(W_T^{\mathsf{rank}} >L ) }{ \sum_{T\in \cH_0  } \bbI(W_T^{\mathsf{rank}} <-L )}.
\end{equation*}
Here, we introduce a weaker characterization of strong signals, that is 
 \begin{equation}\label{eq:snr-strong}
        \cS = \left\{ T\in \cH: \frac{  \sqrt{n} \abs{M_T-\theta_T } }{ {\sigma_\xi \sqrt{d_1 d_2}\norm{\cP_M(T) }_{\tF} }  \sqrt{\log d } } \ge C_{\mathsf{gap} } \right\},
    \end{equation}   
    with $\eta_n=\abs{\cS}$ for some large constant $C_{\mathsf{gap} }$. In the following proof, we will actually focus on this definition of strong signals. This condition is actually weaker than in our main text, because $\norm{\cP_M(T)}_{\tF} \le \norm{T}_{\ell_1} \max_{i,j}\norm{ \cP_M(e_i e_j^\top) }_{\tF} \le 3 {\mu} \norm{T}_{\ell_1} \sqrt{\frac{r}{d_2} }$. Thus, all the signals that satisfy condition \eqref{eq:strong-T} can also satisfy condition \eqref{eq:snr-strong}, meaning that the $\eta_n$ defined here is always larger than that defined in \eqref{eq:strong-T}.

\begin{Lemma}\label{lemma:pop-wrank}  Conditional on $\cD_1$, we have
\begin{equation*}
    \sup\limits_{0\le t \le L_n} \abs{\frac{ \sum_{T\in\cH_0}\PP(W_T^\mathsf{rank} >t ) }{q_0 G(t) } -1 }\le C_3 \frac{h_n q_0}{\epsilon_n \eta_n }.
\end{equation*}
\end{Lemma}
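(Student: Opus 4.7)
The plan is to use the conditional independence structure provided by the data split. Given $\cD_0$ and $\cD_1$, the statistic $W_T^{(1)}$ is a deterministic quantity, which I will write as $w_T$, while $W_T^{(2)}$ depends only on $\cD_2$. Since $\cD_2$ is independent of $\cD_1$ given $\cD_0$, the conditional distribution of $W_T^{(2)}$ given $(\cD_0,\cD_1)$ coincides with its distribution given $\cD_0$. Applying Proposition~\ref{prop:asy-w1} to $W_T^{(2)}$ (whose construction is symmetric with that of $W_T^{(1)}$) therefore yields, on the event $E_0$,
$$\sup_{s\in\R}\abs{\PP\bigl(W_T^{(2)}>s\,\bigm|\,\cD_0,\cD_1\bigr)-\Phi(-s)}\le C_2 h_n$$
for every $T\in\cH_0$, and the same bound with $>$ replaced by $<$ follows by taking complements.

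Splitting by the sign of $w_T$, I will write $\PP(W_T^{\mathsf{Rank}}>t\mid\cD_0,\cD_1)$ as $\PP(W_T^{(2)}>t/w_T\mid\cD_0,\cD_1)$ when $w_T>0$, as $\PP(W_T^{(2)}<t/w_T\mid\cD_0,\cD_1)$ when $w_T<0$, and as $\bbI(t<0)$ when $w_T=0$. Comparing each of the first two to $\PP(w_T Z>t)$, which equals $\Phi(-t/w_T)$ or $\Phi(t/w_T)$ respectively for $Z\sim N(0,1)$, and using the uniform-in-$s$ bound above, I obtain the per-term estimate
$$\abs{\PP\bigl(W_T^{\mathsf{Rank}}>t\,\bigm|\,\cD_0,\cD_1\bigr)-\PP\bigl(w_T Z>t\,\bigm|\,\cD_0,\cD_1\bigr)}\le C_2 h_n$$
uniformly over $t\in\R$ and $T\in\cH_0$; in the degenerate case $w_T=0$ both sides are exactly $\bbI(t<0)$, so equality holds.

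Summing the per-term estimate over the $q_0$ indices in $\cH_0$ and recognising the resulting Gaussian sum as $q_0 G(t)$ by the definition of $G$, I get
$$\abs{\sum_{T\in\cH_0}\PP\bigl(W_T^{\mathsf{Rank}}>t\,\bigm|\,\cD_0,\cD_1\bigr)-q_0 G(t)}\le C_2 q_0 h_n.$$
Dividing through by $q_0 G(t)$, it only remains to lower bound $G(t)$ uniformly on $[0,L_n]$. Since $G$ is non-increasing in $t$ and $L_n=G^{-1}(\epsilon_n\eta_n/q_0)$, we have $G(t)\ge \epsilon_n\eta_n/q_0$ for every $t\in[0,L_n]$, which delivers the claimed bound with $C_3=C_2$.

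The argument is essentially a uniform triangle inequality, so the main obstacle is not analysis but bookkeeping: handling the sign of $w_T$ (covered by the case split above) and making sure the sup-in-$s$ form of Proposition~\ref{prop:asy-w1} is used, so that after substituting $s=t/w_T$ the error bound does not depend on $w_T$. The only subtlety in the conditioning is the observation that $\cD_2\perp\cD_1\mid\cD_0$, which transfers the Berry--Esseen rate of Proposition~\ref{prop:asy-w1} intact from conditioning on $E_0$ alone to conditioning on $E_0$ and $\cD_1$.
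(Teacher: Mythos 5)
Your proposal is correct and follows essentially the same route as the paper: a per-term comparison of $\PP(W_T^{\mathsf{Rank}}>t\mid\cD_0,\cD_1)$ with $\PP(W_T^{(1)}Z>t\mid\cD_0,\cD_1)$ via the conditional Berry--Esseen bound of Proposition~\ref{prop:asy-w1} applied to $W_T^{(2)}$, followed by summation over $\cH_0$ and the lower bound $G(t)\ge G(L_n)\ge \epsilon_n\eta_n/q_0$ on $[0,L_n]$ coming from the definition of $L_n$. The paper states the per-term estimate in one line without the sign split or the conditional-independence remark; your version merely makes those routine steps explicit.
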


\begin{proof}
We only focus on small $h_n$. For each $T\in \cH_0$, conditional on  $\cD_1$, Proposition \ref{prop:asy-w1} implies that 
\begin{equation*}
    \begin{aligned}
    &\abs{\PP(W_T^\mathsf{rank}>t )-\PP( {W}^{(1)}_{T} Z > t |\cD_1 )} \le C_2 h_n.
    \end{aligned}
\end{equation*}
The definition of $L_n$ also implies $G(t)\ge \frac{\epsilon_n \eta_n}{q_0}$.
Then, we can derive the following uniform bound of convergence:

\begin{equation*}
    \sup\limits_{0\le t \le L_n} \abs{\frac{ \sum_{T\in\cH_0}\PP(W_T^\mathsf{rank}>t ) }{q_0 G(t) } -1 }\le \sup\limits_{0\le t \le L_n} \abs{\frac{ \sum_{T\in\cH_0}\PP(W_T^\mathsf{rank}>t )-G(t) }{q_0 G(L_n) } } \le C_3 \frac{h_n q_0}{\epsilon_n \eta_n }.
\end{equation*}
\end{proof}

Then, we explore the weak dependency of linear forms under signals and correlations assumptions. We will show that, with high probability, the $\sfR_0$ can be very close to its population version described in Lemma \ref{lemma:pop-wrank}. Although we already have the intuition of dependency between different $W_T^{(1)}$ by Theorem \ref{thm:asymp-two-var}, the rate provided is not enough for FDR control based on $W_T^{\mathsf{rank} }$. Here, we study the correlation of $W_T^{\mathsf{rank} }$ between different $T$ with a more delicate analysis. Let $T_1$, and $T_2$ be two different indexing matrices in $\cH_0$. To this end, we introduce the following Lemma:

\begin{Lemma}[Weak dependency of null statistics]\label{lemma:weak-cov}
        Conditional on $\cD_1$, 
\begin{equation}
\begin{aligned}
          \sup_{0\le t\le L_n } \frac{\sum_{(T_i,T_j)\in \cH_{0,\text{weak} }^2 } \abs{\operatorname{cov}(\bbI(W_{T_i}^{\mathsf{rank} }>t),\bbI(W_{T_j}^{\mathsf{rank} }>t)) }}{ q_0^2 G^2(t)}\le  C_1 \frac{h_n q_0}{\epsilon_n \eta_n }+C_2 \frac{1}{ \left(\epsilon_n\eta_n q_0\right)^{\nu/2} },
\end{aligned}
\end{equation}
where $\nu$ is the weak correlation parameter defined in \eqref{eq:weak-corr-1}.
\end{Lemma}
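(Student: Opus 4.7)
The plan is to condition on the event $E_0$ together with $\cD_1$, which freezes $w_T := W_T^{(1)}$ for every $T \in \cH_0$ and reduces the problem to a pairwise covariance calculation for the independent statistics $(W_T^{(2)})_{T \in \cH_0}$ built from $\cD_2$. Setting $s_T := t/|w_T|$ and $\epsilon_T := \operatorname{sign}(w_T)$, the event $\{W_T^{\mathsf{Rank}} > t\}$ coincides with the half-line event $\{\epsilon_T W_T^{(2)} > s_T\}$. By Proposition \ref{prop:asy-w1} applied to $W_T^{(2)}$, the marginal $\PP(\epsilon_T W_T^{(2)} > s_T \mid \cD_0, \cD_1)$ equals $\bar\Phi(s_T) + O(h_n)$, so $\sum_{T \in \cH_0} \bar\Phi(s_T)$ is, at leading order, $q_0 G(t)$, an identity that will be used repeatedly.

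I would then decompose each covariance into its bivariate-normal principal part and a Berry-Esseen residual. The principal part is $\bar\Phi_{\tilde\rho}(s_{T_i}, s_{T_j}) - \bar\Phi(s_{T_i})\bar\Phi(s_{T_j})$ with $\tilde\rho := \epsilon_{T_i}\epsilon_{T_j}\rho_{T_i,T_j}$. Using the identity $\partial_\rho \bar\Phi_\rho(a,b) = \phi_\rho(a,b)$ combined with the estimate $\phi_\rho(a,b) \le C\phi(a)\phi(b)$ valid when $|\rho|(a^2+b^2) = o(1)$, one obtains $|\bar\Phi_{\tilde\rho} - \bar\Phi\bar\Phi| \le C|\rho_{T_i,T_j}|\phi(s_{T_i})\phi(s_{T_j})$. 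This regime is met on $\cH_{0,\text{weak}}^2$ because $|\rho_{T_i,T_j}| \le cq_0^{-\nu}$ and $s_T \le L_n \lesssim \sqrt{\log(q_0/(\epsilon_n\eta_n))}$ by the definition of $L_n$. The Berry-Esseen residual, derived from Theorem \ref{thm:asymp-normal-varest} (marginal) and Theorem \ref{thm:asymp-two-var} (joint), is controlled per pair by $C h_n(\bar\Phi(s_{T_i}) + \bar\Phi(s_{T_j}))$, modulo lower-order $h_n^2$ terms coming from telescoping $\PP(A)\PP(B) - \bar\Phi\bar\Phi$ through each marginal.

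Summing these two estimates over $\cH_{0,\text{weak}}^2$ and dividing by $q_0^2 G^2(t)$, the residual contributes at most $2C h_n q_0^2 G(t)/(q_0^2 G^2(t)) = 2Ch_n/G(t) \le C_1 h_n q_0/(\epsilon_n\eta_n)$ after invoking $G(t) \ge \epsilon_n\eta_n/q_0$ for $t \le L_n$, matching the first target term. For the principal part, Mills' ratio $\phi(s) \le (1+s)\bar\Phi(s) \le (1+L_n)\bar\Phi(s)$ yields $\phi(s_{T_i})\phi(s_{T_j}) \le (1+L_n)^2 \bar\Phi(s_{T_i})\bar\Phi(s_{T_j})$; combining this with $|\rho_{T_i,T_j}| \le cq_0^{-\nu}$ on the weak set and once more $\sum_T \bar\Phi(s_T) = q_0 G(t)$ produces a bound $C q_0^{-\nu}(1+L_n)^2 (q_0 G(t))^2$, whose quotient by $q_0^2 G^2(t)$ equals $C q_0^{-\nu}(1+L_n)^2$. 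Absorbing the logarithmic $(1+L_n)^2 \asymp \log(q_0/(\epsilon_n\eta_n))$ into the exponent by a slight reduction of $\nu$ delivers the second target term $C_2(\epsilon_n\eta_n q_0)^{-\nu/2}$.

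The primary obstacle is obtaining the \emph{tail-weighted} form of the Berry-Esseen control: the naive uniform estimate $|\PP(A) - \bar\Phi(s)| \le h_n$ gives a contribution of order $h_n q_0^2 / (q_0^2 G^2(t)) = h_n/G^2(t)$, which overshoots the target by a factor of $q_0/(\epsilon_n\eta_n)$. To repair this, one uses the tighter product decomposition $|\PP(A)\PP(B) - \bar\Phi\bar\Phi| \le h_n(\bar\Phi(s_i) + \bar\Phi(s_j)) + h_n^2$ together with the orthant bound $|\PP(A \cap B) - \bar\Phi_\rho| \le \min\bigl(h_n,\, 2\min(\bar\Phi(s_i), \bar\Phi(s_j)) + h_n\bigr)$, splitting the double sum according to whether $\min(\bar\Phi(s_i), \bar\Phi(s_j))$ dominates $h_n$ or not. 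A secondary issue is enforcing uniformity in $t \in [0, L_n]$, which is handled by discretizing $[0, L_n]$ on a logarithmically fine net and using monotonicity of the tail probabilities in $t$ between net points, costing only an extra logarithmic factor that is absorbable into the stated rates.
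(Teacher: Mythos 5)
Your architecture matches the paper's: condition on $\cD_0,\cD_1$, convert $\{W_T^{\mathsf{Rank}}>t\}$ into half-line events for $W_T^{(2)}$, split each covariance into a bivariate-Gaussian principal part plus a normal-approximation residual, control the principal part by the copula-derivative identity $\partial_\rho \Phi_\rho(a,b)=\phi_\rho(a,b)$ together with Mills' ratio, and use $G(t)\ge \epsilon_n\eta_n/q_0$ on $[0,L_n]$ to normalize. The principal-part analysis is essentially the paper's term (3) and is fine (modulo a small point: your inequality $\phi_\rho(a,b)\le C\phi(a)\phi(b)$ is justified by ``$s_T\le L_n$'', but $s_T=t/|w_T|$ is unbounded when $|w_T|<1$; the paper instead keeps the exponent $(1-c\rho)$ in $\rho[\phi(s_i)\phi(s_j)]^{1-c\rho}$, which is valid for all thresholds and is later absorbed into the $\nu/2$ exponent exactly as you do with the $(1+L_n)^2$ factor).

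The genuine gap is the step you yourself flag as the primary obstacle: the tail-weighted normal approximation of the \emph{joint} probability. Your repair does not deliver it. The orthant bound you write, $\min(h_n,\,2\min(\bar{\Phi}(s_i),\bar{\Phi}(s_j))+h_n)$, equals $h_n$ identically, i.e.\ it is the uniform bound you already noted overshoots by $q_0/(\epsilon_n\eta_n)$; and splitting on whether $\min(\bar{\Phi}(s_i),\bar{\Phi}(s_j))$ exceeds $h_n$ does not close the gap either: on pairs with $\min\ge h_n$ the interpolation $h_n\le\sqrt{h_n\bar{\Phi}(s_i)\bar{\Phi}(s_j)}$ only yields $\sqrt{h_n}\,q_0/(\epsilon_n\eta_n)$ after summing (a strictly weaker rate than claimed), while on pairs with $\min<h_n$ the trivial bound still contributes $O(h_n)$ per pair, hence $h_n/G^2(t)$ after normalization. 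What is actually needed, and what the paper proves, is the per-pair bound $|\PP(A_i\cap A_j)-\Phi_{\rho}(-s_i,-s_j)|\lesssim h_n(\bar{\Phi}(s_i)+\bar{\Phi}(s_j))(1+L_n)+h_n^2$, obtained by a different mechanism: write $W^{(2)}_T=\widetilde W^{(2)}_T+\Delta^{(2)}_T$ where $\widetilde W^{(2)}_T$ is the exact i.i.d.\ linear statistic and $|\Delta^{(2)}_T|\le c_1h_n$ on an event of probability $1-O(\log d_1/d_1^2)=1-O(h_n^2)$; sandwich $\PP(A_i\cap A_j)$ between Gaussian orthant probabilities at thresholds shifted by $\pm c_1h_n$ using the multivariate Berry--Esseen bound on the core (whose uniform error $\mu\sqrt{rd_1/n}$ is also $O(h_n^2)$); and Taylor-expand the Gaussian orthant probability in the threshold, which produces the density-weighted error $c_1h_n\phi(s_i)\PP(\omega_j>s_j\mid\omega_i=s_i)\le c_1h_n(1+L_n)\bar{\Phi}(s_i)$. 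The $O(h_n^2)$ terms, summed over all $q_0^2$ pairs and normalized, give $(h_nq_0/(\epsilon_n\eta_n))^2$, which is lower order. Without this threshold-shift argument the joint term cannot be reduced below $h_n/G^2(t)$, so this is a missing idea rather than a presentational shortcut.
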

\begin{proof}
Suppose we have a pair $(T_1,T_2)\in \cH_{0,\text{weak} }^2$. Here we adopt the notation in the proof of Theorem \ref{thm:asymp-normal-varest}: denote 
$$
\begin{aligned}
     (W_{T_1}^{(i)},W_{T_2}^{(i)}) &= \left(\frac{\left\langle  \widehat{Z}_1^{(i)} , \cP_M(T_1) \right\rangle }{\sigma_{\xi} \norm{\cP_M(T_1) }_\tF  \sqrt{d_1 d_2/n}}+ \Delta_{T_1}^{(i)}, \frac{\left\langle  \widehat{Z}_1^{(i)} , \cP_M(T_2) \right\rangle }{\sigma_{\xi} \norm{\cP_M(T_2) }_\tF  \sqrt{d_1 d_2/n}}+ \Delta_{T_2}^{(i)} \right) \\
     &:= \left(\widetilde{W}^{(i)}_{T_1}+\Delta_{T_1}^{(i)}  , \widetilde{W}^{(i)}_{T_2}+\Delta_{T_2}^{(i)} \right),\ i=1,2,
\end{aligned}
$$
where $\widehat{Z}_1^{(i)} $, $\Delta^{(i)}_T$ are defined analogously as in the proof of Theorem \ref{thm:asymp-normal-varest}. We have $\E (\widetilde{W}^{(1)}_T)^2=\E (\widetilde{W}^{(2)}_T)^2=1 $. Here $\widetilde{W}^{(1)}_T$ and $\widetilde{W}^{(2)}_T$ are standardized averages of $n$ i.i.d. samples and can be regarded as the cores which lead to asymptotic normality of $W^{(1)}_T$, $W^{(2)}_T$. By the proof of Theorem \ref{thm:asymp-normal-varest}, the remainder term $ \Delta^{(i)}_T$ is controlled by:
\begin{equation}\label{eq:remainder}
    \PP\left(\abs{\Delta^{(i)}_T}>c_1 h_n \right)\le C_2 \frac{\log d_1}{d_1^\tau}.
\end{equation}
For $i=1$, as is shown in the proof of Theorem \ref{thm:asymp-normal-varest},  by multivariate Berry--Esseen theorem,   $(\widetilde{W}^{(2)}_{T_1},\widetilde{W}^{(2)}_{T_2})$  converges to normal variable $\omega_1 \sim \cN(0,R)$ conditional on $E_0$ where $R_{11}=R_{22}=1$ and $R_{12}=R_{21}=\operatorname{cov}(\widetilde{W}^1_{T_1},\widetilde{W}^1_{T_2})=\rho_{T_1,T_2}$ with the error bound:

\begin{equation}\label{eq:multi-be}
    \begin{aligned}
    \abs{\PP\left((\widetilde{W}^{(2)}_{T_1},\widetilde{W}^{(2)}_{T_2})\in {A} \middle| E_0 \right)-\PP(\omega_1\in{A})}\le C(1/(1-\rho))^{\frac{3}{2}} \mu \sqrt{\frac{r d_1}{n}},
    \end{aligned}
\end{equation}
for any convex set ${A} \subseteq \R^2$. Here the $\rho:=\rho_{T_1,T_2}$ is the correlation between $W^{(1)}_T$, $W^{(2)}_T$ defined in \eqref{eq:corr}, and it follows $\abs{\rho}\le c q_0^{-\nu}\le \frac{1}{2}$  due to the weak dependency in $\cH_{0,\text{weak} }^2$. By the following calculation of the covariance between $\bbI(W_{T_1}^{\mathsf{rank} }>t)$ and $\bbI(W_{T_2}^{\mathsf{rank} }>t)$ conditional on  $\cD_1$, we have:
\begin{equation*}
\begin{aligned}
     &\abs{\operatorname{cov}(\bbI(W_{T_1}^{\mathsf{rank} }>t),\bbI(W_{T_2}^{\mathsf{rank} }>t)) } \\
     & = \abs{\PP(W^{(1)}_{T_1}W^{(2)}_{T_1} >t, W^{(1)}_{T_2}W^{(2)}_{T_2}>t )- \PP(W^{(1)}_{T_1}W^{(2)}_{T_1} >t )\PP(W^{(1)}_{T_2}W^{(2)}_{T_2} >t ) } \\
     & \le \abs{\PP(W^{(1)}_{T_1}W^{(2)}_{T_1} >t, W^{(1)}_{T_2}W^{(2)}_{T_2}>t )- \PP( {W}^{(1)}_{T_1} w_{11} > t, {W}^{(1)}_{T_2}w_{12} > t  )} \text{ (1)}\\ &+\abs{\PP(W^{(1)}_{T_1}W^{(2)}_{T_1} >t )\PP(W^{(1)}_{T_2}W^{(2)}_{T_2} >t )- \PP( {W}^{(1)}_{T_1} w_{11} > t  )\PP( {W}^{(1)}_{T_2} w_{12} > t  ) } \text{ (2)}\\
     &+\abs{\PP( {W}^{(1)}_{T_1} w_{11} > t, {W}^{(1)}_{T_2}w_{12} > t  )- \PP( {W}^{(1)}_{T_1} w_{11} > t  )\PP( {W}^{(1)}_{T_2} w_{12} > t  ) } \text{ (3)}.\\
\end{aligned}
\end{equation*}
Term (1), (2), (3) can be controlled separately. For (1), conditional on  $\cD_1$, we invoke multivariate Berry--Esseen theorem \eqref{eq:multi-be} to bound the joint c.d.f. of $(W^{(2)}_{T_1}, W^{(2)}_{T_2})$ by  
\begin{equation*}
    \begin{aligned}
    &\PP(W^{(2)}_{T_1} >t_1, W^{(2)}_{T_2}>t_2 ) \\
    &\le  \PP(W^{(2)}_{T_1} >t_1, W^{(2)}_{T_2}>t_2,\abs{\Delta^{(2)}_{T_1}}\le c_1 h_n,\abs{\Delta^{(2)}_{T_2}}\le c_1 h_n  ) +\frac{2c_2 \log d_1}{d_1^\tau}\\ 
    &\le \PP(\widetilde{W}^{(2)}_{T_1} >t_1-c_1 h_n, \widetilde{W}^{(2)}_{T_2}>t_2-c_1 h_n) +\frac{2c_2 \log d_1}{d_1^\tau} \\
    &\le \PP(\omega_{11}>t_1,\omega_{12}>t_2 ) +c_1\left[\phi(t_1)\PP(\omega_{12}>t_2|\omega_{11}=t_1) +\phi(t_2)\PP(\omega_{11}>t_1|\omega_{12}=t_2)\right]h_n +C_2 h_n^2,
    \end{aligned}
\end{equation*}
where we apply Taylor expansion to the c.d.f. of normal distribution $\omega_1 \sim \cN(0,R)$ and apply the upper bound $\log d_1 \cdot d_1^{-\tau} \le h_n^2$. Analogously, it also holds that
\begin{equation*}
    \begin{aligned}
    &\PP(W^{(2)}_{T_1} >t_1, W^{(2)}_{T_2}>t_2)\ge \PP(\widetilde{W}^{(2)}_{T_1} >t_1+c_1 h_n, \widetilde{W}^{(2)}_{T_2}>t_2+c_1 h_n) -\frac{2c_2 \log d_1}{d_1^\tau} \\
    &\ge \PP(\omega_{11}>t_1,\omega_{12}>t_2 ) -c_1\left[\phi(t_1)\PP(\omega_{12}>t_2|\omega_{11}=t_1) +\phi(t_2)\PP(\omega_{11}>t_1|\omega_{12}=t_2)\right]h_n -C_2 h_n^2.
    \end{aligned}
\end{equation*}
We conclude that, conditional on $\cD_1$
\begin{equation*}
\begin{aligned}
&\abs{\PP(W^{(2)}_{T_1} >t_1, W^{(2)}_{T_2}>t_2 )-\PP(\omega_{11}>t_1,\omega_{12}>t_2 ) }\\
& \le c_1\left[\phi(t_1)\PP(\omega_{12}>t_2|\omega_{11}=t_1) +\phi(t_2)\PP(\omega_{11}>t_1|\omega_{12}=t_2)\right]h_n + C_2 h_n^2.
\end{aligned}
\end{equation*}

Using the Lipschitz property of $\Phi(t)$, we have 
\begin{equation}\label{eq:bound-w-omega}
\begin{aligned}
         &\abs{\PP(W^{(1)}_{T_1}W^{(2)}_{T_1} >t, W^{(1)}_{T_2}W^{(2)}_{T_2}>t )- \PP(W^{(1)}_{T_1}\omega_{11}>t, W^{(1)}_{T_2}\omega_{12}>t )} \\
         &\le 2c_1 h_n\left(\PP(W^{(1)}_{T_1}\omega_{11}>t) +\PP(W^{(1)}_{T_2}\omega_{12}>t)  \right) +Ch_n^2.
\end{aligned}
\end{equation}
For (2), the proof of Lemma \ref{lemma:pop-wrank} also implies the following bound 
\begin{equation}\label{eq:bound-w-omega-2}
\begin{aligned}
        &\abs{\PP(W^{(1)}_{T_1}W^{(2)}_{T_1} >t )\PP(W^{(1)}_{T_2}W^{(2)}_{T_2} >t )- \PP( {W}^{(1)}_{T_1} w_{11} > t  )\PP( {W}^{(1)}_{T_2} w_{12} > t  ) } \\
        &\le 2c_1 h_n\left(\PP(W^{(1)}_{T_1}\omega_{11}>t) +\PP(W^{(1)}_{T_2}\omega_{12}>t)  \right) +Ch_n^2, 
\end{aligned}
\end{equation}
by the same argument conditional on $E_0$ and $\cD_1$. Our next step is to compare the c.d.f. of $(\omega_1,\omega_2)$ with standard Gaussian $(Z_1,Z_2)$ to control the term (3), the difference between $\PP(\omega_{11} >t_1, \omega_{12}>t_2)$ and $\Phi(-t_1)\Phi(-t_2) $. Since $(T_1,T_2)\in \cH_{0,\text{weak} }^2$, the covariance between $w_{11},w_{12}$ is thus bounded by: $\abs{\rho}\le c q_0^{-\nu}$.

We invoke the property of bivariate Gaussian copula \citep{meyer2013bivariate}:
\begin{equation*}
    \begin{aligned}
         \abs{\PP(\omega_{11} >t_1, \omega_{12}>t_2)-\Phi(-t_1)\Phi(-t_2) }=\abs{\int_{0}^{\rho} \phi_2(-t_1,-t_2,z ) dz},
    \end{aligned}
\end{equation*}
where $ \phi_2(x,y,z )$ is the p.d.f of bivariate normal distribution with correlation coefficient $z$. Without loss of generality, assume $t_1,t_2>0$ are away from $0$. Thus, it is clear that
\begin{equation*}
    \begin{aligned}
         \abs{\PP(\omega_{11} >t_1, \omega_{12}>t_2)-\Phi(-t_1)\Phi(-t_2) } &\le \int_{0}^{\rho } \phi_2(-t_1,-t_2,z ) dz, \\
         & \le \frac{\rho}{2 \pi \sqrt{1-\rho^2}} \exp \left(-\frac{t_1^2+t_2^2}{2}+\frac{\rho t_1 t_2}{\left(1-\rho^2\right)}\right) \\
         & \le \frac{2\rho}{2 \pi }  \exp \left(-\frac{t_1^2+t_2^2}{2}(1-c\rho)\right) \\
         & = 2\rho \left[\phi(-t_1)\phi(-t_2)\right]^{1-c\rho}.
    \end{aligned}
\end{equation*}
For any $\nu>0$, there exist $C_\nu>0$ such that $\Phi(-t)^\nu \le C_\nu/t $ for all $t>0$. Because by Mill's ratio, we have:
\begin{equation*}
    \Phi(-t)^\nu \le \frac{\phi (-t)^\nu}{t^\nu}\le C_\nu \frac{1}{t^{1-\nu}}\frac{1}{t^\nu}=C_\nu \frac{1}{t},
\end{equation*}
where we use the fact that $\phi (-t)^\nu\le C_\nu t^{-(1-\nu)} $. Now combine this with the upper bound of $\phi (-t)$: $\phi (-t)\le C (t+1)\Phi (-t) $ , we have:
\begin{equation}\label{eq:bound-w-omega-3}
    \begin{aligned}
         \abs{\PP(\omega_{11} >t_1, \omega_{12}>t_2)-\Phi(-t_1)\Phi(-t_2) } &\le 2\rho \left[\phi(-t_1)\phi(-t_2)\right]^{1-c\rho} \\
         & \le 2\rho\left[ C (\Phi(-t_1)^{-\nu}+1 )\Phi(-t_1) (\Phi(-t_2)^{-\nu}+1 )\Phi(-t_2) \right]^{1-c\rho} \\
         & \le C\rho \left[\Phi(-t_1)\Phi(-t_2)\right]^{(1-\nu)(1-c\rho) },
    \end{aligned}
\end{equation}
for the term (3). Together with \eqref{eq:bound-w-omega}, \eqref{eq:bound-w-omega-2}, we can show that
\begin{equation*}
\begin{aligned}
       &\sup_{0\le t\le L_n } \sum_{(T_i,T_j)\in \cH_{0,\text{weak} }^2 }\frac{ \abs{\operatorname{cov}(\bbI(W_{T_i}^{\mathsf{rank} }>t),\bbI(W_{T_j}^{\mathsf{rank} }>t)) }}{q_0^2 G^2(t)}\le  \frac{8c_1 h_n q_0 G(t)  }{q_0^2 G(t)^2} \\
       & +\sup_{0\le t\le L_n }\frac{ \sum_{(T_i,T_j)\in\cH_{0,\text{weak}} ^2 } C\rho \left[\PP({W}^{(1)}_{T_i} Z >t )\PP({W}^{(1)}_{T_j} Z>t)\right]^{(1-\nu)(1-c\rho) }  }{q_0^2 G(t)^2}\\
       & \le  C \frac{h_n q_0}{\epsilon_n \eta_n } + \sup_{0\le t\le L_n }\frac{ C\rho \left(\sum_{T\in \cH_0 }\PP({W}^{(1)}_{T} Z >t )^{(1-\nu)(1-c\rho) }\right)^2  }{q_0^2 G(t)^2} \\ 
       & \le  C \frac{h_n q_0}{\epsilon_n \eta_n } + \sup_{0\le t\le L_n }C\rho \frac{ \left( G(t) \right)^{2(1-\nu)(1-c\rho) }  }{ G(t)^2} \\
       & \le  C \frac{h_n q_0}{\epsilon_n \eta_n } +C {\rho}( \frac{q_0}{\epsilon_n\eta_n})^{3\nu}.
\end{aligned}
\end{equation*}
The argument above is valid for any $\nu$, thus, we choose $3\nu$ to be the $\frac{\nu}{2}$, where $\nu$ is defined in \eqref{eq:weak-corr-1}. It thus finishes the proof.
\end{proof}
We now apply the weak dependency yielded in Lemma \ref{lemma:weak-cov} to derive a uniform bound between $R$ and its population version: 

\begin{Lemma}\label{lemma:conv-prob}
For any $\varepsilon>0$, conditional on $\cD_1$, it holds that
\begin{equation*}
   \PP\left(\sup\limits_{0\le t\le L_n}\abs{\frac{ \sum_{T\in\cH_0}\bbI(W_T^{\mathsf{rank} }>t ) }{q_0 G(t) } -1} \ge \varepsilon\right) \le \frac{C}{\varepsilon^2} \log(\frac{q_0 }{\epsilon_n \eta_n}) \left( \left(\frac{\beta_{\mathsf{s}} q_0^2 }{ \epsilon_n^2\eta_n^2}\right)^{\frac{1}{2}} + \left(\frac{h_n q_0}{\epsilon_n \eta_n}+ \frac{1}{ \left(\epsilon_n\eta_n q_0\right)^{v/2} } \right)^{\frac{1}{2}}\right) .
\end{equation*}
\end{Lemma}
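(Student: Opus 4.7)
The plan is to establish this Glivenko--Cantelli-type statement by a pointwise variance bound followed by a discretization and union bound argument, leveraging the already-established pointwise results of Lemma~\ref{lemma:weak-cov} and Lemma~\ref{lemma:pop-wrank}.

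Write $N(t) := \sum_{T\in\cH_0}\bbI(W_T^{\mathsf{Rank}}>t)$ and $M(t) := \E[N(t)\mid E_0,\cD_1]$. Lemma~\ref{lemma:pop-wrank} already gives $|M(t)/(q_0 G(t)) - 1|\lesssim h_n q_0/(\epsilon_n \eta_n)$ uniformly on $[0,L_n]$, so it suffices to control the centered process $Y(t):=(N(t)-M(t))/(q_0 G(t))$ uniformly on $[0,L_n]$ and then add the bias term at the end. The first step is a pointwise variance bound. Expanding $\mathrm{Var}[N(t)\mid E_0,\cD_1]$ as a sum over ordered pairs and splitting according to the partition $\cH_0\times\cH_0=\cH_{0,\mathrm{strong}}^2\cup \cH_{0,\mathrm{weak}}^2$, I handle the weakly correlated pairs directly by Lemma~\ref{lemma:weak-cov}, which contributes at most a constant times $h_n q_0/(\epsilon_n\eta_n)+(\epsilon_n\eta_n q_0)^{-\nu/2}$ to $\mathrm{Var}(Y(t))$ when $t\le L_n$. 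For the strongly correlated pairs no refined covariance estimate is available, so I fall back on the trivial bound $|\mathrm{cov}(\bbI_i,\bbI_j)|\le 1$ together with $|\cH_{0,\mathrm{strong}}^2|=\beta_{\mathsf{s}}q_0^2$; using $G(t)\ge \epsilon_n\eta_n/q_0$ this contributes at most $\beta_{\mathsf{s}}/G^2(t)\le \beta_{\mathsf{s}}q_0^2/(\epsilon_n^2\eta_n^2)$ to $\mathrm{Var}(Y(t))$. Adding the two gives a variance bound matching (the unsquared versions of) the two summands in the target inequality.

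Next I discretize. Choose $0=t_0<t_1<\cdots<t_K=L_n$ such that $G(t_{k+1})/G(t_k)$ equals a fixed constant strictly less than $1$; monotonicity of $G$ ensures such a geometric grid exists with $K=O(\log(q_0/(\epsilon_n\eta_n)))$ points, which yields the logarithmic prefactor in the statement. Applying Chebyshev's inequality (alternatively, Markov to $|Y|$ after Jensen, which produces the square-root structure displayed in the statement) at each $t_k$ together with a union bound over the grid controls $\max_k|Y(t_k)|$ with the claimed probability. Finally, since both $N(\cdot)$ and $G(\cdot)$ are non-increasing in $t$, for any $t\in[t_k,t_{k+1}]$ the ratio $N(t)/(q_0 G(t))$ is sandwiched between $N(t_{k+1})/(q_0 G(t_k))$ and $N(t_k)/(q_0 G(t_{k+1}))$; taking the geometric ratio close enough to $1$ (at the cost of a constant factor in $K$) lets me absorb the multiplicative slack into $\varepsilon$, promoting control at the grid points into control uniformly on $[0,L_n]$.

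The main technical difficulty is the handling of the strongly correlated pairs: Lemma~\ref{lemma:weak-cov} does not apply, and the only available tool is the trivial covariance bound, so the resulting $\beta_{\mathsf{s}}q_0^2/(\epsilon_n^2\eta_n^2)$ term inevitably appears in the variance. This is precisely what forces the weak-dependence assumption $\sqrt{\beta_{\mathsf{s}}}\,q_0/\eta_n\to 0$ underlying the regime of Theorem~\ref{thm:weak-cor-fdr}; under that assumption all three terms inside the bracket of the lemma are $o(1)$, making the uniform control meaningful. Propagating constants through the discretization and union bound is otherwise routine, and combining the resulting deviation bound with the bias estimate of Lemma~\ref{lemma:pop-wrank} completes the argument.
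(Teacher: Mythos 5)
Your proposal follows essentially the same route as the paper: the same variance decomposition of $\mathrm{Var}(N(t))$ into weakly correlated pairs (handled by Lemma~\ref{lemma:weak-cov}) and strongly correlated pairs (handled by the trivial bound $|\mathrm{cov}|\le 1$ together with $|\cH^2_{0,\mathrm{strong}}|\le\beta_{\mathsf{s}}q_0^2$ and $G(t)\ge \epsilon_n\eta_n/q_0$), the same geometric grid in the values of $G$, a union/second-moment bound over the grid, and the same monotone sandwiching to pass from grid points to all of $[0,L_n]$.

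The one place where your sketch is underspecified, and where the stated bound actually gets its shape, is the choice of the grid cardinality $K$. You propose a grid with a \emph{fixed} ratio $G(t_{k+1})/G(t_k)=c<1$, giving $K=O(\log(q_0/(\epsilon_n\eta_n)))$; but then the discretization slack $|r_K-1|$ is a fixed constant and cannot be absorbed into an arbitrary $\varepsilon>0$, while making the ratio depend on $\varepsilon$ degrades the $\varepsilon$-dependence of the final bound. The paper instead takes $K$ polynomially large in the inverse rates, namely $K=\log(q_0/(\epsilon_n\eta_n))\min\{(\beta_{\mathsf{s}}q_0^2/(\eta_n^2\epsilon_n))^{-\varsigma},(\cdots)^{-\varsigma}\}$ with $\varsigma=1/2$: this makes $|r_K-1|\lesssim(\mathrm{rates})^{1/2}=o(1)$ while the union bound $K\times\mathrm{Var}$ is inflated only to $(\mathrm{rates})^{1-\varsigma}=(\mathrm{rates})^{1/2}$. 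That balancing — not ``Markov after Jensen'' as you suggest — is what produces the square roots and the logarithmic prefactor in the lemma, and you should make it explicit; otherwise the argument as written does not yield uniform control at level $\varepsilon$ for all $\varepsilon$. With that correction the rest of your plan (adding back the bias term from Lemma~\ref{lemma:pop-wrank} and applying Markov to $D_n^2$ for the $\varepsilon^{-2}$ factor) goes through exactly as in the paper.
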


\begin{proof}
To prove the uniform convergence in probability, we define a grid on $[0,L_n]$:
$$\left\{ t_k= G^{-1}\left( \frac{1}{2} (2G(L_n))^{\frac{k}{K}} \right) \right\}_{k=0}^{K},  $$
which equates each $G(t_k)$ with $\frac{1}{2} (2G(L_n))^{\frac{k}{K}}$. Then for each $t\in [t_{k-1},t_{k})$, the ratio can be bounded by:
\begin{equation*}
    \frac{ \sum_{T\in\cH_0}\bbI(W_T^{\mathsf{rank} }>t_{k} ) }{q_0 G(t_{k-1}) } \le \frac{ \sum_{T\in\cH_0}\bbI(W_T^{\mathsf{rank} }>t ) }{q_0 G(t) } \le  \frac{ \sum_{T\in\cH_0}\bbI(W_T^{\mathsf{rank} }>t_{k-1} ) }{q_0 G(t_{k}) }.
\end{equation*}
Define $(2G(L_n))^{\frac{1}{K}}=r_K$, we have $G(t_k)/G(t_{k-1})=r_K$, and 
\begin{equation*}
    \abs{\frac{ \sum_{T\in\cH_0}\bbI(W_T^{\mathsf{rank} }>t ) }{q_0 G(t) } -1}\le \sup_{i=k-1,k}\frac{1}{r_K} \abs{\frac{ \sum_{T\in\cH_0}\bbI(W_T^{\mathsf{rank} }>t_i ) }{q_0 G(t_{i}) }-1} +\abs{r_K-1}\vee\abs{\frac{1}{r_K}-1 },
\end{equation*}
for each $t\in [t_{k-1},t_{k})$. Then for any $t\in [0,L_n]$, it suffices to control the quantities 
\begin{equation*}
    \sup_{k=0,\dots, K}\frac{1}{r_K} \abs{\frac{ \sum_{T\in\cH_0}\bbI(W_T^{\mathsf{rank} }>t_k ) }{q_0 G(t_{k}) }-1}\le  \sup_{k=0,\dots, K}\frac{1}{r_K} \abs{\frac{ \sum_{T\in\cH_0}\bbI(W_T^{\mathsf{rank} }>t_k )-\PP(W_T^{\mathsf{rank} }>t_k ) }{q_0 G(t_{k}) }} + C_3 \frac{h_n q_0}{\epsilon_n \eta_n }
\end{equation*}
and $\abs{r_K-1}\vee\abs{\frac{1}{r_K}-1 }$ by Proposition \ref{prop:asy-w1}.  Denote 
$$D_n=\sup\limits_{k=0,\dots, K}\frac{1}{r_K} \abs{\frac{ \sum_{T\in\cH_0}\bbI(W_T^{\mathsf{rank} }>t_k )-\PP(W_T^{\mathsf{rank} }>t_k ) }{q_0 G(t_{k}) }}.$$ 
It follows that
\begin{equation}
    \begin{aligned}
              &\E D_n^2\le \frac{K}{r_K^2}\E \abs{\frac{ \sum_{T\in\cH_0}\bbI(W_T^{\mathsf{rank} }>t_{k} )-\PP(W_T^{\mathsf{rank} }>t_{k} ) }{q_0 G(t_{k}) }}^2\\
              &\le  \frac{K}{r_K^2} \frac{\sum\limits_{(T_1,T_2)\in\cH^2_{0,\text{weak}}} \abs{\operatorname{cov}(\bbI(W_{T_1}^{\mathsf{rank} }>t),\bbI(W_{T_2}^{\mathsf{rank} }>t)) } +\sum\limits_{T_1,T_2\in\cH^2_{0,\text{strong}} } \abs{\operatorname{cov}(\bbI(W_{T_1}^{\mathsf{rank} }>t),\bbI(W_{T_2}^{\mathsf{rank} }>t)) } }{q_0^2 G^2(t) },
    \end{aligned}
\end{equation}
for any $t\in \{t_k\}$. Since the number of strong dependency pairs $\abs{\cH^2_{0,\text{strong}}}\le \beta_{\mathsf{s}}q_0^2 $, we have
\begin{equation*}
    \frac{\sum\limits_{T_1,T_2\in\cH^2_{0,\text{strong}} } \abs{\operatorname{cov}(\bbI(W_{T_1}^{\mathsf{rank} }>t),\bbI(W_{T_2}^{\mathsf{rank} }>t)) }}{q_0^2 G^2(t) }\le \frac{\beta_{\mathsf{s}}q_0^2 }{ \epsilon_n^2\eta_n^2},
\end{equation*}
for any $t\in [0,L_n]$. For the weak dependency pair, 

\begin{equation*}
    \frac{\sum\limits_{T_1,T_2\in\cH^2_{0,\text{weak}} } \abs{\operatorname{cov}(\bbI(W_{T_1}^{\mathsf{rank} }>t),\bbI(W_{T_2}^{\mathsf{rank} }>t)) }}{q_0^2 G^2(t)}\le C_1 \frac{h_n q_0}{\epsilon_n \eta_n }+C_2 \frac{1}{ \left(\eta_n q_0\right)^{v/2} },
\end{equation*}
 where we apply our previous results in Lemma \ref{lemma:weak-cov}. What remains for us is to specify the density of grid $K$. Choose a constant $\varsigma$ and we set
 $$K= \log(\frac{q_0 }{\epsilon_n \eta_n}) \min \left\{ \left(\frac{ q_0^2 \beta_{\mathsf{s}} }{ \eta_n^2 \epsilon_n  }\right)^{-\varsigma} , \left( \frac{q_0 h_n}{\eta_n\epsilon_n}+\frac{1}{ \left(\epsilon_n\eta_n q_0\right)^{v/2} } \right)^{-\varsigma} \right\}  ,$$ 
 then it is clear that $1\le\frac{1}{r_k}\le \left[ \frac{q_0}{\epsilon_n \eta_n} \right]^{1/K}\to 1$, and $ K (\frac{\beta_{\mathsf{s}}q_0^2 }{ \epsilon_n^2\eta_n^2} + \frac{h_n q_0}{\epsilon_n \eta_n })\to 0$. Therefore 
 \begin{equation*}
\begin{aligned}
      &\abs{r_K-1}\vee\abs{\frac{1}{r_K}-1 } \le C \frac{1}{K}\log(\frac{q_0 }{\epsilon_n \eta_n}) \le  \left( \left(\frac{\beta_{\mathsf{s}} q_0^2 }{ \epsilon_n^2\eta_n^2}\right)^{\varsigma} + \left(\frac{h_n q_0}{\epsilon_n \eta_n} +\frac{1}{ \left(\epsilon_n\eta_n q_0\right)^{v/2} }\right)^{\varsigma}\right) \\
    &\E D^2_n  \le C K \left(\frac{\beta_{\mathsf{s}}q_0^2 }{ \eta_n^2} + \frac{h_n q_0}{\epsilon_n \eta_n} \right)\le C \log(\frac{q_0 }{\epsilon_n \eta_n}) \left( \left(\frac{\beta_{\mathsf{s}} q_0^2 }{ \epsilon_n^2\eta_n^2}+\frac{1}{ \left(\epsilon_n\eta_n q_0\right)^{v/2} }\right)^{1-\varsigma} + \left(\frac{h_n q_0}{\epsilon_n \eta_n} \right)^{1-\varsigma}\right).
\end{aligned}
 \end{equation*}
We can finish the proof of uniform convergence by using Markov's inequality with $\varsigma=\frac{1}{2}$.
\end{proof}

Recall the main theorem. For the ratio $\frac{\sum_{T\in \cH_0  } \bbI(W_T^{\mathsf{rank} } >L ) }{ \sum_{T\in \cH_0  } \bbI(W_T^{\mathsf{rank} } <-L )}$, we have

\begin{equation*}
    \sfR_0=\frac{\sum_{T\in \cH_0  } \bbI(W_T^{\mathsf{rank} } >L ) }{ \sum_{T\in \cH_0  } \bbI(W_T^{\mathsf{rank} } <-L )} =  \frac{\sum_{T\in \cH_0  } \bbI(W_T^{\mathsf{rank} } >L ) }{ q_0 G(t) } \cdot \frac{q_0 G(t) }{  \sum_{T\in \cH_0  } \bbI(W_T^{\mathsf{rank} } <-L ) }.
\end{equation*}
Then, it's clear that, under the event that $L\le L_n$, if Lemma \ref{lemma:conv-prob} holds for a $\varepsilon$, then we have

\begin{equation*}
     \sfR_0\le \frac{1+\varepsilon}{1-\varepsilon}\le 1+\frac{2\varepsilon}{1-\varepsilon}\le 1+3\varepsilon,
\end{equation*}
with probability at least $1-\frac{C}{\varepsilon^2} \log(\frac{q_0 }{\epsilon_n \eta_n}) \left( \left(\frac{\beta_{\mathsf{s}} q_0^2 }{ \epsilon_n^2\eta_n^2}\right)^{\frac{1}{2}} + \left(\frac{h_n q_0}{\epsilon_n \eta_n}+(\epsilon_n\eta_n q_0)^{-\nu/2} \right)^{\frac{1}{2}}\right) $. 
By Lemma \ref{lemma:conv-prob}, we now successfully reduce our problem to proving our data-driven threshold $L\le L_n$ with high probability.

\subsubsection{ Threshold control}\label{sec:prof-thres}
 The gist of asymptotic threshold control is that when we choose $L_n$ as the threshold and $d_1,d_2,n$ go large, entries with strong signals in $\cS$ can always pass the test, and other entries with weak signals or no signal can pass the test will little possibility. We first focus on the entries with strong signals. Denote the standardized signal $\delta_T=(M_T-\theta_T)/({\sigma}_{\xi}\|\calP_M(T)\|_{\rm F}\sqrt{d_1d_2/n} )$, and $\widehat{W}^1_T= W^{(1)}_T- \delta_T$, $\widehat{W}^2_T= W^{(2)}_T- \delta_T$. Given any $T\in\cH_1$, following the argument that is similar to the proof in Lemma \ref{lemma:pop-wrank}, we have 
 \begin{equation*}
    \sup_{t\in \R}\abs{ \PP(W_T^\mathsf{rank}>t )- \PP\left( (Z_1+\delta_T)(Z_2+\delta_T)>t \right) }\le C h_n.
 \end{equation*}
Here $(Z_1,Z_2)$ are independent standard Gaussian. Without loss of generality, assume $M_T-\theta_T>0$. Then, 
 \begin{equation*}
\begin{aligned}
\PP(W_T^\mathsf{rank}< L_n ) \le &\PP\left( (Z_1+\delta_T)(Z_2+\delta_T)< L_n \right) +C h_n \\
\le & 1- \PP\left( (Z_1+\delta_T)(Z_2+\delta_T) \ge L_n \right) +C h_n \\
\le & 1- \PP\left( Z_1\ge -\delta_T+\sqrt{L_n} \right)^2 +C h_n.
\end{aligned}
 \end{equation*}
Here, we use the fact that
\begin{equation*}
    \left\{ Z_1\ge -\delta_T+\sqrt{L_n} \right\} \cap \left\{ Z_2\ge -\delta_T+\sqrt{L_n} \right\} \subseteq \left\{(Z_1+\delta_T)(Z_2+\delta_T) \ge L_n \right\}.
\end{equation*}
 
An upper bound of $G(t)$ is given by 
\begin{equation*}
    G(t)= \frac{\sum_{T\in\cH_0 } \PP( {W}^{(1)}_{T} Z > t |\cD_0,\cD_1 )   }{q_0}\le \frac{\sqrt{2}}{\sqrt{\pi}} \exp{\left( -\frac{t^2}{ 2\max_{T\in\cH_0} \abs{{W}^{(1)}_{T} }^2 } \right)}.
\end{equation*}
From Theorem \ref{thm:asymp-normal}, an uniform upper bound of $\abs{{W}^{(1)}_{T} }$ is given by:
\begin{equation*}
    \PP\left(\max_{T\in\cH_0}\abs{{W}^{(1)}_{T} } \ge C(h_n +\sqrt{\log d_1 })  \right)\le \frac{1}{d_1^{\tau-1} }.
\end{equation*}
If $T\in \cS$, then $\delta_T\ge C_{\mathsf{gap}} \sqrt{\log d_1} $ by the definition of $\cS$. The definition of $L_n$ implies that $L_n  \le C \sqrt{\log(\frac{q_0}{\epsilon_n \eta_n})}\cdot  \sqrt{\log d_1} \ll \log(\frac{1}{h_n})\vee (\log d_1) $. Generally, we have $d^{-10}\le h_n$, thus the term $\log (\frac{1}{h_n})$ can be omitted. Assume $C_{\mathsf{gap}}$ is large. It is clear that 
\begin{equation*}
    \PP\left( Z_1\ge -\delta_T+\sqrt{L_n} \right)^2\ge  \PP\left( Z_1\ge -C\sqrt{ (\log d_1) }  \right)^2\ge (1- c h_n)^2,
\end{equation*}
i.e., $\PP(W_T^\mathsf{rank}< L_n ) \le C h_n$.  For any $\varepsilon>0$, we compute the probability that $\sum\limits_{T\in\cS} \bbI(W_T^\mathsf{rank}> L_n)=\eta_n$ by finding its complement:
 \begin{equation*}
\begin{aligned}
               \PP(\sum\limits_{T\in\cS} \bbI(W_T^\mathsf{rank}> L_n)\le (1-\varepsilon)\eta_n )& =\PP(\sum\limits_{T\in\cS} \bbI(W_T^\mathsf{rank}< L_n)> \varepsilon\eta_n )  \le \frac{\sum\limits_{T\in\cS} \PP(W_T^\mathsf{rank}< L_n)}{\varepsilon\eta_n}\le C h_n/\varepsilon,
\end{aligned}
 \end{equation*}
 i.e., $ \PP(\sum\limits_{T\in\cS} \bbI(W_T^\mathsf{rank}> L_n)\le (1-\varepsilon)\eta_n )\to 0$, $ \PP(\sum\limits_{T\in\cS} \bbI(W_T^\mathsf{rank}> L_n)\ge \eta_n )\to 1$. This indicates that, all the signals in $\cS$ can pass our test. For our data-driven threshold \eqref{eq:dd-threshold}, we have 
 
  \begin{equation}\label{eq:dd-thre-lb}
\begin{aligned}
             \sum\limits_{T\in\cH} \bbI(W_T^\mathsf{rank}> L_n)\ge   \sum\limits_{T\in\cS} \bbI(W_T^\mathsf{rank}> L_n)\ge \frac{3}{4}\eta_n,
\end{aligned}
 \end{equation}
 with probability at least $1-C h_n$

Consider the probability $ \PP(\sum\limits_{T\in\cH_0} \bbI(W_T^\mathsf{rank}<- L_n)\ge \frac{\alpha}{4}\eta_n )$ for the no-signal linear forms $T\in\cH_0$. As we have shown in the proof of Lemma \ref{lemma:conv-prob}, we have 
  \begin{equation*}
\begin{aligned}
            \PP(\sum\limits_{T\in\cH_0} \bbI(W_T^\mathsf{rank}<- L_n)\ge 2\epsilon_n\eta_n  )\le \log(\frac{q_0 }{\epsilon_n \eta_n}) \left( \left(\frac{\beta_{\mathsf{s}} q_0^2 }{ \epsilon_n^2\eta_n^2}\right)^{\frac{1}{2}} + \left(\frac{h_n q_0}{\epsilon_n \eta_n} +(\epsilon_n\eta_n q_0)^{-\nu/2}\right)^{\frac{1}{2}}\right) \to 0,
\end{aligned}
 \end{equation*}
and consequently, by taking $\epsilon_n=\alpha/8$,
  \begin{equation}\label{eq:dd-thre-up}
\begin{aligned}
            \PP(\sum\limits_{T\in\cH} \bbI(W_T^\mathsf{rank}<- L_n)\ge\frac{3}{4}\alpha\eta_n )& \le \PP(2\sum\limits_{T\in\cH_0} \bbI(W_T^\mathsf{rank}<- L_n)\ge \frac{\alpha}{2}\eta_n )+\PP(\sum\limits_{T\in\cS} \bbI(W_T^\mathsf{rank}<- L_n)\ge \frac{\alpha}{4}\eta_n )\\
            &\le \log(\frac{q_0 }{\alpha \eta_n})\left( \left(\frac{\beta_{\mathsf{s}} q_0^2 }{ \alpha^2\eta_n^2}\right)^{\frac{1}{2}} + \left(\frac{h_n q_0}{\alpha \eta_n} +(\epsilon_n\eta_n q_0)^{-\nu/2}\right)^{\frac{1}{2}}\right)+ C h_n.
\end{aligned}
 \end{equation}
Combining \eqref{eq:dd-thre-lb} and \eqref{eq:dd-thre-up}, it is sufficient to conclude that 
$$
\PP\left(\frac{\sum\limits_{T\in\cH}\bbI\left(T: W_T^\mathsf{rank}<-L_n\right)}{\left(\sum\limits_{T\in\cH}\bbI\left(T: W_T^\mathsf{rank}>L_n\right)\right)\vee 1 } \ge \alpha \right)\le \log(\frac{q_0 }{\alpha \eta_n})\left( \left(\frac{\beta_{\mathsf{s}} q_0^2 }{ \alpha^2\eta_n^2}\right)^{\frac{1}{2}} + \left(\frac{h_n q_0}{\alpha \eta_n}+(\alpha\eta_n q_0)^{-\nu/2} \right)^{\frac{1}{2}}\right)+ C h_n,
$$
i.e., $\PP(L\le L_n)\to 1$.

\subsubsection{Power analysis}
From the discussion on the threshold control, it is clear that for any $\varepsilon$,

\begin{equation*}
    \PP(\sum\limits_{T\in\cS} \bbI(W_T^\mathsf{rank}> L_n)\le (1-\varepsilon)\eta_n ) \le \frac{\sum\limits_{T\in\cS} \PP(W_T^\mathsf{rank}< L_n)}{\varepsilon\eta_n}\le Ch_n/\varepsilon.
\end{equation*}
Under the event that $L\le L_n$, this also implies that with probability at least $1-C h_n/\varepsilon$, 

\begin{equation*}
  (1-\varepsilon)\eta_n \le   \sum\limits_{T\in\cS} \bbI(W_T^\mathsf{rank}> L_n) \le \sum\limits_{T\in\cS} \bbI(W_T^\mathsf{rank}> L).
\end{equation*}
The probability of $\{L\le L_n\}$ is lower bounded in Section \ref{sec:prof-thres}. We can, therefore, get the power:
\begin{equation*}
    \text{POWER}= \frac{\sum\limits_{T\in\cH_1} \bbI(W_T^\mathsf{rank}> L)}{q_1}\ge \frac{\sum\limits_{T\in\cS} \bbI(W_T^\mathsf{rank}> L)}{\eta_n} \cdot \frac{\eta_n}{q_1}\ge (1-\varepsilon)\frac{\eta_n}{q_1},
\end{equation*}
with probability at least:
\begin{equation*}
      1-C \log(\frac{q_0 }{\alpha \eta_n})\left( \left(\frac{\beta_{\mathsf{s}} q_0^2 }{ \alpha^2\eta_n^2}\right)^{\frac{1}{2}} + \left(\frac{h_n q_0}{\alpha \eta_n}+(\alpha\eta_n q_0)^{-\nu/2} \right)^{\frac{1}{2}}\right)- C \varepsilon^{-1} h_n.
\end{equation*}

\subsection{Proof of Proposition \ref{prop:hetero}}
This follows the proof of Theorems \ref{thm:asymp-normal}, \ref{thm:asymp-normal-varest} in Section \ref{sec:proof-clt}, and Theorem 7 in \cite{ma2024statistical}. By \eqref{eq:svd-decomp}, we have the following variance computation for the core variance part:
\begin{equation*}
\begin{aligned}
        & \operatorname{Var}\left(\sum_{i=1}^n \frac{d_1 d_2}{n}\xi_i\langle \cP_M( X_i),T\rangle \right)
        \\
        & = \E\left(\sum_{i=1}^n \frac{d_1 d_2}{n}\xi_i\langle \cP_M( X_i),T\rangle\right)^2 
       \\
       & = \E_{X_i} \E\left[\left(\sum_{i=1}^n \frac{d_1 d_2}{n}\xi_i\langle \cP_M( X_i),T\rangle\right)^2\middle| X_i\right]
       \\
       & = \frac{d_1^2 d_2^2}{n}\E \left( \langle S,X_i \rangle \right)^2 \left(\langle\cP_M( X_i),T\rangle\right)^2 =\frac{d_1^2 d_2^2}{n}\E  \left( \langle S,X_i \rangle \right)^2\left(\langle X_i,\cP_M(T) \rangle\right)^2 
       \\
       &=  \frac{d_1 d_2}{ n}\sum_{x\in \mathfrak{E}}  \langle S,x \rangle^2   \langle \cP_M(T),x \rangle^2 = \frac{d_1 d_2}{ n}\norm{S\odot \cP_M(T)}_{\tF}^2.
\end{aligned}
\end{equation*}
Therefore, the Berry-Esseen  bound \eqref{eq:BE-core}  also holds for heterogeneous $\xi_i$ with CLT standard deviation $\sqrt{\frac{d_1 d_2}{n}}\norm{S\odot \cP_M(T)}_{\tF}$.  For the higher-order term $E_T^{\higher}$, together with the variance estimation error, they can all be controlled at the level $h_n\times \sigma \norm{\cP_M(T)}_{\tF} \sqrt{d_1 d_2/n}$,  as is shown in Theorem 7, \cite{ma2024statistical}. Here $h_n$ is  the new rate described in \eqref{eq:hetero-CLT}. Therefore, we prove the claim.

\subsection{Proof of Proposition \ref{prop:weak-cor-aftscr}}\label{sec:proof-aftscr}
\begin{proof}
By definition, we can equally use the covariance matrix $\mathbf{Q}^* = (\mathbf{X}_{\cA}^{*\top}\mathbf{X}_{\cA}^{*} )^{-1}= \left( \Sigma_{\calA}^{ -\frac{1}{2}  \top} \Sigma_{\calA}^{-\frac{1}{2} }\right)^{-1}$ to derive the correlation coefficient matrix. Here in the proof, we use bold symbols like $\mathbf{Q}$ to distinguish our analysis from the $Q$ in the Algorithm \ref{alg:matrix-sda} of the main text, although they lead to the same correlation structure. We will show that, if two linear forms indexed by $T_i$, $T_j$ are weakly correlated in $ \mathbf{Q}^*$, i.e., 
$$ \abs{\frac{\mathbf{Q}^*_{ij} }{\sqrt{\mathbf{Q}^*_{ii}\mathbf{Q}^*_{jj}}} }= \frac{\abs{e_i^\top \left(\mathbf{X}_{\mathcal{A}}^{*\top} \mathbf{X}_{\mathcal{A}}^*\right)^{-1}e_j} }{\sqrt{\mathbf{Q}^*_{ii}\mathbf{Q}^*_{jj}}} \le C_1 q_n^{-\nu}, $$ 
then, in the data-driven covariance matrix $\mathbf{Q}$, they are also weakly correlated:
\begin{equation*}
    \abs{\frac{\mathbf{Q}_{ij} }{\sqrt{\mathbf{Q}_{ii}\mathbf{Q}_{jj}}} } \le C_2 q_n^{-\nu},
\end{equation*}
with probability at least $1- Cd_1^{-2}\log d_1 $.
By definition, the covariance matrix of $\wt\sfw^{(2)}$ without normalization is 
$$
\begin{aligned}
     \mathbf{Q}=&\left(\mathbf{X}_{\mathcal{A}}^{\top} \mathbf{X}_{\mathcal{A}}\right)^{-1} \mathbf{X}_{\mathcal{A}}^{\top} \mathbf{X} \Sigma  \mathbf{X}^{\top} \mathbf{X}_{\mathcal{A}}\left(\mathbf{X}_{\mathcal{A}}^{\top} \mathbf{X}_{\mathcal{A}}\right)^{-1} \\
     =& \left(\mathbf{X}_{\mathcal{A}}^{\top} \mathbf{X}_{\mathcal{A}}\right)^{-1}+\left(\mathbf{X}_{\mathcal{A}}^{\top} \mathbf{X}_{\mathcal{A}}\right)^{-1}\mathbf{X}_{\mathcal{A}}^{\top} \Delta \Sigma \mathbf{X}_{\mathcal{A}}\left(\mathbf{X}_{\mathcal{A}}^{\top} \mathbf{X}_{\mathcal{A}}\right)^{-1},
\end{aligned}
$$
where we define $\Delta \Sigma= \mathbf{X} \Sigma  \mathbf{X}^{\top}- I = \widehat{\Sigma}^{-\frac{1}{2}}(\Sigma-\widehat{\Sigma})\widehat{\Sigma}^{-\frac{1}{2}} $.
The following Lemma characterizes the precision of our covariance estimation:
\begin{Lemma}\label{lemma:cov-est-prec}
    Suppose that we use $\widehat{U}=\widehat{U}^{\init}$, $\widehat{V}=\widehat{V}^{\init}$ obtained from  $\cD_1$ to estimate $\Sigma$: 
    $$\widehat{\Sigma}= T_{\calH}(I_{d_1 d_2} - \widehat U_\perp \widehat U_\perp^\top \otimes \widehat V_\perp \widehat V_\perp^\top  ) T_{\calH}^\top. $$ 
    Then with probability at least $1-Cd_1^{-\tau}\log d_1 $, we have 
    \begin{equation}\label{eq:cov-est-prec-1}
        \norm{\Sigma^{-\frac{1}{2} }(\Sigma-\widehat{\Sigma})\Sigma^{-\frac{1}{2} } }\le CC_{\init}\frac{\kappa_T \sigma_\xi }{\lambda_{\min} }\left( \frac{\operatorname{supp}(T_{\calH} )}{\sqrt{d_2}}\wedge 1 \right)  \sqrt{\frac{ \kappa_1  d_1^2 d_2 \log d_1 }{n}}.
    \end{equation}
\end{Lemma}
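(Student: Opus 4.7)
The plan is to bound the normalized estimation error by decomposing $\Sigma-\widehat\Sigma$ via a telescoping Kronecker identity, applying Wedin's sin--theta bound to the singular subspaces of $\widehat M^{(1)}$, and then localizing to the support of $T_{\calH}$. Writing $P_\perp := U_\perp U_\perp^\top \otimes V_\perp V_\perp^\top$ and $\widehat P_\perp$ analogously, the definitions of $\Sigma$ and $\widehat \Sigma$ give
$$\Sigma-\widehat\Sigma = T_{\calH}\bigl[(\widehat U_\perp\widehat U_\perp^\top - U_\perp U_\perp^\top)\otimes V_\perp V_\perp^\top + \widehat U_\perp\widehat U_\perp^\top \otimes (\widehat V_\perp\widehat V_\perp^\top - V_\perp V_\perp^\top)\bigr]T_{\calH}^\top.$$
Wedin's theorem applied to $\widehat M^{(1)}$ from Algorithm~\ref{alg:matrix-fdr}, combined with a matrix Bernstein bound on the debiasing noise $(d_1d_2/n)\sum \xi_i X_i$, gives
$$\max\bigl\{\|\widehat U\widehat U^\top - UU^\top\|,\ \|\widehat V\widehat V^\top - VV^\top\|\bigr\} \lesssim \frac{\sigma_\xi}{\lambda_{\min}}\sqrt{\frac{d_1^2 d_2 \log d_1}{n}}$$
with probability at least $1-Cd_1^{-2}\log d_1$, under the assumed sample-size condition.

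Second, I would sandwich by $\Sigma^{-1/2}$ on each side and extract the factor $\kappa_T\sqrt{\kappa_1}$. Because $\Sigma = T_{\calH}P T_{\calH}^\top$ with $P=I-P_\perp$ the tangent-space projection, the operator $\Sigma^{-1/2}T_{\calH}$ is an isometry when restricted to $\mathrm{range}(P)$ and has norm at most $\kappa_T\sqrt{\kappa_1}$ on $\mathrm{range}(P_\perp)$; since each term in the telescoped expansion above lives (up to the perturbation itself) in $\mathrm{range}(P_\perp)$ on one side and in $\mathrm{range}(P)$ on the other, a careful tangent/normal decomposition produces only a single factor of $\kappa_T\sqrt{\kappa_1}$, yielding
$$\|\Sigma^{-1/2}(\Sigma-\widehat\Sigma)\Sigma^{-1/2}\| \lesssim \kappa_T\sqrt{\kappa_1}\cdot\|\widehat P_\perp - P_\perp\|.$$

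Third, for the sparsity improvement $(\operatorname{supp}(T_{\calH})/\sqrt{d_2}\wedge 1)$, I would localize the analysis to the column support $I_C\subset[d_2]$ of $T_{\calH}$, since only the principal sub-block of $\widehat V_\perp\widehat V_\perp^\top - V_\perp V_\perp^\top$ on $I_C\times I_C$ enters $T_{\calH}[\,\cdot\,]T_{\calH}^\top$. Combining the row-wise leave-one-out perturbation bound on $\widehat V\widehat V^\top - VV^\top$ (in the spirit of \cite{ma2018implicit,chen2020noisy,chen2019inference}) with the incoherence condition \eqref{eq:incoherence} produces an improved Frobenius-scale factor $\sqrt{|I_C|/d_2}\le\sqrt{|\operatorname{supp}(T_{\calH})|/d_2}$ in place of $1$; the minimum with $1$ covers the dense regime where the naive operator-norm bound already suffices. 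The main obstacle I anticipate is step two: avoiding the extra $\sqrt{\kappa_1}$ that the crude inequality $\|\Sigma^{-1}\|\cdot\|\Sigma-\widehat\Sigma\|$ would cost, which requires exploiting the near-orthogonality $P\widehat P_\perp P = O(\|\widehat P_\perp - P_\perp\|^2)$ so that one of the two $\Sigma^{-1/2}T_{\calH}$ factors collapses to an isometry rather than absorbing a second power of $\kappa_T\sqrt{\kappa_1}$.
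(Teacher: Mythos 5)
Your proposal is correct and follows essentially the same route as the paper's proof: reduce everything to the perturbation $\widehat U_\perp\widehat U_\perp^\top\otimes\widehat V_\perp\widehat V_\perp^\top-U_\perp U_\perp^\top\otimes V_\perp V_\perp^\top$, pay only a single factor $\kappa_T\sqrt{\kappa_1}$ by exploiting that $(I-U_\perp U_\perp^\top\otimes V_\perp V_\perp^\top)T_{\calH}^\top\Sigma^{-1/2}$ has norm at most one (the paper implements your "isometry collapse" via the projection identity $\widehat Q-Q=(\widehat Q-Q)Q+Q(\widehat Q-Q)+(\widehat Q-Q)^2$, which is equivalent to your tangent/normal block decomposition with the quadratic blocks being higher order), and obtain the support factor by localizing to $\operatorname{supp}(T_{\calH})$ and invoking the $\|\cdot\|_{2,\max}$ subspace perturbation bounds. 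The only difference is cosmetic: your Frobenius-restriction argument would yield the slightly sharper factor $\sqrt{|\operatorname{supp}(T_{\calH})|/d_2}$, whereas the paper sums row bounds and settles for $|\operatorname{supp}(T_{\calH})|/\sqrt{d_2}$; either suffices for the stated lemma.
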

For simplicity, we denote $\kappa_T' = \kappa_T\left( \frac{\operatorname{supp}(T_{\calH} )}{\sqrt{d_2}}\wedge 1 \right)  $. Lemma \ref{lemma:cov-est-prec} implies the bound of eigenvalue : $\abs{\lambda_i(\Sigma^{-\frac{1}{2} }\widehat{\Sigma} \Sigma^{-\frac{1}{2} } )-1}=o_p(1)$ for all eigenvalues. Thus, the  eigenvalues of its inverse can also be bounded by the rate in \eqref{eq:cov-est-prec-1}, i.e.,
\begin{equation*}
    \norm{\Delta \Sigma}\le CC_{\init}\frac{\kappa_T' \sigma_\xi }{\lambda_{\min} }\sqrt{\frac{ \kappa_1  d_1^2 d_2 \log d_1 }{n}}.
\end{equation*}
We then have
\begin{equation}\label{eq:Qij-decomp}
    \abs{\mathbf{Q}_{ij}- \mathbf{Q}_{ij}^*} \le \abs{e_i^\top \left( \left(\mathbf{X}_{\mathcal{A}}^{\top} \mathbf{X}_{\mathcal{A}}\right)^{-1}-\left(\mathbf{X}_{\mathcal{A}}^{*\top} \mathbf{X}_{\mathcal{A}}^*\right)^{-1}\right)e_j} + \abs{e_i^\top \left(\mathbf{X}_{\mathcal{A}}^{\top} \mathbf{X}_{\mathcal{A}}\right)^{-1}\mathbf{X}_{\mathcal{A}}^{\top} \Delta \Sigma \mathbf{X}_{\mathcal{A}}\left(\mathbf{X}_{\mathcal{A}}^{\top} \mathbf{X}_{\mathcal{A}}\right)^{-1}e_j}.
\end{equation}
Denote $\mathbf{Q}'= \left(\mathbf{X}_{\mathcal{A}}^{\top} \mathbf{X}_{\mathcal{A}}\right)^{-1} $. The first term in \eqref{eq:Qij-decomp} can be controlled by:
\begin{equation}\label{eq:}
   \begin{aligned}
      &\abs{e_i^\top \left( \left(\mathbf{X}_{\mathcal{A}}^{\top} \mathbf{X}_{\mathcal{A}}\right)^{-1}-\left(\mathbf{X}_{\mathcal{A}}^{*\top} \mathbf{X}_{\mathcal{A}}^*\right)^{-1}\right)e_j} \\
    &= \abs{e_i^\top  \left(\mathbf{X}_{\mathcal{A}}^{\top} \mathbf{X}_{\mathcal{A}}\right)^{-1}\left( \mathbf{X}_{\mathcal{A}}^{*\top} \mathbf{X}_{\mathcal{A}}^* - \mathbf{X}_{\mathcal{A}}^{\top} \mathbf{X}_{\mathcal{A}}\right)\left(\mathbf{X}_{\mathcal{A}}^{*\top} \mathbf{X}_{\mathcal{A}}^*\right)^{-1}e_j} \\
    & \le CC_{\init}\frac{\kappa_1^{1.5} \kappa_T' \sigma_{\xi} }{ \lambda_{\min}  }\cdot \sqrt{\frac{ d_1^2 d_2 \log d_1}{n}}\sqrt{\mathbf{Q}_{ii}'\mathbf{Q}_{jj}^* },
   \end{aligned}
\end{equation}
where we use the fact that 
\begin{equation*}
\begin{aligned}
      \norm{\mathbf{X}_{\mathcal{A}}^{*\top} \mathbf{X}_{\mathcal{A}}^* - \mathbf{X}_{\mathcal{A}}^{\top} \mathbf{X}_{\mathcal{A}}}&\le \norm{\widehat{\Sigma}^{-1}-\Sigma^{-1} }\le \norm{\Sigma^{-1}(\widehat{\Sigma}-\Sigma)\Sigma^{-1}  } + O\left(\norm{\widehat{\Sigma}-\Sigma }^2 \right) \\
      & \le \frac{1}{\lambda_{\min}(\Sigma) } \norm{\Sigma^{-\frac{1}{2} }(\Sigma-\widehat{\Sigma})\Sigma^{-\frac{1}{2} }} + o\left(\norm{\widehat{\Sigma}-\Sigma } \right) \\
      & \le CC_{\init}\frac{\kappa_T' \sigma_\xi }{\lambda_{\min}(\Sigma)\lambda_{\min} }\sqrt{\frac{ \kappa_1  d_1^2 d_2 \log d_1 }{n}},
\end{aligned}
\end{equation*}
by Fréchet derivative \citep{higham2008functions,al2009computing} and Lemma \ref{lemma:cov-est-prec}, and also we have
\begin{equation*}
\begin{aligned}
      \norm{\left(\mathbf{X}_{\mathcal{A}}^{*\top} \mathbf{X}_{\mathcal{A}}^*\right)^{-1}e_j}^2 & \le \frac{1}{\lambda_{\min}\left(\mathbf{X}_{\mathcal{A}}^{*\top} \mathbf{X}_{\mathcal{A}}^*\right)  } \norm{ e^\top_j \left(\mathbf{X}_{\mathcal{A}}^{*\top} \mathbf{X}_{\mathcal{A}}^*\right)^{-1}\mathbf{X}_{\mathcal{A}}^{*\top} \mathbf{X}_{\mathcal{A}}^*\left(\mathbf{X}_{\mathcal{A}}^{*\top} \mathbf{X}_{\mathcal{A}}^*\right)^{-1}e_j } \\
      & \le \lambda_{\max}(\Sigma) \mathbf{Q}_{jj}^*,
\end{aligned}
\end{equation*}

\begin{equation*}
\begin{aligned}
      \norm{\left(\mathbf{X}_{\mathcal{A}}^{\top} \mathbf{X}_{\mathcal{A}}\right)^{-1}e_i}^2 & \le \frac{1}{\lambda_{\min}\left(\mathbf{X}_{\mathcal{A}}^{*\top} \mathbf{X}_{\mathcal{A}}^*\right)  } \big| e^\top_i \left(\mathbf{X}_{\mathcal{A}}^{\top} \mathbf{X}_{\mathcal{A}}\right)^{-1}\mathbf{X}_{\mathcal{A}}^{\top} \mathbf{X}_{\mathcal{A}}\left(\mathbf{X}_{\mathcal{A}}^{\top} \mathbf{X}_{\mathcal{A}}\right)^{-1}e_i \big|  \\
      & + \frac{1}{\lambda_{\min}\left(\mathbf{X}_{\mathcal{A}}^{*\top} \mathbf{X}_{\mathcal{A}}^*\right)  } \big| e^\top_i \left(\mathbf{X}_{\mathcal{A}}^{\top} \mathbf{X}_{\mathcal{A}}\right)^{-1}\left( \mathbf{X}_{\mathcal{A}}^{*\top} \mathbf{X}_{\mathcal{A}}^* - \mathbf{X}_{\mathcal{A}}^{\top} \mathbf{X}_{\mathcal{A}}\right) \left(\mathbf{X}_{\mathcal{A}}^{\top} \mathbf{X}_{\mathcal{A}}\right)^{-1}e_i \big| \\
      & \le \lambda_{\max}(\Sigma) \mathbf{Q}_{ii}'+ CC_{\init}\frac{\kappa_1^{1.5} \kappa_T'  \sigma_{\xi} }{ \lambda_{\min}  }\cdot \sqrt{\frac{ d_1^2 d_2 \log d_1}{n}}\norm{\left(\mathbf{X}_{\mathcal{A}}^{\top} \mathbf{X}_{\mathcal{A}}\right)^{-1}e_i}^2,
\end{aligned}
\end{equation*}
which is equivalent to 
\begin{equation*}
    \begin{aligned}
       \norm{\left(\mathbf{X}_{\mathcal{A}}^{*\top} \mathbf{X}_{\mathcal{A}}^*\right)^{-1}e_j} &\le \sqrt{\lambda_{\max}(\Sigma) \mathbf{Q}_{jj}^*}\\
      \norm{\left(\mathbf{X}_{\mathcal{A}}^{\top} \mathbf{X}_{\mathcal{A}}\right)^{-1}e_i} &\le (1+c)\sqrt{\lambda_{\max}(\Sigma) \mathbf{Q}_{ii}'}.
    \end{aligned}
\end{equation*}
The second term in \eqref{eq:Qij-decomp} can be bounded given that
\begin{equation*}
    \begin{aligned}
      \abs{e_i^\top \left(\mathbf{X}_{\mathcal{A}}^{\top} \mathbf{X}_{\mathcal{A}}\right)^{-1}\mathbf{X}_{\mathcal{A}}^{\top} \Delta \Sigma \mathbf{X}_{\mathcal{A}}\left(\mathbf{X}_{\mathcal{A}}^{\top} \mathbf{X}_{\mathcal{A}}\right)^{-1}e_j} &\le \norm{\mathbf{X}_{\mathcal{A}}\left(\mathbf{X}_{\mathcal{A}}^{\top} \mathbf{X}_{\mathcal{A}}\right)^{-1}e_j} \norm{\mathbf{X}_{\mathcal{A}}\left(\mathbf{X}_{\mathcal{A}}^{\top} \mathbf{X}_{\mathcal{A}}\right)^{-1}e_i} \norm{\Delta \Sigma} \\
      & = \sqrt{\mathbf{Q}_{ii}'\mathbf{Q}_{jj}'}\norm{\Delta \Sigma}  \\
      & \le CC_{\init}\frac{\kappa_1^{1.5} \kappa_T'  \sigma_{\xi} }{ \lambda_{\min}  }\cdot \sqrt{\frac{ d_1^2 d_2 \log d_1}{n}} \sqrt{\mathbf{Q}_{ii}'\mathbf{Q}_{jj}'}.
    \end{aligned}
\end{equation*}
However, notice that, 
\begin{equation*}
    \abs{\frac{\mathbf{Q}_{ii}- \mathbf{Q}_{ii}'}{\mathbf{Q}_{ii}' } }\le CC_{\init}\frac{\kappa_1^{1.5} \kappa_T'  \sigma_{\xi} }{ \lambda_{\min}  }\cdot \sqrt{\frac{ d_1^2 d_2 \log d_1}{n}}.
\end{equation*}
We can conclude that 
\begin{equation*}
    \abs{\mathbf{Q}_{ij}-\mathbf{Q}_{ij}^*} \le CC_{\init}\frac{\kappa_1^{1.5} \kappa_T'  \sigma_{\xi} }{ \lambda_{\min}  }\cdot \sqrt{\frac{ d_1^2 d_2 \log d_1}{n}} \left(\sqrt{\mathbf{Q}_{ii}\mathbf{Q}_{jj}^* }+ \sqrt{\mathbf{Q}_{ii}\mathbf{Q}_{jj} } \right).
\end{equation*}
Setting $i=j$, we also have
\begin{equation*}
    \frac{\abs{\mathbf{Q}_{jj}-\mathbf{Q}_{ij}^*}}{ \mathbf{Q}_{jj}} \le CC_{\init}\frac{\kappa_1^{1.5} \kappa_T'  \sigma_{\xi} }{ \lambda_{\min}  }\cdot \sqrt{\frac{ d_1^2 d_2 \log d_1}{n}}\left(\sqrt{1+{\frac{\abs{\mathbf{Q}_{jj}-\mathbf{Q}_{jj}^*}}{ \mathbf{Q}_{jj}}} }+ 1 \right),
\end{equation*}
i.e.,
\begin{equation*}
      \frac{\abs{\mathbf{Q}_{jj}-\mathbf{Q}_{jj}^*}}{ \mathbf{Q}_{jj}} \le CC_{\init}\frac{\kappa_1^{1.5} \kappa_T'  \sigma_{\xi} }{ \lambda_{\min}  }\cdot \sqrt{\frac{ d_1^2 d_2 \log d_1}{n}}.
\end{equation*}
We now compare the difference of correlation coefficients:
\begin{equation*}
\begin{aligned}
       \abs{\frac{\mathbf{Q}_{ij} }{\sqrt{\mathbf{Q}_{ii}\mathbf{Q}_{jj}}} - {\frac{\mathbf{Q}_{ij}^* }{\sqrt{\mathbf{Q}_{ii}^*\mathbf{Q}_{jj}^*}} } } &\le \frac{  \abs{\mathbf{Q}_{ij}-\mathbf{Q}_{ij}^*}  }{ \sqrt{\mathbf{Q}_{ii}\mathbf{Q}_{jj}} }+ \abs{\mathbf{Q}_{ij}^*}\frac{\abs{\sqrt{\mathbf{Q}_{ii}\mathbf{Q}_{jj}}-\sqrt{\mathbf{Q}_{ii}^*\mathbf{Q}_{jj}^*}}}{\sqrt{\mathbf{Q}_{ii}\mathbf{Q}_{jj}} \sqrt{\mathbf{Q}_{ii}^*\mathbf{Q}_{jj}^*}  }\\
    & + \abs{\mathbf{Q}_{ij}-\mathbf{Q}_{ij}^*} \frac{\abs{\sqrt{\mathbf{Q}_{ii}\mathbf{Q}_{jj}}-\sqrt{\mathbf{Q}_{ii}^*\mathbf{Q}_{jj}^*}}}{\sqrt{\mathbf{Q}_{ii}\mathbf{Q}_{jj}} \sqrt{\mathbf{Q}_{ii}^*\mathbf{Q}_{jj}^*}  } \\
    &\le  CC_{\init}\frac{\kappa_1^{1.5} \kappa_T'  \sigma_{\xi} }{ \lambda_{\min}  }\cdot \sqrt{\frac{ d_1^2 d_2 \log d_1}{n}}.
\end{aligned}
\end{equation*}
If the assumption on the signal strength, i.e.,
\begin{equation*}
     \frac{\kappa_1^{1.5} \kappa_T'  \sigma_{\xi} }{ \lambda_{\min}  }\cdot \sqrt{\frac{ d_1^2 d_2 \log d_1}{n}} \lesssim \frac{1}{q^{\nu}},
\end{equation*}
is satisfied, we also have $ { \abs{\mathbf{Q}_{ij}} }/{\sqrt{\mathbf{Q}_{ii}\mathbf{Q}_{jj}}}\lesssim q^{-\nu} $, which indicates that these two linear forms are also weakly correlated in data-driven covariance matrix $\mathbf{Q}$.
\end{proof}

\subsection{Proof of Proposition \ref{prop:lasso-scr}}
\begin{proof}
We start with the decomposition of LASSO response $\mathbf{y}_1 = \mathbf{X}\mathbf{W}^{(1)}  $:
\begin{equation*}
   \mathbf{y}_1= \widehat{\Sigma}^{-\frac{1}{2}}\widehat{D}\widehat{\sfw} + \widehat{\Sigma}^{-\frac{1}{2}}\widehat{D}\widetilde{\mathbf{W}},
\end{equation*}
where $\widehat{\sfw}_i=  \frac{M_{T_i}-\theta_{T_i}}{\widehat\sigma_\xi^{(1)} \sqrt{d_1 d_2}\widehat{s}_{T_i}^{(1)} } \sqrt{n}$ is the standardized signals with variance estimation with respect to $T_i$, $\widetilde{\mathbf{W}}_i=\mathbf{W}_i^{(1)}/\widehat s_{T_i}^{(1)} -\sfw_i $ is the asymptotic normal noise. Here recall that $M_{T_i}:=\langle M, T_i\rangle$ and $\wt s_{T_i}^{(1)}=\big\|\calP_{\wt M^{(1)}_{\init}}(T_i)\big\|_{\rm F}$.

Our loading matrix is $\widehat{\Sigma}^{-\frac{1}{2}}\widehat{D}$, with
\begin{equation*}
\lambda_{\min}(\widehat{\Sigma}^{-\frac{1}{2}}\widehat{D})   = \frac{1}{\norm{\widehat{\Sigma}^{\frac{1}{2}}\widehat{D}^{-1} } }=\frac{1}{\sqrt{\norm{\widehat{D}^{-1}\widehat{\Sigma}^{}\widehat{D}^{-1} } }}\ge \frac{1}{\sqrt{\norm{\widehat{D}^{-1}{\Sigma}^{}\widehat{D}^{-1} } + \norm{\widehat{D}^{-1}\left({\Sigma}^{} -\widehat{\Sigma}^{} \right) \widehat{D}^{-1} } }}
\end{equation*}
We now denote $\rho_T=\norm{T}_{\ell_1}/\norm{T}_{\tF}$. By \cite{xia2021statistical,ma2024statistical}, we have $\abs{1- \widehat{s}_T^{(1)}/s_T}\le  C_2  \frac{\mu\rho_T }{\beta_0} \cdot \frac{\sigma_{\xi}}{\lambda_{\min}} \sqrt{\frac{\alpha_d d_1^2 d_2 \log d_1}{n}} $ with probability at least $1-Cd_1^{-\tau}\log d_1 $. Here $D:={\rm diag}(s_{T_1},\cdots, s_{T_q})$. 
Thus, the absolute value of the diagonal matrix can be controlled by:
\begin{equation}
   \abs{ D^{-1}-\widehat{D}^{-1}}\preceq C_2  \frac{\mu\rho_T }{\beta_0} \cdot \frac{\sigma_{\xi}}{\lambda_{\min}} \sqrt{\frac{\alpha_d d_1^2 d_2 \log d_1}{n}} D^{-1}.
\end{equation}
This indicates that 
\begin{equation*}
   \norm{\widehat{D}^{-1}{\Sigma}^{}\widehat{D}^{-1} }\le (1+c)\norm{D^{-1}\Sigma D^{-1} }\le \frac{3}{2}\kappa_1,
\end{equation*}
for a small $c$ as long as $  \frac{\mu\rho_T }{\beta_0} \cdot \frac{\sigma_{\xi}}{\lambda_{\min}} \sqrt{\frac{\alpha_d d_1^2 d_2 \log d_1}{n}} \to 0$; and also
\begin{equation*}
    \norm{\widehat{D}^{-1}\left({\Sigma}^{} -\widehat{\Sigma}^{} \right) \widehat{D}^{-1} } \le (1+c)  \norm{{D}^{-1}\left({\Sigma}^{} -\widehat{\Sigma}^{} \right) {D}^{-1} }\le C C_{\init}\frac{\rho_T \mu \sigma_\xi }{\beta_0 \lambda_{\min} }\sqrt{\frac{\alpha_d \kappa_1 q d_1^2 d_2 \log d_1 }{n}},
\end{equation*}
which can be derived following the same steps as Lemma \ref{lemma:cov-est-prec}. It thus gives the well-conditioning of our loading matrix in LASSO:
\begin{equation*}
    \lambda_{\min} \left(\widehat{\Sigma}^{-\frac{1}{2}}\widehat{D}\right)\ge \frac{1}{\sqrt{2\kappa_1}}.
\end{equation*}
Following a classic argument on the LASSO precision analysis \citep{van2009conditions,buhlmann2011statistics}, we have

\begin{equation*}
\begin{aligned}
  \norm{\widehat{\Sigma}^{-\frac{1}{2}}\widehat{D}\left(\wt\sfw^{(1)}  - \widehat{\sfw} \right) }^2\le & 2\left\langle \widehat{D} \widehat{\Sigma}^{-1}\widehat{D} \widetilde{\mathbf{W}},\wt\sfw^{(1)}-\widehat{ \sfw}  \right\rangle   +2\lambda \left(\norm{\widehat{ \sfw} }_{\ell_1} -\norm{\wt\sfw^{(1)}}_{\ell_1} \right) \\
  \le & \mathcal{\lambda} \norm{\wt\sfw^{(1)}-\widehat{ \sfw} }_{\ell_1} +2\lambda \left(\norm{\widehat{ \sfw} }_{\ell_1} -\norm{\wt\sfw^{(1)}}_{\ell_1} \right),
\end{aligned}
\end{equation*}
where we define $\lambda$ as the value that $\PP\left(2\norm{\widehat{D} \widehat{\Sigma}^{-1}\widehat{D} \widetilde{\mathbf{W}}}_{\infty} \ge \lambda \right)\le d_1^{-2}$. It is thus clear that 
\begin{equation*}
\begin{aligned}
  \norm{\widehat{\Sigma}^{-\frac{1}{2}}\widehat{D}\left(\wt\sfw^{(1)}  - \widehat{\sfw} \right) }^2 \le & 3\lambda \norm{\wt\sfw^{(1)}_{s}-\widehat{ \sfw}_{s} }_{\ell_1}  \le 3\lambda\sqrt{q_1} \norm{\wt\sfw^{(1)}-\widehat{ \sfw} }.
\end{aligned}
\end{equation*}
Here, we use the subscript $s$ to denote the support set of $\sfw$. Combined with the well-conditioning property of $\widehat{\Sigma}^{-\frac{1}{2}}\widehat{D}$, we have
\begin{equation*}
    \frac{1}{2\kappa_1 }\norm{\wt\sfw^{(1)}  - \widehat{\sfw} }^2\le   \norm{\widehat{\Sigma}^{-\frac{1}{2}}\widehat{D}\left(\wt\sfw^{(1)}  - \widehat{\sfw} \right) }^2 \le 3\lambda\sqrt{q_1} \norm{\wt\sfw^{(1)}-\widehat{ \sfw} },
\end{equation*}
i.e., $\norm{\wt\sfw^{(1)}  - \widehat{\sfw} }\le 6\lambda \kappa_1 \sqrt{q_1}$. Then, it amounts to determining the regularization level $\lambda$. Notice that $\widehat{D} \widetilde{\mathbf{W}}={D} \widehat{\mathbf{W}}$, where $\widehat{\mathbf{W}}_i=\mathbf{W}_i^{(1)}/s_{T_i}-  \frac{M_{T_i}-\theta_{T_i}}{\widehat \sigma_\xi s_{T_i} \sqrt{d_1 d_2} } \sqrt{n} $. Here $\widehat{\mathbf{W}}_i$ and $\widetilde{\mathbf{W}}_i$ only differ  in the sampling variance $s_{T_i}$.
We adopt the notation in the proof of Theorem \ref{thm:asymp-normal}: we define an average of i.i.d. matrix as $\widehat{Z}_1= \frac{d_1 d_2}{n} \sum_{i\in I_2 } \xi_i X_i $, and split the noise $\widehat{\mathbf{W}}=\widehat{\mathbf{W}}_1+\widehat{\mathbf{W}}_2$, where 
\begin{equation}\label{eq:noise-W-decomp}
    \widehat{\mathbf{W}}_{1i}= \frac{\left\langle \widehat{Z}_1, \cP_M(T_i) \right\rangle }{\sigma_\xi s_{T_i} \sqrt{d_1 d_2/n} }, \text{~for~each~} i\in[q].
\end{equation}
By Theorem \ref{thm:asymp-normal}, we have $\norm{\widehat{\mathbf{W}}_2}_\infty \le C h_n$, with probability at least $1-C d_1^2$. Therefore,
\begin{equation*}
    \begin{aligned}
      2\norm{\widehat{D} \widehat{\Sigma}^{-1}\widehat{D} \widetilde{\mathbf{W}}}_{\infty} & =2\norm{\widehat{D} \widehat{\Sigma}^{-1}{D} \widehat{\mathbf{W}}}_{\infty} \le 2(1+ch_n)\norm{{D} \widehat{\Sigma}^{-1}{D} \widehat{\mathbf{W}}}_{\infty} \\
      & \le 3 \left(\norm{{D} {\Sigma}^{-1}{D} \widehat{\mathbf{W}}}_{\infty} +\norm{{D} (\widehat{\Sigma}^{-1}-{\Sigma}^{-1}) {D} \widehat{\mathbf{W}}}_{\infty} \right)\\
      & \le 3\left( \norm{{D} {\Sigma}^{-1}{D} \widehat{\mathbf{W}}_1}_{\infty}+\norm{{D} {\Sigma}^{-1}{D} \widehat{\mathbf{W}}_2}_{\infty}+\norm{{D} (\widehat{\Sigma}^{-1}-{\Sigma}^{-1}) {D} \widehat{\mathbf{W}}}_{\infty} \right).
    \end{aligned}
\end{equation*}
For any $i$, it is clear that 
\begin{equation*}
    \begin{aligned}
      e_i^\top {D} {\Sigma}^{-1}{D} \widehat{\mathbf{W}}_1 = \frac{\left\langle \operatorname{Vec}(\widehat{Z}_1)^\top, e_i^\top {D} {\Sigma}^{-1}T_{\calH} (I_{d_1 d_2} - U_\perp U_\perp^\top \otimes V_\perp V_\perp^\top ) \right\rangle }{\sigma_\xi \sqrt{d_1 d_2/n} },
    \end{aligned}
\end{equation*}
with 
\begin{equation*}
    \begin{aligned}
      \E \left(e_i^\top {D} {\Sigma}^{-1}{D} \widehat{\mathbf{W}}_1 \right)^2 &= e_i^\top D \Sigma^{-1} D e_i\le \kappa_1. \\
    \end{aligned}
\end{equation*}
According to Bernstein inequality, we have

\begin{equation*}
    \frac{1}{\sqrt{n}} \abs{\frac{  e_i^\top {D} {\Sigma}^{-1}{D} \widehat{\mathbf{W}}_1}{\left(e_i^\top D \Sigma^{-1} D e_i\right)^{\frac{1}{2}}} } \le C_1 \sqrt{\frac{(\log d_1) }{n}}+ C_2\frac{\sqrt{r d_1}(\log d_1) }{n},
\end{equation*}
with probability at least $1-q^{-1}d_1^{-\tau}$.
This indicates that 
\begin{equation*}
    \PP\left(\norm{{D} {\Sigma}^{-1}{D} \widehat{\mathbf{W}}_1}_{\infty}\ge C\sqrt{\kappa_1(\log d_1)}  \right) \le d_1^{-\tau}.
\end{equation*}
If we use $\widehat U$, $\widehat V$ to estimate $\Sigma$, then a corresponding accuracy in $\|\cdot\|_{\infty}$-norm is given by:
\begin{Lemma}\label{lemma:pres-DSD} If we use $\widehat{\Sigma}$ to approximate $\Sigma$, then
\begin{equation*}
            \norm{{D} (\widehat{\Sigma}^{-1}-{\Sigma}^{-1}) {D}}_{\infty} \le C C_{\init}\left(\kappa_\infty\sqrt{\kappa_1} +\kappa_1^{1.5}\kappa_T \left( \frac{\operatorname{supp}(T_{\calH} )}{\sqrt{d_2}}\wedge 1 \right)  \right)\frac{\rho_T \mu \sigma_\xi }{\beta_0 \lambda_{\min} }\sqrt{\frac{\alpha_d q d_1^2 d_2 \log d_1 }{n}}.
\end{equation*}
Here $\|M\|_{\infty}:=\max_i\|e_i^{\top}M\|_{\ell_1}$ and $\kappa_{\infty}:=\|R^{-1}\|_{\infty}$ where $R=D^{-1}\Sigma D^{-1}$. 
\end{Lemma}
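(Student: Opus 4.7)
The plan is to combine the resolvent identity with submultiplicativity of the induced $\ell_\infty\to\ell_\infty$ operator norm, and then convert to the spectral norm where Lemma~\ref{lemma:cov-est-prec} directly applies. Starting from $\widehat\Sigma^{-1}-\Sigma^{-1}=-\Sigma^{-1}(\widehat\Sigma-\Sigma)\widehat\Sigma^{-1}$ and inserting $D^{-1}D$ on either side of $\widehat\Sigma-\Sigma$, I obtain the factorization
$$D(\widehat\Sigma^{-1}-\Sigma^{-1})D=-(D\Sigma^{-1}D)\,\bigl[D^{-1}(\widehat\Sigma-\Sigma)D^{-1}\bigr]\,(D\widehat\Sigma^{-1}D),$$
so that submultiplicativity of $\|\cdot\|_\infty$ splits the bound into three factors. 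The outer factor equals $R^{-1}$, hence $\|D\Sigma^{-1}D\|_\infty=\kappa_\infty$ by definition. The third factor is handled by a short bootstrap: write $\|D\widehat\Sigma^{-1}D\|_\infty\le\kappa_\infty+\|D(\widehat\Sigma^{-1}-\Sigma^{-1})D\|_\infty$ and use the sample-size hypothesis together with Lemma~\ref{lemma:cov-est-prec} (which guarantees the perturbation regime is small) to close the self-referential inequality at $(1+o(1))\kappa_\infty$.

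The main work is bounding the middle factor $\|D^{-1}(\widehat\Sigma-\Sigma)D^{-1}\|_\infty$, which has to be traded against the spectral-norm quantity that Lemma~\ref{lemma:cov-est-prec} controls. For this I use the inequality $\|A\|_\infty\le\sqrt{q}\,\|A\|_2$ (Cauchy--Schwarz on each row) together with the sandwich
$$D^{-1}(\widehat\Sigma-\Sigma)D^{-1}=(D^{-1}\Sigma^{1/2})\,\bigl[\Sigma^{-1/2}(\widehat\Sigma-\Sigma)\Sigma^{-1/2}\bigr]\,(\Sigma^{1/2}D^{-1}).$$
Since $\|D^{-1}\Sigma^{1/2}\|_2^2=\|D^{-1}\Sigma D^{-1}\|_2=\|R\|_2$, this yields
$$\|D^{-1}(\widehat\Sigma-\Sigma)D^{-1}\|_\infty\le\sqrt{q}\,\|R\|_2\,\|\Sigma^{-1/2}(\widehat\Sigma-\Sigma)\Sigma^{-1/2}\|_2,$$
into which Lemma~\ref{lemma:cov-est-prec} substitutes directly. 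Using $\|R\|_2\lesssim\kappa_1$ (which follows from the alignment condition~\eqref{eq:alignment} since the $s_{T_i}$ are comparable, so $\lambda_{\max}(\Sigma)/\min_is_{T_i}^2\lesssim\kappa_1$), the middle factor is the $\sqrt{q\kappa_1}$-scaled version of the spectral bound. Multiplying the three factors and absorbing the $\alpha_d=d_1/d_2$ (which appears when rewriting the Lemma~\ref{lemma:cov-est-prec} bound in terms of $\beta_T/\beta_0$ rather than $\kappa_T$) produces the first summand $\kappa_\infty\sqrt{\kappa_1}\cdot(\cdots)$ of the claimed bound. The second summand $\kappa_1^{1.5}\kappa_T(\cdot)\cdot(\cdots)$ arises from a parallel branch in which the third factor $\|D\widehat\Sigma^{-1}D\|_\infty$ is bounded instead via $\sqrt{q}\,\|R^{-1}\|_2$ (i.e., through the smallest eigenvalue of $R$ rather than through $\kappa_\infty$), which is the controlling regime when $\kappa_T$ is large relative to $\kappa_\infty$.

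The main obstacle is the clash of norms: Lemma~\ref{lemma:cov-est-prec} lives in spectral norm while the target lives in $\ell_\infty\to\ell_\infty$ operator norm, and the cheap conversion $\|A\|_\infty\le\sqrt{q}\|A\|_2$ is what forces the $\sqrt{q}$ (eventually $\sqrt{\alpha_d q}$) in the final bound. Care is needed to avoid accumulating spurious $\kappa_1$ or $\kappa_\infty$ factors when the sandwich is applied; I must use $\|R^{-1}\|_\infty$ (rather than $\sqrt{q}\|R^{-1}\|_2$) for the outer factor precisely to save one such factor. A secondary subtlety is verifying that the bootstrap closing $\|D\widehat\Sigma^{-1}D\|_\infty\le(1+o(1))\kappa_\infty$ actually goes through under the stated hypotheses — this needs the middle-factor bound to be genuinely small in $\ell_\infty$-operator norm, not merely in spectral norm, which is exactly what the $\sqrt{q}$-conversion step above delivers.
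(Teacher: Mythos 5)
Your single submultiplicative factorization cannot deliver the stated bound, because the right-hand side of the lemma is a \emph{sum} of two terms, each carrying only one of $\kappa_\infty$ and $\kappa_T$, whereas your chain $D(\widehat\Sigma^{-1}-\Sigma^{-1})D=-(D\Sigma^{-1}D)\,[D^{-1}(\widehat\Sigma-\Sigma)D^{-1}]\,(D\widehat\Sigma^{-1}D)$ necessarily produces a \emph{product}: the two outer factors each contribute $\kappa_\infty$ (after your bootstrap), and your middle factor, routed through Lemma~\ref{lemma:cov-est-prec}, contributes $\sqrt{q}\,\|R\|\,\|\Sigma^{-1/2}(\widehat\Sigma-\Sigma)\Sigma^{-1/2}\|\lesssim \sqrt{q}\,\kappa_1^{3/2}\kappa_T(\cdots)$. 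The best this route gives is of order $\kappa_\infty^{2}\kappa_1^{3/2}\kappa_T$, strictly weaker than the claim since all these factors are at least $1$; and your ``parallel branch'' replacing the third factor by $\sqrt{q}\,\|R^{-1}\|$ only substitutes $\sqrt{q}\,\kappa_1$ for one $\kappa_\infty$, still leaving a product. More fundamentally, the first summand $\kappa_\infty\sqrt{\kappa_1}$ of the target contains neither $\kappa_T$ nor the factor $\operatorname{supp}(T_{\calH})/\sqrt{d_2}\wedge 1$, so it is unreachable by any argument that funnels the perturbation exclusively through Lemma~\ref{lemma:cov-est-prec}, which always carries $\kappa_T$. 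The constants $\beta_T\mu/\beta_0$ and $\alpha_d$ likewise do not arise from ``rewriting'' that lemma.

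The paper's proof avoids both obstacles. It first linearizes via the Fr\'echet derivative, reducing to $D\Sigma^{-1}E\Sigma^{-1}D$ with $E=\Sigma-\widehat\Sigma$, so both outer factors involve $\Sigma^{-1}$ and no self-referential bound on $D\widehat\Sigma^{-1}D$ is needed. The key step you are missing is then to substitute the structural three-term decomposition \eqref{eq:decomp-detS} of $E=T_{\calH}(\widehat U_\perp\widehat U_\perp^\top\otimes\widehat V_\perp\widehat V_\perp^\top-U_\perp U_\perp^\top\otimes V_\perp V_\perp^\top)T_{\calH}^\top$ and to pair \emph{different} norm splittings with different terms. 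The dominant term is bounded by $\|R^{-1}\|_\infty\cdot\sqrt{q}\,\|D^{-1}T_{\calH}(\widehat U_\perp\widehat U_\perp^\top\otimes\widehat V_\perp\widehat V_\perp^\top-U_\perp U_\perp^\top\otimes V_\perp V_\perp^\top)\|_{2,\max}\cdot\sqrt{\kappa_1}$, where the middle row-wise quantity is controlled by $(\beta_T\mu\sigma_\xi/\beta_0\lambda_{\min})\sqrt{\alpha_d d_1^2d_2\log d_1/n}$ using the entrywise subspace-perturbation bounds of Lemma~\ref{lemma:subspace-perturb-max} together with the alignment condition \eqref{eq:alignment} --- this is the true source of $\beta_T\mu/\beta_0$ and $\alpha_d$, and it bypasses $\kappa_T$ entirely. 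Only the remaining terms are bounded in spectral norm as in Lemma~\ref{lemma:cov-est-prec}, which is where $\kappa_1^{3/2}\kappa_T(\operatorname{supp}(T_{\calH})/\sqrt{d_2}\wedge1)$ enters. Without this term-by-term treatment your argument establishes only a genuinely weaker inequality than the one stated.
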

Notice that, since $\widehat{\mathbf{W}}_{1i}$ is standardized, Bernstein inequality also gives the bound:
\begin{equation*}
    \norm{\widehat{\mathbf{W}}_1  }_\infty \le C\sqrt{\log d_1 },
\end{equation*}
with probability at least $1-d_1^{-\tau}$. This indicates that, with probability at least $1-C d^{-\tau}_1$, we have
\begin{equation*}
      2\norm{\widehat{D} \widehat{\Sigma}^{-1}\widehat{D} \widetilde{\mathbf{W}}}_{\infty} \le C \left( \sqrt{\kappa_1(\log d_1)} +\kappa_\infty h_n \right) \le C \sqrt{\kappa_1(\log d_1)},
\end{equation*}
as long as $\left(\kappa_\infty h_n \right)\vee \left(\left(\kappa_\infty\sqrt{\kappa_1} +\kappa_1^{1.5}\kappa_T \left( \frac{\operatorname{supp}(T_{\calH} )}{\sqrt{d_2}}\wedge 1 \right)  \right)\frac{\rho_T \mu \sigma_\xi }{\beta_0 \lambda_{\min} }\sqrt{\frac{\alpha_d q d_1^2 d_2 \log d_1 }{n}}\right)\le c \sqrt{\kappa_1}$ for some small constant $c$. Here we use the fact $\norm{{D} {\Sigma}^{-1}{D} \widehat{\mathbf{W}}_2}_{\infty} \le C\kappa_\infty h_n$.This leads to the error bound of $\wt\sfw^{(1)}$:

\begin{equation*}
   \norm{\wt\sfw^{(1)}  - \widehat{\sfw} }_\infty \le \norm{\wt\sfw^{(1)}  - \widehat{\sfw} }\le 6\lambda \kappa_1 \sqrt{q_1}\le C \kappa_1^{1.5}\sqrt{q_1(\log d_1 )}.
\end{equation*}
Since for each $i$, 
$$\abs{\widehat{\sfw}_i-\sfw_i}\le \left(\frac{C_1 \tau \log d_1}{\sqrt{n}}+C_2 \gamma_n^2 +C_3 \mu \frac{\|T\|_{\ell_1}}{\|T\|_\tF \beta_0} \cdot \frac{\sigma_{\xi}}{\lambda_{\min}} \sqrt{\frac{\tau \alpha_d d_1^2 d_2 \log d_1}{n}}\right)\abs{\sfw_i} \le C h_n \abs{\sfw_i}, $$
we finish the proof.
\end{proof}

\subsection{Proof of Proposition \ref{prop:OLS-normal}}
\begin{proof}

We proceed to discuss the asymptotic normality of each $e_i^{\top}\left(\mathbf{X}_{\mathcal{A}}^{\top} \mathbf{X}_{\mathcal{A}}\right)^{-1} \mathbf{X}_{\mathcal{A}}^{\top} \mathbf{y}_2$: since $\mathbf{y}_2 = \mathbf{X}\mathbf{W}^{(2)}  $, with
\begin{equation*}
   \mathbf{y}_2= \widehat{\Sigma}^{-\frac{1}{2}}{D}\widehat{\sfw} + \widehat{\Sigma}^{-\frac{1}{2}}{D}\widehat{\mathbf{W}},
\end{equation*}
where, with a slight abuse of notation, we define $\widehat{\sfw}_i=  \frac{M_T-\theta_T}{\widehat\sigma_\xi \sqrt{d_1 d_2}{s}_T } \sqrt{n}$ is the standardized signals with variance estimation, $\widehat{\mathbf{W}}_i=\mathbf{W}_i/s_{T_i} -\widehat\sfw_i $ is the asymptotic normal noise. From the proof of Theorem \ref{thm:asymp-normal}, it is clear that $\widehat{\sfw}_i$ is close enough to ${\sfw}_i$.
Notice that, here, we do not assume $\cH_1\subseteq \cA$. For any $i\in\cA$ , we have

$$
\begin{aligned}
 & e_i^{\top}\left(\mathbf{X}_{\mathcal{A}}^{\top} \mathbf{X}_{\mathcal{A}}\right)^{-1} \mathbf{X}_{\mathcal{A}}^{\top} \mathbf{y}_2  = e_i^{\top}\left(\mathbf{X}_{\mathcal{A}}^{\top} \mathbf{X}_{\mathcal{A}}\right)^{-1} \mathbf{X}_{\mathcal{A}}^{\top} \left[ \mathbf{X}_{\mathcal{A}}, \mathbf{X}_{\mathcal{A}^c} \right] D (\widehat{\sfw}+ \widehat{\mathbf{W}}) \\
 &= s_{T_i} \widehat\sfw_i + e_i^{\top}\left(\mathbf{X}_{\mathcal{A}}^{\top} \mathbf{X}_{\mathcal{A}}\right)^{-1} \mathbf{X} D\widehat{\mathbf{W}} + e_i^\top \left(\mathbf{X}_{\mathcal{A}}^{\top} \mathbf{X}_{\mathcal{A}}\right)^{-1} \mathbf{X}_{\mathcal{A}}^{\top}\mathbf{X}_{\mathcal{A}^c}D_{\cA^c}\widehat\sfw_{\cA^c} \ \\
 &= s_{T_i}\widehat\sfw_i+ e_i^{\top}\left(\mathbf{X}_{\mathcal{A}}^{\top} \mathbf{X}_{\mathcal{A}}\right)^{-1} \mathbf{X} D\left(\widehat{\mathbf{W}}_1+\widehat{\mathbf{W}}_2\right)+e_i^\top \left(\mathbf{X}_{\mathcal{A}}^{\top} \mathbf{X}_{\mathcal{A}}\right)^{-1} \mathbf{X}_{\mathcal{A}}^{\top}\mathbf{X}_{\mathcal{A}^c}D_{\cA^c}\widehat \sfw_{\cA^c},
\end{aligned}
$$
where the noise decomposition $\widehat{\mathbf{W}}=\widehat{\mathbf{W}}_1+\widehat{\mathbf{W}}_2$ is the same as \eqref{eq:noise-W-decomp}. 
If $T_i\in \cA \cap \cH_0$, we have $\sfw_i =0$, thus $e_i^{\top}\left(\mathbf{X}_{\mathcal{A}}^{\top} \mathbf{X}_{\mathcal{A}}\right)^{-1} \mathbf{X}_{\mathcal{A}}^{\top} \mathbf{y}_2 = e_i^{\top}\left(\mathbf{X}_{\mathcal{A}}^{\top} \mathbf{X}_{\mathcal{A}}\right)^{-1} \mathbf{X} D\left(\widehat{\mathbf{W}}_1+\widehat{\mathbf{W}}_2\right)$. We investigate the following terms: (i) the asymptotic normality of $e_i^{\top}\left(\mathbf{X}_{\mathcal{A}}^{\top} \mathbf{X}_{\mathcal{A}}\right)^{-1} \mathbf{X} D\widehat{\mathbf{W}}_1$, (ii) the vanishing of $e_i^{\top}\left(\mathbf{X}_{\mathcal{A}}^{\top} \mathbf{X}_{\mathcal{A}}\right)^{-1} \mathbf{X} D\widehat{\mathbf{W}}_2$, and (iii) the bias introduced by inconsistent screening $e_i^\top \left(\mathbf{X}_{\mathcal{A}}^{\top} \mathbf{X}_{\mathcal{A}}\right)^{-1} \mathbf{X}_{\mathcal{A}}^{\top}\mathbf{X}_{\mathcal{A}^c}D_{\cA^c}\widehat\sfw_{\cA^c}$.

\noindent \textbf{(i)} the asymptotic normality of $\widehat{\beta}_i:=e_i^{\top}\left(\mathbf{X}_{\mathcal{A}}^{\top} \mathbf{X}_{\mathcal{A}}\right)^{-1} \mathbf{X} D\widehat{\mathbf{W}}_1$.
Conditional on $\cD_0$ and $\cD_1$, $\widehat{\beta}_i$ can be viewed as sum of i.i.d. independent random variables:

\begin{equation}
   \widehat{\beta}_i= \frac{\left\langle \Vect(\widehat{Z}_1),  e_i^\top \left(\mathbf{X}_{\mathcal{A}}^{\top} \mathbf{X}_{\mathcal{A}}\right)^{-1} \mathbf{X} T_{\calH} \left(I_{d_1 d_2} -  U_\perp U_\perp^\top \otimes V_\perp V_\perp^\top\right) \right\rangle }{\sigma_\xi  \sqrt{d_1 d_2/n} }.
\end{equation}
The variance of $\widehat{\beta}_i$ is given by 
\begin{equation*}
\begin{aligned}
     \E  \widehat{\beta}_i^2& = \norm{e_i^\top \left(\mathbf{X}_{\mathcal{A}}^{\top} \mathbf{X}_{\mathcal{A}}\right)^{-1} \mathbf{X} T_{\calH} \left(I_{d_1 d_2} -  U_\perp U_\perp^\top \otimes V_\perp V_\perp^\top\right)}^2 \\
    &=  e_i^\top\left(\mathbf{X}_{\mathcal{A}}^{\top} \mathbf{X}_{\mathcal{A}}\right)^{-1} \mathbf{X}_{\mathcal{A}}^{\top} \mathbf{X} \Sigma  \mathbf{X}^{\top} \mathbf{X}_{\mathcal{A}}\left(\mathbf{X}_{\mathcal{A}}^{\top} \mathbf{X}_{\mathcal{A}}\right)^{-1} e_i=\mathbf{Q}_{ii}.
\end{aligned}
\end{equation*}
The third-order moment of each component is also derived by 

\begin{equation*}
\begin{aligned}
     \E & \abs{\sqrt{d_1 d_2/n}\frac{\left\langle \Vect(\xi_i X_i),  e_i^\top \left(\mathbf{X}_{\mathcal{A}}^{\top} \mathbf{X}_{\mathcal{A}}\right)^{-1} \mathbf{X} T_{\calH} \left(I_{d_1 d_2} -  U_\perp U_\perp^\top \otimes V_\perp V_\perp^\top\right) \right\rangle }{\sigma_\xi \mathbf{Q}_{ii}^{\frac{1}{2}}  }}^3 \\
     & \le C\frac{\sqrt{d_1 d_2}}{n^{1.5} }\frac{ \abs{\left\langle \Vect( X_i),  e_i^\top \left(\mathbf{X}_{\mathcal{A}}^{\top} \mathbf{X}_{\mathcal{A}}\right)^{-1} \mathbf{X} T_{\calH} \left(I_{d_1 d_2} -  U_\perp U_\perp^\top \otimes V_\perp V_\perp^\top\right) \right\rangle}  }{\mathbf{Q}_{ii}^{\frac{1}{2}} } \\
     & = C\frac{\sqrt{d_1 d_2}}{n^{1.5} }\frac{ \abs{\left\langle \Vect( X_i)\left(I_{d_1 d_2} -  U_\perp U_\perp^\top \otimes V_\perp V_\perp^\top\right) ,  e_i^\top \left(\mathbf{X}_{\mathcal{A}}^{\top} \mathbf{X}_{\mathcal{A}}\right)^{-1} \mathbf{X} T_{\calH} \left(I_{d_1 d_2} -  U_\perp U_\perp^\top \otimes V_\perp V_\perp^\top\right) \right\rangle}  }{\mathbf{Q}_{ii}^{\frac{1}{2}} } \\
     &\le C\frac{\sqrt{d_1 d_2}}{n^{1.5}} \norm{\Vect( X_i)\left(I_{d_1 d_2} -  U_\perp U_\perp^\top \otimes V_\perp V_\perp^\top\right)  }_\tF \\
    &\le  C \frac{\mu \sqrt{r d_1}}{n^{1.5}},
\end{aligned}
\end{equation*}
where we use the incoherence condition in the last inequality. It is thus suggested that:
\begin{equation}
    \abs{ \PP\left(\frac{\widehat{\beta}_i}{\sqrt{\mathbf{Q}_{ii}}  } \le t \middle| \cD_0,\cD_1\right) - \Phi(t) } \le C\mu \sqrt{\frac{r d_1}{n}}.
\end{equation}

\noindent \textbf{(ii)} the vanishing of $\Delta{\beta}_i:= e_i^{\top}\left(\mathbf{X}_{\mathcal{A}}^{\top} \mathbf{X}_{\mathcal{A}}\right)^{-1} \mathbf{X} D\widehat{\mathbf{W}}_2$. By the proof of Theorem \ref{thm:asymp-normal}, we have $\norm{\widehat{\mathbf{W}}_2}_\infty \le C h_n$, with probability at least $1-Cd_1^{-\tau}\log d_1 $. Thus, by writing $\mathbf{X}=[\mathbf{X}_{\mathcal{A}},\mathbf{X}_{\mathcal{\cA}^c}]$, we have
\begin{equation*}
    \frac{\abs{\Delta{\beta}_i }}{\sqrt{\mathbf{Q}_{ii}} } = \frac{\abs{e_i^{\top}\left(\mathbf{X}_{\mathcal{A}}^{\top} \mathbf{X}_{\mathcal{A}}\right)^{-1} \mathbf{X} D\widehat{\mathbf{W}}_2}}{\sqrt{\mathbf{Q}_{ii}} } \le \frac{\abs{s_{T_i }\widehat{\mathbf{W}}_{2i}} + \abs{e_i^\top \left(\mathbf{X}_{\mathcal{A}}^{\top} \mathbf{X}_{\mathcal{A}}\right)^{-1} \mathbf{X}_{\mathcal{A}}^{\top}\mathbf{X}_{\mathcal{A}^c}D_{\cA^c}\widehat{\mathbf{W}}_{2,\cA^c} }  }{\sqrt{\mathbf{Q}_{ii}}}. \\
\end{equation*}
Using the definition of $C_\infty$, it follows that
\begin{equation*}
    \frac{\abs{\Delta{\beta}_i }}{ \sqrt{\mathbf{Q}_{ii}} } \le C C_\infty h_n,
\end{equation*}
uniformly for all $i$ with probability at least $1-C\log d_1 d_1^{-\tau}$. 

\noindent \textbf{(iii)} the bias $e_i^\top \left(\mathbf{X}_{\mathcal{A}}^{\top} \mathbf{X}_{\mathcal{A}}\right)^{-1} \mathbf{X}_{\mathcal{A}}^{\top}\mathbf{X}_{\mathcal{A}^c}D_{\cA^c}\widehat\sfw_{\cA^c}$ can be surely controlled by 
\begin{equation*}
    \frac{\abs{e_i^\top \left(\mathbf{X}_{\mathcal{A}}^{\top} \mathbf{X}_{\mathcal{A}}\right)^{-1} \mathbf{X}_{\mathcal{A}}^{\top}\mathbf{X}_{\mathcal{A}^c}D_{\cA^c}\widehat\sfw_{\cA^c} } }{ \sqrt{\mathbf{Q}_{ii}} } \le C\cdot C_\infty (h_n+\norm{\sfw_{\cA^c} }_{\infty}).
\end{equation*}
Then, combing (i), (ii), and (iii) by the Lipschiz property of $\Phi(t)$, we have
\begin{equation*}
       \abs{ \PP\left( \frac{\wt\sfw^{(2)}_i}{\sqrt{\mathbf{Q}_{ii}} }\le t \middle| \cD_0, \cD_1  \right)-\Phi (t)}\le C \cdot C_\infty ( h_n+\norm{\sfw_{\cA^c} }_{\infty}) + C\mu \sqrt{\frac{r d_1}{n}} \le C \cdot C_\infty( h_n+\norm{\sfw_{\cA^c} }_{\infty}).
\end{equation*}

\end{proof}

\subsection{Proof of Theorem \ref{thm:matrix-fdr-strong}} \label{sec:proof-fdr-strong}
\begin{proof}
In the following proof, we write $h_n+\norm{\sfw_{\cA^c} }_{\infty}$ as $h_n$ for notational simplicity.  The proof essentially follows the proof of Theorem \ref{thm:weak-cor-fdr}. Define the expected false rejection:
\begin{equation*}
    \widetilde{G}(t)= \frac{\sum_{{T_i}\in\cH_0 \cap \cA } \PP( {\wt\sfw_i^{(1)}\frac{\sqrt{\mathbf{Q}_{ii}}}{ \widehat{\sigma}_{w i} } } Z > t |\cD_1 )   }{q_{0n}},
\end{equation*}
where $\wt\sigma_{wi}^2=e_i^{\top}\big(\mathbf{X}_{\calA}^{\top}\mathbf{X}_{\calA}\big)^{-1}e_i$ is defined in Algorithm~\ref{alg:matrix-sda-practical}. 
Denote $$L_n'= \widetilde{G}^{-1}\left( \frac{\epsilon_n \eta_n'}{q_{0n} }\right)= \inf\left\{t:\widetilde{G}(t)\le \frac{\epsilon_n \eta_n'}{q_{0n} }\right\},$$ where $\epsilon_n$ is a rate to be specified later, and $q_{0n}=\abs{\cA \cap \cH_0 }$. We can rewrite Lemma \ref{lemma:pop-wrank}, \ref{lemma:weak-cov}, and \ref{lemma:conv-prob} as:

\begin{Lemma}\label{lemma:pop-wrank-sda} Conditional on $E_0$ and $\cD_1$, we have
\begin{equation*}
    \sup\limits_{0\le t \le L_n} \abs{\frac{ \sum_{{T_i}\in\cH_0 \cap \cA }\PP(\wt\sfw_i^\mathsf{rank} >t ) }{q_{0n} \widetilde{G} (t) } -1 }\le C_3 \frac{C_\infty h_n q_{0n} }{\epsilon_n \eta_n' }.
\end{equation*}
\end{Lemma}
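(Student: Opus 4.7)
The plan is to reduce the statement to Proposition \ref{prop:OLS-normal} by exploiting the conditional independence structure built into Algorithm \ref{alg:matrix-sda-practical}. Fix $T_i\in\cH_0\cap\calA$ and condition throughout on $(\cD_0,\cD_1)$. Under this conditioning, the screening set $\calA$, the plug-in matrix $\wt R$, the quantities $\wt\sfw^{(1)}_i$, $\mathbf{Q}_{ii}$ and $\widehat{\sigma}_{wi}$ are all deterministic, while only $\wt\sfw^{(2)}_i$ remains random through the fresh split $\cD_2$. Writing $a_i:=\wt\sfw_i^{(1)}\sqrt{\mathbf{Q}_{ii}}/\widehat{\sigma}_{wi}$, we therefore have the identity
\[
\wt\sfw_i^{\mathsf{Rank}}=a_i\cdot\frac{\wt\sfw_i^{(2)}}{\sqrt{\mathbf{Q}_{ii}}},
\]
so that the randomness of $\wt\sfw_i^{\mathsf{Rank}}$ is entirely governed by the standardized OLS statistic $\wt\sfw_i^{(2)}/\sqrt{\mathbf{Q}_{ii}}$.

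First I would apply Proposition \ref{prop:OLS-normal} (recalling that we absorb $\|\sfw_{\calA^c}\|_\infty$ into $h_n$ as the proof notes) to obtain, uniformly in $t\in\RR$,
\[
\Bigl|\PP\bigl(\wt\sfw_i^{(2)}/\sqrt{\mathbf{Q}_{ii}}\le t\,\bigm|\,\cD_0,\cD_1\bigr)-\Phi(t)\Bigr|\le C\cdot C_\infty h_n.
\]
Next, for any fixed $a_i\in\RR$ and threshold $t>0$, the event $\{a_i\cdot U>t\}$ can be rewritten as $\{U>t/a_i\}$ or $\{U<t/a_i\}$ depending on $\mathrm{sign}(a_i)$; applying the Kolmogorov bound above with this translated threshold yields
\[
\Bigl|\PP\bigl(\wt\sfw_i^{\mathsf{Rank}}>t\,\bigm|\,\cD_0,\cD_1\bigr)-\PP\bigl(a_i Z>t\,\bigm|\,\cD_0,\cD_1\bigr)\Bigr|\le 2C\cdot C_\infty h_n,
\]
where $Z\sim\cN(0,1)$ is independent of $(\cD_0,\cD_1)$. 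By definition of $\widetilde{G}(t)$, the quantity $\PP(a_iZ>t\mid\cD_0,\cD_1)$ is exactly the summand appearing in $q_{0n}\widetilde{G}(t)$.

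Averaging the previous display over the $q_{0n}$ indices $T_i\in\cH_0\cap\calA$ gives
\[
\Bigl|\sum_{T_i\in\cH_0\cap\calA}\PP(\wt\sfw_i^{\mathsf{Rank}}>t)-q_{0n}\widetilde{G}(t)\Bigr|\le 2C\cdot C_\infty h_n\,q_{0n}.
\]
Finally, for $t\in[0,L_n']$ the definition of $L_n'$ guarantees the lower bound $\widetilde{G}(t)\ge\epsilon_n\eta_n'/q_{0n}$, so dividing through by $q_{0n}\widetilde{G}(t)$ produces the desired uniform bound of order $C_\infty h_n q_{0n}/(\epsilon_n\eta_n')$. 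The main technical point is ensuring that the Berry--Esseen-type approximation in Proposition \ref{prop:OLS-normal} survives being applied simultaneously for all thresholds $t$ and all null indices in $\calA$; this is automatic since the bound is a Kolmogorov-distance bound uniform in $t$ and pointwise in $i$, so summation costs only a factor $q_{0n}$, which is exactly what appears on the right-hand side.
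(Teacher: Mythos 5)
Your proof is correct and follows essentially the same route as the paper: the paper omits the proof of this lemma, stating it is identical to that of Lemma \ref{lemma:pop-wrank}, namely conditioning on $(\cD_0,\cD_1)$ so that the first-split quantity is deterministic, invoking the conditional asymptotic normality of the second-split statistic (here Proposition \ref{prop:OLS-normal} in place of Proposition \ref{prop:asy-w1}) to compare $\PP(\wt\sfw_i^{\mathsf{Rank}}>t)$ with the Gaussian surrogate defining $\widetilde G$, and then dividing by the lower bound $q_{0n}\widetilde G(t)\ge \epsilon_n\eta_n'$ valid on $[0,L_n']$. Your write-up in fact supplies the details the paper leaves implicit, and they check out.
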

Here we use $\wt\sfw_i^\mathsf{rank}$ to indicate the combined statistics $\wt\sfw_{T_i}^\mathsf{rank}$
\begin{Lemma}[Weak dependency of null features]\label{lemma:weak-cov-sda}
        Conditional on $\cD_1$, 
\begin{equation}
\begin{aligned}
          \sup_{0\le t\le L_n' } \frac{\sum_{(T_i,T_j)\in \cH_{0\cA,\text{weak} }^2 } \abs{\operatorname{cov}(\bbI(\wt\sfw_i^\mathsf{rank} >t),\bbI(\wt\sfw_i^\mathsf{rank}  >t)) }}{ q_{0n}^2 \widetilde G^2(t)}\le  C_1 \frac{C_\infty h_n q_{0n} }{\epsilon_n \eta_n' }+C_2 \frac{1}{ \left(\epsilon_n\eta_n' q_{0n}\right)^{v/2} }.
\end{aligned}
\end{equation}
\end{Lemma}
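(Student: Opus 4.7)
The plan is to mimic the proof of Lemma \ref{lemma:weak-cov}, but with two key substitutions: (a) replace the bivariate Berry--Esseen for $(W_{T_i}^{(2)},W_{T_j}^{(2)})$ by a bivariate CLT for the post-screening OLS coordinates $(\wt\sfw_i^{(2)},\wt\sfw_j^{(2)})$; and (b) replace the asymptotic correlation $\rho_{T_i,T_j}$ by its post-screening analogue $\rho^{\mathrm{scr}}_{ij}:=Q_{ij}/\sqrt{Q_{ii}Q_{jj}}$. Conditional on $\cD_0$ and $\cD_1$, the quantities $\wt\sfw_i^{(1)}$ and $\widehat{\sigma}_{wi}$ are frozen (they are measurable with respect to $\sigma(\cD_0,\cD_1)$), so all randomness in $\wt\sfw_i^{\mathsf{Rank}}$ flows through $\wt\sfw_i^{(2)}$, which from the proof of Proposition~\ref{prop:OLS-normal} admits the representation
\[
\wt\sfw_i^{(2)}=\widehat\beta_i+\Delta\beta_i+\text{bias}_i,
\]
where $\widehat\beta_i$ is a sum of i.i.d.\ linear functionals of the trace-regression noise $\{\xi_kX_k\}_{k\in I_2}$ and the remainder is deterministically bounded by $C\cdot C_\infty(h_n+\|\sfw_{\cA^c}\|_\infty)$.

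In the first step I would apply the multivariate Berry--Esseen inequality of \cite{raivc2019multivariate} to the bivariate sum $(\widehat\beta_i,\widehat\beta_j)/(\sqrt{Q_{ii}},\sqrt{Q_{jj}})$; the covariance of this pair is exactly $\rho^{\mathrm{scr}}_{ij}$, and the third-moment computation in the proof of Proposition~\ref{prop:OLS-normal} (using incoherence to bound the summand entrywise by $C\mu\sqrt{rd_1}/n$) gives an error of order $\mu\sqrt{rd_1/n}\le h_n$. Absorbing the deterministic remainder via a Lipschitz/tube argument (as done around \eqref{eq:remainder}) yields the bivariate Gaussian approximation at rate $C\cdot C_\infty(h_n+\|\sfw_{\cA^c}\|_\infty)$. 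Plug this into the three-term split used in the proof of Lemma~\ref{lemma:weak-cov}: for any $t\le L_n'$ and any null pair $(T_i,T_j)$,
\[
|\operatorname{cov}(\bbI(\wt\sfw_i^{\mathsf{Rank}}>t),\bbI(\wt\sfw_j^{\mathsf{Rank}}>t))|\le \mathrm{I}+\mathrm{II}+\mathrm{III},
\]
where the first two terms are controlled by $CC_\infty h_n(\PP(\wt\sfw_i^{(1)}Z>t)+\PP(\wt\sfw_j^{(1)}Z>t))$ as in \eqref{eq:bound-w-omega}--\eqref{eq:bound-w-omega-2}, and $\mathrm{III}$ is the pure Gaussian copula discrepancy $|\PP(\omega_i>t_i,\omega_j>t_j)-\Phi(-t_i)\Phi(-t_j)|$.

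For term $\mathrm{III}$, I invoke the copula identity via $\rho^{\mathrm{scr}}_{ij}$: for $(T_i,T_j)\in\cH_{0\cA,\mathrm{weak}}^2$, the definition gives $|\rho^{\mathrm{scr}}_{ij}|\le c|\cA|^{-\nu}$ in terms of $Q^{\ast}$, and Proposition~\ref{prop:weak-cor-aftscr} transfers this to the data-driven $Q$ under the assumed lower bound on $\lambda_{\min}$. The same Mill's-ratio/Gaussian-copula chain as in \eqref{eq:bound-w-omega-3} then bounds $\mathrm{III}$ by $C\rho^{\mathrm{scr}}_{ij}[\Phi(-t_i)\Phi(-t_j)]^{(1-\nu)(1-c\rho^{\mathrm{scr}}_{ij})}$. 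Summing over $(T_i,T_j)\in\cH_{0\cA,\mathrm{weak}}^2$, using $\PP(\wt\sfw_i^{(1)}Z>t)\le\widetilde G(t)$ after averaging, the Cauchy--Schwarz step $\sum_i x_i^{(1-\nu)(1-c\rho)}\le q_{0n}^{}\widetilde G(t)^{(1-\nu)(1-c\rho)}$, and the definition $\widetilde G(t)\ge \epsilon_n\eta_n'/q_{0n}$ at $t\le L_n'$, we obtain
\[
\sup_{0\le t\le L_n'}\frac{\sum_{(T_i,T_j)\in\cH_{0\cA,\mathrm{weak}}^2}|\operatorname{cov}(\cdot,\cdot)|}{q_{0n}^2\widetilde G^2(t)}\le C_1\frac{C_\infty h_n q_{0n}}{\epsilon_n\eta_n'}+C_2(\epsilon_n\eta_n'q_{0n})^{-\nu/2},
\]
matching the claim (after relabelling the copula exponent $3\nu\mapsto\nu/2$, exactly as in the proof of Lemma~\ref{lemma:weak-cov}).

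The main technical obstacle is the joint CLT for the OLS coordinates: unlike $(W_{T_i}^{(2)},W_{T_j}^{(2)})$, which are standardized sample means of trace inner products with $\cP_M(T_k)$, the OLS vector $\wt\sfw^{(2)}$ is a linear transform through the random loading matrix $(\mathbf{X}_\cA^\top\mathbf{X}_\cA)^{-1}\mathbf{X}^\top$. To apply multivariate Berry--Esseen with rate $C_\infty h_n$ rather than a rate that blows up with $|\cA|$, I need the third-moment quantities evaluated on the transformed summands to be controlled via the same $C_\infty$-type coherence bound that appears in Proposition~\ref{prop:OLS-normal}; this is where tracking $\|e_i^\top(\wt R_\cA^{-1/2\top}\wt R_\cA^{-1/2})^{-1}\wt R_\cA^{-1/2\top}\wt R^{-1/2}_{\cA^c}\|_{\ell_1}$ in the variance-normalization step is essential. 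Once that normalization is handled and the joint CLT is in hand, the copula argument and aggregation proceed verbatim.
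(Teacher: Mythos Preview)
Your proposal is correct and takes essentially the same approach as the paper: the paper itself omits the proof, stating only that it is ``same as that in Lemma~\ref{lemma:weak-cov} \ldots\ and thus omitted,'' and your outline fills in exactly the needed substitutions (the bivariate Berry--Esseen for the post-screening OLS coordinates $\widehat\beta_i$ from the proof of Proposition~\ref{prop:OLS-normal}, the replacement $\rho_{T_i,T_j}\mapsto Q_{ij}/\sqrt{Q_{ii}Q_{jj}}$ with Proposition~\ref{prop:weak-cor-aftscr} handling the transfer from $Q^\ast$ to $Q$, and the identical copula/Mill's-ratio chain). One minor correction: the remainder $\Delta\beta_i$ is bounded by $C\cdot C_\infty h_n$ only with high probability (not deterministically), since it depends on $\|\widehat{\mathbf W}_2\|_\infty$; but this is handled exactly as in \eqref{eq:remainder} of the original proof, with the complementary event absorbed into the $h_n^2$ term.
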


\begin{Lemma}\label{lemma:conv-prob-sda }
For any $\varepsilon>0$, conditional on  $\cD_1$, it holds that
\begin{equation*}
\begin{aligned}
    &\PP\left(\sup\limits_{0\le t\le L_n'}\abs{\frac{ \sum_{T_i\in\cH_0 \cap \cA }\bbI(\wt\sfw_i^\mathsf{rank}  >t))}{q_{0n} \widetilde G(t) } -1} \ge \varepsilon\right) \\
    &\le \frac{C}{\varepsilon^2} \log(\frac{q_{0n} }{\epsilon_n \eta_n'}) \left( \left(\frac{\beta_{\mathsf{s}}' q_{0n}^2 }{ \epsilon_n^2\eta_n^{'2}}\right)^{\frac{1}{2}} + \left(\frac{ C_\infty h_n q_{0n} }{\epsilon_n \eta_n'}+ \frac{1}{ \left(\epsilon_n\eta_n' q_{0n}\right)^{v/2} } \right)^{\frac{1}{2}}\right) .
\end{aligned}
\end{equation*}
\end{Lemma}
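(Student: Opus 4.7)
The plan is to mirror the grid argument used in the proof of Lemma~\ref{lemma:conv-prob}, but with the whitening/screening versions of the two prerequisite lemmas (Lemma~\ref{lemma:pop-wrank-sda} for the first moment and Lemma~\ref{lemma:weak-cov-sda} for the second moment) serving as the new building blocks. Throughout I work conditionally on $E_0$ and $\cD_1$, on which $\widetilde G$ and the set $\calA\cap\cH_0$ are deterministic.

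First I would discretize the interval $[0,L_n']$ by a geometric grid adapted to $\widetilde G$: define
\begin{equation*}
t_k := \widetilde G^{-1}\!\Bigl(\tfrac{1}{2}\bigl(2\widetilde G(L_n')\bigr)^{k/K}\Bigr),\qquad k=0,1,\dots,K,
\end{equation*}
so that $\widetilde G(t_k)/\widetilde G(t_{k-1})=r_K:=(2\widetilde G(L_n'))^{1/K}$. Using monotonicity of $t\mapsto \sum_{T_i\in\calA\cap\cH_0}\bbI(\wt\sfw_i^{\mathsf{Rank}}>t)$ and of $\widetilde G$, for every $t\in[t_{k-1},t_k)$ the ratio $\sum_i\bbI(\wt\sfw_i^{\mathsf{Rank}}>t)/(q_{0n}\widetilde G(t))$ is sandwiched between the corresponding ratios at the two endpoints, up to the multiplicative error $1/r_K$. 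This reduces the supremum over $[0,L_n']$ to the supremum over the $K+1$ grid points, plus a deterministic slack of order $|r_K-1|\vee|1/r_K-1|$.

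Next I would centre each grid quantity by its mean and bound the stochastic part in $L^2$. Writing
\begin{equation*}
D_n := \sup_{0\le k\le K} \frac{1}{r_K}\,\Biggl|\frac{\sum_{T_i\in\calA\cap\cH_0}\bigl[\bbI(\wt\sfw_i^{\mathsf{Rank}}>t_k)-\PP(\wt\sfw_i^{\mathsf{Rank}}>t_k)\bigr]}{q_{0n}\widetilde G(t_k)}\Biggr|,
\end{equation*}
a union bound over $k$ gives $\E D_n^2 \le (K/r_K^2)\max_k \mathrm{Var}\bigl(\sum_i\bbI(\wt\sfw_i^{\mathsf{Rank}}>t_k)/(q_{0n}\widetilde G(t_k))\bigr)$. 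Expanding the variance into a sum of pairwise covariances and splitting according to whether $(T_i,T_j)\in\cH_{0\calA,\text{strong}}^2$ or $\cH_{0\calA,\text{weak}}^2$, the strong part contributes at most $\beta_{\mathsf{s}}'q_{0n}^2/(\epsilon_n^2\eta_n^{'2})$ (since each covariance is bounded by one and there are at most $\beta_{\mathsf{s}}'q_{0n}^2$ such pairs, while $\widetilde G(t_k)\ge \epsilon_n\eta_n'/q_{0n}$ on the grid), and the weak part is controlled directly by Lemma~\ref{lemma:weak-cov-sda}, yielding $C_\infty h_n q_{0n}/(\epsilon_n\eta_n')+(\epsilon_n\eta_n'q_{0n})^{-\nu/2}$. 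The mean correction $\sum_i\PP(\wt\sfw_i^{\mathsf{Rank}}>t_k)/(q_{0n}\widetilde G(t_k))-1$ is absorbed using Lemma~\ref{lemma:pop-wrank-sda}, which gives an $O(C_\infty h_n q_{0n}/(\epsilon_n\eta_n'))$ deterministic bias uniformly on the grid.

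Finally I would tune the grid size $K$ as in the earlier proof by choosing
\begin{equation*}
K=\log\!\Bigl(\tfrac{q_{0n}}{\epsilon_n\eta_n'}\Bigr)\min\!\Bigl\{\bigl(\tfrac{\beta_{\mathsf{s}}'q_{0n}^2}{\epsilon_n^2\eta_n^{'2}}\bigr)^{-1/2},\bigl(\tfrac{C_\infty h_n q_{0n}}{\epsilon_n\eta_n'}+(\epsilon_n\eta_n'q_{0n})^{-\nu/2}\bigr)^{-1/2}\Bigr\},
\end{equation*}
so that the discretization slack $|r_K-1|\vee|1/r_K-1|\lesssim K^{-1}\log(q_{0n}/(\epsilon_n\eta_n'))$ and $K\cdot(\text{variance bound})$ balance at the square-root rate appearing in the statement. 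Markov's inequality applied to $D_n^2/\varepsilon^2$ then delivers the advertised probability bound. The main obstacle, relative to the proof of Lemma~\ref{lemma:conv-prob}, is that the null statistics $\wt\sfw_i^{\mathsf{Rank}}$ here are not marginally Gaussian but only approximately so via the OLS-after-LASSO Proposition~\ref{prop:OLS-normal}; this is why the error rate $h_n$ inflates by the factor $C_\infty$, and care must be taken to verify that Proposition~\ref{prop:OLS-normal} is applied uniformly over $i\in\calA\cap\cH_0$ when invoking Lemma~\ref{lemma:weak-cov-sda}'s bivariate Berry--Esseen step, so that all constants remain data-free conditional on $(E_0,\cD_1)$.
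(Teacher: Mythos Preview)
Your proposal is correct and follows essentially the same route as the paper, which explicitly states that the proof of Lemma~\ref{lemma:conv-prob-sda } is the same as that of Lemma~\ref{lemma:conv-prob} with the obvious substitutions ($G\to\widetilde G$, $q_0\to q_{0n}$, $\eta_n\to\eta_n'$, $\beta_{\mathsf s}\to\beta_{\mathsf s}'$, $h_n\to C_\infty h_n$, $L_n\to L_n'$). Your geometric grid, variance split into strong/weak pairs via Lemma~\ref{lemma:weak-cov-sda}, mean correction via Lemma~\ref{lemma:pop-wrank-sda}, choice of $K$, and final Markov step all match the paper's argument.
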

The proof of Lemma \ref{lemma:pop-wrank-sda}, \ref{lemma:weak-cov-sda}, and \ref{lemma:conv-prob-sda } is same as that in Lemma \ref{lemma:pop-wrank}, \ref{lemma:weak-cov}, and \ref{lemma:conv-prob}, and thus omitted.
These lemmas imply that, if $L\le L_n'$, then we have $\sfR_0\le 1+3\varepsilon$ with probability at least 
\begin{equation*}
    1-\frac{C}{\varepsilon^2} \log(\frac{q_{0n} }{\epsilon_n \eta_n'}) \left( \left(\frac{\beta_{\mathsf{s}}' q_{0n}^2 }{ \epsilon_n^2\eta_n^2}\right)^{\frac{1}{2}} + \left(\frac{ C_\infty h_n q_{0n} }{\epsilon_n \eta_n'}+ \frac{1}{ \left(\epsilon_n\eta_n' q_{0n}\right)^{v/2} } \right)^{\frac{1}{2}}\right).
\end{equation*}

We then prove the probability of $\PP(L\le L_n')$ can be very large. A matching upper bound of $\widetilde{G}(t)$ is given by, similarly as in the proof Theorem \ref{thm:weak-cor-fdr},
 \begin{equation*}
    \widetilde{G}(t)= \frac{\sum_{{T_i}\in\cH_0 \cap \cA } \PP( {\wt\sfw_i^{(1)}\frac{\sqrt{\mathbf{Q}_{ii}}}{ \widehat{\sigma}_{w i} } } Z > t |\cD_0,\cD_1 )   }{q_{0n}} \le \frac{\sqrt{2}}{\sqrt{\pi}} \exp{\left( -\frac{t^2}{2 \max_{T\in\cH_0 \cap \cA} \abs{\wt\sfw_i^{(1)}\frac{\sqrt{\mathbf{Q}_{ii}}}{ \widehat{\sigma}_{ w i}  }}^2 } \right)}.
\end{equation*}

The LASSO results presented in Proposition \ref{prop:lasso-scr} show that, the $\abs{\wt\sfw_i^{(1)}}$ can be uniformly bounded by:
\begin{equation*}
\begin{aligned}
     \max_{T_i\in\cH_0 \cap \cA} \abs{\wt\sfw_i^{(1)}\frac{\sqrt{\mathbf{Q}_{ii}}}{ \widehat{\sigma}_{w i}   }} & \le  \max_{T\in\cH_0 \cap \cA}\abs{\wt\sfw_i^{(1)}} \frac{ \sqrt{e_i^\top\left(\mathbf{X}_{\mathcal{A}}^{\top} \mathbf{X}_{\mathcal{A}}\right)^{-1} \mathbf{X}_{\mathcal{A}}^{\top} \mathbf{X} \Sigma  \mathbf{X}^{\top} \mathbf{X}_{\mathcal{A}}\left(\mathbf{X}_{\mathcal{A}}^{\top} \mathbf{X}_{\mathcal{A}}\right)^{-1} e_i}  }{\sqrt{e_i^{\top}\left(\mathbf{X}_{\mathcal{A}}^{\top} \mathbf{X}_{\mathcal{A}}\right)^{-1} e_i} } \\
     & \le \max_{T\in\cH_0 \cap \cA}\abs{\wt\sfw_i^{(1)}} \left(1+ \frac{\norm{ e_i^\top\left(\mathbf{X}_{\mathcal{A}}^{\top} \mathbf{X}_{\mathcal{A}}\right)^{-1} \mathbf{X}_{\mathcal{A}}^{\top}  } \sqrt{\norm{\mathbf{X}\Sigma \mathbf{X}^{\top} -I } } }{\sqrt{e_i^{\top}\left(\mathbf{X}_{\mathcal{A}}^{\top} \mathbf{X}_{\mathcal{A}}\right)^{-1} e_i}}  \right) \\
     & \le (1+c)\max_{T\in\cH_0 \cap \cA}\abs{\wt\sfw_i^{(1)}} \\
     & \le C \kappa_1^{1.5}\sqrt{q_1(\log d_1 )}, \\
\end{aligned}
\end{equation*}
with probability at least $1-C d_1^{-\tau}$. Here we use the fact that $\norm{\mathbf{X}\Sigma \mathbf{X}^{\top} -I }\le \frac{1}{1-c}$ if we have its inverse $\norm{\Sigma^{-\frac{1}{2}} \widehat{\Sigma} \Sigma^{-\frac{1}{2}}-I }\le c$. The definition of $L_n'$ implies that 
$$L_n'  \le C \sqrt{\log(\frac{q_{0n} }{\epsilon_n \eta_n'})}\cdot  C \kappa_1^{1.5}\sqrt{q_1(\log d_1 )} \ll \sqrt{\log(\frac{1}{h_n})}\cdot \kappa_1^{1.5}\sqrt{q_1(\log d_1 )}. $$  
If $T_i\in \cS$, then $\abs{\delta_{T_i} }\ge C_{\mathsf{gap}}\sqrt{\log \frac{1}{h_n} }\vee \kappa_1^{1.5}\sqrt{q_1(\log d_1 )} $ by the definition of $\cS$, and also the LASSO estimation: 
$$  \abs{\wt\sfw_i^{(1)}}\ge C \kappa_1^{1.5}\sqrt{q_1(\log d_1 )},$$
by our assumption. Assume $C_{\mathsf{gap}}$ is large enough, and $\delta_{T_i}>0$. Then we have 
 \begin{equation*}
\begin{aligned}
\PP(\wt\sfw_i^\mathsf{rank}< L_n' ) \le &\PP\left(\wt\sfw_i^{(1)} (Z_2+\delta_{T_i} \frac{s_{T_i} }{\sqrt{\mathbf{Q}_{ii} } } )< L_n' \right) +C_\infty h_n \\
\le & 1- \PP\left( (Z_2+\delta_{T_i}\frac{s_{T_i} }{\sqrt{\mathbf{Q}_{ii} } }) \ge L_n'/\wt\sfw_i^{(1)} \right) +C_\infty h_n \\
\le & 1- \PP\left( Z_2\ge -\delta_{T_i}+\sqrt{\log \frac{1}{h_n} } \right) +C_\infty h_n \\
\le & \PP\left(Z_2\le -2\sqrt{\log \frac{1}{h_n} } \right) +C_\infty h_n  \\
\le & 2C_\infty h_n.
\end{aligned}
 \end{equation*}
 We compute the probability:
  \begin{equation*}
\begin{aligned}
               \PP(\sum\limits_{T\in\cS} \bbI(\wt\sfw_i^\mathsf{rank}> L_n')\le (1-\varepsilon)\eta_n' )& =\PP(\sum\limits_{T\in\cS} \bbI(\wt\sfw_i^\mathsf{rank}< L_n)> \varepsilon\eta_n' )  & \\
               \le \frac{\sum\limits_{T\in\cS} \PP(\wt\sfw_i^\mathsf{rank}< L_n')}{\varepsilon\eta_n'}\le C C_\infty h_n/\varepsilon,
\end{aligned}
 \end{equation*}
 i.e., $ \PP(\sum\limits_{T\in\cS} \bbI(\wt\sfw_T^\mathsf{rank}> L_n')\le (1-\varepsilon)\eta_n' )\to 0$, $ \PP(\sum\limits_{T\in\cS} \bbI(\wt\sfw_T^\mathsf{rank}> L_n')\ge \eta_n' )\to 1$. By taking $\epsilon_n=\alpha/8$, other steps essentially follow the proof of Theorem \ref{thm:weak-cor-fdr}.

\end{proof}


\section{Additional Technical Lemmas}
We now give an error bound on the generalized SVD for any mode-$m$ low-rank tensor, which is instantly applicable to the low-rank matrix with $m=2$.
\begin{Lemma}[Perturbation of general HOSVD] \label{eq:lemma:tensor-diff} Given a mode-$m$ Tucker low-rank tensor with $\cM=\cS\times_{j=1}^m U_j\in \R^{d_1\times\cdots\times d_m}$, where  $\cS\in\R^{ r_1\times r_2\cdots\times r_m}$, and  $U_j\in\R^{d_j\times r_j}$ are incoherent singular subspaces, i.e., $U_j^\top U_j=I_{d_j}$, $\norm{U_j}_{2,\max}\le\sqrt{\mu r_j/d_j}$. We denote its  maximum and minimum singular values as $\lambda_{\max}=\max_{j\in [m]}\{ \lambda_{1}( \Mat_j(\cM) ) \}$, $\lambda_{\min}=\min_{j\in [m]}\{ \lambda_{r_j}( \Mat_j(\cM) ) \}$, with condition number $\kappa_0=\lambda_{\max}/\lambda_{\min}$. Here, $\Mat_j(\cdot)$ means the mode-$j$ unfolding. Denote $d^*=\prod_{j=1}^m d_j$, $r^*=\prod_{j=1}^m r_j$, $\dmax=\max\{d_j\}$. Write the tangent space of the $r_1\times r_2\cdots\times r_m$ low-rank manifold at point $\cM$  as $\TT$, with the projection onto it as $\cP_{\TT}(\cdot)$. For any perturbation $\cE$, we denote its higher-order SVD as 
$$\widehat{\cM}=\operatorname{HOSVD}(\cM+\cE,r_1\times r_2\cdots\times r_m)= (\cM+\cE) \times_{j=1}^m \cP_{\widehat{U}_j}, \text{ where } \widehat{U}_j=\SVD_{r_j}(\Mat_j(\cM+\cE)).
$$
 Then,  when 
 $$\norm{\cE}_{\ell_{\infty}}= \varepsilon_{\infty} \le \frac{\lambda_{\min}}{48\kappa_0 m\sqrt{d^*}}, 
 $$
 for any tensor $\cI\in \R^{d_1\times\cdots\times d_j}$, we have
\begin{equation*}
    \abs{\left\langle \widehat{\cM},\cI \right\rangle-\left\langle\cM +\cP_{\TT}(\cE),\cI \right\rangle} \le \frac{37e^2 m^2 d^*\varepsilon_{\infty}^2}{\lambda_{\min} }\sqrt{\frac{\mu^m r^*}{d^*} }\norm{\cI}_{\ell_1}
\end{equation*}
\end{Lemma}
For more introduction on the Tucker low-rank tensor and the related definitions/notations, please refer to \cite{ma2024statistical}.
\begin{proof}
    The proof relies on the spectral decomposition that has been studied in \cite{ma2024statistical}. Define the mode-$j$ unfolding of $\cM$, $\cE$ as $M_j$, $E_j$, correspondingly. Without loss of generality, we write each unfolding of $\cS$ as $S_j=\Lambda_j V_j^\top$. Define $\norm{\cE}_{\tF }= \varepsilon_{\tF}$. According to the proof of Lemma 1 in \cite{ma2024statistical}, for any mode $j$, we have
    \begin{equation*}
       \cP_{\widehat{U}_j}-\cP_{U_j}= \sum_{k\ge 1} \cS_{j }^{(k)} =  \sum_{k\ge 1} \sum_{\mathbf{s}: s_1+\cdots+s_{k+1}=k}(-1)^{1+\tau(\mathbf{s})} \cdot \mathfrak{P}_j^{-s_1} {\Delta}_{j}\mathfrak{P}_j^{-s_2} {\Delta}_{j} \cdots {\Delta}_{j}\mathfrak{P}_1^{-s_{k+1}},
    \end{equation*}
    where  $s_1\ge 0, \cdots, s_{k+1}\ge 0$ are non-negative integers with $
\tau(\mathbf{s})=\sum_{j=1}^{k+1} \mathbb{I}\left(s_j>0\right)$, and
    \begin{equation*}
        {\Delta}_{j} = (M_j+E_j)(M_j+E_j)^{\top} -\cM_j\cM_j^\top = \cM_j E_j^\top + E_j\cM_j^\top +E_jE_j^\top,
    \end{equation*}
    and $\mathfrak{P}_j$ are the power series whose definition can be found in \cite{xia2021normal,ma2024statistical}. For the series $ \cS_{j }^{(k)} $, its first order is
    \begin{equation*}
\begin{aligned}
             \cS_{j }^{(1)} & =  U_j\Lambda_j^{-2}U_j^\top {\Delta}_{j}
         \cP_{U_j}^\perp  +  \cP_{U_j}^\perp {\Delta}_{j} U_j\Lambda_j^{-2}U_j^\top
         \\
         & = \underbrace{U_j\Lambda_j^{-1} V_j^\top\left(\otimes_{k\neq j}\cP_{U_k}\right) E_j^\top \cP_{U_j}^\perp  +\cP_{U_j}^\perp  E_j \left(\otimes_{k\neq j}\cP_{U_k}\right) V_j \Lambda_j^{-1} U_j^\top }_{:=\mathfrak{A}_j} 
         \\
         & \quad + \underbrace{U_j\Lambda_j^{-2}U_j^\top (E_jE_j^\top) 
         \cP_{U_j}^\perp  + \cP_{U_j}^\perp (E_jE_j^\top)U_j\Lambda_j^{-2}U_j^\top}_{:=\mathfrak{B}_j}
\end{aligned}
    \end{equation*}
From Lemma 1 of \cite{ma2024statistical}, we can extract that:
    \begin{equation}\label{eq:tensor-inco}
   \norm{ \cP_{\widehat{U}_{j}} - \cP_{U_j} }_{2,\infty} \le \frac{8  \sqrt{d^*}\varepsilon_{\infty}}{\lambda_{\min}}\sqrt{\frac{\mu r_j}{d_j} } \le \frac{1}{2} \sqrt{\frac{\mu r_j}{d_j} },
\end{equation}
and for any $e_j$ as the canonical basis in $\R^{d_j}$,
\begin{equation}\label{eq:tensor-hod}
\begin{gathered}
    \norm{e_{j}^\top\mathfrak{B}_j}_2 \le  4 \frac{  \sqrt{d^*}\varepsilon_{\infty}}{\lambda_{\min}} \sqrt{\frac{\mu r_j}{d_j} } \left(\frac{\varepsilon_{\tF}}{\lambda_{\min}}\right), \   \norm{e_{j}^\top\mathfrak{B}_j M_j}_2 \le  4 \sqrt{d^*}\varepsilon_{\infty} \sqrt{\frac{\mu r_j}{d_j} } \left(\frac{\varepsilon_{\tF}}{\lambda_{\min}}\right),
\\
   \norm{ e_{j}^\top\sum_{k\ge 2} \cS_{j }^{(k)} }_2\le \sqrt{\frac{\mu r_j}{d_j} } \cdot\frac{8  \sqrt{d^*}\varepsilon_{\infty}}{\lambda_{\min}} \left(\frac{8 \varepsilon_{\tF}  }{\lambda_{\min}}\right)^{k-1} .
\end{gathered}
\end{equation}
We then have
\begin{equation}\label{eq:tensor-decomp-1}
    \begin{aligned}
       \widehat{\cM}-\cM &=   (\cM+\cE) \times_{j=1}^m \cP_{\widehat{U}_j} - \cM\times_{j=1}^m \cP_{{U}_j} 
       \\
       &= \cE\times_{j=1}^m \cP_{{U}_j} + \sum_{k=1}^{m} \cM \times_k \left( \cP_{\widehat{U}_{k}} - \cP_{U_k}\right)\times_{j\neq k} \cP_{{U}_j} 
       \\
       & + \underbrace{\sum_{\QQ\subseteq [m],\abs{\QQ}\ge 1 } \cE \times_{k\in\QQ} \left( \cP_{\widehat{U}_{k}} - \cP_{U_k}\right)\times_{j\notin \QQ} \cP_{{U}_j}   + \sum_{\QQ\subseteq [m],\abs{\QQ}\ge 2 } \cM \times_{k\in\QQ} \left( \cP_{\widehat{U}_{k}} - \cP_{U_k}\right)\times_{j\notin \QQ} \cP_{{U}_j}  }_{:=\mathfrak{C}_1}
       \\
       & = \cE\times_{j=1}^m \cP_{{U}_j} + \sum_{k=1}^{m} \cM \times_k \mathfrak{A}_k\times_{j\neq k} \cP_{{U}_j} + \sum_{k=1}^{m} \cM \times_k \mathfrak{B}_k\times_{j\neq k} \cP_{{U}_j}  +\mathfrak{C}_1
       \\
       & = \cP_{\TT}(\cE) + \sum_{k=1}^{m} \cM \times_k \mathfrak{B}_k\times_{j\neq k} \cP_{{U}_j}  +\mathfrak{C}_1.
    \end{aligned}
\end{equation}
Therefore, for any $\cI$, we have  
$$ \abs{\left\langle \widehat{\cM},\cI \right\rangle-\left\langle\cM +\cP_{\TT}(\cE),\cI \right\rangle} =  \abs{\left\langle  \sum_{k=1}^{m} \cM \times_k \mathfrak{B}_k\times_{j\neq k} \cP_{{U}_j}  +\mathfrak{C}_1,\cI \right\rangle}. $$
According to \eqref{eq:tensor-inco}, \eqref{eq:tensor-hod}, given any single entry $\calW$, we can control the term above by:
\begin{equation}\label{eq:tensor-decomp-2}
\begin{aligned}
         \abs{\left\langle  \sum_{k=1}^{m} \cM \times_k \mathfrak{B}_k\times_{j\neq k} \cP_{{U}_j}  ,\calW \right\rangle} \le  4m \sqrt{d^*}\varepsilon_{\infty} \sqrt{\frac{\mu^m r^*}{d^*} } \left(\frac{\varepsilon_{\tF}}{\lambda_{\min}}\right),
\end{aligned}
\end{equation}
and
\begin{equation}\label{eq:tensor-decomp-3}
\begin{aligned}
         \abs{\left\langle \mathfrak{C}_1,\calW \right\rangle} & \le  \sum_{k\ge1} (em)^k \left(\frac{8  \sqrt{d^*}\varepsilon_{\infty}}{\lambda_{\min}}\right)^k\sqrt{\frac{\mu^m r^*}{d^*} } \varepsilon_{\tF}+\sum_{k\ge2}  \left(\frac{em}{2}\right)^k \left(\frac{8  \sqrt{d^*}\varepsilon_{\infty}}{\lambda_{\min}}\right)^{k-1}\sqrt{\frac{\mu^m r^*}{d^*} }8  \sqrt{d^*}\varepsilon_{\infty}
         \\
         &\le \left(\frac{16 e m d^*\varepsilon_{\infty}^2}{\lambda_{\min} }+\frac{32e^2 m^2 d^*\varepsilon_{\infty}^2}{\lambda_{\min} }\right) \sqrt{\frac{\mu^m r^*}{d^*} } \le \frac{36e^2 m^2 d^*\varepsilon_{\infty}^2}{\lambda_{\min} }\sqrt{\frac{\mu^m r^*}{d^*} }.
\end{aligned}
\end{equation}
Combining \eqref{eq:tensor-decomp-1}, \eqref{eq:tensor-decomp-2}, \eqref{eq:tensor-decomp-3}, we know that for $\widehat{\cM}$,
\begin{equation*}
    \abs{\left\langle \widehat{\cM},\cI \right\rangle-\left\langle\cM +\cP_{\TT}(\cE),\cI \right\rangle} \le \frac{37e^2 m^2 d^*\varepsilon_{\infty}^2}{\lambda_{\min} }\sqrt{\frac{\mu^m r^*}{d^*} }\norm{\cI}_{\ell_1}.
\end{equation*}
\end{proof}

\section{Proof of Minimax CI Length}\label{Minimax}
\begin{Theorem}[Minimax optimal CI length for tensor completion] Consider the tensor completion model: 
\begin{equation*}
    Y_i=\left\langle\cX_i,\cM\right\rangle +\xi_i, \quad i\in[n],
\end{equation*}
where $\xi_i\sim \cN(0,\sigma^2)$ are i.i.d., and $\cX_i$ are independent and uniformly distributed over all the canonical bases in $\R^{d_1\times\cdots d_j}$, which are independent of $\{\xi_i\}_{i=1}^n$. We use the notation in Lemma \ref{eq:lemma:tensor-diff}, with $\underline{r}=\min\{r_j\}$ Denote $s_0=\norm{\cP_{\TT}(\cI)}_{\tF}$.
Define the parameter space as 
$$
\begin{aligned}
    \boldsymbol{\Theta}=&\big\{ \cM\in\R^{d_1\times\cdots d_j}: \operatorname{rank}(\Mat_j(\cM))\le r_j, \norm{U_j}_{2,\max}\le\sqrt{\mu r_j/d_j}, 
    \\
    &\quad\quad\lambda_{\min}(\cM)\ge \lambda_{\min}, \kappa(\cM)\le \kappa_0,\norm{\cP_{\TT(\cM)}(\cI)}_{\tF}\ge s_0 \big\}.
\end{aligned}
$$
Here $\cP_{\TT(\cM)}(\cdot)$ means the projection onto the tangent space at any given $\cM$.
Consider the set of any valid $1-\alpha$ confidence interval  with $\alpha<\frac{1}{4}$ as:
$$
\mathcal{I}_\alpha(\boldsymbol{\Theta}, \cI):=\left\{\operatorname{CI}_{\mathcal{I}}^\alpha\left(\cM, \{(\cX_i,Y_i)\}_{i=1}^n \right)=[l, u]: \inf _{\cM \in \boldsymbol{\Theta}} \mathbb{P}(l \leq\langle\cM, \cI\rangle \leq u ) \geq 1-\alpha\right\},
$$
where $l, u$ are any functions of observations $\{(\cX_i,Y_i)\}_{i=1}^n$. Then, when the SNR satisfies
\begin{equation*}
         \frac{\lambda_{\min}}{\sigma} \ge C_{\gap} \kappa_0\left(\frac{\norm{\cI}_{\ell_1}}{ \norm{\cP_{\TT}(\cI)}_{\tF}\sqrt{d^*/\dmax} }\bigvee 1\right) \sqrt{\frac{ m^5 (2\mu)^{3 m} (r^*)^{3 }\dmax d^* }{\underline{r}^2 n } }
\end{equation*}
for a numeric constant $C_{\gap}$, the length of the confidence interval has the minimax lower bound:
\begin{equation*}
    \inf_{\operatorname{CI}_{\mathcal{I}}^\alpha\left(\cM, \{(\cX_i,Y_i)\}_{i=1}^n \right) \in \mathcal{I}_\alpha(\boldsymbol{\Theta}, \cI) } \sup _{\cM\in \boldsymbol{\Theta} } \mathbb{E} L\left(\operatorname{CI}_{\mathcal{I}}^\alpha\left(\cM, \{(\cX_i,Y_i)\}_{i=1}^n \right)\right) \ge c\sigma \sqrt{\frac{d^*}{n}}s_0=c\sigma \sqrt{\frac{d^*}{n}}\norm{\cP_{\TT}(\cI)}_{\tF}
\end{equation*}
\end{Theorem}
\begin{proof}
   Invoking the Lemma 1 of \cite{cai2017confidence}, we have 
    \begin{equation*}
       \begin{aligned}
            &\inf_{\operatorname{CI}_{\mathcal{I}}^\alpha\left(\cM, \{(\cX_i,Y_i)\}_{i=1}^n \right) \in \mathcal{I}_\alpha(\boldsymbol{\Theta}, \cI) } \sup _{\cM\in \boldsymbol{\Theta} } \mathbb{E} L\left(\operatorname{CI}_{\mathcal{I}}^\alpha\left(\cM, \{(\cX_i,Y_i)\}_{i=1}^n \right)\right) 
        \\
        & \ge  \inf_{\operatorname{CI}_{\mathcal{I}}^\alpha\left(\cM, \{(\cX_i,Y_i)\}_{i=1}^n \right) \in \mathcal{I}_\alpha(\boldsymbol{\Theta}, \cI) } \sup _{\cM\in \{\cM_1,\cM_2 \} } \mathbb{E} L\left(\operatorname{CI}_{\mathcal{I}}^\alpha\left(\cM, \{(\cX_i,Y_i)\}_{i=1}^n \right)\right)
        \\
        & \ge \abs{\left\langle \cM_1-\cM_2,\cI\right\rangle} \left(1-2\alpha-\operatorname{TV}(\pi(\cM_1),\pi(\cM_2)) \right)
        \\
        & \ge \abs{\left\langle \cM_1-\cM_2,\cI\right\rangle} \left(1-2\alpha-\sqrt{2\operatorname{KL}(\pi(\cM_1),\pi(\cM_2))} \right)
       \end{aligned}
    \end{equation*}
where we use Pinsker’s inequality for the last step. Now, with a slightly abuse of notation, we choose a $\cM\in\boldsymbol{\Theta}$ such that $\lambda_{\min}(\cM)=2\lambda_{\min}$, $\kappa(\cM)=\frac{1}{2}\kappa_0$, $\norm{U_j}_{2,\max}\le\sqrt{\mu r_j/d_j}/2$, and $\norm{\cP_{\TT(\cM)}(\cI)}_{\tF}\ge 2s_0$. Let a new $\widehat{\cM}$ be 
\begin{equation*}
    \widehat{\cM} = \operatorname{HOSVD}\left(\cM+\varepsilon\frac{\cP_{\TT(\cM)}(\cI)}{\norm{\cP_{\TT(\cM)}(\cI)}_{\tF}},r_1,\dots,r_m\right).
\end{equation*}
For each entry $\calW$, we have
\begin{equation*}
    \varepsilon\abs{\left\langle\frac{\cP_{\TT(\cM)}(\cI)}{\norm{\cP_{\TT(\cM)}(\cI)}_{\tF}},\calW\right\rangle}=  \varepsilon\abs{\left\langle\frac{\cP_{\TT(\cM)}(\cI)}{\norm{\cP_{\TT(\cM)}(\cI)}_{\tF}},\cP_{\TT(\cM)}(\calW)\right\rangle} \le \varepsilon\sqrt{\frac{mr^*\mu^{m-1}\dmax}{\underline{r}d^*}},
\end{equation*}
i.e.,
\begin{equation*}
    \norm{\varepsilon\frac{\cP_{\TT(\cM)}(\cI)}{\norm{\cP_{\TT(\cM)}(\cI)}_{\tF}}}_{\ell_{\infty}} \le \varepsilon\sqrt{\frac{mr^*\mu^{m-1}\dmax}{\underline{r}d^*}}.
\end{equation*}
Now, we set 
\begin{equation}\label{eq:minimax-eps-1}
    \varepsilon\sqrt{\frac{mr^*\mu^{m-1}\dmax}{\underline{r}d^*}} \le \frac{\lambda_{\min}}{48\kappa_0 m\sqrt{d^*}}, \text{ i.e., } \varepsilon \le \frac{\lambda_{\min}\sqrt{\underline{r}}}{48\kappa_0 m^{1.5 }\sqrt{r^* \dmax}},
\end{equation}
and apply \eqref{eq:tensor-inco}, Lemma \ref{eq:lemma:tensor-diff}. This tells us that $ \widehat{\cM} $ is also with rank $r_1,\cdots,r_m$, and $\mu$-incoherence. Moreover, from Lemma 1 of \cite{li2023online}, we know that
\begin{equation*}
    \norm{\widehat{\cM}-\cM}_{\tF}\le \varepsilon + \frac{59 m\varepsilon^2}{\lambda_{\min}} \le 2\varepsilon\le \frac{1}{2}\lambda_{\min}.
\end{equation*}
Also, we need to check the variance condition. By Lemma 14 of \cite{ma2024statistical}, we have
\begin{equation*}
\begin{aligned}
        &\norm{\cP_{\TT(\widehat{\cM})}(\cI)}_{\tF}\ge   \norm{\cP_{\TT({\cM})}(\cI)}_{\tF} -\norm{\cP_{\TT(\widehat{\cM})}(\cI)-\cP_{\TT({\cM})}(\cI)}_{\tF}  
         \\
    & \ge 2 s_0 - \frac{C  m^2\kappa_0   }{\lambda_{\min}}\sqrt{\frac{(2\mu)^m r^*\dmax }{n}}\cdot\varepsilon\sqrt{\frac{mr^*\mu^{m-1}\dmax}{\underline{r}d^*}}\norm{\cI}_{\ell_1}
    \\
    & \ge s_0,
\end{aligned}
\end{equation*}
where we set
\begin{equation}\label{eq:minimax-eps-2}
    \frac{C  m^2\kappa_0   }{\lambda_{\min}}\sqrt{{\mu^m r^*\dmax } }\cdot\varepsilon\sqrt{\frac{mr^*\mu^{m-1}\dmax}{\underline{r}d^*}}\norm{\cI}_{\ell_1} \le s_0, \text{ i.e., } \varepsilon \le  c\frac{s_0}{\norm{\cI}_{\ell_1}} \lambda_{\min} \frac{\sqrt{\underline{r}  d^*}}{m^{2.5}\kappa_0\mu^m r^* \dmax}.
\end{equation}
Therefore, ${\widehat{\cM}}\in\boldsymbol{\Theta}$. Set $\cM_1=\cM$, $\cM_2=\widehat{\cM}$. Now, we compute $\abs{\left\langle \cM_1-\cM_2,\cI\right\rangle}$ and $\operatorname{KL}(\pi(\cM_1),\pi(\cM_2))$. For the KL-divergence, clearly we have the following chain rule:
\begin{equation*}
\begin{aligned}
        \operatorname{KL}(\pi(\cM_1),\pi(\cM_2)) & = \E_{\cX}[\operatorname{KL}(\pi(\cM_1)|\{\cX_i\}_{i=1}^n,\pi(\cM_2)|\{\cX_i\}_{i=1}^n)  ] + \operatorname{KL}(\{\cX_i\}_{i=1}^n,\{\cX_i\}_{i=1}^n) 
        \\
        & =\E_{\cX} \frac{\sum_{i=1}^n\left(\left\langle\widehat{\cM}-\cM,\cX_i\right\rangle\right)^2 }{2\sigma^2} = \frac{n\norm{\widehat{\cM}-\cM}_{\tF}^2}{2\sigma^2 d^*}.
\end{aligned}
\end{equation*}
Thus, $\sqrt{  2\operatorname{KL}(\pi(\cM_1),\pi(\cM_2)) }\le 2\sqrt{n/d^*}/\sigma\cdot \varepsilon$. We select 
$$\varepsilon=\frac{\sigma\sqrt{d^*/n}}{8},$$
which gives  $\left(1-2\alpha-\sqrt{2\operatorname{KL}(\pi(\cM_1),\pi(\cM_2))} \right)\ge \frac{1}{4}$.

For the term $\abs{\left\langle \cM_1-\cM_2,\cI\right\rangle} $, we use Lemma \ref{eq:lemma:tensor-diff}, which gives 
\begin{equation*}
\begin{aligned}
         \abs{\left\langle \widehat{\cM},\cI \right\rangle-\left\langle\cM ,\cI \right\rangle} & \ge      \abs{\left\langle\cP_{\TT(\cM)}\left(\varepsilon\frac{\cP_{\TT(\cM)}(\cI)}{\norm{\cP_{\TT(\cM)}(\cI)}_{\tF}} \right) ,\cI \right\rangle}-  \abs{\left\langle \widehat{\cM},\cI \right\rangle-\left\langle\cM +\cP_{\TT(\cM)}\left(\varepsilon\frac{\cP_{\TT(\cM)}(\cI)}{\norm{\cP_{\TT(\cM)}(\cI)}_{\tF}} \right),\cI \right\rangle} 
         \\
         &\ge \varepsilon \norm{\cP_{\TT(\cM)}(\cI)}_{\tF}-  
     \frac{37e^2 m^2 d^*\varepsilon_{\infty}^2}{2\lambda_{\min} }\sqrt{\frac{\mu^m r^*}{ 2^md^*} }\norm{\cI}_{\ell_1}
     \\
     & \ge \varepsilon \norm{\cP_{\TT(\cM)}(\cI)}_{\tF}-  \varepsilon^2{37e^2 \frac{m^3r^*\mu^{m-1}\dmax}{2^{m}\lambda_{\min}\underline{r}}}\sqrt{\frac{\mu^m r^*}{ 2^md^*} }\norm{\cI}_{\ell_1}.
\end{aligned}
\end{equation*}
Set 
\begin{equation*}
     \frac{2\cdot 37e^2 m^3r^*\mu^{1.5m-1}\sqrt{r^*}\dmax}{2^{1.5m}\underline{r}\lambda_{\min}\sqrt{d^*}}\frac{\norm{\cI}_{\ell_1}}{\norm{\cP_{\TT(\cM)}(\cI)}_{\tF}} \le \frac{74 e^2 (r^*)^{1.5}\mu^{1.5m-1}\dmax}{\underline{r}\lambda_{\min}\sqrt{d^*}}\frac{\norm{\cI}_{\ell_1}}{2s_0}\le \frac{1}{\varepsilon },
\end{equation*}
i.e.,
\begin{equation}\label{eq:minimax-eps-3}
    \varepsilon\le c \frac{\sqrt{d^*}s_0}{\dmax\norm{\cI}_{\ell_1}} \frac{\underline{r}\lambda_{\min}}{(r^*)^{1.5}\mu^{1.5m-1}},
\end{equation}
we have 
\begin{equation*}
      \abs{\left\langle \widehat{\cM},\cI \right\rangle-\left\langle\cM ,\cI \right\rangle} \ge \frac{1}{2}\varepsilon \norm{\cP_{\TT(\cM)}(\cI)}_{\tF} \ge \varepsilon s_0.
\end{equation*}
Combining \eqref{eq:minimax-eps-1}, \eqref{eq:minimax-eps-2}, \eqref{eq:minimax-eps-3}, and $\varepsilon=\frac{\sigma\sqrt{d^*/n}}{8}$, we know that when 
\begin{equation*}
     \frac{\lambda_{\min}}{\sigma} \ge C_{\gap} \kappa_0\left(\frac{\norm{\cI}_{\ell_1}}{ \norm{\cP_{\TT}(\cI)}_{\tF}\sqrt{d^*/\dmax} }\bigvee 1\right) \sqrt{\frac{ m^5 (2\mu)^{3 m} (r^*)^{3 }\dmax d^* }{\underline{r}^2  n } },
\end{equation*}
the minimax lower bound is given by
\begin{equation*}
    \abs{\left\langle \cM_1-\cM_2,\cI\right\rangle} \left(1-2\alpha-\sqrt{2\operatorname{KL}(\pi(\cM_1),\pi(\cM_2))} \right) \ge  \frac{1}{4}\frac{\sigma\sqrt{d^*/n}}{8} s_0 = \frac{\sigma \norm{\cP_{\TT}(\cI)}_{\tF}}{32}\sqrt{\frac{d^*}{n}}.
\end{equation*}
This finishes the proof.
\end{proof}

\section{Proofs of Auxiliary Results}
\subsection{Verification of \eqref{eq:W-kth-moment}}\label{sec:proof-w-k-mom}
It has been shown in the proof of Theorem \ref{thm:asymp-normal} that the test statistics $W_T$ can be decomposed as 
\begin{equation*}
	W_{T} = \frac{\left\langle  \widehat{Z}_1 , \cP_M(T) \right\rangle }{\sigma_{\xi} \norm{\cP_M(T) }_\tF  \sqrt{d_1 d_2/n}}+ \Delta_{T},
\end{equation*}
where $\Delta_T$ is a vanishing term with the rate of convergence described in \eqref{eq:2-d-remainder}. Suppose also the distribution of $\xi$ is symmetric. Denote $I_1$ the index set of observations in sample $\calD_1$. 
Therefore, for any integer $k\ge 2$, we have
\begin{equation*}
	\begin{aligned}
		&\E \abs{W_{T} }^{2k} \gtrsim \E \abs{\frac{\left\langle  \widehat{Z}_1 , \cP_M(T) \right\rangle }{\sigma_{\xi} \norm{\cP_M(T) }_\tF  \sqrt{d_1 d_2/n}}}^{2k} = \E \abs{\frac{\sqrt{d_1 d_2/n} \sum_{i\in I_1} \xi_i \left\langle  X_i , \cP_M(T) \right\rangle }{\sigma_{\xi} \norm{\cP_M(T) }_\tF }}^{2k} \\
		& \ge \left(\E \abs{\frac{\sqrt{d_1 d_2/n} \sum_{i\in I_1} \xi_i \left\langle  X_i , \cP_M(T) \right\rangle }{\sigma_{\xi} \norm{\cP_M(T) }_\tF }}^{4}  \right)^{k/2} \\
		& \ge \left(  \frac{d_1^2 d_2^2 \left( \sum_{i\in I_1} \E\xi_i^4 \left\langle  X_i , \cP_M(T) \right\rangle^4 + \sum_{i,j\in I_1, i\neq j}\E\xi_i^2 \left\langle  X_i , \cP_M(T) \right\rangle^2\xi_j^2 \left\langle  X_j , \cP_M(T) \right\rangle^2 \right) }{n^2\sigma_{\xi}^4 \norm{\cP_M(T) }^4_\tF }  \right)^{k/2} \\
		& \gtrsim \left(  \frac{d_1^2 d_2^2 \E\left\langle  X_i , \cP_M(T) \right\rangle^4 }{n \norm{\cP_M(T) }^4_\tF } +1 \right)^{k/2} \\
		& = \left(  \frac{d_1 d_2 \sum_{i\in [d_1],j\in[d_2]} \cP_M(T)_{i,j}^4 }{n \norm{\cP_M(T) }^4_\tF } +1 \right)^{k/2}. 
	\end{aligned}
\end{equation*}
If the energy of $\cP_M(T)$ is concentrated in a few entries, e.g., there exists an index set $J$ such that the entries in $J$ can dominate other entries, i.e., 
\begin{equation*}
	\sum_{(i,j)\in J} \cP_M(T)_{i,j}^2 \ge  \sum_{(i,j)\notin J} \cP_M(T)_{i,j}^2,
\end{equation*}
with  $s_0:=\abs{J}=O(1)$, then we have 
\begin{equation*}
	\begin{aligned}
		\frac{\sum_{i\in [d_1],j\in[d_2]} \cP_M(T)_{i,j}^4}{\norm{\cP_M(T) }^4_\tF } \ge \frac{\sum_{(i,j)\in J} \cP_M(T)_{i,j}^4}{  4 \left( \sum_{(i,j)\in J} \cP_M(T)_{i,j}^2 \right)^2 } \ge \frac{1}{4 s_0} \ge \Omega(1),
	\end{aligned}
\end{equation*}
and thus, we have
\begin{equation*}
	\begin{aligned}
		&\sqrt[2k]{\E \abs{W_{T} }^{2k}} \gtrsim \left( \frac{d_1 d_2}{n}\right)^{1/4}. 
	\end{aligned}
\end{equation*}

\subsection{Proof of Theorem \ref{thm:power-comparison}}
\begin{proof}
	Define the c.d.f. of the product of two standard normal random variables as $\Psi(t)$, also $\Tilde \Psi(t) :=1-\Psi(t)$. The c.d.f. of standard normal distribution is denoted by $\Phi(t)$ by convention, with $\Tilde \Phi(t) :=1-\Phi(t)$. For $j=1$, we have 
	\begin{equation*}
		\begin{aligned}
			\PP(Y_1>t|H_0) &= \Psi(-t) = \Tilde{\Psi}(t)\\
			F_1(z,t):= \PP(Y_1>t|z,H_1) &=  \PP\left( \frac{(\xi_1+\xi_2+2\delta )^2}{4} -\frac{(\xi_1-\xi_2)^2}{4} >t \right)= \PP \left( \frac{(Z_1+\sqrt{2}\delta )^2}{2}-\frac{Z_2^2}{2} >t \right) \\
			& = \int_{\R} \left[ \Tilde{\Phi}(\sqrt{2t + y_2^2}-\sqrt{\frac{2}{p}} z ) + \Tilde{\Phi}(\sqrt{2t + y_2^2}+\sqrt{\frac{2}{p}}z ) \right]d y_2.
		\end{aligned}
	\end{equation*}
	Here $\xi_1$, $\xi_2$, and $Z_1$, $Z_2$ are all standard normal random variables. Thus $ L_{p 1} = \Tilde{\Psi}^{-1}(p)$. Calculate the first order and second order derivative of $F_1(z,t)$ with respect to $z$ when $t= L_{p 1}$ :
	\begin{equation*}
		\begin{aligned}
			\partial_z   F_1(0,L_{p 1}) & = 0 \\
			\partial_z^2   F_1(0,L_{p 1}) & = \frac{8}{q} \int_{0}^{+\infty} -  f'(\sqrt{2L_{p 1} + y_2^2} )  d y_2= \frac{8}{p} f(-\sqrt{2L_{p 1}}).
		\end{aligned}
	\end{equation*}
	Since $\Tilde{\Psi}(t)<\sqrt{2}\Tilde{\Phi} (\sqrt{2t})$, we have $L_{p 1}< \frac{1}{2} \Tilde{\Phi}^{-1}(p/\sqrt{2})^2 $. When $x\to 0$, we have 
	\begin{equation*}
		\sqrt{2(\log(\frac{1-r_1}{x}) -\frac{1}{2}\log\log (\frac{1-r_1}{x})  ) } \le\Tilde{\Phi}^{-1}(x)\le \sqrt{2(\log(\frac{1}{x}) -\frac{1}{2+r_2}\log\log (\frac{1}{x})  ) }.
	\end{equation*}
	for any small $r_1,r_2>0$. Thus we have $L_{p 1}< \frac{1}{2} \Tilde{\Phi}^{-1}(p/\sqrt{2})^2 \le \log(\frac{\sqrt{2}}{p}) - \frac{1}{2+r_2}\log\log (\frac{\sqrt{2}}{p})$, and the second order derivative 
	\begin{equation*}
		\begin{aligned}
			\partial_z^2   F_1(0,L_{p 1}) & = \frac{8}{p} f(-\sqrt{2L_{p 1}}) \ge c (\log(\frac{\sqrt{2}}{p}) )^{1/(2+r_2)},
		\end{aligned}
	\end{equation*}
	is non-vanishing.
	
	For $j=2$,  we have
	\begin{equation*}
		\begin{aligned}
			\PP(Y_2>t|H_0) &= \PP(\xi_1>t,\xi_2>t )+\PP(\xi_1<-t,\xi_2<-t ) = 2 \Tilde{\Phi}^2(t) \\
			F_2(z,t):= \PP(Y_2>t|z,H_1) &=  \PP(\xi_1 +\mu>t,\xi_2+\mu >t ) +\PP(\xi_1 +\mu<-t,\xi_2+\mu <-t )\\ &=\Tilde{\Phi}^2(t+\sqrt{\frac{1}{p}}z ) +\Tilde{\Phi}^2(t-\sqrt{\frac{1}{p}}z ).
		\end{aligned}
	\end{equation*}
	In this case, the threshold $L_{p 2}= \Tilde{\Phi}^{-1}(\sqrt{\frac{p}{2}} ) $. Compute the derivatives of $F_2$:
	\begin{equation*}
		\begin{aligned}
			\partial_z   F_2(0,L_{p 2}) & = 0 \\
			\partial_z^2   F_2(0,L_{p 2}) & = \frac{4 }{p} (f^2(-L_{p 2}) +\Tilde{\Phi}(L_{p 2})f'(-L_{p 2})  )\ge c ((\log(\sqrt{\frac{2}{p}}) )^{1/(2+r_2) }  +1 ),
		\end{aligned}
	\end{equation*}
	which also has a non-vanishing second-order derivative.
	
	For $j=3$, we have
	\begin{equation*}
		\begin{aligned}
			\PP(Y_3>t|H_0) &= \PP(\xi_1+\xi_2>t, \xi_1>0, \xi_2>0 )+\PP(\xi_1+\xi_2<-t, \xi_1<0, \xi_2<0 ) \\
			&= 2 \PP(Y_1 >\frac{t}{\sqrt{2}}, -Y_1<Y_2<Y_1)=  2\Tilde{\Phi}(\frac{t}{\sqrt{2}})(1-\Tilde{\Phi}(\frac{t}{\sqrt{2}}))  \\
			\PP(Y_3>t|z,H_1) &=  \PP(Z_1 >\frac{t-2\mu}{\sqrt{2}}, -Z_1-\sqrt{2}\mu<Z_2<Z_1+\sqrt{2}\mu) \\
			& + \PP(Z_1 <\frac{-t-2\mu}{\sqrt{2}}, Z_1+\sqrt{2}\mu<Z_2<-Z_1-\sqrt{2}\mu)\\ 
			&\le \Tilde{\Phi}(\frac{t}{\sqrt{2}} )\left(\phi(\frac{t}{\sqrt{2}}+\sqrt{2}\mu)+\phi(\frac{t}{\sqrt{2}}-\sqrt{2}\mu) \right)  \\
			F_3(z,t):= & \Tilde{\Phi}(\frac{t}{\sqrt{2}} )\left(\phi(\frac{t}{\sqrt{2}}+\sqrt{\frac{2}{p}}z)+\phi(\frac{t}{\sqrt{2}}-\sqrt{\frac{2}{p}}z) \right).
		\end{aligned}
	\end{equation*}
	Compute the derivatives of $F_3$:
	
	\begin{equation*}
		\begin{aligned}
			\partial_z   F_3(0,L_{p 3}) & = 0 \\
			\partial_z^2   F_3(0,L_{p 3}) & = \frac{4 }{p} \Tilde{\Phi}(\frac{L_{p 3}}{\sqrt{2}} )f'(\frac{L_{p 3}}{\sqrt{2}})\le 0.
		\end{aligned}
	\end{equation*}
	If $\delta_0=o(\sqrt{\frac{1}{\pi} }) $, we have $z=\sqrt{p}\delta\to 0$. By Taylor's theorem, we have
	$$\operatorname{Power}_{W_j}(L_{p j} ) = p + \E_{\mathbf{\Theta}} \partial_z   F_j(0,L_{p j})z +  \E_{\mathbf{\Theta}} \frac{1}{2}\partial_z^2   F_j(0,L_{p j})z^2 +o( \E_{\mathbf{\Theta}} z^2),$$
	(or $\le$ for $j=3$ ). Plugging in the derivatives of $j=1,2,3$, clearly we have $\operatorname{Power}_{W_1}(L_{p 1} )\ge \operatorname{Power}_{W_3}(L_{p 3} )$, and $\operatorname{Power}_{W_2}(L_{p 2} )\ge \operatorname{Power}_{W_3}(L_{p 3} )$; for the second order derivative of $F_1$ and $F_2$, we also have 
	\begin{equation*}
		\begin{aligned}
			\partial_z^2   F_1(0,L_{p 1})- \partial_z^2   F_2(0,L_{p 2}) 
			& = \frac{4 }{p} (2f(-\sqrt{2L_{p 1}}) -f^2(-L_{p 2}) -\Tilde{\Phi}(L_{p 2})f'(-L_{p 2})  ) \\
			&\ge c \frac{1}{p}\exp\left(- \frac{1}{2} \Tilde{\Phi}^{-1}(p/\sqrt{2})^2 \right)\left(1 - \exp\left(\frac{1}{2} \Tilde{\Phi}^{-1}(p/\sqrt{2})^2  - \Tilde{\Phi}^{-1}(\sqrt{\frac{p}{2} })^2 \right) \right).
		\end{aligned}
	\end{equation*}
	
	Since
	\begin{equation*}
		\begin{aligned}
			&\frac{1}{2} \Tilde{\Phi}^{-1}(p/\sqrt{2})^2  - \Tilde{\Phi}^{-1}(\sqrt{\frac{p}{2} })^2= (\frac{1}{\sqrt{2}} \Tilde{\Phi}^{-1}(p/\sqrt{2}) + \Tilde{\Phi}^{-1}(\sqrt{\frac{p}{2} }) )(\frac{1}{\sqrt{2}} \Tilde{\Phi}^{-1}(p/\sqrt{2}) - \Tilde{\Phi}^{-1}(\sqrt{\frac{p}{2} }) ) \\
			&\le (\frac{1}{\sqrt{2}} \Tilde{\Phi}^{-1}(p/\sqrt{2}) + \Tilde{\Phi}^{-1}(\sqrt{\frac{p}{2} }) ) \\
			& \cdot ( \sqrt{\log(\frac{\sqrt{2}}{p})  - \frac{1}{1+r_2}\log \log(\frac{\sqrt{2}}{p}) } - \sqrt{\log (\frac{2(1-r_1)^2}{p} ) -\log\log( (1-r_1)\sqrt{\frac{2}{p}} )  } ) \\
			&\to -\infty,
		\end{aligned}
	\end{equation*}
	we have $\partial_z^2   F_1(0,L_{p 1})- \partial_z^2   F_2(0,L_{p 2}) \ge 0$, thus $\operatorname{Power}_{W_1}(L_{p 1} )\ge \operatorname{Power}_{W_2}(L_{p 2} )$. Translating the $\operatorname{Power}_{W_j}(L_{p j} )$ to $\operatorname{Power}_{W_j}(L_{\alpha j} )$, we finish our proof. 
\end{proof}

\subsection{Proof of Lemma \ref{lemma:cov-est-prec}}
\begin{proof}
For simplicity, we omit $C_{\init}$ in the proof. Notice that both $\left(I_{d_1 d_2} - \widehat U_\perp \widehat U_\perp^\top \otimes \widehat V_\perp \widehat V_\perp^\top\right)$ and $\left(I_{d_1 d_2} -  U_\perp U_\perp^\top \otimes V_\perp V_\perp^\top\right)$ are projection matrices with $P=P^2$. We thus have
\begin{equation}\label{eq:decomp-detS}
   \begin{aligned}
     \widehat{\Sigma}-\Sigma&=T_{\calH}\left((I_{d_1 d_2} - \widehat U_\perp \widehat U_\perp^\top \otimes \widehat V_\perp \widehat V_\perp^\top  )-(I_{d_1 d_2} -  U_\perp U_\perp^\top \otimes V_\perp V_\perp^\top ) \right) T_{\calH}^\top \\
     & = T_{\calH}\left(\widehat U_\perp \widehat U_\perp^\top \otimes \widehat V_\perp \widehat V_\perp^\top - U_\perp U_\perp^\top \otimes V_\perp V_\perp^\top\right) \left(I-U_\perp U_\perp^\top \otimes V_\perp V_\perp^\top\right) T_{\calH}^\top\\
     & + T_{\calH}\left(I-U_\perp U_\perp^\top \otimes V_\perp V_\perp^\top\right) \left(\widehat U_\perp \widehat U_\perp^\top \otimes \widehat V_\perp \widehat V_\perp^\top - U_\perp U_\perp^\top \otimes V_\perp V_\perp^\top \right) T_{\calH}^\top \\
     & + T_{\calH}\left(\widehat U_\perp \widehat U_\perp^\top \otimes \widehat V_\perp \widehat V_\perp^\top - U_\perp U_\perp^\top \otimes V_\perp V_\perp^\top\right) \left(\widehat U_\perp \widehat U_\perp^\top \otimes \widehat V_\perp \widehat V_\perp^\top - U_\perp U_\perp^\top \otimes V_\perp V_\perp^\top\right) T_{\calH}^\top.\\
\end{aligned} 
\end{equation}
We apply \eqref{eq:decomp-detS} to the error $\Sigma^{-\frac{1}{2}}(\widehat{\Sigma}-\Sigma)\Sigma^{-\frac{1}{2}}$:
\begin{equation*}
    \begin{aligned}
      &\norm{\Sigma^{-\frac{1}{2}}(\widehat{\Sigma}-\Sigma)\Sigma^{-\frac{1}{2}} }\le 2\norm{\Sigma^{-\frac{1}{2}}T_{\calH}\left(\widehat U_\perp \widehat U_\perp^\top \otimes \widehat V_\perp \widehat V_\perp^\top - U_\perp U_\perp^\top \otimes V_\perp V_\perp^\top\right) \left(I-U_\perp U_\perp^\top \otimes V_\perp V_\perp^\top\right) T_{\calH}^\top \Sigma^{-\frac{1}{2}} }\\
      &+  \norm{\Sigma^{-\frac{1}{2}}T_{\calH}\left(\widehat U_\perp \widehat U_\perp^\top \otimes \widehat V_\perp \widehat V_\perp^\top - U_\perp U_\perp^\top \otimes V_\perp V_\perp^\top\right) \left(\widehat U_\perp \widehat U_\perp^\top \otimes \widehat V_\perp \widehat V_\perp^\top - U_\perp U_\perp^\top \otimes V_\perp V_\perp^\top\right) T_{\calH}^\top \Sigma^{-\frac{1}{2}}}.
    \end{aligned}
\end{equation*}
Notice that $\norm{\left(I-U_\perp U_\perp^\top \otimes V_\perp V_\perp^\top\right) T_{\calH}^\top \Sigma^{-\frac{1}{2}}}\le 1$. We only need to focus on the term
\begin{equation*}
    \begin{aligned}
      &\norm{\Sigma^{-\frac{1}{2}}T_{\calH}\left(\widehat U_\perp \widehat U_\perp^\top \otimes \widehat V_\perp \widehat V_\perp^\top - U_\perp U_\perp^\top \otimes V_\perp V_\perp^\top\right)}  \\
      & \le  \norm{\Sigma^{-\frac{1}{2}} T_{\calH} }\norm{\left(\widehat U_\perp \widehat U_\perp^\top \otimes \widehat V_\perp \widehat V_\perp^\top - U_\perp U_\perp^\top \otimes V_\perp V_\perp^\top\right)} \\
      & \le \sqrt{\kappa_1}\kappa_T\left( \norm{ \left(\widehat U_\perp \widehat U_\perp^\top -U_\perp U_\perp^\top \right)\otimes V_\perp V_\perp^\top} +\norm{U_\perp U_\perp^\top \otimes \left( \widehat V_\perp \widehat V_\perp^\top-V_\perp V_\perp^\top\right) }\right. \\
      & + \left.\norm{ \left(\widehat U_\perp \widehat U_\perp^\top -U_\perp U_\perp^\top \right)\otimes \left( \widehat V_\perp \widehat V_\perp^\top-V_\perp V_\perp^\top\right)}\right)\\
      & \le C_2 \sqrt{\kappa_1}\kappa_T \frac{\sqrt{\tau}\left(1+\gamma_{n}\right) \sigma_{\xi}}{\lambda_{\min}} \cdot \sqrt{\frac{d_1^2 d_2 \log d_1}{n}},  
    \end{aligned}
\end{equation*}
where we use the definition of $\kappa_T$ and the perturbation of singular subspaces in \cite{xia2021statistical}. Moreover, when $T_{\calH}$ is sparse, we use $e_{T, k}\in \R^{d_1\times d_2} $, $k\in[\operatorname{supp}(T_{\calH} ) ]$ to indicate the collective supports of  all the $\operatorname{vec}(T_i)$. We then have
\begin{equation}\label{eq:sparse-T}
    \begin{aligned}
      &\norm{\Sigma^{-\frac{1}{2}}T_{\calH}\left(\widehat U_\perp \widehat U_\perp^\top \otimes \widehat V_\perp \widehat V_\perp^\top - U_\perp U_\perp^\top \otimes V_\perp V_\perp^\top\right)}  \\
          & = \norm{\Sigma^{-\frac{1}{2}}T_{\calH} \sum_{k=1}^{ \operatorname{supp}(T_{\calH} )  } e_{T, k}e_{T,k}^\top  \left(\widehat U_\perp \widehat U_\perp^\top \otimes \widehat V_\perp \widehat V_\perp^\top - U_\perp U_\perp^\top \otimes V_\perp V_\perp^\top\right)} \\
          & \le \sqrt{\kappa_1}\kappa_T \operatorname{supp}(T_{\calH} ) \max_{k}\norm{e_{T,k}^\top  \left(\widehat U_\perp \widehat U_\perp^\top \otimes \widehat V_\perp \widehat V_\perp^\top - U_\perp U_\perp^\top \otimes V_\perp V_\perp^\top\right)} \\
          & \le \sqrt{\kappa_1}\kappa_T\operatorname{supp}(T_{\calH} )  \max_{k} \left( \norm{e_{T,k}^\top  \left(\widehat U_\perp \widehat U_\perp^\top -U_\perp U_\perp^\top \right)\otimes V_\perp V_\perp^\top}\right. \\
          &  +\norm{e_{T,k}^\top  U_\perp U_\perp^\top \otimes \left( \widehat V_\perp \widehat V_\perp^\top-V_\perp V_\perp^\top\right) }+ \left.\norm{ e_{T,k}^\top \left(\widehat U_\perp \widehat U_\perp^\top -U_\perp U_\perp^\top \right)\otimes \left( \widehat V_\perp \widehat V_\perp^\top-V_\perp V_\perp^\top\right)}\right).\\
    \end{aligned}
\end{equation}
Since each $e_{T, k}$ can also be represented as $e_{T, k} = e_{T, k}^1 \otimes  e_{T, k}^2$, where $e_{T, k}^1\in\R^{d_1}$ and  $e_{T, k}^2\in\R^{d_2}$ are also canonical bases, we then have 
\begin{equation*}
    \begin{aligned}
      &\norm{\Sigma^{-\frac{1}{2}}T_{\calH}\left(\widehat U_\perp \widehat U_\perp^\top \otimes \widehat V_\perp \widehat V_\perp^\top - U_\perp U_\perp^\top \otimes V_\perp V_\perp^\top\right)} \\
          &  \lesssim \sqrt{\kappa_1}\kappa_T\operatorname{supp}(T_{\calH} )  \left(\norm{UU^\top-\widehat{U}\widehat{U}^\top }_{2,\max} + \norm{VV^\top-\widehat{V}\widehat{V}^\top }_{2,\max} \right)  \\
          & \le C \sqrt{\kappa_1}\kappa_T\frac{\operatorname{supp}(T_{\calH} )}{\sqrt{d_2}} \frac{\sqrt{\tau}\left(1+\gamma_{n}\right) \sigma_{\xi}}{\lambda_{\min}} \cdot \sqrt{\frac{d_1^2 d_2 \log d_1}{n}},
    \end{aligned}
\end{equation*}
because the higher-order error can be dominated.  The rate $\gamma_{n}$ converges to 0, which means that the whole error can be controlled by:
\begin{equation*}
   \norm{ \Sigma^{-\frac{1}{2}}(\widehat{\Sigma}-\Sigma)\Sigma^{-\frac{1}{2}}}\le C\frac{\kappa_T \sigma_\xi }{\lambda_{\min} } \cdot \left(\frac{\operatorname{supp}(T_{\calH} )}{\sqrt{d_2}}\wedge 1\right) \sqrt{\frac{ \kappa_1  d_1^2 d_2 \log d_1 }{n}}.
\end{equation*}
\end{proof}

\subsection{Proof of Lemma \ref{lemma:pres-DSD} }
\begin{proof}
For simplicity, we omit $C_{\init}$ in the proof. Denote $E=\Sigma-\widehat{\Sigma}$. By Fréchet derivative, as long as $\norm{E}=\norm{\Sigma-\widehat{\Sigma}}$ is small for any operator norm,  $\widehat{\Sigma}^{-1}-{\Sigma}^{-1}$ can be dominated by its Fréchet derivative ${\Sigma}^{-1}E{\Sigma}^{-1}$. Therefore, We have
\begin{equation*}
            \norm{{D} (\widehat{\Sigma}^{-1}-{\Sigma}^{-1}) {D}}_{\infty}\le \norm{{D}{\Sigma}^{-1}E{\Sigma}^{-1}{D}}_\infty+o(\norm{E}_\infty).
\end{equation*}
We only need to study the convergence rate of $\norm{{D}{\Sigma}^{-1}E{\Sigma}^{-1}{D}}_\infty$ as $E$ is small. This term, however, can be decomposed following \eqref{eq:decomp-detS}, i.e.,
\begin{equation*}
    \begin{aligned}
     & \norm{{D}{\Sigma}^{-1}E{\Sigma}^{-1}{D}}_\infty \\
       & \le  \norm{{D}{\Sigma}^{-1}D}_\infty \norm{D^{-1}
       T_{\calH}\left(\widehat U_\perp \widehat U_\perp^\top \otimes \widehat V_\perp \widehat V_\perp^\top - U_\perp U_\perp^\top \otimes V_\perp V_\perp^\top\right) \left(I-U_\perp U_\perp^\top \otimes V_\perp V_\perp^\top\right) T_{\calH}^\top {\Sigma}^{-1}{D}
       }_{\infty}\\
     & + 
     \norm{{D}{\Sigma}^{-1} T_{\calH}\left(I-U_\perp U_\perp^\top \otimes V_\perp V_\perp^\top\right)\left(\widehat U_\perp \widehat U_\perp^\top \otimes \widehat V_\perp \widehat V_\perp^\top - U_\perp U_\perp^\top \otimes V_\perp V_\perp^\top \right) T_{\calH}^\top {\Sigma}^{-1}{D}
     }_\infty
     \\
     & + \norm{{D}{\Sigma}^{-1}T_{\calH}\left(\widehat U_\perp \widehat U_\perp^\top \otimes \widehat V_\perp \widehat V_\perp^\top - U_\perp U_\perp^\top \otimes V_\perp V_\perp^\top\right) \left(\widehat U_\perp \widehat U_\perp^\top \otimes \widehat V_\perp \widehat V_\perp^\top - U_\perp U_\perp^\top \otimes V_\perp V_\perp^\top\right) T_{\calH}^\top{\Sigma}^{-1}{D}}_\infty \\
     & \le \kappa_\infty \sqrt{q} \norm{ D^{-1}T_{\calH}\left(\widehat U_\perp \widehat U_\perp^\top \otimes \widehat V_\perp \widehat V_\perp^\top - U_\perp U_\perp^\top \otimes V_\perp V_\perp^\top\right) }_{2,\max} \sqrt{\kappa_1} \\
     & + \sqrt{\kappa_1}\sqrt{q}\norm{ \left(\widehat U_\perp \widehat U_\perp^\top \otimes \widehat V_\perp \widehat V_\perp^\top - U_\perp U_\perp^\top \otimes V_\perp V_\perp^\top \right) T_{\calH}^\top {\Sigma}^{-1}{D}
     }\\
     & + q \kappa_1^2 \norm{ D^{-1}T_{\calH}\left(\widehat U_\perp \widehat U_\perp^\top \otimes \widehat V_\perp \widehat V_\perp^\top - U_\perp U_\perp^\top \otimes V_\perp V_\perp^\top\right) }_{2,\max}^2 \\
     & \le C\left(\kappa_\infty\sqrt{\kappa_1} +\norm{T_{\calH}}_2\kappa_1/\sqrt{\lambda_{\min}(\Sigma)} \right)\frac{\rho_T \mu \sigma_\xi }{\beta_0 \lambda_{\min} }\sqrt{\frac{\alpha_d q d_1^2 d_2 \log d_1 }{n}} \\
     & \le C\left(\kappa_\infty\sqrt{\kappa_1} +\kappa_1^{1.5}\kappa_T \right)\frac{\rho_T \mu \sigma_\xi }{\beta_0 \lambda_{\min} }\sqrt{\frac{\alpha_d q d_1^2 d_2 \log d_1 }{n}}, \\
    \end{aligned} 
\end{equation*}
where the 2-max norm here can be bounded by:
\begin{equation}
    \begin{aligned}
      &\norm{ D^{-1}T_{\calH}\left(\widehat U_\perp \widehat U_\perp^\top \otimes \widehat V_\perp \widehat V_\perp^\top - U_\perp U_\perp^\top \otimes V_\perp V_\perp^\top\right) }_{2,\max} \\
      &\le \max_{T_i\in \cH }\frac{ \sum_{j\in[d_1] } \sum_{k\in [d_2] } \abs{T_i(j,k) \left[\left(U U^\top-\widehat U \widehat U^\top\right)\otimes V_{\perp}V_\perp^\top +  U_\perp U_\perp^\top \otimes \left(\widehat V \widehat V^\top -V V^\top \right) \right]\cdot e_j \otimes e_k} }{s_{T_i}  }\\
      & \quad + \max_{T_i\in \cH }\frac{ \sum_{j\in[d_1] } \sum_{k\in [d_2] } \abs{T_i(j,k) \left(U U^\top-\widehat U \widehat U^\top\right)\otimes \left(\widehat V \widehat V^\top -V V^\top \right) \cdot e_j \otimes e_k} }{s_{T_i}  }\\
      & \le C\max_{T_i\in \cH } \frac{\norm{T}_{\ell_1 } }{ \norm{T}_\tF \beta_0 \sqrt{r/d_1} } \frac{ \mu\left(1+\gamma_{n}\right) \sigma_{\xi}}{\lambda_{\min}} \cdot \sqrt{\frac{rd_1^2 \log d_1}{n}}   \\
      & \le C\frac{\rho_T \mu \sigma_\xi }{\beta_0 \lambda_{\min} }\sqrt{\frac{\alpha_d d_1^2 d_2 \log d_1 }{n}}.
    \end{aligned}
\end{equation}
Here, we use the 2-max norm bound in \cite{xia2021statistical}, the alignment assumption, and the definition of $\kappa_T$. Moreover, the norm $\norm{ \left(\widehat U_\perp \widehat U_\perp^\top \otimes \widehat V_\perp \widehat V_\perp^\top - U_\perp U_\perp^\top \otimes V_\perp V_\perp^\top \right) T_{\calH}^\top {\Sigma}^{-1}{D}
     }$ can also bounded by 

\begin{equation*}
    \begin{aligned}
      &\norm{D\Sigma^{-1}T_{\calH}\left(\widehat U_\perp \widehat U_\perp^\top \otimes \widehat V_\perp \widehat V_\perp^\top - U_\perp U_\perp^\top \otimes V_\perp V_\perp^\top\right) \left(I-U_\perp U_\perp^\top \otimes V_\perp V_\perp^\top\right)}  \\
      & \le \norm{D\Sigma^{-\frac{1}{2} }\cdot \Sigma^{-\frac{1}{2} }T_{\calH}\left(\widehat U_\perp \widehat U_\perp^\top \otimes \widehat V_\perp \widehat V_\perp^\top - U_\perp U_\perp^\top \otimes V_\perp V_\perp^\top\right)}\\
          &  \lesssim {\kappa_1}\kappa_T\operatorname{supp}(T_{\calH} )  \left(\norm{UU^\top-\widehat{U}\widehat{U}^\top }_{2,\max} + \norm{VV^\top-\widehat{V}\widehat{V}^\top }_{2,\max} \right)  \\
          & \le C \kappa_1\kappa_T\frac{\operatorname{supp}(T_{\calH} )}{\sqrt{d_2}} \frac{\sqrt{\tau}\left(1+\gamma_{n}\right) \sigma_{\xi}}{\lambda_{\min}} \cdot \sqrt{\frac{d_1^2 d_2 \log d_1}{n}},
    \end{aligned}
\end{equation*}
where we use the sparsity of $T_{\calH}$ following \eqref{eq:sparse-T}. This gives the desired bound 
\begin{equation*}
    \norm{{D}{\Sigma}^{-1}E{\Sigma}^{-1}{D}}_\infty \le C\left(\kappa_\infty\sqrt{\kappa_1} +\kappa_1^{1.5}\kappa_T\left( \frac{\operatorname{supp}(T_{\calH} )}{\sqrt{d_2}}\wedge 1 \right)  \right)\frac{\rho_T \mu \sigma_\xi }{\beta_0 \lambda_{\min} }\sqrt{\frac{\alpha_d q d_1^2 d_2 \log d_1 }{n}}.
\end{equation*}

\end{proof}

\end{document}